\documentclass[11pt, a4paper]{article}
\usepackage{microtype}
\usepackage{amsmath, amsthm, amsfonts, amssymb, amsbsy, bigstrut, graphicx, enumerate, upref, longtable, comment, booktabs, array, caption, subcaption}
\usepackage{thmtools, thm-restate}
\usepackage[utf8]{inputenc} 
\usepackage[T1]{fontenc} 
\usepackage{geometry} 
\let\oldtextsection\textsection
\renewcommand{\textsection}{\ifmmode\mathsection\else\textup{\oldtextsection}\fi}
\newcommand{\secref}[1]{\textup{\oldtextsection\ref{#1}}}
\newcommand{\secrefrange}[2]{\textup{\oldtextsection\ref{#1}--\ref{#2}}}

\usepackage[%
    colorlinks      = true,
  linktocpage     = true,
  linkcolor       = blue,
  citecolor       = blue,
  urlcolor        = blue,
  breaklinks =true,
  hypertexnames=false,
]{hyperref}

\usepackage{setspace}
\usepackage[affil-it]{authblk}

\usepackage[numbers, sort&compress]{natbib}

\newcommand{\I}{\mathrm{i}}

\newcommand{\ep}{\epsilon}

\newcommand{\mc}{\mathcal{C}}

\newtheorem{thm}{Theorem}[section]
\newtheorem{lmm}[thm]{Lemma}
\newtheorem{cor}[thm]{Corollary}
\newtheorem{prop}[thm]{Proposition}

\theoremstyle{definition}
\newtheorem{remark}[thm]{Remark}

\newcommand{\ma}{\mathcal{A}}
\newcommand{\md}{\mathcal{D}}

\newcommand{\mh}{\mathcal{H}}

\newcommand{\rr}{\mathbb{R}}

\newcommand{\inn}[1]{\langle #1 \rangle}

\newcommand{\dist}{\operatorname{dist}}

\numberwithin{equation}{section}

\renewcommand{\Re}{\operatorname{Re}}

\newcommand{\dd}{\mathrm{d}}

\usepackage[utf8]{inputenc}
\usepackage{amsmath}
\usepackage{amsfonts}
\usepackage{bbm}
\usepackage{mathtools}
\usepackage{tikz}
\usetikzlibrary{decorations.pathreplacing}
\usepackage{amsthm}
\usepackage[english]{babel}
\usepackage{fancyhdr}
\usepackage{amssymb}
\usepackage{enumitem}
\usepackage{qtree}
\usepackage{mathrsfs}

\DeclareMathOperator{\supp}{supp}

\newcommand{\Z}{\mathbb{Z}}

\newcommand{\R}{\mathbb{R}}

\newcommand{\C}{\mathbb{C}}

\newcommand{\E}{\mathbb{E}}

\renewcommand{\bar}{\overline}

\renewcommand{\hat}{\widehat}

\newcommand{\bt}{\mathbb{T}}

\renewcommand{\i}{\mathrm{i}}
\renewcommand{\Im}{\operatorname{Im}}
\newcommand{\normord}[1]{:\mathrel{#1}:}

\newcommand{\mo}{\mathcal{O}}

\let\subset\subseteq

\begin{document}
\title{A Lorentzian construction of timelike Liouville field theory on the cylinder}
\author{Sourav Chatterjee\thanks{Department of Statistics, Stanford University, USA. Email: \href{mailto:souravc@stanford.edu}{\tt souravc@stanford.edu}. 
}}
\affil{Stanford University}

\maketitle

\begin{abstract}
Timelike Liouville field theory is a candidate model for positive curvature
two-dimensional quantum gravity, but a mathematically controlled Lorentzian
formulation has remained elusive. In this paper we construct the theory on the
cylinder $\mathbb{R}\times \mathbb{S}^1$ in the integer screening sector for a
natural algebra of renormalized exponential observables. Starting from a
renormalized finite-volume torus regularization, we construct infinite-volume
Euclidean correlation functions, prove analytic continuation in the time
variables, and identify the resulting Lorentzian boundary values by explicit
contour formulas. This yields exact Lorentzian correlators for a natural class
of exponential observables. We then prove locality:
spacelike separated vertex operators commute in the Lorentzian theory. For
smeared observables generated by the integer-charge fields $e^{2nb\phi}$,
 these Lorentzian expectation
values define a vacuum functional on an ordered
$*$-algebra and support an AQFT-type quantization without positivity. More
precisely, we obtain isotone local algebras, a complete locally convex space
$\mathcal H$ with dense algebraic subspace $\mathcal H_0$ carrying a
nondegenerate Hermitian form (shown to be indefinite for
$b<8^{-1/2}$), a continuous cyclic representation,
operator-topologically closed represented local algebras, an action of cylinder
translations by continuous linear homeomorphisms, and locality for the
represented local net. The construction does not produce a
Hilbert space or a Haag--Kastler net of local von Neumann algebras in the
usual sense, but it shows that a substantial part of the
Euclidean-to-Lorentzian and algebraic reconstruction mechanism survives in
this nonpositive setting for timelike Liouville theory on the cylinder.
\newline
\newline
\noindent {\scriptsize {\it Key words and phrases.} Timelike Liouville field theory, two-dimensional quantum gravity, algebraic quantum field theory.}
\newline
\noindent {\scriptsize {\it 2020 Mathematics Subject Classification.} 81T08, 83C45, 81T40.}
\end{abstract}

\clearpage

\tableofcontents

\section{Introduction}

Liouville field theory has long occupied a central position in two-dimensional quantum gravity and conformal field theory. In the spacelike regime, a coherent nonperturbative picture has emerged through a combination of conformal bootstrap, random geometry, and probabilistic methods. The timelike regime is subtler. There, the sign reversal in the kinetic term obstructs the usual Euclidean probabilistic construction. Even after a nonstandard Euclidean construction is carried out~\cite{chatterjee25}, the passage from Euclidean to Lorentzian requires explicit analytic continuation,  because the machinery of constructive field theory~\cite{glimmjaffe87} is not expected to be available for this problem (in particular, reflection positivity is expected to fail). Perhaps for this reason, a mathematically controlled Lorentzian formulation has remained largely open.

The present paper continues the program initiated in \cite{chatterjee25} and extended in \cite{chatterjee26}. The starting point of that program was a rigorous Euclidean treatment of timelike Liouville correlation functions, based on a finite-volume regularization and a precise way of handling the wrong-sign field; the second paper pushed the exact formulas beyond the charge neutral sector. Here we take the next step. We move from Euclidean correlators to Lorentzian observables and show that, on the cylinder $\mathbb{R}\times\mathbb{S}^1$, the resulting Lorentzian theory supports both a locality statement and an AQFT-type quantization without positivity; in the range $b<8^{-1/2}$, the vacuum form is proved to be genuinely indefinite.

We work in the integer screening sector, where the Coulomb gas representation remains explicit enough to analyze. Our first main result is a Lorentzian construction of pointlike correlators. Starting from a renormalized finite-volume torus regularization, we construct infinite-volume Euclidean correlators on the cylinder, prove that they admit analytic continuation in the time variables, and identify their Lorentzian boundary values by explicit contour formulas. This yields exact Lorentzian correlation functions for a natural class of renormalized exponential observables.

Our second main result is locality. Because the Lorentzian correlators arise as ordered boundary values, symmetry under permutation of insertions is not built in, and commutativity is a theorem rather than a formal convention. We prove that spacelike separated vertex operators commute on the cylinder. We then pass to smeared observables generated by the integer-charge vertex fields $e^{2nb\phi}$ and show that the same Lorentzian data determine an AQFT-type structure with isotony, translation covariance, a cyclic vacuum vector, and locality for the represented local net.

The resulting quantization is not a positive Haag--Kastler theory in the usual sense. The natural vacuum form is Hermitian and nondegenerate on a dense algebraic core, and it is shown to be indefinite when $b<8^{-1/2}$. Accordingly, we do not obtain reflection positivity, a Hilbert space, abstract local $C^*$-algebras, or represented local von Neumann algebras. The point, rather, is that a substantial part of the Euclidean-to-Lorentzian and algebraic reconstruction mechanism survives beyond positivity and can be carried out rigorously for timelike Liouville theory on the cylinder.

This work fits into a broader literature. Liouville theory entered quantum gravity in two dimensions via Polyakov's conformal gauge formulation of noncritical strings~\cite{polyakov81}, and standard references emphasize the distinction between spacelike and timelike regimes \cite{teschner01review,nakayama04review,chatterjeewitten25}. For spacelike Liouville theory, the exact structure constants and bootstrap framework were developed in \cite{dornotto94,teschner95,zamolodchikovzamolodchikov96}, while the Gaussian free field and Gaussian multiplicative chaos approach led to rigorous constructions and exact formulas \cite{davidetal16,duplantiersheffield11,kupiainenetal20,lacoinetal22}. In the timelike direction, longstanding proposals and debates concern the interpretation of amplitudes and analytic continuation \cite{stromingertakayanagi03,sen02rolling,sen02tachyon,schomerus03,harlowetal11,giribet12,BautistaBawane2022,GiribetSivilotti2026,ribaultsantachiara15,anninosetal21,muhlmann22,collieretal24,collieretal25,muhlmannetal25}.

Recent rigorous work shows that exact Euclidean timelike constructions are nevertheless possible. Compactified and noncompactified imaginary Liouville theories have been developed in Euclidean signature \cite{guillarmouetal23,usciatietal25}. In particular, \cite{chatterjee25} initiated a rigorous program for timelike Liouville theory by constructing Euclidean correlation functions with exact formulas and semiclassical consistency checks, and \cite{chatterjee26} extended that picture beyond charge neutrality. For a related rigorous construction of a different exponential field theory on the infinite cylinder, see also \cite{guillarmouetal24sinh}. For a recent cosmological perspective on timelike Liouville as a tractable model of positive curvature quantum gravity, see~\cite{AnninosHertogKarlsson2025}. The present paper continues this program in the Lorentzian direction.

Methodologically, the paper combines the finite-volume regularization and wrong-sign field technology from \cite{chatterjee25} with infinite-volume limits on the cylinder, sector-by-sector analytic continuation of Coulomb gas integrals, and contour representations of Lorentzian boundary values. On the algebraic side, it borrows the representation theoretic logic of Gelfand--Naimark--Segal \cite{gelfandnaimark43,segal47irred,segal47postulates,kadisonringrose83,brattelirobinson87} and the representation-dependent side of AQFT \cite{haagkastler64,haag96,kadisonringrose83,brattelirobinson87}, but adapts both to an ordered algebra and a locally convex state space with a Hermitian form that need not be positive.

The remainder of the paper is organized as follows. 
\secref{sec:results} presents the main Euclidean and Lorentzian correlation formulas, works out some simple exact calculations, introduces smeared observables, and states the AQFT-type quantization results without positivity, including the locality theorems for vertex operators and for the represented local net.
\secref{sec:math} develops the Euclidean construction and infinite-volume limit. 
\secref{sec:anacont} proves the analytic continuation statements and derives the contour representation of Lorentzian correlations.
\secref{sec:quantization} proves the results on smeared observables and quantization, including the AQFT-type net, the represented local algebras, the state space, translation covariance, and the locality theorem for the represented local net.
The appendix collects background material used for orientation and comparison, including a brief recap of the standard AQFT framework.

\section{Results}\label{sec:results}
In this section we present the main results of the paper. We begin with the heuristic description of timelike Liouville field theory on the infinite cylinder that motivates the rigorous construction.
The discussion then follows the same order as the construction itself: first pointlike correlators and their Lorentzian continuation, then smeared observables, and finally the AQFT-type quantization statements built from them, including the locality theorems for vertex operators and for the represented local net.
\subsection{Heuristic description of the model}
We begin by recalling the formal Lorentzian model on the cylinder and isolating the class of observables on which the rest of the paper will focus.
Fix the spacetime domain
\[
M:=\R\times[-1,1],
\]
with periodic identification in the second coordinate, that is, 
$(t,-1)\equiv (t,1)$. We endow this manifold with the flat metric. We will refer to this as the infinite cylinder 
throughout the paper. For coupling
$b>0$ and cosmological constant $\mu>0$, the formal timelike Liouville action on $M$ under Lorentzian signature is
\begin{equation*}
S(\phi) :=\frac{1}{4\pi}\int_M
\biggl\{-(\partial_t\phi)^2+(\partial_x\phi)^2 - 4\pi\mu e^{2b\phi}\biggr\}
\,\dd t\, \dd x.
\end{equation*}
The objective is to calculate vacuum expectation values of the form
\begin{align*}
\inn{F} := \int F(\phi) e^{\i S(\phi)}\,\md \phi,
\end{align*}
where $\int \ldots \md \phi$ denotes uniform integration over the space of all fields. 
Our theory will allow computation of such expectations for $F$ that are linear combinations of functions of the form
\begin{align}\label{eq:obsvbles}
  F(\phi) = \prod_{j=1}^k e^{2\alpha_j \phi(t_j,x_j)}, \quad \Re(\alpha_j)> -\frac{1}{2b} \text{ for each } j, \quad  w:= -\frac{\sum_{j=1}^k \alpha_j}{b}\in \Z.
\end{align}
The integer condition is the natural screening charge regime for the present construction. In the
finite-volume regularized theory, after the $\i\phi$ rotation and removal of the
zero mode regulator, only the term with exactly $w$ screening insertions
survives; this is the origin of the $w$-fold integral formula below. For the
classical screening perspective in Liouville theory, see~\cite{GoulianLi1991}.
In the smeared observable algebra constructed later, the individual charges are
restricted to integer multiples of $b$. Then the total screening number may also
be a negative integer; in that case the same zero-mode argument shows that both
the Euclidean and Lorentzian correlators vanish.
Throughout, we use the usual empty-index conventions: sums over empty index
sets are $0$, products over empty index sets are $1$, and any $0$-fold integral
(for example over $M^0$, $\R^0$, $[-1,1]^0$, or $\gamma^0$) is interpreted as $1$.

\subsection{Euclidean correlators}
To rigorously define the vacuum expectation values, we follow the usual route of first constructing the Euclidean theory and then applying analytic continuation to the time coordinate. The Euclidean action is obtained by  replacing $t$ by $-\i t$ in the action; this results in $\partial_t \phi$ being replaced by $\i \partial_t \phi$ and $\dd t$ being replaced by $-\i \dd t$. With these replacements, the unnormalized Euclidean vacuum expectation value of a functional $F$ is 
\begin{align*}
\inn{F}_{\mathrm{E}} = \int F(\phi) e^{- S_{\mathrm{E}}(\phi)}\,\md \phi,
\end{align*}
where $S_{\mathrm{E}}$ is the Euclidean action
\begin{equation*}
S_{\mathrm{E}}(\phi) :=\frac{1}{4\pi}\int_M
\biggl\{-(\partial_t\phi)^2-(\partial_x\phi)^2 + 4\pi\mu e^{2b\phi}\biggr\}
\,\dd t\, \dd x.
\end{equation*}
We define the Euclidean theory on $M$ by first defining it on the finite torus $M_T := [-T,T] \times [-1,1]$, with periodic boundaries in both coordinates. The expectations are defined for this finite-volume theory after renormalizing the exponential term in the action, as well as the exponentials in the observables, as follows. The term $e^{2b \phi(t,x)}$ in the action is renormalized by replacing it with 
\[
  \exp(2b \phi(t,x) + 2b^2 g_T(0,0)),
\] 
where $g_T$ is the mean-zero Green's function on $M_T$ (see~\secref{sec:green} for details). More generally, the observable $e^{2\alpha \phi(t,x)}$ is renormalized by replacing it with 
\begin{align*}
 \exp\biggl(2\alpha \phi(t,x) + 2\alpha^2 g_T(0,0)+\frac{1}{3}\pi T\alpha(b-\alpha)\biggr).
\end{align*}
Even after restricting to finite volume and inserting these renormalizations, a fundamental obstruction remains: the kinetic term in the Euclidean action has the ``wrong sign'', preventing us from defining the Euclidean theory as a proper stochastic model. This is overcome by replacing $\phi$ with $\i \phi$, a step that is rigorously justified by the theory developed in \cite{chatterjee25}. After these steps and the additional limiting argument in \secref{sec:math}, we can take $T\to \infty$ and obtain the following result for the Euclidean theory on the cylinder $M$.
\begin{thm}[Euclidean correlators]\label{thm:euc}
For $(t,x)\in \R^2$ with $(t,x)\notin \{0\} \times 2\Z$, let 
\[
  g(t,x) := -\frac{\pi |t|}{2} - \log |1-e^{-\pi |t| + \pi \i x}|.
\]
Take any $k$ and any distinct $q_1,\ldots,q_k\in M$. Take any $\alpha_1,\ldots, \alpha_k\in \C$ such that $\Re(\alpha_j)>-\frac{1}{2b}$ for each $j$, and $w := -\frac{1}{b}\sum_{j=1}^k \alpha_j$ is an integer. If $w$ is a nonnegative integer, then after renormalizing as above and taking $T\to\infty$, the Euclidean timelike Liouville theory on the cylinder $M$ has 
\begin{align*}
&\biggl\langle \prod_{j=1}^k e^{2\alpha_j \phi(q_j)}\biggr\rangle_{\mathrm{E}} \\
&= \frac{(-\mu)^w}{w!}
\exp\biggl(
-4\sum_{1\le j<j'\le k}\alpha_j\alpha_{j'} g(q_j-q_{j'})
\biggr) \\ 
&\qquad \cdot \int_{M^w} \exp\biggl(
    -4b\sum_{j=1}^k\sum_{l=1}^w \alpha_j g(q_j-u_l)
    -4b^2\sum_{1\le l<l'\le w}g(u_l-u_{l'})
  \biggr)\,\dd u_1\cdots \dd u_w,
\end{align*}
where the integral is interpreted as $1$ if $w=0$, and is  absolutely convergent otherwise. If $w$ is a negative integer, then the same procedure gives
\[
\biggl\langle \prod_{j=1}^k e^{2\alpha_j \phi(q_j)}\biggr\rangle_{\mathrm{E}} = 0.
\]
\end{thm}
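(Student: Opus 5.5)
The plan has three stages: write down the finite-volume torus formula, replace torus quantities by their cylinder limits, and justify passing the limit $T\to\infty$ inside the integral by dominated convergence.

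\textbf{Stage 1: the finite-volume formula.} On $M_T$ I invoke the construction of \cite{chatterjee25}. After the rotation $\phi\mapsto \i\phi$ and removal of the zero-mode regulator, integrating out the zero mode forces exactly $w$ screening insertions. The unnormalized torus correlator equals $(-\mu)^w/w!$ times a Coulomb-gas integral over $M_T^w$. Its integrand is $\exp(-4\sum_{j<j'}\alpha_j\alpha_{j'}g_T(q_j-q_{j'})-4b\sum_{j,l}\alpha_j g_T(q_j-u_l)-4b^2\sum_{l<l'}g_T(u_l-u_{l'}))$, multiplied by the explicit renormalization constants. When $w<0$, the zero-mode integral $\int e^{2\i c\sum\alpha_j}\,\dd c$ paired with the expansion of $e^{-\mu\int e^{2b\phi}}$ has no surviving term. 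This gives $0$ identically for every $T$, hence also in the limit.

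\textbf{Stage 2: asymptotics of the Green's function.} I write $g_T$ through its Fourier expansion in $x$. The nonzero modes $n$ give $\frac{1}{2|n|}\cdot\frac{\cosh(\pi n(T-|t|))}{\sinh(\pi n T)}\,e^{\i\pi n x}$ up to normalization. The zero mode gives a quadratic term $\frac{\pi}{4T}(T-|t|)^2$ minus a constant, which makes $g_T$ mean-zero. Summing the nonzero modes as $T\to\infty$ produces $-\log|1-e^{-\pi|t|+\pi\i x}|$. The zero-mode term equals $\frac{\pi T}{4}-\frac{\pi|t|}{2}+O(t^2/T)$ minus $\frac{\pi T}{12}$ from centering. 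So $g_T(t,x)=g(t,x)+c_T+o(1)$ locally uniformly away from the diagonal, with $c_T$ linear in $T$ (here $\pi T/6$), and the same expansion holds for $g_T(0,0)$ after subtracting the logarithmic singularity.

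The constant shifts $c_T$ then collect into $\exp(-c_T\cdot 2[(\sum_j\alpha_j+bw)^2-\sum_j\alpha_j^2-wb^2])$. Using $\sum_j\alpha_j=-bw$, this combines with the self-energy factors $2\alpha_j^2g_T(0,0)$ and $2b^2g_T(0,0)$ and with the $\frac13\pi T\alpha(b-\alpha)$ terms. The neutrality condition is what makes these cancel exactly; the factor $\frac13\pi T\alpha(b-\alpha)$ is designed so that the total $T$-dependence is $e^{o(1)}$. The cancellation also has to absorb the volume growth $(2T)^w$ of the domain, and this is plausibly where the $b$-linear term $\alpha b$ enters. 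I will verify this bookkeeping explicitly.

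\textbf{Stage 3: the limit (main obstacle).} The integrand converges pointwise on $M^w$. I need a $T$-uniform integrable dominating function on the growing domains $M_T^w$, extended by zero. The local singularities are $|u-q_j|^{-4b\Re\alpha_j}$ and $|u_l-u_{l'}|^{-4b^2}$. Here $\Re\alpha_j>-\frac1{2b}$ controls the first type. For the second type I expect $-4b^2$ in the exponent of the distance to make it integrable (an attraction only if $g>0$ nearby, i.e.\ the sign gives $|u_l-u_{l'}|^{+4b^2}$); I will check the sign convention.

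The harder part is decay at large $|t|$. Because $g(t,x)\approx-\pi|t|/2$ for large $|t|$, each screening point $u_l$ far from the cluster contributes $\exp(2\pi b|t_l|\Re\sum_j\alpha_j+\dots)$. Neutrality then gives a net factor like $e^{-\pi b^2 |t_l|\cdot(\text{positive})}$ from pairing with the other charges. I would bound this via a convexity argument: the total energy $-4\sum Q_iQ_jg$ with $g\approx-\pi|t|/2$ is a piecewise-linear function of the time coordinates. Sorting the $u_l$ by time and using $\sum_j\alpha_j+\sum_l b=0$, each gap in the time ordering contributes $-2\pi|\text{gap}|\,(\text{partial charge})^2$-type terms, which are negative unless the partial charge vanishes.

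The key issue is that partial charges of screening-only clusters are $mb\neq0$, giving exponential decay. For the uniform torus version I would use $g_T\ge g+c_T-C$ with a correction of the form $t^2/T$, which only improves decay. Dominated convergence then yields both the formula and absolute convergence.
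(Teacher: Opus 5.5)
There is a genuine gap at the step you flag as the main obstacle: the $T$-uniform integrable majorant in Stage~3.

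\textbf{The torus correction does not only improve decay.} For $(t,x)\in M_T$ the exact decomposition is $g_T(t,x)-\frac{\pi T}{6}=-\frac{\pi|t|}{2}+\frac{\pi t^2}{4T}+h_T(t,x)$, with $h_T$ bounded below and logarithmically bounded above, uniformly in $T\ge1$. The quadratic term does help in the screening--screening factors, where its coefficient is $-4b^2$. In the screening--external factors, however, it carries the coefficient $-4b\alpha_j$. After summing over $j$ with $\sum_j\alpha_j=-bw$, it adds $+\frac{\pi b^2w}{T}t_l^2+O(1)$ to the exponent. So it works against precisely the terms that produce the decay, and at $|t_l|\approx T$ it cancels half of the linear rate $-2\pi b^2w|t_l|$.

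\textbf{A one-sided comparison cannot capture this.} A bound $g_T\ge g+c_T-C$ only controls pairs with $\Re(Q_iQ_j)>0$. The decay comes from external charges with $\Re\alpha_j<0$, which need an upper bound on $g_T$, and that upper bound contains the harmful $t^2/T$ term.

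\textbf{Your gap decomposition is unavailable on the torus.} The torus kernel depends on the periodic distance $d_T(t_l-t_{l'})$, so screening points near $t=T$ and $t=-T$ are close. Your sorted-gap/partial-charge decomposition relies on the energy being piecewise linear in the ordered times, and this fails on $M_T$. Locally uniform convergence away from the diagonal (your Stage~2) does not substitute for a global uniform bound.

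\textbf{The missing ingredient} is the argument of Lemma~\ref{lem:renorm-int-tail}.
\begin{enumerate}
\item Write $g_T-\frac{\pi T}{6}=-f_T(d_T(t))+h_T$ with $f_T(r)=\frac{\pi}{2}r-\frac{\pi}{4T}r^2$. On $[0,T]$ this $f_T$ is concave, increasing and $\frac{\pi}{2}$-Lipschitz, satisfies $f_T(\min\{a+b,T\})\le f_T(a)+f_T(b)$, and satisfies $f_T(r)\ge\frac{\pi}{4}r$.
\item Neutrality collects the screening--external terms into $-4b^2wf_T(|t_l|)+O(1)$.
\item Subadditivity, together with $d_T(t_l-t_{l'})\le\min\{|t_l|+|t_{l'}|,T\}$, bounds the screening--screening terms by $4b^2(w-1)\sum_lf_T(|t_l|)$. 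This also absorbs the wrap-around.
\item The net exponent is therefore at most $-\pi b^2\sum_l|t_l|$ plus logarithmic singularities, uniformly in $T$.
\item Torus distances are then compared with cylinder distances at a cost of $e^{\ep|t_l|}$.
\end{enumerate}
Your partial-charge argument is essentially a valid and tidy proof that the limiting cylinder integral converges absolutely. By itself, though, it does not give convergence of the torus integrals, which is what the theorem asserts.

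\textbf{Smaller points.}
\begin{enumerate}
\item There is no $(2T)^w$ volume factor to absorb. The self-energy factors cancel exactly against the Wick diagonal. The constant $T$-dependence is exactly $-4\cdot\frac{\pi T}{6}\sum_{i<j}Q_iQ_j=-\frac{\pi T}{3}\sum_j\alpha_j(b-\alpha_j)$, and the observable renormalization removes it.
\item Near $q_j$ the external singularity is $|u-q_j|^{4b\Re\alpha_j}$, not $|u-q_j|^{-4b\Re\alpha_j}$. The screening--screening factor near coincidence is $|u_l-u_{l'}|^{4b^2}$, so it is harmless.
\item Stage~1 takes for granted the $\ep\to0$ and $N\to\infty$ limits. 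The paper proves these, using positive semidefiniteness of $G_{T,N}$ and $N$-uniform lower and logarithmic upper bounds on $g_{T,N}$.
\end{enumerate}
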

This theorem, proved in \secref{sec:math}, provides the Euclidean input for the Lorentzian continuation carried out next.

\subsection{Lorentzian correlators}
We now want to use the Euclidean theory to construct the theory in Lorentzian signature, by analytically continuing in the time coordinate to replace $t$ by $\i t$. For this, let us first define 
\begin{align}\label{eq:ceuc}
C_{\alpha_1,\ldots,\alpha_k}(t_1,x_1;\ldots; t_k,x_k) := \biggl\langle \prod_{j=1}^k e^{2\alpha_j \phi(t_j,x_j)}\biggr\rangle_{\mathrm{E}},
\end{align}
where the right side is given by Theorem \ref{thm:euc} when
$w:= -b^{-1}\sum_{j=1}^k\alpha_j$ is an integer.
The standard way to analytically continue in the time coordinate is to analytically continue the above function of $(t_1,\ldots,t_k)$ (fixing $\alpha_1,\ldots,\alpha_k$ and $x_1,\ldots,x_k$) to the region 
\begin{align}\label{eq:omegakplus}
  \Omega_k := \{(\tau_1,\ldots,\tau_k)\in \C^k:&\Re(\tau_1)<\cdots < \Re(\tau_k)\},
\end{align}
and then use continuity to obtain $ C_{\alpha_1,\ldots,\alpha_k}(\i t_1,x_1;\i t_2,x_2;\ldots; \i t_k,x_k)$, by taking $(\tau_1,\ldots,\tau_k)$ to $(\i t_1,\ldots, \i t_k)$ through the region $\Omega_k$. If this can be done, it gives us the Lorentzian expectation $C_{\alpha_1,\ldots,\alpha_k}^{\mathrm{L}}(t_1, x_1;\ldots;t_k,x_k)$ associated with the index ordering $1,\ldots,k$. In general this Lorentzian
boundary value is not symmetric under permutations of the insertions;
permuting the labels corresponds to taking the boundary value from the
correspondingly permuted ordered tube.

The first step in this direction is to write down the analytic continuation of the function $g$ in the time coordinate. It turns out (shown in \secref{sec:anagreen}) that $g$ can be continued analytically in the time coordinate in the region $\Re(\tau)>0$, as  
\begin{align}\label{eq:gdefine0}
g(\tau,x) &= -\frac{\pi\tau}{2}-\frac{1}{2}\log(1-e^{-\pi \tau+\pi \i x}) - \frac{1}{2}\log(1-e^{-\pi \tau-\pi \i x}),
\end{align}
where $\log$ denotes the analytic branch of logarithm on $\C \setminus(-\infty,0]$ that is real-valued on the real line. This extends to the imaginary axis by continuity, as  
\begin{align}\label{eq:gimag}
g(\i t,x)
=
-\frac{\pi\i t}{2}
-\frac{1}{2}\log(1-e^{-\pi\i(t+x)})
-\frac{1}{2}\log(1-e^{-\pi\i(t-x)}),
\end{align}
leaving it undefined at $(\i t,x)$ where $t\pm x\in 2\Z$. We will now use this $g$ to construct the
analytic continuation of $C_{\alpha_1,\ldots,\alpha_k}$.

In the nonnegative screening sector, fix $\alpha_1,\ldots,\alpha_k$ satisfying $\Re(\alpha_j)> -\frac{1}{2b}$ for each $j$, and let $w := -\frac{1}{b}\sum_{j=1}^k \alpha_j$. Take $q_1,\ldots,q_k\in M$ such that 
\begin{align}\label{eq:lightcond}
q_j- q_{j'} \notin \{(t,x)\in \R^2: t+x\in 2\Z \text{ or } t - x\in 2\Z\} \text{ for all } 1\le j<j'\le k.
\end{align}
In other words, none of the differences $q_j -q_{j'}$ lie on the boundary of the light cone on the cylinder. 
We will henceforth refer to this as the ``non-light-cone condition''.

Let $(t_j, x_j)$ denote the coordinates of $q_j$. 
Let $\gamma$ be the contour in $\C$ that comes in horizontally from $\i t_1 - \infty$ to $\i t_1$, then moves vertically to $\i t_2$, then to $\i t_3$, and so on, until it reaches $\i t_k$. Then, it continues horizontally to $\i t_k +\infty$. We are not assuming that the $t_j$'s are in increasing or decreasing order; so the contour may change directions multiple times while on the imaginary axis. A schematic is shown in Figure~\ref{fig:gamma-contour}. 

\begin{figure}[t!]
\centering
\begin{tikzpicture}[x=1.3cm,y=1.05cm,>=stealth,scale=1.15]
  \draw[->] (-3.15,0) -- (3.15,0) node[below right] {$\Re z$};
  \draw[->] (0,-2.2) -- (0,2.2) node[above left] {$\Im z$};

  \coordinate (T1) at (0,-1.45);
  \coordinate (T2) at (0,0.95);
  \coordinate (T3) at (0,0.15);
  \coordinate (T4) at (0,1.52);
  \coordinate (T5) at (0,-0.7);

  \draw[very thick,blue,->] (-2.75,-1.45) -- (T1);
  \draw[line width=1.2pt,line cap=round] (T1) -- (T4);
  \draw[very thick,blue,->] (T5) -- (2.75,-0.7);

  \draw[very thick,blue,line cap=round] (T1) to[out=0,in=-90] (0.055,-1.32);
  \draw[very thick,blue,->] (0.055,-1.32) -- (0.055,0.82);
  \draw[very thick,blue,->] (-0.055,0.82) -- (-0.055,0.18);
  \draw[very thick,blue,->] (0.055,0.18) -- (0.055,1.47);
  \draw[very thick,blue,line cap=round] (0.055,1.47) to[out=90,in=90] (-0.055,1.47);
  \draw[very thick,blue,->] (-0.055,1.47) -- (-0.055,-0.57);
  \draw[very thick,blue,line cap=round] (-0.055,-0.57) to[out=-90,in=180] (T5);

  \fill (T1) circle (1.8pt) node[right=1pt] {$\i t_1$};
  \fill (T2) circle (1.8pt) node[right=1pt] {$\i t_2$};
  \fill (T3) circle (1.8pt) node[left=1pt] {$\i t_4$};
  \fill (T4) circle (1.8pt) node[right=1pt] {$\i t_3$};
  \fill (T5) circle (1.8pt) node[left=1pt] {$\i t_5$};
  \node at (0.72,-.55) {$\gamma$};
\end{tikzpicture}
\caption{Schematic contour $\gamma$ used in the definition of Lorentzian correlations, illustrated here for $k=5$ with $t_1<t_5<t_4<t_2<t_3$. The contour enters horizontally at $\i t_1$, visits $\i t_2,\i t_3,\i t_4,\i t_5$ in index order, and exits horizontally from $\i t_5$. The contour itself is continuous and lies on the imaginary axis; the tiny left-right offsets of the blue vertical arrows are only a visual aid to distinguish upward and downward traversals along the same axis.}
\label{fig:gamma-contour}
\end{figure}

Fix a piecewise $C^1$ parametrization of $\gamma$ with $s$ going from $-\infty$ to $\infty$ and $\gamma'(s)\ne 0$ for all $s$ in the interiors of the pieces. For $u = (\gamma(s),x)$ and $u' = (\gamma(s'),x')$, where $x,x'\in [-1,1]$, define the \emph{time-ordered difference}
\[
  \mathcal{T}(u-u') :=
  \begin{cases}
  u-u' &\text{ if } s> s',\\ 
  u' - u &\text{ if } s< s'.
  \end{cases}
\]
On the null set where two screening variables have the same contour parameter, the value may be assigned arbitrarily, say $u-u'$. Let $v_j := (\i t_j,x_j)$ for $j=1,\ldots,k$. The external insertions are treated as marked visits of the contour, ordered by their labels; this breaks ties when two labels have the same contour parameter. Thus, for external points,
\[
  \mathcal{T}(v_j-v_{j'})=
  \begin{cases}
  v_{j'}-v_j &\text{if } j<j',\\
  v_j-v_{j'} &\text{if } j>j',
  \end{cases}
\]
and the same marked order is used for mixed external-screening differences, with arbitrary choices on the measure-zero set of ties. It is important to note that $\mathcal{T}(u-u')$ is not determined solely by the points $u,u'$; we actually need to know their contour parameters (and, for external insertions, their labels), since $\gamma$ is not necessarily injective as a function.

Next, let $M_\gamma:= \gamma\times [-1,1]$. We define contour integration on the tube $M_\gamma^w$ as 
\begin{align*}
\int_{M_\gamma^w} &f(u_1,\ldots,u_w)\, \dd u_1 \cdots \dd u_w \\ 
&:= \int_{[-1,1]^w}\int_{\R^w} f((\gamma(s_1),b_1),\ldots,(\gamma(s_w),b_w)) \\ 
&\hskip2in \cdot \gamma'(s_1)\cdots \gamma'(s_w)\, \dd s_1\cdots \dd s_w\, \dd b_1 \cdots \dd b_w.
\end{align*}
It is not difficult to see that this definition is independent of the parametrization of $\gamma$. 
With these notations, we define 
\begin{align}\label{eq:clorentz}
C^{\mathrm{L}}_{\alpha_1,\ldots,\alpha_k}(v_1,\ldots,v_k) &:= \frac{(-\mu)^w}{w!}
\exp\biggl\{
-4\sum_{1\le j<j'\le k}\alpha_{j}\alpha_{j'} g(\mathcal{T}(v_j - v_{j'}))
\biggr\} \notag \\ 
&\qquad \qquad \cdot \int_{M_\gamma^w} \exp\biggl\{
    -4b\sum_{j=1}^k\sum_{l=1}^w \alpha_j g(\mathcal{T}(v_j - u_l)) \notag\\ 
&\qquad \qquad \qquad  \qquad   -4b^2\sum_{1\le l<l'\le w}g(\mathcal{T}(u_l-u_{l'}))
  \biggr\}\,\dd u_1\cdots \dd u_w.
\end{align}
This is our candidate for the Lorentzian correlator. The following theorem shows that this prescription  gives the right answer. 
\begin{thm}[Lorentzian correlators]\label{thm:main}
Take $\alpha_1,\ldots,\alpha_k\in\C$ such that
$\Re(\alpha_j)>-\frac{1}{2b}$ for each $j$, and suppose that $w:=-\frac{1}{b}\sum_{j=1}^k\alpha_j$ 
is an integer. If $w\ge0$, then the Euclidean correlator 
$C_{\alpha_1,\ldots,\alpha_k}(t_1, x_1;\ldots; t_k,x_k)$, as a function of
$t_1,\ldots,t_k$, admits a unique analytic continuation to the region
$\Omega_k$ defined in equation \eqref{eq:omegakplus}. This continuation extends
continuously to the domain of $C_{\alpha_1,\ldots,\alpha_k}^{\mathrm{L}}$, and
the result is given by the function displayed in equation~\eqref{eq:clorentz}, and the integral appearing in the formula is absolutely convergent. If $w< 0$, then the same procedure gives $C_{\alpha_1,\ldots,\alpha_k}^{\mathrm{L}}\equiv 0$.
\end{thm}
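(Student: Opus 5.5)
The plan is to continue the Coulomb gas formula of Theorem~\ref{thm:euc} sector by sector and then deform the screening contour. The case $w<0$ is immediate: Theorem~\ref{thm:euc} gives $C_{\alpha_1,\ldots,\alpha_k}\equiv 0$ for real times, so its unique continuation is $0$ and so is its boundary value. For $w\ge 0$, write the Euclidean correlator at real ordered times $t_1<\cdots<t_k$. For each pair $j<j'$ with $t_j<t_{j'}$ we have $g(q_j-q_{j'})=g(q_{j'}-q_j)$ (since $g$ is even), and I rewrite the prefactor as $g(\mathcal T(q_j-q_{j'}))$ with the later point first. This is exactly the time-ordered form of \eqref{eq:clorentz} with $\gamma=\R$, and it produces the analytic branch \eqref{eq:gdefine0} on $\Re(\tau)>0$.

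For the screening integral over $M^w$, split $\R$ into the $k+1$ intervals determined by $t_1<\cdots<t_k$, and further split according to the relative order of the screening times. On each sector every difference appearing has a definite sign of real time. The integrand is therefore, after time-ordering, a product of factors $e^{-c\, g(\tau,x)}$ with $\Re\tau>0$. These are holomorphic in the endpoints $\tau_j$ once the screening times on an interval $[\tau_j,\tau_{j+1}]$ are parametrized by straight segments $\tau_j+s(\tau_{j+1}-\tau_j)$. Summing over sectors, this realizes the integral as an integral over a contour $\gamma_\tau$: horizontal rays at the two ends, joined by segments through $\tau_1,\ldots,\tau_k$. Its integrand is holomorphic in $(\tau_1,\ldots,\tau_k)\in\Omega_k$.

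Analyticity of the resulting function on $\Omega_k$ then follows from Morera/Fubini together with locally uniform domination. The domination has two sources:
\begin{itemize}
\item at infinity, $g(\tau,x)\approx -\pi\tau/2$ gives the decay $e^{2\pi b(\sum\alpha_j+b(w-1))\Re(\cdot)}$-type factors, which I expect to be integrable exactly as in the proof of absolute convergence in Theorem~\ref{thm:euc};
\item near coincident points, the singularities are $|u|^{-4b\Re\alpha_j}$ and $|u|^{-4b^2}$, which are integrable under $\Re\alpha_j>-1/(2b)$ and the same conditions already used in the Euclidean case.
\end{itemize}
Uniqueness is automatic since $\Omega_k$ is connected and contains the real ordered set. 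Cauchy's theorem lets me deform $\gamma_\tau$ freely within the region where all ordered real parts stay positive, consistent with ordering by contour parameter.

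The main obstacle is the boundary limit $\tau_j\to \i t_j$. There the segments collapse onto the imaginary axis, and $\Re$ of the time-ordered differences tends to $0$, so $g$ is evaluated on \eqref{eq:gimag} and develops singularities on the whole light cone $t\pm x\in 2\Z$, not just at coincident points. I would handle this by fixing the contour shape (horizontal offsets $\epsilon$ indexed by contour parameter) and proving dominated convergence. The logarithmic singularities $\log(1-e^{-\pi\i(t\pm x)})$ give factors $|t\pm x-2m|^{-2b\Re\alpha_j}$ or $|\cdot|^{-2b^2}$ along light-like lines, and these are integrable in the $(s,x)$ variables one direction at a time. The crux is a uniform-in-$\epsilon$ bound $|1-e^{-\pi(\epsilon+\i\theta)}|\gtrsim |1-e^{-\pi\i\theta}|$, which converts the $\epsilon$-integrand into an integrable majorant. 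The non-light-cone condition \eqref{eq:lightcond} keeps the external prefactor finite. Combining these yields continuity up to the boundary, the formula \eqref{eq:clorentz} with the marked ordering, and absolute convergence.
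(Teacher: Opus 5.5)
Your architecture matches the paper's: ordered sectors, straight-segment reparametrization so the integrand is holomorphic in $\tau$, Morera plus Fubini, the identity theorem for uniqueness, and dominated convergence at the imaginary boundary. Your treatment of $w<0$ and of uniqueness is fine. The gap is in the boundary step, where you use the non-light-cone hypothesis only to keep the external prefactor finite.

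In the limit, a screening variable $(\i\sigma,y)$ on the vertical part of the contour sees, up to factors bounded above and below, the terms $|1-e^{-\pi\i((\sigma-s_j)\pm(y-x_j))}|^{2b\Re\alpha_j}$. Its majorant is therefore $F_+(\sigma+y)F_-(\sigma-y)$, with $F_\pm(u)=\prod_j|1-e^{-\pi\i(u-s_j\mp x_j)}|^{2b\Re\alpha_j}$. This is what your ``one direction at a time'' amounts to. But $F_\pm$ is locally integrable only if distinct insertions never share a singular value, i.e.\ $s_j\pm x_j\not\equiv s_{j'}\pm x_{j'}\pmod 2$ for $j\neq j'$, which is exactly \eqref{eq:lightcond}.

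Without it, the merged exponent $2b\Re(\alpha_j+\alpha_{j'})$ can be $\le-1$. Take $k=3$, $\alpha_1=\alpha_2=-\frac{2}{5b}$, $\alpha_3=\frac{4}{5b}-b$ (so $w=1$, admissible for $b^2<\frac{13}{10}$), with $v_1,v_2$ lightlike separated. Then the screening integral diverges absolutely, while the external prefactor, whose exponent $2\alpha_1\alpha_2$ is positive, actually vanishes there. So \eqref{eq:lightcond} is what gives the absolute convergence asserted in the theorem, and you never invoke it at that point. Once you do, your light-cone factorization finishes the argument just as the paper's transverse-intersection remark does.

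There are two further points. First, your exponents have the wrong sign. The insertion singularity is $|u|^{4b\Re\alpha_j}$, and screening--screening factors are never singular. On the axis $|e^{-4b^2g(\i t,x)}|\le 2^{4b^2}$, and off it $-4b^2\Re\bigl(g(\tau,x)+\frac{\pi\tau}{2}\bigr)$ is bounded above by Lemma~\ref{lem:gcontlmm}, with the linear part absorbed into the tail computation. This is not cosmetic. Boundedness of the pair factors is exactly what lets you discard them and factorize the majorant one screening variable at a time. With your $|\cdot|^{-2b^2}$ you would need $b<1/\sqrt2$, which Theorem~\ref{thm:main} does not assume, and you would face singular lines coupling two screening variables.

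Second, a fixed approach shape with $\epsilon\downarrow0$ gives the limit only along that path. The theorem asserts that the extension is continuous, i.e.\ convergence for every $t\to\i s$ inside $\Omega_k$. Your bound $|1-e^{-\pi(\epsilon+\i\theta)}|\gtrsim|1-e^{-\pi\i\theta}|$ controls the real parts, but the singular lines also move with $\Im t_j$. You therefore need a majorant, or uniform integrability, valid on a whole neighbourhood of $\i s$. This again comes from \eqref{eq:lightcond}, now used as an open condition that keeps parallel singular lines uniformly apart.
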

The case $w<0$ follows trivially from Theorem \ref{thm:euc}; the
nonnegative case is proved in \secref{sec:anacont}. Very briefly, the idea of
the proof is that we decompose the Coulomb integral in the Euclidean correlator into ordered sectors, prove
analytic continuation sector by sector, reassemble the result as a $w$-fold contour integral on the ordered tube, and finally identify the Lorentzian boundary value through the contour representation in equation~\eqref{eq:clorentz}.

\subsection{Locality for vertex operators}\label{sec:vertexcomm}

By Theorem~\ref{thm:euc}, the Euclidean correlator is symmetric under
simultaneous permutations of the labeled insertions
\((\alpha_1,q_1),\ldots,(\alpha_k,q_k)\). The Lorentzian correlator is
different. It is defined as an ordered boundary value from the ordered tube
\(\Omega_k\), so permuting the labeled insertions means taking the boundary
value from a different ordered tube. In this sense,
\(C^{\mathrm L}_{\alpha_1,\ldots,\alpha_k}(u_1,\ldots,u_k)\) is naturally an
ordered correlator rather than an a priori symmetric function of its
arguments. Indeed, we will see in an explicit counterexample in the next subsection which shows that it may not be symmetric in its arguments.

For this reason, and also because exponential fields are the standard local
observables in Liouville theory, it is more natural to use the language of
vertex operators. We define a vertex operator 
\[ 
  V_\alpha(t,x):=e^{2\alpha\phi(t,x)},
\]
so that we have the formal identity
\begin{align*}
C^{\mathrm L}_{\alpha_1,\ldots,\alpha_k}(u_1,\ldots,u_k) = \inn{V_{\alpha_1}(u_1)\cdots V_{\alpha_k}(u_k)}.
\end{align*}
Since these pointlike objects are formal,
by saying that two  vertex operators commute we mean that exchanging
their consecutive positions in the ordered Lorentzian correlator leaves the
value unchanged. 

Whether or not two operators commute is tied to one of the central features of relativistic quantum field theories, known variously as  \emph{microcausality}, \emph{locality}, or \emph{Einstein causality}. This principle says that observables localized in spacelike separated regions
should commute. Conceptually, it encodes the compatibility of operations
performed in causally disjoint regions and rules out ``faster than light'' influence between them; in the algebraic approach it is one of the basic structural
axioms of a local net of observables~\cite{haagkastler64,haag96}.

We now specialize this locality principle to the Lorentzian cylinder.
On the cylinder \(M=\R\times\mathbb S^1\), two points \(u=(t,x)\) and
\(v=(s,y)\) are said to be spacelike separated~if
\[
  |t-s|<d_1(x-y),
  \qquad
  d_1(\xi):=\min_{m\in\Z}|\xi+2m|.
\]
Equivalently, after lifting the points to the universal cover so that the
spatial separation is minimal, their Minkowski square is positive. Thus, in
the present geometry, locality asks for commutation whenever the time
separation is smaller than the shortest spatial separation around the circle. The following theorem, proved in \secref{sec:locproof}, shows that this holds for our theory. In words, it
says that if two neighboring vertex operators are spacelike separated,
then exchanging their order does not change the Lorentzian correlator.

\begin{thm}[Locality for vertex operators]
\label{thm:microcausality}
Two spacelike separated vertex operators commute. Explicitly, this means the following. Take any $k\ge 2$, $\alpha_1,\ldots,\alpha_k\in\C$, and $v_1,\ldots,v_k\in M$ satisfying the conditions of Theorem \ref{thm:main}. Suppose that for some $1\le p\le k-1$, $v_p$ and $v_{p+1}$ are spacelike separated on the cylinder. Then
\begin{align}\label{eq:commute}
  &C^{\mathrm L}_{\alpha_1,\ldots,\alpha_{p-1},\alpha_p,
  \alpha_{p+1},\alpha_{p+2},\ldots,\alpha_k}
  (v_1,\ldots,v_{p-1},v_p,v_{p+1},v_{p+2},\ldots,v_k)
  \notag \\
  &\qquad =
  C^{\mathrm L}_{\alpha_1,\ldots,\alpha_{p-1},\alpha_{p+1},
  \alpha_p,\alpha_{p+2},\ldots,\alpha_k}
  (v_1,\ldots,v_{p-1},v_{p+1},v_p,v_{p+2},\ldots,v_k).
\end{align}
\end{thm}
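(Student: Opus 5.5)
The plan is to reduce commutativity to a single identity for the analytically continued two-point kernel $g$, and then to propagate this identity through the screening integral in \eqref{eq:clorentz} by a contour deformation argument.

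\textbf{Step 1: a symmetry identity for $g$.} I would first prove that for real $T,X$ with $|T|<d_1(X)$,
\[
g(\i T,X)=g(-\i T,-X),
\]
where both sides are given by \eqref{eq:gimag}. Reducing $X$ mod $2$ to $X\in(-1,1)$ with $|T|<|X|$, the idea is to use the elementary formula $\log(1-e^{\pm\i\theta})=\log(2\sin(\theta/2))\pm \tfrac{\i}{2}(\theta-\pi)$ for $\theta\in(0,2\pi)$, applied with $\theta$ equal to $\pi(T\pm X)$ reduced into $(0,2\pi)$. By a direct computation (say for $X>0$, so that $T+X\in(0,2)$ and $T-X\in(-2,0)$), the real parts cancel. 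The imaginary parts also cancel: $\tfrac{\i}{2}(\theta_1+\theta_2-2\pi)=\pi\i T$ exactly offsets the $-\pi\i T$ coming from the linear terms. The case $X<0$ is symmetric.

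Conceptually, this says the following. The continuation of $g(\cdot,X)$ from $\Re\tau>0$ and the continuation of $\tau\mapsto g(-\tau,-X)$ from $\Re\tau<0$ (which agree on the real axis by evenness of the Euclidean $g$) glue into a single analytic function near the open segment $\{\i T:|T|<d_1(X)\}$. The branch points $e^{-\pi\tau\pm\pi\i x}=1$ lie only on the light cone, so this gluing is possible.

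\textbf{Step 2: a common analytic continuation of both orderings.} I would realize both sides of \eqref{eq:commute} as boundary values of a single holomorphic function. Put $\tau_j=\epsilon_j+\i t_j$ with $\epsilon_1<\cdots<\epsilon_k$ small. As in \secref{sec:anacont}, write the analytic continuation on $\Omega_k$ as the contour integral \eqref{eq:clorentz}, with $\gamma$ replaced by the polygonal contour through $\tau_1,\ldots,\tau_k$. On this contour the time-ordered differences of \eqref{eq:clorentz} all have positive real part.

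Now move $\epsilon_p$ and $\epsilon_{p+1}$ toward each other and past each other, keeping the other $\epsilon_j$ fixed and separated, and keeping $\Im(\tau_p-\tau_{p+1})$ within the spacelike range. During this motion:
\begin{itemize}
\item The external factor $g(\mathcal T(v_p-v_{p+1}))$ stays analytic by Step 1.
\item Factors involving other pairs of points are unaffected, since their real-part ordering is preserved.
\item For the screening variables, the piece of contour between $\tau_p$ and $\tau_{p+1}$ has to be deformed. A screening point $u$ on it sees $v_p$ and $v_{p+1}$ through $g(\mathcal T(v_p-u))$ and $g(\mathcal T(v_{p+1}-u))$.
\end{itemize}
To handle the screening piece, I would deform that portion of the contour slightly off the imaginary direction, to a short path along which each $u$ stays spacelike, or at Euclidean-positive real separation, from whichever of $v_p,v_{p+1}$ it is compared with. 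Then each relevant factor is analytic by Step 1 applied to $u$ versus $v_p$ and $u$ versus $v_{p+1}$, and by Cauchy's theorem the integral is unchanged. At the end of the motion, the contour and the time-ordering are exactly those of the swapped tube. So the continuation from $\Omega_k$ and the continuation from the permuted tube are the same holomorphic function on a connected domain, and uniqueness of analytic continuation (Theorem~\ref{thm:main}) identifies them.

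\textbf{Step 3: boundary values.} Since the two continuations coincide, sending all $\epsilon_j\to0$ gives equal boundary values, which is \eqref{eq:commute}. The continuity statement of Theorem~\ref{thm:main} ensures the limits are the correlators $C^{\mathrm L}$.

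\textbf{Main obstacle.} I expect the main difficulty to be Step 2 for the screening variables. The deformed contour must avoid the light-cone branch points of all $g(\cdot)$ factors simultaneously. The estimates must also be uniform, so that the $w$-fold integral stays absolutely convergent throughout the deformation, including near coincident screening points, where the singularity is $|u_l-u_{l'}|^{-4b^2}$-type, and near the marked points. I would first try to reuse the sector decomposition and dominated-convergence bounds from \secref{sec:anacont}, now allowing $\epsilon_p=\epsilon_{p+1}$. If that fails, the fallback is to verify directly that the integrands of the two contour formulas agree after cancelling the back-and-forth traversal of the segment between $\i t_p$ and $\i t_{p+1}$, using Step 1 pointwise.
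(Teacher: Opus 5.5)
Your Step~1 is exactly the paper's Lemma~\ref{lem:spacelike-bridge-branch}. Gluing the two tube continuations across the edge $\Re\tau_p=\Re\tau_{p+1}$ at spacelike imaginary separation is a legitimate alternative strategy. However, Step~2, which carries all the content, fails as described.

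At the moment $\epsilon_p=\epsilon_{p+1}=:\epsilon$, consider a screening point with spatial coordinate $y$. Its time contour runs from $\Re\tau_u=-\infty$ to $+\infty$, so it must cross the line $\Re\tau_u=\epsilon$. In general the $v_p$- and $v_{p+1}$-factors continue analytically across that line only at heights $h$ with $|h-t_p|<d_1(y-x_p)$ and $|h-t_{p+1}|<d_1(y-x_{p+1})$. This window is nonempty for $y\ne x_p,x_{p+1}$, because $|t_p-t_{p+1}|<d_1(x_p-x_{p+1})\le d_1(x_p-y)+d_1(y-x_{p+1})$. But it shrinks to $\{t_p\}$ as $y\to x_p$ and to $\{t_{p+1}\}$ as $y\to x_{p+1}$. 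So no single short path can keep every screening point spacelike or Euclidean-separated from both insertions; the crossing height must be a function $h(y)$.

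Once it is, the screening--screening factors become a genuine branch obstruction, not just an integrability issue. Two screening points crossing at $h(y_l)$ and $h(y_{l'})$ sit on the cut of $g(\cdot,y_l-y_{l'})$ unless $|h(y_l)-h(y_{l'})|<d_1(y_l-y_{l'})$. You would therefore need:
\begin{enumerate}
\item a strictly $d_1$-Lipschitz (spacelike) profile with $h(x_p)=t_p$ and $h(x_{p+1})=t_{p+1}$, which exists precisely because of the spacelike hypothesis;
\item a $w$-dimensional Stokes deformation from $\gamma^w$ to the resulting $y$-dependent product contour, keeping every pair difference off its cut throughout;
\item new majorants for that representation.
\end{enumerate}

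Step~3 is also incomplete. Theorem~\ref{thm:main} gives continuity at $(\i t_1,\ldots,\i t_k)$ only from inside each ordered tube. Restrictions of one analytic function on a connected domain can have different boundary values at a common boundary point; this is exactly the timelike exchange phase \eqref{eq:vac1}. What you need is continuity of the glued function at that point from the edge region, i.e.\ uniform bounds on the deformed representation as all $\epsilon_j\to0$.

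Your fallback does not work pointwise either. A screening point with time between $t_p$ and $t_{p+1}$ may be timelike to $v_p$ or to $v_{p+1}$, where Step~1 says nothing. Screening points on the doubly traversed segment are also ordered differently relative to each other. Cancellation occurs only after summing over configurations.

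The paper proves the theorem directly at the boundary, in three steps:
\begin{enumerate}
\item Phantom charges (Lemmas~\ref{lem:zero-charge-invisible} and~\ref{lem:be}) put both orderings on a common contour containing the hook $\i t_p\to\i t_{p+1}\to\i t_p\to\i t_{p+1}$.
\item Summing with signs over the three hook segments, Lemma~\ref{lem:alglmm} places the difference of integrands in the ideal generated by chains $\delta^A_{i_0}\Delta_{i_0i_1}\cdots\Delta_{i_{l-1}i_l}\delta^B_{i_l}$.
\item By Step~1, a chain is nonzero only if $v_p\prec u_{i_0}\prec\cdots\prec u_{i_l}\prec v_{p+1}$, and transitivity (Lemma~\ref{lem:transitive}) rules this out.
\end{enumerate}
That triangle inequality is the same input that makes your crossing windows nonempty. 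The paper's route, however, requires no new analytic estimates, since every integral involved is already absolutely convergent by Theorem~\ref{thm:main}.
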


The AQFT-type version of this result, formulated for smeared observables and the represented local net, is stated later in \secref{sec:einstein-causality}.

\subsection{Two simple exact Lorentzian calculations}

We record two explicit consequences of the general formula, since they
give useful benchmarks for Lorentzian timelike Liouville theory.

\paragraph{Neutral two-point function and timelike exchange phase.}
Let \(u=(t,x)\), assume \(t\pm x\notin 2\mathbb Z\), and take
\(0<\alpha<\frac{1}{2b}\).  In the neutral sector
\(\alpha+(-\alpha)=0\), the screening number is \(w=0\).  Therefore
Theorem~\ref{thm:main} gives
\begin{align*}
G^{12}_\alpha(t,x)
&:=
C^{\mathrm L}_{\alpha,-\alpha}((0,0),(t,x))
=
e^{4\alpha^2 g(\i t,x)},\\ 
G^{21}_\alpha(t,x)
&:=
C^{\mathrm L}_{-\alpha,\alpha}((t,x),(0,0))
= e^{4\alpha^2 g(-\i t,-x)}.
\end{align*}
To compare the two orderings directly, let
\[
  L(y):=\log(1-e^{-\pi \i y}),
\]
with the principal branch of the logarithm, and for
$y\notin 2\mathbb Z$ let $r(y)\in(0,2)$ be the representative of $y$ modulo
$2$. Since $1-e^{\pi\i y}$ and $1-e^{-\pi\i y}$ are both in the open right half-plane and $\log a - \log b = \log (a/b)$ for $a,b$ in the open right half-plane, and $r(y)-1\in (-1,1)$, we have 
\[
  L(-y)-L(y)
  =\log\biggl(\frac{1-e^{\pi\i y}}{1-e^{-\pi\i y}}\biggr)
  =\log(-e^{\pi\i r(y)}) = \log(e^{\pi \i (r(y)-1)}) 
  =\pi\i(r(y)-1).
\] 
Using this and equation \eqref{eq:gimag}, we get
\[
  g(\i t,x)-g(-\i t,-x)
  =-\pi\i t+\frac{\pi\i}{2}(r(t+x)+r(t-x)-2).
\]
If \(0<t<x<1\), then \(t+x\in(0,2)\) and \(t-x\in(-1,0)\), so
\(r(t+x)=t+x\) and \(r(t-x)=t-x+2\). Hence
\[
  g(\i t,x)-g(-\i t,-x)=0,
\]
which gives 
\[
G^{12}_\alpha(t,x)=G^{21}_\alpha(t,x).
\]
If instead $0<x<t<2-x$, then \(0<t-x<t+x<2\). So in this case \(r(t+x)=t+x\) and \(r(t-x)=t-x\), and therefore,
\[
g(\i t,x)-g(-\i t,-x)=-\pi \i.
\]
This gives
\begin{align}\label{eq:vac1}
G^{12}_\alpha(t,x)=e^{-4\pi \i\alpha^2}G^{21}_\alpha(t,x).
\end{align}
Now, for any \(y\in(0,2)\) one has
\[
1-e^{-\pi \i y}=2\sin\biggl(\frac{\pi y}{2}\biggr)e^{\frac{1}{2}\pi \i (1-y)}.
\]
Thus, the principal logarithm satisfies
\[
\log(1-e^{-\pi \i y})
=
\log\biggl(2\sin\frac{\pi y}{2}\biggr)+\frac{\pi \i}{2}(1-y).
\]
When $0<x<t<2-x$, we can apply this in equation \eqref{eq:gimag} with \(y=t+x\) and \(y=t-x\) (since \(0<t\pm x<2\)), and get 
\[
g(\i t,x)
=
-\frac{\pi \i}{2}
-\frac{1}{2}\log\biggl(4\sin\frac{\pi(t+x)}{2}\sin\frac{\pi(t-x)}{2}\biggr).
\]
Thus, we get 
\begin{align}\label{eq:vac2}
G^{12}_\alpha(t,x)
=
  e^{-2\pi \i\alpha^2}
  \biggl(4\sin\frac{\pi(t+x)}{2}\sin\frac{\pi(t-x)}{2}\biggr)^{-2\alpha^2},
  \qquad 0<x<t<2-x.
\end{align}
By equations \eqref{eq:vac1} and \eqref{eq:vac2}, the vacuum commutator matrix element is
\begin{align*}
&\inn{[V_\alpha(0,0), V_{-\alpha}(t,x)]} = G^{12}_\alpha(t,x)-G^{21}_\alpha(t,x) \\ 
&\qquad =
\begin{cases}
0, & 0<t<x,\\[4pt]
-2\i \sin(2\pi \alpha^2)(4\sin\frac{\pi(t+x)}{2}\sin\frac{\pi(t-x)}{2})^{-2\alpha^2},
& 0<x<t<2-x.
\end{cases}
\end{align*}
This gives an explicit Lorentzian exchange phase while preserving
locality in the spacelike region. 

\paragraph{Exact \(w=1\) cylinder tadpole.}
Assume \(b^2<\frac12\), so that the insertion \(V_{-b}\) is allowed.
Then \(w=1\), and translation invariance gives a constant one-point
function
\[
C^{\mathrm L}_{-b}(u)
=
-\mu I_b,
\qquad u\in M,
\qquad
I_b
:=
\int_{\mathbb R\times[-1,1]} e^{4b^2 g(t,x)}\,\dd t\,\dd x .
\]
Since
\[
  g(t,x)=-\frac{\pi|t|}{2}-\log|1-e^{-\pi|t|+\pi \i x}|,
\]
we have
\[
  e^{4b^2g(t,x)}=e^{-2\pi b^2|t|}|1-e^{-\pi|t|+\pi \i x}|^{-4b^2}.
\]
The integrand is even in \(t\), so with \(q=e^{-\pi t}\),
\[
I_b
=
2\int_0^\infty e^{-2\pi b^2 t}
\biggl(\int_{-1}^{1}
|1-qe^{\pi \i x}|^{-4b^2}\,\dd x\biggr)\,\dd t.
\]
To compute the inner integral, set \(\theta=\pi x\). Then
\[
\int_{-1}^{1}
|1-qe^{\pi \i x}|^{-4b^2}\,\dd x
=
\frac{1}{\pi}\int_{-\pi}^{\pi}(1-qe^{\i \theta })^{-2b^2}(1-qe^{-\i \theta })^{-2b^2}\,\dd \theta.
\]
For \(|q|<1\), the binomial series gives
\[
  (1-qe^{\pm \i\theta})^{-2b^2}
  =
  \sum_{n=0}^{\infty}\frac{(2b^2)_n}{n!}q^n e^{\pm \i n\theta},
\]
with absolute convergence. Here $(a)_n := a(a+1)\cdots (a+n-1)$ denotes the rising Pochhammer symbol. Multiplying the two series and integrating term by
term, only the diagonal terms survive. Thus
\[
\begin{aligned}
\int_{-1}^{1}
|1-qe^{\pi \i x}|^{-4b^2}\,\dd x
&=
\frac{1}{\pi}
\sum_{m,n\ge0}
\frac{(2b^2)_m(2b^2)_n}{m!n!}q^{m+n}
\int_{-\pi}^{\pi}e^{\i(m-n)\theta}\,\dd \theta \\
&=
2\sum_{n=0}^{\infty}\frac{(2b^2)_n^2}{(n!)^2}q^{2n}
=
2\,{}_2F_1(2b^2,2b^2;1;q^2),
\end{aligned}
\]
where the last equality is the defining series of the Gauss hypergeometric
function; see, for example, \cite[Ch.~2]{andrewsetal99} for the standard
notation. Therefore
\[
I_b
=
4\int_0^\infty
e^{-2\pi b^2 t}
\,{}_2F_1(2b^2,2b^2;1;e^{-2\pi t})\,\dd t .
\]
After the change of variables \(y=e^{-2\pi t}\), so that
\(\dd t=-(2\pi y)^{-1}\,\dd y\), this becomes
\[
I_b
=
\frac{2}{\pi}
\int_0^1
y^{b^2-1}\,
{}_2F_1(2b^2,2b^2;1;y)\,\dd y .
\]
Expanding \({}_2F_1\) once more and integrating term by term gives
\[
\begin{aligned}
I_b
&=
\frac{2}{\pi}
\sum_{n=0}^{\infty}
\frac{(2b^2)_n^2}{(n!)^2}
\int_0^1 y^{n+b^2-1}\,\dd y \\
&=
\frac{2}{\pi}
\sum_{n=0}^{\infty}
\frac{(2b^2)_n^2}{(n!)^2}\frac{1}{n+b^2}.
\end{aligned}
\]
Using
\[
  \frac{1}{n+b^2}
  =
  \frac{1}{b^2}\frac{(b^2)_n}{(b^2+1)_n},
\]
we can rewrite this as
\[
I_b
=
\frac{2}{\pi b^2}\,
{}_3F_2\biggl(
\begin{matrix}
2b^2,\;2b^2,\;b^2\\
1,\;b^2+1
\end{matrix}
;1
\biggr),
\]
where ${}_3F_2$ denotes the generalized hypergeometric series. This series converges
precisely in the range \(b^2<\frac{1}{2}\) (see \cite[Theorem 2.1.2]{andrewsetal99}). Therefore, for $b < \frac{1}{\sqrt{2}}$, we get the result
\[
C^{\mathrm L}_{-b}(u)
=
-\frac{2\mu}{\pi b^2}\,
{}_3F_2\biggl(
\begin{matrix}
2b^2,\;2b^2,\;b^2\\
1,\;b^2+1
\end{matrix}
;1
\biggr).
\]

\subsection{Smeared observables}\label{sec:smear}
We will now construct ``smeared'' versions of the observables that we have been considering until now. The reason for smearing is that unsmeared pointlike observables need not have finite Lorentzian vacuum expectation values. One could try to keep only those unsmeared observables whose expectations are finite, but that class would not be closed under multiplication: even if two pointlike observables separately have finite expectation values, their product may create additional coincident or light-cone-singular insertions and thereby lose finiteness. Smearing is what produces a collection of observables with well-defined expectations that is robust enough for the later algebraic construction. 

The smeared observables are defined as follows. Throughout this construction,
assume
\[
  0<b<\frac{1}{\sqrt{2}}.
\]
Let
\[
  \mathcal I_b:=\biggl\{n\in\Z: |n|<\frac{1}{\sqrt2 b}\biggr\}.
\]
The basic smeared observables are the integer-charge vertex operators
\[
  \mo_{f,n}(\phi):=\int_M f(u)V_{bn}(u)\,\dd u,
  \qquad f\in C_c^\infty(M,\C),\ n\in\mathcal I_b.
\]
For an integer tuple
$\boldsymbol n=(n_1,\ldots,n_k)\in\mathcal I_b^k$, write
\[
  C_{b\boldsymbol n}^{\mathrm L}(u_1,\ldots,u_k)
  :=C_{bn_1,\ldots,bn_k}^{\mathrm L}(u_1,\ldots,u_k),
\]
with the convention  that
this correlator is identically zero when
$w(\boldsymbol n):=-\sum_{j=1}^k n_j$ is a negative integer. In the nonnegative
sector the point correlator is understood on the non-light-cone domain and may
be assigned arbitrary values on the measure-zero exceptional set before
smearing. Naturally, for an
ordered product of basic smeared observables we define
\begin{align*}
\inn{\mo_{f_1,n_1}\cdots\mo_{f_k,n_k}}
&:=\int_{M^k}\prod_{j=1}^k f_j(u_j)\,
C_{b\boldsymbol n}^{\mathrm L}(u_1,\ldots,u_k)\,\dd u_1\cdots\dd u_k.
\end{align*}
The following theorem shows that these definitions make sense. The proof is in
\secref{sec:smearproof}.

\begin{thm}\label{thm:smear}
Assume $0<b<\frac{1}{\sqrt{2}}$. For any $k\ge1$, any
$n_1,\ldots,n_k\in\mathcal I_b$, and any
$f_1,\ldots,f_k\in C_c^\infty(M,\C)$, the integral defining $\inn{\mo_{f_1,n_1}\cdots\mo_{f_k,n_k}}$ 
is absolutely convergent. In particular, this vacuum expectation is finite.
\end{thm}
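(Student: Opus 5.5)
We will bound the point correlator $C^{\mathrm L}_{b\boldsymbol n}$ by an explicit function that is locally integrable on $M^k$. If $w(\boldsymbol n)<0$ there is nothing to prove, since the correlator vanishes, so assume $w:=w(\boldsymbol n)\ge0$. The $f_j$ have compact support, so the times $t_j$ range over a fixed compact interval $[-R,R]$.

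\textbf{Step 1: pointwise bound on $g$ along the contour.} By \eqref{eq:gdefine0}–\eqref{eq:gimag}, for a difference $\mathcal T(\cdot)$ with real part $\sigma\ge0$, spatial part $x$ and imaginary part $s$, we have
\[
\Re g=-\tfrac{\pi\sigma}{2}-\tfrac12\log|1-e^{-\pi(\sigma+\i s)+\pi\i x}|-\tfrac12\log|1-e^{-\pi(\sigma+\i s)-\pi\i x}|.
\]
The time ordering along $\gamma$ guarantees $\sigma\ge0$. Hence each $\log|\cdot|$ is bounded above uniformly. Near the light cone $s\pm x\in2\Z$ with $\sigma$ small, it behaves like $\log(\sigma+d_1(s\pm x))$ up to bounded errors. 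Consequently
\[
|e^{-4\beta g}|\le C\,e^{2\pi\Re\beta\,\sigma}\,\bigl(\sigma+d_1(s+x)\bigr)^{2\Re\beta}\bigl(\sigma+d_1(s-x)\bigr)^{2\Re\beta}
\]
for real coefficients $\beta$. Here $\beta=b^2n_jn_{j'}$, $b^2n_j$, or $b^2$, and all are real. For $\beta<0$ the singular factors are at worst $d_1(\cdot)^{-2|\beta|}$. The exponential factor $e^{2\pi\beta\sigma}$ is exactly what governs convergence of the horizontal legs of $\gamma$.

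\textbf{Step 2: the horizontal legs.} The screening variables on the legs $\i t_1-[0,\infty)$ and $\i t_k+[0,\infty)$ produce real-part differences growing linearly. Collecting exponents, the decay rate for a screening at distance $\sigma$ on the right leg is $2\pi b^2\bigl(\sum_j n_j+\#\{\text{screenings}\}\bigr)$-type sums. As in the proof of absolute convergence in Theorem~\ref{thm:main}, these are negative because of the charge condition $\sum n_j=-w$; we reuse that argument verbatim. The bound obtained there is uniform for $t_j\in[-R,R]$ and contains only the singular light-cone factors in the external variables. The screening integrals over the compact vertical parts are then handled by bounding them by
\[
\prod_{l}\prod_j d_1(\cdot)^{-2b^2|n_j|\cdot 1_{n_j<0}}\ldots
\]
Since $n_j\ge$, a screening in the $b^2$ interaction is always repulsive-singular with exponent $2b^2<1$ on a one-dimensional null set in each null direction. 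After integrating out the screenings, we are left with a function of $(u_1,\dots,u_k)$ dominated by $\prod_{j<j'}\bigl(d_1(\Delta t+\Delta x)d_1(\Delta t-\Delta x)\bigr)^{-2b^2|n_jn_{j'}|}$ times a bounded factor.

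\textbf{Step 3: integrability in the external variables.} The key input is $|n_j|<1/(\sqrt2 b)$, which gives $2b^2|n_jn_{j'}|<1$. Changing to light-cone coordinates $a=t+x$, $c=t-x$, the dominating function is a product over pairs of $|a_j-a_{j'}|^{-\theta}|c_j-c_{j'}|^{-\theta}$ (mod 2) with $\theta<1$. This product is locally integrable by a generalized Hölder / Brascamp–Lieb style argument. Concretely, we integrate the variables one at a time in a tree order, or use the elementary fact that $\prod_{j<j'}|a_j-a_{j'}|^{-\theta_{jj'}}$ on a compact set is integrable whenever every partial sum over a cluster $S$ satisfies $\sum_{j<j'\in S}\theta_{jj'}<|S|-1$. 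The cluster condition is the main obstacle: pairwise $\theta<1$ is not obviously enough for $k\ge3$. We will try to verify it via the charge structure, writing $\sum_{S}b^2 n_jn_{j'}\le\frac{b^2}{2}(\sum_S n_j)^2$ since the worst case is same-sign charges. If that fails, the fallback is to use the oscillatory phases to avoid absolute values, though absolute convergence is required, so the cluster estimate must be made to work.
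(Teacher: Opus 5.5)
There is a genuine gap in Steps 2--3. The majorant you aim for is not locally integrable, and the route you propose for reaching it discards exactly the factors that make the integral converge.

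\textbf{Step 3.} You replace the free exponents $2\alpha_j\alpha_{j'}=2b^2n_jn_{j'}$ by $-2b^2|n_jn_{j'}|$. For a same-sign pair the true factor is comparable to $d_1(\Delta t+\Delta x)^{2\alpha_j\alpha_{j'}}d_1(\Delta t-\Delta x)^{2\alpha_j\alpha_{j'}}$ with a \emph{positive} exponent, so it vanishes at coincidence. The absolute value turns this zero into a singularity, and the cluster condition you correctly identify then simply fails. Take $m$ insertions with $n_j=-1$ (so $w=m\ge0$) collapsing in one light-cone coordinate. Your exponents sum to $b^2m(m-1)$, which is $\ge m-1$ as soon as $b^2m\ge1$; this already happens at $m=3$ when $b^2\ge\frac13$. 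No cluster estimate can be ``made to work'' for this majorant. Your proposed remedy via $\sum_S b^2n_jn_{j'}\le\frac{b^2}{2}(\sum_S n_j)^2$ uses signed charges, which is incompatible with the absolute values you introduced. Also, same-sign charges are the harmless case, not the worst one.

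\textbf{Step 2.} This step has the same defect in a more serious form. You treat the screening--screening factor as a mild singularity, but $|e^{-4b^2g}|$ is comparable to the light-cone distances raised to the power $+2b^2$. It vanishes, and this vanishing is indispensable. Suppose the screenings are bounded by external--screening factors alone and integrated independently. Then a cluster of negatively charged insertions attracts all of them. Concretely, take $k=3$, $n_1=n_2=n_3=-1$, $b^2=0.45$, with $v_1,v_2$ at light-cone separation $\epsilon$. Each screening integral $\int|a-a_1|^{-2b^2}|a-a_2|^{-2b^2}\,\dd a$ is of order $\epsilon^{1-4b^2}$. Three screenings together with the true free factor give $\epsilon^{0.9-2.4}=\epsilon^{-1.5}$ in each light-cone direction, which is not integrable. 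Consequently:
\begin{itemize}
\item the screening integrals are not bounded uniformly in the external configuration;
\item you cannot integrate the screenings out first and then power-count the external variables alone;
\item the proof of Theorem~\ref{thm:main2} cannot be reused verbatim, since its majorants are only locally uniform near non-light-cone configurations.
\end{itemize}

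\textbf{The missing idea.} What is needed is a \emph{joint} power counting in all variables at once, via Lemma~\ref{lem:1d-power-counting}, using the signed exponents $2\alpha_i\alpha_j$, $2b\alpha_i$, $2b^2$. The variables are the external light-cone coordinates, the vertical screening coordinates, and the short horizontal screening pieces near $\i t_1,\i t_k$ together with their height coordinate. Consider a collapsing cluster in one light-cone family with external set $A$ and $n$ screenings, and write $|A|=m$, $S=\sum_A\alpha_i$, $Q=\sum_A\alpha_i^2$. One needs
\[
\Theta(A,n)=2\sum_{i<j\in A}\alpha_i\alpha_j+2bnS+b^2n(n-1)>-(m+n-1).
\]
This follows from
\[
\Theta(A,n)+m+n-1=(S+bn)^2-Q-b^2n+m+n-1>(S+bn)^2+\tfrac{m+n}{2}-1,
\]
using $Q<\frac m2$ and $b^2<\frac12$; the cases $m+n=2$ are checked directly. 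One also needs a rank argument showing that each flat has codimension at least $\sum_C(|C|-1)$, summed over its clusters in both families. Note that the hypothesis $n_j\in\mathcal I_b$ enters through $\alpha_j^2<\frac12$ and the repulsion term $b^2n(n-1)$, not through the pairwise bound $2b^2|n_jn_{j'}|<1$.
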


This absolute convergence statement provides the analytic foundation for the algebraic and representation theoretic constructions developed in the next subsections.

\subsection{AQFT construction without positivity}\label{sec:gnsquant}
We now extract from the Lorentzian vacuum expectation values an AQFT-type
structure on the cylinder. Because positivity is absent, the outcome is not a
Haag--Kastler theory in the usual positive sense: neither an abstract net of
local $C^*$-algebras nor, after Hilbert space representation, a net of local
von Neumann algebras. Instead, we obtain a
net of algebraic local cores with a Hermitian vacuum functional, together with
a complete locally convex state space carrying a continuous representation, the
operator-topological closures of the represented local observables, and the
cylinder translation action. The vertex operator analysis from
\secref{sec:vertexcomm} leads to a full locality statement at the represented
level: the locality theorem proved later in this section applies to
the represented local net itself. For comparison with the standard
AQFT framework, see~\secref{sec:app-aqft}.

\subsubsection{The global algebra and vacuum functional}
We begin by defining the ordered algebra (that is, an algebra in which the insertion order is part of the data) generated by the integer-charge smeared observables and the vacuum functional induced by the Lorentzian correlators.
Let $\mathcal F:=C_c^\infty(M,\C)$. For each $n\in\mathcal I_b$, let
$\mathcal G_n$ be a copy of $\mathcal F$, written in the form
$f\mapsto\mo_{f,n}$, modulo linearity in $f$, and set
\[
  \mathcal G:=\bigoplus_{n\in\mathcal I_b}\mathcal G_n.
\]
We define $\ma$ to be the tensor algebra $T(\mathcal G)$ with unit $1$. Thus
$\ma$ is the complex vector space spanned by $1$ and finite ordered monomials
\[
  \mo_{f_1,n_1}\cdots\mo_{f_k,n_k},
  \qquad k\ge1,
  \quad f_j\in\mathcal F,
  \quad n_j\in\mathcal I_b,
\]
modulo linearity in each $f_j$, and multiplication is concatenation of words.
For $f\in\mathcal F$, write $\bar f(u):=\overline{f(u)}$, and define the
involution on generators by
\[
  1^*=1,
  \qquad
  \mo_{f,n}^*:=\mo_{\bar f,n},
\]
extending conjugate-linearly and anti-multiplicatively to all of $\ma$.
Then $(A^*)^* = A$ and $(AB)^*=B^*A^*$.  Next, define the vacuum functional $\omega:\ma\to\C$ by
\[
  \omega(1):=1,
\]
and, on monomials of positive length, by
\[
  \omega(\mo_{f_1,n_1}\cdots\mo_{f_k,n_k})
  :=\inn{\mo_{f_1,n_1}\cdots\mo_{f_k,n_k}},
\]
and finally, extending linearly. This is well-defined by Theorem~\ref{thm:smear}. The
associated sesquilinear form~is
\[
  \langle A,B\rangle_{\ma}:=\omega(A^*B),\qquad A,B\in\ma.
\]
This form is conjugate linear in the first argument and linear in the second. We adopt this definition since it is conventional  in physics to have conjugate linearity in the first argument. It turns out that this form is Hermitian, but the argument is nontrivial. The following theorem is proved in \secref{sec:hermitproof}.
\begin{thm}[Hermiticity of the algebraic form]\label{thm:ma-hermitian}
The form $\langle\cdot,\cdot\rangle_{\ma}$ is Hermitian. That is,
\[
  \langle A,B\rangle_{\ma}=\overline{\langle B,A\rangle_{\ma}}
  \qquad\text{for all }A,B\in\ma.
\]
\end{thm}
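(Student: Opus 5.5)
By the definition of the form, $\overline{\langle B,A\rangle_{\ma}}=\overline{\omega(B^*A)}$ and $\langle A,B\rangle_{\ma}=\omega(A^*B)=\omega((B^*A)^*)$. So it suffices to prove the reality property
\[
\omega(C^*)=\overline{\omega(C)}\qquad\text{for all }C\in\ma .
\]
Both sides are conjugate linear in $C$, so it is enough to check this on $1$, where it is trivial, and on monomials. For a monomial we have $(\mo_{f_1,n_1}\cdots\mo_{f_k,n_k})^*=\mo_{\bar f_k,n_k}\cdots\mo_{\bar f_1,n_1}$. Write the definition of $\omega$ as an absolutely convergent integral (Theorem~\ref{thm:smear}) and relabel the integration variables. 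The claim then reduces to the following pointwise identity for almost every $(u_1,\ldots,u_k)\in M^k$:
\begin{equation*}
C^{\mathrm L}_{bn_k,\ldots,bn_1}(u_k,\ldots,u_1)=\overline{C^{\mathrm L}_{bn_1,\ldots,bn_k}(u_1,\ldots,u_k)} .
\end{equation*}
This is the usual reversal-conjugation symmetry of Wightman-type boundary values. The exceptional light-cone and coincidence sets are null and do not affect the integrals. If $w<0$, both sides vanish by convention.

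For $w\ge0$ I would derive the identity from three properties of the Euclidean correlator $C=C_{bn_1,\ldots,bn_k}$ of Theorem~\ref{thm:euc}.
\begin{enumerate}[label=(\roman*)]
\item $C$ is symmetric under simultaneous permutation of the labelled insertions.
\item $C$ is real-valued for real times and positions, since all charges $bn_j$ are real, $\mu$ is real, and $g$ is real.
\item $C$ is invariant under $t_j\mapsto -t_j$ for all $j$, since $g(t,x)$ depends on $t$ only through $|t|$.
\end{enumerate}
Let $F$ be the unique analytic continuation of $C$ to $\Omega_k$ given by Theorem~\ref{thm:main}. The function $\tau\mapsto\overline{F(\bar\tau)}$ is also analytic on $\Omega_k$, because $\Omega_k$ is invariant under conjugation. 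By (ii) it agrees with $F$ on the totally real open set of real ordered times. The identity theorem then gives the Schwarz reflection $F(\bar\tau)=\overline{F(\tau)}$ on $\Omega_k$.

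Now take the reversed correlator. It is the boundary value of the continuation of $(\sigma_1,\ldots,\sigma_k)\mapsto C_{\mathrm{rev}}(\sigma)$ from $\Re\sigma_1<\cdots<\Re\sigma_k$ as $\sigma_j\to \i t_{k+1-j}$. Set $\rho_j:=\sigma_{k+1-j}$. By (i), $C_{\mathrm{rev}}(\sigma)=C(\rho_1,\ldots,\rho_k)$ on real points with $\rho_1>\cdots>\rho_k$. By (iii), this equals $C(-\rho_1,\ldots,-\rho_k)$, and $-\rho\in\Omega_k$. By uniqueness of continuation, the continued reversed correlator equals $F(-\rho)$. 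By Schwarz reflection, this is $\overline{F(-\bar\rho)}$. As $\rho_j\to\i t_j$ we have $-\bar\rho_j\to\i t_j$, and $-\bar\rho$ stays in $\Omega_k$. The continuity statement of Theorem~\ref{thm:main} then gives the limit $\overline{C^{\mathrm L}_{bn_1,\ldots,bn_k}(v_1,\ldots,v_k)}$, which is the pointwise identity.

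The main obstacle is the boundary-value step. Theorem~\ref{thm:main} must yield the same limit along every approach within $\Omega_k$, not only along a preferred path. If that is not explicit in its statement, I would instead verify the identity directly from \eqref{eq:clorentz}. Reversing the order reverses $\gamma$, and the parameter change $s\mapsto -s$ produces the reflected contour and a sign from $\gamma'$. One then uses $g(-\overline{\tau},x)$-type conjugation relations from \eqref{eq:gdefine0} together with the fact that $x\mapsto -x$ preserves the $x$-integrals. The contour route needs careful bookkeeping of branches and orientations, so I would try the abstract analytic argument first.
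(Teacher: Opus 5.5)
Your proposal is correct and is essentially the paper's proof. The paper also reduces to $\omega(A^*)=\overline{\omega(A)}$ and to the pointwise reversal--conjugation identity. It proves that identity by using conjugation, permutation symmetry and time reflection of the Euclidean correlator to show that $F(t)$ and $\overline{C_{\bar\alpha_k,\ldots,\bar\alpha_1}(-\bar t_k,x_k;\ldots;-\bar t_1,x_1)}$ agree on real ordered tuples, hence on all of $\Omega_k$ by uniqueness; this is your Schwarz-reflection and reversal steps composed into one. The boundary-value step you flag as the main obstacle is not one: Theorem~\ref{thm:main2} gives the Lorentzian limit along arbitrary sequences $t^{(m)}\to \i s$ in $\Omega_k$, and the paper uses exactly that.
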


This supplies the Hermitian vacuum form that underlies the later state space and representation constructions.

\subsubsection{Cylinder translations and local algebras}\label{sec:aqft}
We now record the natural cylinder-translation action on the global algebra and use it to define the corresponding local subalgebras.
For $(a,\theta), (t,x)\in\R\times \mathbb S^1$, let
\[
  T_{a,\theta}(t,x):=(t+a,x+\theta),
\]
where the second coordinate is understood modulo $2$. For $f\in\mathcal F$,
define
\[
  (\eta_{a,\theta}f)(u):=f\bigl(T_{-a,-\theta}u\bigr),
\]
and on generators set
\[
  \eta_{a,\theta}\mo_{f,n}:=\mo_{\eta_{a,\theta}f,n},
  \qquad
  \eta_{a,\theta}1:=1,
\]
extending multiplicatively and linearly to all of $\ma$.
We write
\[
  \tau_t:=\eta_{-t,0},\qquad t\in\R,
\]
for the time-translation subgroup. For each nonempty open set $O\subset M$,
let $\ma(O)$ be the unital $*$-subalgebra of $\ma$ generated by $1$ and all
symbols $\mo_{f,n}$ with $\supp(f)\subset O$. We say that $\ma(O)$ is the \emph{local algebra} at $O$.

In the following result, a $*$-automorphism means a unital algebra automorphism that preserves the involution, and a normalized Hermitian functional on a unital $*$-algebra means a linear functional $\varphi$ such that $\varphi(1)=1$ and $\varphi(A^*)=\overline{\varphi(A)}$ for all $A$.

\begin{prop}[AQFT-type net on the cylinder]\label{prop:aqft-net}
The family $(\eta_{a,\theta})_{(a,\theta)\in\R\times \mathbb S^1}$ is an
action of $\R\times \mathbb S^1$ on $\ma$ by $*$-automorphisms, and $\omega(\eta_{a,\theta}A)=\omega(A)$ for all $A\in\ma$. If $O_1\subset O_2$, then $\ma(O_1)\subset \ma(O_2)$. Lastly, we have that $\eta_{a,\theta}\ma(O)=\ma(T_{a,\theta}O)$  for all $(a,\theta)\in\R\times \mathbb S^1$, and for each open set $O$, the restriction $\omega_O:=\omega|_{\ma(O)}$ is a normalized Hermitian functional on~$\ma(O)$.
\end{prop}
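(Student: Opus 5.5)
The plan is to check each claim directly from the definitions. The one substantive input is translation invariance of the point correlators $C^{\mathrm L}_{b\boldsymbol n}$, followed by a change of variables in the smeared integrals.

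\emph{Action by $*$-automorphisms.}
On $\mathcal F$, the map $\eta_{a,\theta}$ is linear, and $\eta_{a,\theta}\eta_{a',\theta'}=\eta_{a+a',\theta+\theta'}$ with $\eta_{0,0}=\mathrm{id}$, because $T_{-a,-\theta}T_{-a',-\theta'}=T_{-(a+a'),-(\theta+\theta')}$. It also preserves $C_c^\infty$ and commutes with complex conjugation, $\eta_{a,\theta}\bar f=\overline{\eta_{a,\theta}f}$.

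Since $\ma=T(\mathcal G)$ is free on $\mathcal G$, any linear map $\mathcal G\to\mathcal G$ extends uniquely to a unital algebra endomorphism. Uniqueness then transfers the group law to $\ma$, so each $\eta_{a,\theta}$ is invertible with inverse $\eta_{-a,-\theta}$. Compatibility with $*$ needs only to be checked on generators, since both $A\mapsto\eta(A^*)$ and $A\mapsto(\eta A)^*$ are conjugate-linear and anti-multiplicative. On generators it follows from $\eta\mo_{\bar f,n}=\mo_{\overline{\eta f},n}$.

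\emph{Isotony and covariance.}
Isotony is immediate, since the generators of $\ma(O_1)$ are among those of $\ma(O_2)$. For covariance, observe that $\supp(\eta_{a,\theta}f)=T_{a,\theta}\supp f$. Hence $\eta_{a,\theta}$ maps the generating set of $\ma(O)$ onto that of $\ma(T_{a,\theta}O)$, the surjectivity using the inverse $\eta_{-a,-\theta}$. Because $\eta_{a,\theta}$ is a $*$-automorphism, the generated unital $*$-subalgebras correspond.

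\emph{Invariance of $\omega$.}
By linearity it suffices to treat a monomial. The change of variables $u_j\mapsto T_{a,\theta}u_j$ has unit Jacobian on the cylinder and preserves the absolute convergence of Theorem~\ref{thm:smear}. It reduces the claim to
\[
C^{\mathrm L}_{b\boldsymbol n}(T_{a,\theta}u_1,\ldots,T_{a,\theta}u_k)=C^{\mathrm L}_{b\boldsymbol n}(u_1,\ldots,u_k)
\]
for almost every configuration. This is the main point to verify. The sector $w<0$ is trivial, so take $w\ge0$.

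\emph{Spatial shifts.} In formula \eqref{eq:clorentz}, every external argument of $g$ is a difference, which is unchanged by a shift. The screening variables $b_l$ are integrated over the full circle, so the substitution $b_l\mapsto b_l+\theta$ (mod $2$) leaves the integral invariant, using $2$-periodicity of $g$ in $x$.

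\emph{Time shifts.} A shift $t_j\mapsto t_j+a$ moves the contour $\gamma$ to $\gamma+\i a$. Reparametrizing the screening variables by $\gamma(s)\mapsto\gamma(s)+\i a$ preserves $\gamma'$ and the time-ordering $\mathcal T$, and all external differences are unchanged. So the integrand and measure correspond exactly.

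An alternative route is to cite the Euclidean translation invariance visible in Theorem~\ref{thm:euc}, which depends only on the differences $q_j-q_{j'}$ and $q_j-u_l$ with $u_l$ integrated over $M$. One would then transport this through the uniqueness of analytic continuation in Theorem~\ref{thm:main}. I would use the direct contour argument above, since it avoids the continuity-at-the-boundary subtleties.

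\emph{Normalized Hermitian restriction.}
We have $\omega_O(1)=1$. For the Hermitian property, $\omega(A^*)=\langle A^*{}^*\!,1\rangle_{\ma}$-type identities follow from Theorem~\ref{thm:ma-hermitian}: taking $B=1$ gives $\omega(A^*)=\langle A,1\rangle_{\ma}=\overline{\langle 1,A\rangle_{\ma}}=\overline{\omega(A)}$. Restriction to $\ma(O)$ preserves this, because $\ma(O)$ is a $*$-subalgebra containing $1$.

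The only step with real content is the time-shift invariance of $C^{\mathrm L}$. There I expect the bookkeeping of $\mathcal T$ under the shifted contour to be the main obstacle, but it should hold because $\mathcal T$ depends only on contour parameters and labels.
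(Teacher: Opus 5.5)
Your proposal is correct and follows essentially the same route as the paper. The $*$-automorphism, isotony and covariance claims are read off from the definitions on generators. Invariance of $\omega$ is reduced by the substitution $u_j\mapsto T_{a,\theta}u_j$ to translation invariance of $C^{\mathrm L}_{b\boldsymbol n}$, which is proved by shifting the contour to $\gamma+\i a$ together with the screening variables, preserving $\gamma'$ and every time-ordered difference. Hermiticity of $\omega_O$ is Theorem~\ref{thm:ma-hermitian} with $B=1$. The only cosmetic difference is that you treat the spatial and temporal shifts separately, whereas the paper shifts the screening variables by $(\i a,\theta)$ in a single step.
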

The proof of this result is in \secref{sec:timeev-details}. This provides the translation-covariant isotone family of local algebras that serves as the algebraic starting point for the AQFT-type construction developed below.

\subsubsection{The state space and the represented net}
We now use the vacuum functional $\omega$ and its associated Hermitian form
$\langle\cdot,\cdot\rangle_{\ma}$ to build an explicit state space and the
corresponding family of local operator algebras. In the usual positive GNS
construction, one would quotient by null vectors (that is, $A$ such that
$\langle A,A\rangle_{\ma}=0$) and complete in the induced norm,
but here the form is not assumed positive, and is shown below to be indefinite for $b<8^{-1/2}$; it therefore does not furnish a canonical Hilbert norm. Instead, we realize
each observable $A$ through its vacuum matrix coefficients $(\omega(B^*A))_{B\in \ma}$, take the resulting
closure inside the product space $\C^{\ma}$, and use the algebra action to
obtain the representation. This yields a complete locally convex topological vector space with a
dense algebraic core, together with the represented local algebras that form
the concrete AQFT-type net.

To begin, consider $\C^{\ma}$ with the product topology, and define
\[
  \Phi:\ma\to\C^{\ma},
  \qquad
  \Phi(A):=(\omega(B^*A))_{B\in\ma}.
\]
The map $\Phi$ is linear. Set
\[
  \mh_0:=\Phi(\ma),
  \qquad
  \mh:=\overline{\mh_0}\subset\C^{\ma}.
\]
Since $\C^{\ma}$ is complete, $\mh$ is a complete Hausdorff topological vector
space. (See \secref{sec:app-tvs} for the basics of topological nets and completeness in topological vector spaces.) Define a pairing on $\mh_0\times\mh$ by
\[
  (\Phi(A),z):=z_A,
  \qquad
  A\in\ma,
  \ z=(z_B)_{B\in\ma}\in\mh.
\]
The next theorem records the basic structural properties of the state space:
it shows that $\mh_0$ is dense in $\mh$, that the above formula gives a
well-defined sesquilinear pairing, and that this pairing recovers the original
Hermitian form on algebraic vectors. Its proof is deferred to
\secref{sec:innlmmproof}.
\begin{thm}[Properties of the state space]\label{thm:innlmm}
The space $\mh_0$ is dense in $\mh$, and the above formula defines a
well-defined sesquilinear pairing on $\mh_0\times\mh$, conjugate linear in the first
argument and linear in the second. For $A,B\in\ma$, $(\Phi(A),\Phi(B))=\langle A,B\rangle_{\ma}$. 
Consequently, the restriction of the pairing to $\mh_0$ is a 
nondegenerate Hermitian form. More generally, if $(u,z)=0$ for all $u\in\mh_0$, then
$z=0$.
\end{thm}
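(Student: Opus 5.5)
The argument is essentially formal; the only analytic input is Hermiticity (Theorem~\ref{thm:ma-hermitian}).

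\emph{Density.} Density of $\mh_0$ in $\mh$ holds by definition, since $\mh$ is the closure of $\mh_0$ in $\C^{\ma}$.

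\emph{Well-definedness of the pairing.} The formula $(\Phi(A),z):=z_A$ depends on $A$ rather than on $\Phi(A)$, so we must show that $\Phi(A)=0$ forces $z_A=0$ for every $z\in\mh$.

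\begin{itemize}
\item Suppose $\Phi(A)=0$, i.e.\ $\omega(B^*A)=0$ for all $B\in\ma$.
\item By Theorem~\ref{thm:ma-hermitian}, $\omega(A^*B)=\overline{\omega(B^*A)}=0$ for all $B$.
\item Hence for $z=\Phi(B)\in\mh_0$ we get $z_A=\omega(A^*B)=0$.
\item The coordinate functional $z\mapsto z_A$ is continuous on $\C^{\ma}$ in the product topology. It vanishes on $\mh_0$, so it vanishes on the closure $\mh$.
\end{itemize}

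\emph{Sesquilinearity.} Linearity in $z$ is clear. For conjugate linearity in the first slot, note that $(\lambda A+A')^*=\bar\lambda A^*+A'^*$. Thus for $z=\Phi(C)$,
\[
z_{\lambda A+A'}=\omega\bigl((\lambda A+A')^*C\bigr)=\bar\lambda\, z_A+z_{A'}.
\]
The identity $z_{\lambda A+A'}-\bar\lambda z_A-z_{A'}=0$ is a continuous linear condition on $z$. It holds on $\mh_0$, so it extends to $\mh$. Together with well-definedness, this makes the pairing conjugate linear on $\mh_0$.

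\emph{Recovery of the algebraic form.} Directly from the definitions,
\[
(\Phi(A),\Phi(B))=\Phi(B)_A=\omega(A^*B)=\langle A,B\rangle_{\ma}.
\]
Hermiticity of the restricted form then follows from Theorem~\ref{thm:ma-hermitian}.

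\emph{General nondegeneracy.} Suppose $z\in\mh$ satisfies $(u,z)=0$ for all $u\in\mh_0$. Then $z_A=(\Phi(A),z)=0$ for every $A\in\ma$, so $z=0$ as an element of $\C^{\ma}$.

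\emph{Nondegeneracy on $\mh_0$.} Apply the previous step to $z=\Phi(B)\in\mh_0$: if $(u,\Phi(B))=0$ for all $u\in\mh_0$, then $\Phi(B)=0$. Nondegeneracy in the other slot follows from Hermiticity.

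\emph{Main obstacle.} The only delicate point is well-definedness of the pairing. Here Hermiticity is used to pass from $\omega(B^*A)=0$ to $\omega(A^*B)=0$, and continuity of coordinate projections is used to extend from $\mh_0$ to $\mh$. Everything else is routine.
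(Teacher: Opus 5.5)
Your proposal is correct and follows essentially the same route as the paper: Hermiticity plus continuity of the coordinate projections for well-definedness, extension of conjugate-linearity from $\mh_0$ to its closure, and coordinatewise vanishing for nondegeneracy. The one cosmetic difference is that you reduce well-definedness to the kernel case $\Phi(A)=0$, which silently uses additivity of $A\mapsto z_A$ on all of $\mh$; you only prove that in the next step, so swap the order of those two steps. The paper avoids this dependence by comparing the projections $p_{A_1}$ and $p_{A_2}$ directly.
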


The next proposition shows that the Hermitian form on $\mh_0$ is genuinely
indefinite when $b$ is small enough: it has vectors of positive and of negative squared norm. The proof is in  \secref{sec:innlmmproof}. It is possible that the result is true for any $b$, but we do not attempt to prove it here.
\begin{prop}[The form on $\mh_0$ is indefinite]\label{prop:mh0-indefinite}
Suppose that $b\in (0,8^{-1/2})$. Then there exist vectors $u_+,u_-\in\mh_0$ such that $(u_+,u_+)>0$  and $(u_-,u_-)<0$. In particular, the Hermitian form on $\mh_0$ is neither positive semidefinite nor negative
semidefinite.
\end{prop}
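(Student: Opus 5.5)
The plan is to produce the two vectors explicitly. The positive one is trivial; the negative one comes from a two-term combination of charge $\pm1$ smeared vertex operators. In that combination a cross term linear in a small parameter dominates a quadratic term whose sign we do not control.

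\textbf{Positive vector.} Take $u_+:=\Phi(1)$. By Theorem~\ref{thm:innlmm}, $(u_+,u_+)=\langle 1,1\rangle_{\ma}=\omega(1)=1>0$.

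\textbf{Candidate for the negative vector.} Since $b<8^{-1/2}<2^{-1/2}$, we have $\pm1\in\mathcal I_b$. For $f,g\in\mathcal F$ and $\lambda\in\C$, set
\[
A_\lambda:=\mo_{f,1}+\lambda\,\mo_{g,-1},\qquad u_-:=\Phi(A_\lambda).
\]
Expanding $\omega(A_\lambda^*A_\lambda)$ gives four terms, which we sort by screening number.

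\begin{itemize}
\item The term $\omega(\mo_{\bar f,1}\mo_{f,1})$ has $w=-2$, so it vanishes by the convention following Theorem~\ref{thm:main}.
\item The term $|\lambda|^2\omega(\mo_{\bar g,-1}\mo_{g,-1})$ has $w=2$. It is finite by Theorem~\ref{thm:smear}, and we only need that it is some fixed complex number $D$ (in fact real, by Theorem~\ref{thm:ma-hermitian}).
\item The cross terms are $\lambda c+\bar\lambda\,\omega(\mo_{\bar g,-1}\mo_{f,1})$, where
\[
c:=\omega(\mo_{\bar f,1}\mo_{g,-1})=\int_{M^2}\bar f(u)g(v)\,C^{\mathrm L}_{b,-b}(u,v)\,\dd u\,\dd v .
\]
By Theorem~\ref{thm:ma-hermitian} the two cross terms are complex conjugates, so they sum to $2\Re(\lambda c)$.
\end{itemize}

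Hence
\[
(u_-,u_-)=2\Re(\lambda c)+|\lambda|^2 D .
\]

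\textbf{Choosing $\lambda$.} If $c\neq0$, take $\lambda=-\varepsilon\bar c$ with $\varepsilon>0$ small. This gives
\[
(u_-,u_-)=-2\varepsilon|c|^2+O(\varepsilon^2)<0 .
\]

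\textbf{Main step: showing $c\neq0$.} This is a neutral $w=0$ two-point function. By Theorem~\ref{thm:main},
\[
C^{\mathrm L}_{b,-b}(u,v)=\exp\bigl(4b^2 g(\mathcal T(u-v))\bigr),
\]
with $g$ given by \eqref{eq:gimag}. This is nowhere zero and continuous off the light-cone set. Fix two points $u_0,v_0$ whose difference avoids the light cone, for example spacelike separated. Choose $f,g$ to be nonnegative bumps of small support around $u_0,v_0$, small enough that $v-u$ stays in a compact region where the correlator is continuous. Then $C^{\mathrm L}_{b,-b}(u,v)$ stays within a small neighbourhood of its nonzero value $\exp(4b^2g(\cdot))$ at $(u_0,v_0)$ on the product of supports. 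Consequently $c$ is approximately that value times $\int\bar f\int g\neq0$, and an elementary estimate shows $c\neq0$.

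The only delicate point is checking continuity of the boundary value near the chosen non-light-cone pair. This is supplied by the continuous extension in Theorem~\ref{thm:main}, or directly by \eqref{eq:gimag}.

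This argument appears to need only $b<2^{-1/2}$. If the authors' stronger restriction $b<8^{-1/2}$ turns out to be required, I would expect it to enter through the finiteness of $D$, or through the use of charges $\pm2$ in an alternative construction. In that case the same linear-versus-quadratic trick applies verbatim, with the relevant charges taken in $\mathcal I_b$.
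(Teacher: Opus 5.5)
Your argument is correct, and it takes a genuinely different and more economical route than the paper's. The paper keeps $u_+=\Omega$ but builds the negative vector from a single charge $-1$ field, $u_-=\Phi(A_\varepsilon)-\omega(A_\varepsilon)\Omega$, where $A_\varepsilon$ is smeared against an approximate delta function at a point. This gives $(u_-,u_-)=\omega(A_\varepsilon^*A_\varepsilon)-c_b^2$, with $c_b=-\mu\int_M e^{4b^2g}\neq0$ the $w=1$ one-point function.

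The work in the paper is to show $\omega(A_\varepsilon^*A_\varepsilon)\to0$, i.e.\ that the $w=2$ correlator $C^{\mathrm L}_{-b,-b}(u,v)$ tends to $0$ as $u,v\to0$. That needs a uniform bound on the two-fold screening contour integral near coincidence. The bound is obtained by Cauchy--Schwarz, which doubles the exponents (e.g.\ the tail factor $(1-e^{-\pi r})^{-8b^2}$). This is exactly where $b<8^{-1/2}$ is used. It is not in the finiteness of $D$, as you guessed: Theorem~\ref{thm:smear} already gives that for all $b<2^{-1/2}$.

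You instead use the screening selection rule to see that $\Phi(\mo_{f,1})$ is isotropic. You then use the explicit, nowhere-vanishing $w=0$ two-point function $e^{4b^2g}$ to see that it pairs nontrivially with $\Phi(\mo_{g,-1})$. Equivalently, the Gram matrix of these two vectors is $\bigl(\begin{smallmatrix}0&c\\ \bar c&D\end{smallmatrix}\bigr)$, with determinant $-|c|^2<0$. So the form is indefinite on their span: a semidefinite form would satisfy Cauchy--Schwarz, which an isotropic vector with a nonzero pairing violates.

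What your route buys:
\begin{itemize}
\item It needs no screening-integral estimates beyond Theorem~\ref{thm:smear}.
\item It needs no sign information about $c_b$.
\item It proves the proposition on the whole range $0<b<2^{-1/2}$ on which the algebra is defined. This settles the extension the paper explicitly leaves open.
\end{itemize}

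What the paper's construction gives in exchange is a negative vector orthogonal to $\Omega$, built from one field localized near a single point. Your route can match this. Take $u_0,v_0$ close and spacelike separated, and your vector is localized in any small region. Subtracting its $\Omega$-component then only lowers its norm, by $|(\Omega,u_-)|^2$.
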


For each $C\in\ma$, define $\pi(C):\mh \to \mh$ as 
\[
  \pi(C)(z_A)_{A\in\ma}:=(z_{C^*A})_{A\in\ma}.
\]
We now verify that these coordinate-shift operators realize the observable
algebra concretely on the state space. More precisely, we will prove that
$C\mapsto\pi(C)$ is a unital algebra representation of $\ma$ on $\mh$, that
is, an algebra homomorphism
\[
  \pi:\ma\to\mathrm{End}_{\mathrm{cont}}(\mh),
\]
where $\mathrm{End}_{\mathrm{cont}}(\mh)$ denotes the algebra of continuous
linear endomorphisms of $\mh$, with $\pi(1)=\mathrm{id}_{\mh}$. We will also identify the distinguished
vector $\Omega:=\Phi(1)$ and show that it is cyclic, meaning that the linear
span of $\{\pi(A)\Omega:A\in\ma\}$ is dense in $\mh$; in fact, the theorem
below proves the stronger statement that this span is exactly $\mh_0$, which
is dense in $\mh$. Finally, the theorem shows that $\Omega$ reproduces the
vacuum functional through the represented observables. Its proof is given in
\secref{sec:innlmmproof}.
\begin{thm}[Representation and vacuum vector]\label{thm:obsrep}
The map $C\mapsto\pi(C)$ is a unital algebra representation of $\ma$ by
continuous linear maps on $\mh$. Moreover:
\begin{enumerate}
\item $\pi(C)$ preserves $\mh_0$, and for every $A\in\ma$, $\pi(C)\Phi(A)=\Phi(CA)$.
\item For every $u\in\mh_0$ and $z\in\mh$, $(\pi(C)u,z)=(u,\pi(C^*)z)$.
\item If we set $\Omega:=\Phi(1)\in\mh_0$, then $\operatorname{span}\{\pi(C)\Omega:C\in\ma\}=\mh_0$. Thus, $\Omega$ is cyclic.
\item For every $A\in\ma$, $\omega(A)=(\Omega,\pi(A)\Omega)$.
\end{enumerate}
\end{thm}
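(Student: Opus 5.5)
The argument is essentially algebraic: everything reduces to the identity $(\Phi(A),\Phi(B))=\omega(A^*B)$ from Theorem~\ref{thm:innlmm}, together with the fact that $*$ is anti-multiplicative and involutive on $\ma$.

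\emph{Linearity, continuity and well-definedness.} For $C\in\ma$, $\pi(C)$ is a coordinate permutation-type map on $\C^{\ma}$: the $A$-th coordinate of the output is the $(C^*A)$-th coordinate of the input. So it is defined on all of $\C^{\ma}$, and it is linear. It is also continuous for the product topology, because each output coordinate is a single coordinate projection of the input. For $A,B\in\ma$ we have
\[
\pi(C)\Phi(A)=\bigl(\omega((C^*B)^*A)\bigr)_B=\bigl(\omega(B^*CA)\bigr)_B=\Phi(CA).
\]
This proves item~1 on $\mh_0$, i.e.\ $\pi(C)\mh_0\subset\mh_0$. Continuity then gives $\pi(C)\mh=\pi(C)\overline{\mh_0}\subset\overline{\mh_0}=\mh$, so $\pi(C)$ is a continuous endomorphism of $\mh$.

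\emph{Homomorphism property and unitality.} On $\C^{\ma}$ the identity $C_1^*C_2^*A=(C_2C_1)^*A$ gives
\[
\pi(C_1)\pi(C_2)z=(z_{C_2^*C_1^*A})_A=(z_{(C_1C_2)^*A})_A=\pi(C_1C_2)z.
\]
Linearity of $C\mapsto\pi(C)$ needs care, since $C\mapsto C^*$ is conjugate-linear and $C^*A$ sits as an index. I would not argue coordinatewise; instead I would check it on $\mh_0$ using $\Phi(CA)$, which is linear in $C$, and extend to $\mh$ by density and continuity. The same approach gives $\pi(1)=\mathrm{id}$, since $1^*A=A$.

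\emph{Items 2--4.} For item~2, take $u=\Phi(A)$. Then
\[
(\pi(C)u,z)=(\Phi(CA),z)=z_{CA}=z_{(C^*)^*A}=(\pi(C^*)z)_A=(\Phi(A),\pi(C^*)z),
\]
using $(C^*)^*=C$. Item~3 follows from $\pi(C)\Omega=\Phi(C\cdot1)=\Phi(C)$, so the span is exactly $\Phi(\ma)=\mh_0$, which is dense by Theorem~\ref{thm:innlmm}. For item~4,
\[
(\Omega,\pi(A)\Omega)=(\Phi(1),\Phi(A))=\omega(1^*A)=\omega(A).
\]

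\emph{The main subtlety.} The delicate point is that the index map $A\mapsto C^*A$ uses a fixed representative $A\in\ma$ rather than a class, so well-definedness is not an issue there. The real issue is linearity of $\pi$ in $C$ on the closure $\mh$. Density plus continuity handles it: for fixed $z\in\mh$, the map $C\mapsto\pi(C)z$ is a pointwise limit of maps that are linear in $C$ on $\mh_0$. I expect this step to need the most care.
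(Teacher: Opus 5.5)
Your proposal is correct and follows the paper's proof essentially step by step. Both arguments rest on the identity $\pi(C)\Phi(A)=\Phi(CA)$, extend to $\mh$ by continuity, check multiplicativity coordinatewise, and use the same one-line computations for items 2--4. The one difference is the linearity of $C\mapsto\pi(C)$: the paper checks it coordinatewise using the conjugate-linearity of $A\mapsto z_A$ for $z\in\mh$ (established in Theorem~\ref{thm:innlmm}), while you verify it on $\mh_0$ and extend by density and continuity; both routes are valid.
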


With the representation on $\mh$ now in hand, we can pass from the algebraic
local subalgebras to local operator algebras acting on the state
space. We do this by taking the represented images of $\ma(O)$ and closing
them in the natural operator topology. Recall first that a seminorm on $\mh$ is a map
\[
  p:\mh\to[0,\infty)
\]
such that $p(\lambda z)=|\lambda|p(z)$ and
$p(z+w)\le p(z)+p(w)$ for all $z,w\in\mh$ and $\lambda\in\C$. A subset
$B\subset\mh$ is called bounded if every continuous seminorm is bounded on
$B$, that is, if for every continuous seminorm $p$ on $\mh$ one has
\[
  \sup_{z\in B} p(z)<\infty.
\]
We denote by \(\mathcal T_b\) the topology of uniform convergence on bounded subsets
of \(\mh\) on \(\operatorname{End}_{\mathrm{cont}}(\mh)\). Since \(\mh\) is not a Hilbert space, \(\operatorname{End}_{\mathrm{cont}}(\mh)\) has no canonical
involution. We therefore use the involution supplied by the Hermitian pairing on the
algebraically adjointable operators. Define \(\operatorname{End}_{\dagger}(\mh)\) to be the set
of all \(T\in \operatorname{End}_{\mathrm{cont}}(\mh)\) for which there exists
\(S\in \operatorname{End}_{\mathrm{cont}}(\mh)\) such that
\[
T(\mh_0)\subseteq \mh_0,\qquad S(\mh_0)\subseteq \mh_0,
\]
and
\[
(Tu,z)=(u,Sz),\qquad (Su,z)=(u,Tz)
\]
for every \(u\in \mh_0\) and \(z\in \mh\). By the nondegeneracy statement in Theorem \ref{thm:innlmm},
such an \(S\), if it exists, is unique. We write \(T^\dagger:=S\). We equip \(\operatorname{End}_{\dagger}(\mh)\) with the graph-bounded topology
\(\mathcal T_\dagger\), defined by
\[
T_\nu\to T \text{ in } \mathcal T_\dagger
\quad\Longleftrightarrow\quad
T_\nu\to T \text{ and } T_\nu^\dagger\to T^\dagger
\text{ in } \mathcal T_b .
\]
With this involution and topology, \(\operatorname{End}_{\dagger}(\mh)\) is the natural
represented operator algebra in the present locally convex setting without positivity. The following lemma, proved in \secref{sec:innlmmproof}, lists some of its basic properties.
\begin{lmm}\label{lem:5x}
The space \(\operatorname{End}_{\dagger}(\mh)\) is a unital algebra with
involution \(T\mapsto T^\dagger\). Moreover, multiplication is separately continuous
for the graph-bounded topology \(\mathcal T_\dagger\), and for every \(C\in \mathcal A\), $\pi(C)\in \operatorname{End}_{\dagger}(\mh)$ and $\pi(C)^\dagger=\pi(C^*)$.
\end{lmm}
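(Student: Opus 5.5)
The plan is to verify the algebraic properties directly from the definitions, using the nondegeneracy part of Theorem~\ref{thm:innlmm} for uniqueness of adjoints, and then to obtain the continuity statement from elementary properties of $\mathcal T_b$.

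\textbf{Algebra with involution.} First check that $\operatorname{End}_\dagger(\mh)$ is a unital subalgebra of $\operatorname{End}_{\mathrm{cont}}(\mh)$:
\begin{itemize}
\item The identity satisfies the defining conditions with $S=\mathrm{id}$.
\item If $T_1,T_2$ have adjoints $S_1,S_2$, then $T_1+\lambda T_2$ has adjoint $S_1+\bar\lambda S_2$, since the pairing is conjugate linear in the first slot.
\item For products, $T_1T_2$ preserves $\mh_0$, and for $u\in\mh_0$, $z\in\mh$,
\[
(T_1T_2u,z)=(T_2u,S_1z)=(u,S_2S_1z),
\]
where the first step uses $T_2u\in\mh_0$. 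The symmetric identity $(S_2S_1u,z)=(u,T_1T_2z)$ holds in the same way. So $(T_1T_2)^\dagger=T_2^\dagger T_1^\dagger$.
\end{itemize}
Uniqueness of $S$ follows from the last sentence of Theorem~\ref{thm:innlmm}. If $S,S'$ both work, then $(u,(S-S')z)=0$ for all $u\in\mh_0$, so $(S-S')z=0$. Since the defining conditions are symmetric in $T$ and $S$, we get $T^\dagger\in\operatorname{End}_\dagger(\mh)$ with $T^{\dagger\dagger}=T$. Conjugate linearity and anti-multiplicativity of $\dagger$ come from the computations above.

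\textbf{Separate continuity.} It suffices to show that left and right multiplication by a fixed $A$ are continuous in $\mathcal T_b$. The $\mathcal T_\dagger$ statement then follows because $(AT_\nu)^\dagger=T_\nu^\dagger A^\dagger$ and $(T_\nu A)^\dagger=A^\dagger T_\nu^\dagger$.
\begin{itemize}
\item \emph{Left multiplication.} Given a bounded set $B$ and a continuous seminorm $p$, the seminorm $p\circ A$ is continuous because $A$ is continuous. Hence $T_\nu\to T$ uniformly on $B$ implies $AT_\nu\to AT$ uniformly on $B$ with respect to $p$.
\item \emph{Right multiplication.} A continuous linear map sends bounded sets to bounded sets, so $A(B)$ is bounded. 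Uniform convergence of $T_\nu$ on $A(B)$ then gives uniform convergence of $T_\nu A$ on $B$.
\end{itemize}
This is routine, and I expect no obstacle here.

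\textbf{Represented observables.} For $C\in\ma$, Theorem~\ref{thm:obsrep} gives that $\pi(C)$ is continuous, preserves $\mh_0$, and satisfies $(\pi(C)u,z)=(u,\pi(C^*)z)$ for $u\in\mh_0$, $z\in\mh$. Applying the same statement with $C^*$ in place of $C$, and using $(C^*)^*=C$, gives $(\pi(C^*)u,z)=(u,\pi(C)z)$. Moreover $\pi(C^*)$ is continuous and preserves $\mh_0$. So $S=\pi(C^*)$ witnesses $\pi(C)\in\operatorname{End}_\dagger(\mh)$, and by uniqueness $\pi(C)^\dagger=\pi(C^*)$.

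The only point needing care is that every adjoint identity is used only with the first argument in $\mh_0$, where the pairing is defined. This is why the definition requires both $T$ and $S$ to preserve $\mh_0$.
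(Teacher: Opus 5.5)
Your proposal is correct and follows the paper's proof essentially step by step. You get uniqueness of the adjoint from the nondegeneracy clause of Theorem~\ref{thm:innlmm}, and closure under sums and products by direct computation, giving $(TR)^\dagger=R^\dagger T^\dagger$. Separate $\mathcal T_\dagger$-continuity comes from the two elementary $\mathcal T_b$ facts (pulled-back seminorms on the left, images of bounded sets on the right) applied to both operators and adjoints, and $\pi(C)^\dagger=\pi(C^*)$ comes from item~2 of Theorem~\ref{thm:obsrep} applied to $C$ and $C^*$.
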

For each open set \(O\subseteq M\), define the represented local algebra by
\[
\mathfrak A_{\mathrm{loc}}(O)
:=
\overline{\pi(\mathcal A(O))}^{\,\mathcal T_\dagger}
\subseteq \operatorname{End}_{\dagger}(\mh).
\]
Equivalently, \(\mathfrak A_{\mathrm{loc}}(O)\) is the
\(\mathcal T_\dagger\)-closed unital \(\dagger\)-subalgebra of
\(\operatorname{End}_{\dagger}(H)\) generated by \(\pi(\mathcal A(O))\). In our locally convex setting without positivity,
this plays the role that the local von Neumann algebras play in standard AQFT
after passing to operator closures in a Hilbert space representation; the
remarks below explain this comparison in more detail.

\begin{remark}[Why pass to the closed represented local algebra?]
The passage from $\ma(O)$ to $\pi(\ma(O))$ puts the local observables on the
state space $\mh$, where covariance, dynamics, and pairings are actually
realized. Once this representation has been constructed, the algebraic image
$\pi(\ma(O))$ should then be viewed only as a local core: it need not
already contain limits, in the topology of uniform convergence on bounded
subsets of $\mh$, of nets of local observables. Passing to
$\mathfrak A_{\mathrm{loc}}(O)$ incorporates those limits while preserving the
local action on $\mh$, isotony, and covariance. This is the locally convex
analogue of the standard AQFT step, after constructing a Hilbert space
representation, of moving from the abstract local algebra to a closed local
operator algebra (usually, a von Neumann algebra) on the state space.
\end{remark}

\begin{remark}[How this completion differs from the AQFT completion]
In standard AQFT, one distinguishes two related levels. At the abstract level,
one works with local $C^*$-algebras. After choosing a Hilbert space
representation, one often passes to the associated local von Neumann algebras
by weak operator closure. The completion constructed here is of a different
kind. Because positivity is absent, the vacuum functional does not provide a
canonical $C^*$-seminorm on $\ma(O)$, so we do not complete the abstract local
algebra $\ma(O)$ in any $C^*$-norm. Instead, there are two distinct closure
operations. First, we complete the cyclic state space:
\[
  \mh=\overline{\Phi(\ma)}\subset\C^{\ma}
\]
with respect to the product topology. Second, after representing on $\mh$, we
pass from the algebraic image $\pi(\ma(O))$ to the closed unital $*$-subalgebra
$\mathfrak A_{\mathrm{loc}}(O)$ in the operator topology of uniform convergence
on bounded subsets of $\mh$. Thus the local algebras are closures of the
represented observables, but this closure is locally convex rather than
$C^*$- or von Neumann-based. In a positive theory, one would expect the GNS
Hilbert completion of states and then the usual von Neumann or $C^*$ closures
of the represented local algebras to recover the standard AQFT framework. Here
the corresponding closure lives naturally in the locally convex category rather
than the $C^*$ or von Neumann category.
\end{remark}

\subsubsection{Translation covariance on the state space}\label{sec:timeev}
Having constructed the state space \(\mh\), the representation \(\pi\), and the represented
local algebras \(\mathfrak A_{\mathrm{loc}}(O)\), we now implement cylinder translations on
the represented level. The same translation symmetry that acts on the algebraic observables
is realized on \(\mh\) by an algebraic action whose individual elements are continuous linear
homeomorphisms. We do not claim here that the orbit maps
\[
(a,\theta)\mapsto U_{\mathrm{cyl}}(a,\theta)z
\]
are continuous for every \(z\in \mh\), nor that the action is jointly continuous as a map
\((a,\theta,z)\mapsto U_{\mathrm{cyl}}(a,\theta)z\). 

For $(a,\theta)\in\R\times \mathbb S^1$ and $z\in\mh$, define
\[
  U_{\mathrm{cyl}}(a,\theta)z:=(z_{\eta_{-a,-\theta}A})_{A\in\ma}.
\]
We write
\[
  U(t):=U_{\mathrm{cyl}}(t,0),\qquad t\in\R,
\]
for the time-translation subgroup. The next theorem shows that these maps
implement the cylinder translations on the represented state space in a manner
compatible with the vacuum, the representation, and the represented local net.
In particular, the time subgroup provides the natural translation dynamics on
$\mh$. The proof is given in \secref{sec:timeev-details}.
\begin{thm}[Translation covariance]\label{thm:timeev-main-properties}
The family \((U_{\mathrm{cyl}}(a,\theta))_{(a,\theta)\in \mathbb R\times S^1}\) is an
algebraic action of \(\mathbb R\times S^1\) on \(\mh\), and each
\(U_{\mathrm{cyl}}(a,\theta)\) is a continuous linear homeomorphism of \(\mh\). Moreover:
\begin{enumerate}
\item For every $A\in\ma$, $U_{\mathrm{cyl}}(a,\theta)\Phi(A)=\Phi(\eta_{a,\theta}A)$. In particular, $\mh_0$ is invariant.
\item For every $u\in\mh_0$ and $z\in\mh$, $(U_{\mathrm{cyl}}(a,\theta)u,z) = (u,U_{\mathrm{cyl}}(-a,-\theta)z)$. As a consequence, $(U_{\mathrm{cyl}}(a,\theta)u,U_{\mathrm{cyl}}(a,\theta)v)=(u,v)$  for all $u,v\in\mh_0$.
\item The vacuum vector is invariant, i.e., $U_{\mathrm{cyl}}(a,\theta)\Omega=\Omega$. Moreover, for every $C\in\ma$,
\[
  U_{\mathrm{cyl}}(a,\theta)\pi(C)U_{\mathrm{cyl}}(a,\theta)^{-1}
  =
  \pi(\eta_{a,\theta}C).
\]
\item The represented local net is isotone and covariant, meaning that if $O_1\subset O_2$, then $\mathfrak A_{\mathrm{loc}}(O_1)\subset\mathfrak A_{\mathrm{loc}}(O_2)$, and
\[
  U_{\mathrm{cyl}}(a,\theta)\,\mathfrak A_{\mathrm{loc}}(O)\,U_{\mathrm{cyl}}(a,\theta)^{-1}
  =
  \mathfrak A_{\mathrm{loc}}(T_{a,\theta}O).
\]
\item The time subgroup satisfies \(U(t)\Phi(A)=\Phi(\tau_{-t}A)\) for all \(A\in\mathcal A\),
and \((U(t))_{t\in\mathbb R}\) is an algebraic one-parameter group whose individual
elements are continuous linear homeomorphisms of \(\mh\).
\end{enumerate}
\end{thm}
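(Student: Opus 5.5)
Most of the statement is structural and follows from three facts already available: the algebraic identities for $\eta_{a,\theta}$ in Proposition~\ref{prop:aqft-net}, the identity $\pi(C)\Phi(A)=\Phi(CA)$ from Theorem~\ref{thm:obsrep}, and the density of $\mh_0$ in $\mh$ from Theorem~\ref{thm:innlmm}.

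\emph{Well-definedness and continuity.} The map $\Lambda_{a,\theta}:\C^{\ma}\to\C^{\ma}$, $z\mapsto(z_{\eta_{-a,-\theta}A})_A$, is a coordinate permutation, because $\eta_{-a,-\theta}$ is a bijection of $\ma$. Hence it is a continuous linear homeomorphism of the product space, with inverse $\Lambda_{-a,-\theta}$. Applied to $\Phi(A)$, its $B$-coordinate is
\[
\omega\bigl((\eta_{-a,-\theta}B)^*A\bigr)=\omega\bigl(\eta_{-a,-\theta}(B^*\eta_{a,\theta}A)\bigr)=\omega(B^*\eta_{a,\theta}A).
\]
Here we use that $\eta$ is a $*$-automorphism, that $\eta_{-a,-\theta}\eta_{a,\theta}=\mathrm{id}$, and that $\omega$ is invariant (Proposition~\ref{prop:aqft-net}). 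This proves item~1.

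It follows that $\Lambda_{a,\theta}$ maps $\mh_0$ onto $\mh_0$. By continuity it maps $\mh=\overline{\mh_0}$ into itself, and the same holds for the inverse. So each $U_{\mathrm{cyl}}(a,\theta)$ is a homeomorphism of $\mh$. The group law $U(g)U(h)=U(g+h)$ comes from $\eta_g\eta_h=\eta_{g+h}$ by reading off coordinates.

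\emph{Adjoint relation (item 2).} For $u=\Phi(A)$ we have
\[
(U_{\mathrm{cyl}}(a,\theta)\Phi(A),z)=(\Phi(\eta_{a,\theta}A),z)=z_{\eta_{a,\theta}A},
\]
and this equals $(\Phi(A),U_{\mathrm{cyl}}(-a,-\theta)z)$ by definition. Taking $z=U_{\mathrm{cyl}}(a,\theta)v$ with $v\in\mh_0$ and using the group law gives invariance of the form.

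\emph{Item 3.} Vacuum invariance is $\eta_{a,\theta}1=1$. For the conjugation identity, first check it on $\Phi(A)$:
\[
U\pi(C)U^{-1}\Phi(A)=\Phi\bigl(\eta(C\,\eta^{-1}A)\bigr)=\Phi(\eta(C)A)=\pi(\eta C)\Phi(A).
\]
Both sides are continuous on $\mh$ (Theorem~\ref{thm:obsrep}) and agree on the dense subspace $\mh_0$, so they agree on $\mh$.

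\emph{Item 4.} Isotony follows immediately from $\ma(O_1)\subset\ma(O_2)$. For covariance, let $\mathrm{Ad}_U(T):=UTU^{-1}$.
\begin{itemize}
\item By item~3 and Proposition~\ref{prop:aqft-net}, $\mathrm{Ad}_U$ maps $\pi(\ma(O))$ onto $\pi(\ma(T_{a,\theta}O))$.
\item $\mathrm{Ad}_U$ preserves $\operatorname{End}_\dagger(\mh)$ and satisfies $(UTU^{-1})^\dagger=UT^\dagger U^{-1}$. To see this, use item~2, which says that $U^{-1}$ acts as the adjoint of $U$ on $\mh_0\times\mh$, together with the invariance of $\mh_0$ under $U$ and $U^{-1}$.
\item $\mathrm{Ad}_U$ is $\mathcal T_\dagger$-continuous. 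The point is that $\mathcal T_b$ is preserved under conjugation by a homeomorphism, since $U$ and $U^{-1}$ map bounded sets to bounded sets and continuous seminorms composed with $U$ remain continuous.
\end{itemize}
Hence $\mathrm{Ad}_U$ maps closures to closures. Applying the same argument to $U^{-1}$ gives equality.

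\emph{Item 5.} This is the specialization $\theta=0$: $U(t)\Phi(A)=\Phi(\eta_{t,0}A)=\Phi(\tau_{-t}A)$.

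\emph{Main obstacle.} The only step that is not purely formal is item~4. There one must check carefully that conjugation is a homeomorphism for the graph-bounded topology, and in particular that the $\dagger$-adjoint of the conjugated operator is the conjugated adjoint, using the uniqueness from Theorem~\ref{thm:innlmm}. Everything else reduces to coordinate bookkeeping and density.
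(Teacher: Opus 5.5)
Your proposal is correct and follows essentially the same route as the paper. Item~1 comes from the $*$-automorphism property and the invariance of $\omega$, and extension to $\mh$ uses continuity and density. The group law and the adjoint relation are read off coordinatewise, and covariance of the represented net comes from showing that conjugation by $U_{\mathrm{cyl}}(a,\theta)$ is a $\mathcal T_\dagger$-homeomorphism and $\dagger$-automorphism of $\operatorname{End}_\dagger(\mh)$ carrying $\pi(\ma(O))$ onto $\pi(\ma(T_{a,\theta}O))$. The only cosmetic difference is in item~3: you check the conjugation identity on $\mh_0$ and extend by density, whereas the paper checks it coordinatewise directly for every $z\in\mh$.
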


\begin{remark}[No strong-continuity assertion]
The continuity assertion in Theorem \ref{thm:timeev-main-properties}
concerns each operator \(U_{\mathrm{cyl}}(a,\theta)\) as a map \(\mh\to \mh\). We do not assert
that the orbit maps
\[
(a,\theta)\mapsto U_{\mathrm{cyl}}(a,\theta)z
\]
are continuous for arbitrary \(z\in \mh\), nor that the action is jointly continuous. On the
algebraic core \(\mh_0\), the matrix coefficients are continuous in the translation parameter:
for \(A,C\in\mathcal A\),
\[
\bigl(U_{\mathrm{cyl}}(a,\theta)\Phi(C)\bigr)_A
=
\omega(A^*\eta_{a,\theta}C),
\]
and this depends continuously on \((a,\theta)\) by the continuity of translations on test
functions and the dominated convergence estimates used for smeared correlators. Extending
this to all of \(\mh\) would require a stronger topology or an additional equicontinuity
argument, and is not used in the present construction.
\end{remark}

This furnishes the represented theory with its translation-covariant dynamical structure; we next turn to locality on the represented net.

\subsubsection{Locality for the represented net}\label{sec:einstein-causality}
As mentioned earlier, one of the central axioms of algebraic quantum field theory is
\emph{Einstein causality}, also commonly called \emph{microcausality} or \emph{locality}. The guiding
idea is that observables localized in spacelike separated regions should be
compatible: operations performed in one region should not influence
measurements in the other, and algebraically this is expressed by the
commutativity of the corresponding local observables. In a Haag--Kastler net
this requirement is formulated abstractly at the level of the local algebras
assigned to spacetime regions.

Recall the definition of spacelike separation on the cylinder from \secref{sec:vertexcomm}. For open sets \(O_1,O_2\subset M\), we say that they are spacelike separated
if every pair \(u\in O_1\), \(v\in O_2\) is spacelike separated in this
sense. The relevant local objects are the closed represented  local algebras $\mathfrak A_{\mathrm{loc}}(O)$ constructed above. In this
represented setting, \emph{locality} means that
whenever $O_1$ and $O_2$ are spacelike separated open sets, every operator in
$\mathfrak A_{\mathrm{loc}}(O_1)$ commutes with every operator in
$\mathfrak A_{\mathrm{loc}}(O_2)$. Because the Lorentzian correlators are
defined as ordered boundary values, commutativity is not automatic even in the
integer-charge sector: it must first be recovered from the spacelike exchange property proved for vertex operators in Theorem \ref{thm:microcausality}, then transferred to smeared observables, and finally
passed to the represented closures. The resulting statement is the following
theorem, whose proof is deferred to 
\secref{sec:locproof}.

\begin{thm}[Locality for the represented net]\label{thm:einstein-causality}
Let $O_1,O_2\subset M$ be spacelike separated open sets. Then $[S,T]=0$ for all $S\in\mathfrak A_{\mathrm{loc}}(O_1)$ and $T\in\mathfrak A_{\mathrm{loc}}(O_2)$.
\end{thm}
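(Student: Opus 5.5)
We will prove the statement in three layers: first at the level of ordered smeared correlators, then for the represented generators $\pi(A)$ with $A$ in the algebraic local cores, and finally for the $\mathcal T_\dagger$-closures.

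\emph{Step 1: exchange of smeared generators inside $\omega$.} Take generators $\mo_{f,n}$ and $\mo_{g,m}$ with $\supp f\subset O_1$ and $\supp g\subset O_2$. We claim that for all $A,B\in\ma$,
\[
\omega(A\,\mo_{f,n}\mo_{g,m}\,B)=\omega(A\,\mo_{g,m}\mo_{f,n}\,B).
\]
By linearity it suffices to take $A$ and $B$ to be monomials. Both sides are then integrals over $M^k$ of test functions against $C^{\mathrm L}_{b\boldsymbol n}$. The two integrands differ only in the order of two consecutive labels, with arguments $u_p\in\supp f$ and $u_{p+1}\in\supp g$. Every such pair is spacelike separated, so Theorem~\ref{thm:microcausality} shows the two point correlators agree. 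This holds off the measure-zero set where the non-light-cone condition fails or points coincide. In the negative-$w$ sector both sides vanish identically. Absolute convergence (Theorem~\ref{thm:smear}) justifies equating the integrals.

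One technical point is that the theorem assumes the hypotheses of Theorem~\ref{thm:main}, with $\Re\alpha_j>-1/(2b)$. For $\alpha_j=bn_j$ with $n_j\in\mathcal I_b$ this holds because $|n_j|b<1/\sqrt2<\ldots$; more precisely, $bn_j>-1/(\sqrt2)\cdot$, and we check that $b|n_j|<1/\sqrt2\le 1/(2b)$ when $b<1/\sqrt2$.

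Iterating Step 1 by adjacent transpositions gives $\omega(A\,XY\,B)=\omega(A\,YX\,B)$ for all $X\in\ma(O_1)$, $Y\in\ma(O_2)$, $A,B\in\ma$. Here we move each letter of $Y$ past each letter of $X$ in turn. Unit terms are trivial, and the involution preserves supports, so $*$-generated elements are again words in the generators.

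\emph{Step 2: represented cores.} For $z\in\mh$, we have $(\pi(X)\pi(Y)z)_A=z_{Y^*X^*A}$. For $z=\Phi(C)$ this equals $\omega(C^*\cdots)$-type expressions: $(\pi(XY)\Phi(C))_A=\omega(A^*XYC)$. By Step 1 this equals $\omega(A^*YXC)$. Hence $[\pi(X),\pi(Y)]$ vanishes on $\mh_0$. Since the operators are continuous (Theorem~\ref{thm:obsrep}) and $\mh_0$ is dense in $\mh$ (Theorem~\ref{thm:innlmm}), the commutator vanishes on all of $\mh$.

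\emph{Step 3: closures.} Fix $T\in\pi(\ma(O_2))$. The set $\{S\in\operatorname{End}_\dagger(\mh):ST=TS\}$ is $\mathcal T_\dagger$-closed, because multiplication is separately continuous by Lemma~\ref{lem:5x} and $\mathcal T_\dagger$ is Hausdorff. It contains $\pi(\ma(O_1))$, so it contains $\mathfrak A_{\mathrm{loc}}(O_1)$. Now fix $S\in\mathfrak A_{\mathrm{loc}}(O_1)$ and repeat the argument in the second slot, which extends the commutation to all $T\in\mathfrak A_{\mathrm{loc}}(O_2)$.

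The main obstacle is Step 1. The work there is to verify that the ordered smeared integral genuinely localizes the exchange to the pair $(p,p+1)$. The other insertions and the screening integrals are already built into $C^{\mathrm L}$, and Theorem~\ref{thm:microcausality} is stated for arbitrary surrounding insertions, so this should reduce to checking that its hypotheses hold almost everywhere.
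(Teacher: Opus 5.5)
Your proposal is correct and follows essentially the same route as the paper. You prove exchange inside $\omega$ via Theorem~\ref{thm:microcausality}, transposing one smeared generator at a time. The paper instead packages the iterated adjacent transpositions pointwise as Corollary~\ref{cor:spacelike-block-exchange} before smearing. The remaining steps also match: density and continuity on $\mh_0$, then passage to the $\mathcal T_\dagger$-closures by separate continuity of multiplication, which the paper phrases with nets in $\mathcal T_b$.

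The only blemish is your garbled sentence checking the charge condition. The clean one-line check is $bn_j>-\frac{1}{\sqrt2}>-\frac{1}{2b}$, which holds because $b<\frac{1}{\sqrt2}$.
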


Next, we record the standard Haag--Kastler features that remain
unavailable or unproved in the present setting.

\subsubsection{What is still missing from the Haag--Kastler package}\label{sec:missing-haag-kastler}
The AQFT-type structure described above is assembled across several results
in this section. Theorem~\ref{thm:smear} makes the smeared vacuum functional well
defined; Theorem~\ref{thm:ma-hermitian} supplies the Hermitian algebraic form; Theorems~\ref{thm:innlmm}
and~\ref{thm:obsrep} construct the state space, pairing, representation, and cyclic vacuum
vector; Proposition~\ref{prop:aqft-net} together with
Theorem~\ref{thm:timeev-main-properties} establishes covariance
under cylinder translations; and Theorem~\ref{thm:einstein-causality} establishes locality 
for the represented local net. Beyond this
point, the following standard Haag--Kastler features remain unavailable or
unproved in the present locally convex setting without positivity:
\begin{enumerate}
\item \textit{Positivity and $C^*$-structure.} The local algebras here are
closed unital $*$-subalgebras of the continuous endomorphisms of a complete
locally convex space with a Hermitian pairing that is not assumed positive, obtained from the
algebraic images $\pi(\ma(O))$ by locally convex operator closure. Because the
vacuum functional is not positive, we do not obtain a net of $C^*$-algebras or
von Neumann algebras on a Hilbert space.
\item \textit{Spectrum condition and infinitesimal generators.} The time-translation subgroup acts on \(\mh\) by continuous linear homeomorphisms, but
strong continuity of the orbit maps on all of \(\mh \) is not asserted. Consequently, we do
not analyze an infinitesimal generator or prove any positive-energy statement. In
particular, we do not obtain a Hilbert space self-adjoint generator of time
translations.
\item \textit{Additivity, time-slice, and Reeh--Schlieder.} We have also not
addressed several further structural properties that, in standard AQFT,
describe how much of the theory is already encoded in sufficiently small
spacetime regions; see for example
\cite{haag96,brunettifredenhagenverch03,reehschlieder61}. Additivity says
roughly that if a region $O$ is covered by smaller open sets $O_\alpha\subset
O$, then the algebra assigned to $O$ is generated, in the relevant closure
sense, by the algebras of the $O_\alpha$. The time-slice axiom is a stronger
causal propagation statement: an arbitrarily thin neighborhood of a Cauchy
surface should already determine the algebra of its full domain of dependence.
The Reeh--Schlieder property says that the vacuum vector $\Omega$ is cyclic for
every nonempty local algebra, so that vectors of the form $A\Omega$ with $A$
localized in $O$ are dense already for each nonempty region $O$. None of these
statements is proved here for the represented net
$\mathfrak A_{\mathrm{loc}}(O)$ in our locally convex setting without positivity.
\end{enumerate}
For these reasons, the object constructed here is best regarded as an AQFT-type
net rather than a complete Haag--Kastler theory.

This completes the statement of the main results. The next section begins the proofs by constructing the Euclidean theory in finite and then infinite volume.

\section{Construction of the Euclidean theory}\label{sec:math}
In this section, we will carry out the mathematical steps in the construction of the Euclidean timelike Liouville field theory on the cylinder $M = \R \times [-1,1]$, culminating in the proof of Theorem \ref{thm:euc}. We will start by defining the Euclidean theory in the finite torus $M_T=[-T,T]\times [-1,1]$ with an ultraviolet cutoff and a regularization for the zero mode. We will then remove the cutoff and the regularization to get the Euclidean theory on the torus without regularization. Finally, we will send $T\to \infty$ to get the theory on the~cylinder.

\subsection{The regularized Euclidean theory in finite volume}\label{sec:regeuc}
In analogy with \cite[Section 1.4]{chatterjee25}, we define correlation functions for the Euclidean theory by first introducing an ultraviolet (spectral) cutoff and a regularization for the zero mode. Additionally, we work on the finite torus $M_T = [-T,T]\times [-1,1]$, with the intention of later sending $T\to \infty$. Let $\Delta$ be the Laplacian on $M_T$ and $G_T$ be the inverse of $-\frac{1}{2\pi}\Delta$ on smooth functions that integrate to zero. For details about these operators, see~\secref{sec:green}. 

Let $\{e_k\}_{k\ge 0}$ be any $L^2_{\R}(M_T)$-orthonormal basis of (smooth) real eigenfunctions of
$-\Delta$ with eigenvalues $\{\lambda_k\}_{k\ge 0}$, where $\lambda_0=0$ corresponds to the constant mode.
Then $G_T$ admits the (formal) eigenfunction expansion
\[
G_T(u,v)=2\pi \sum_{k=1}^{\infty} \frac{e_k(u)e_k(v)}{\lambda_k},
\qquad u,v\in M_T,
\]
with the constant mode omitted. We implement a frequency cutoff to define the regularized Green's function
\begin{align}\label{eq:gtnrep}
  G_{T,N}(u,v) := 2\pi  \sum_{k=1}^{L_N} \frac{e_k(u)e_k(v)}{\lambda_k},
\end{align}
where $L_N\to\infty$ as $N\to\infty$. A specific choice of basis and cutoff will be made later in this section.
Given $N$ and a function $\phi$ such that $\inn{\phi,e_k}=0$ for all $k>L_N$, define
\begin{align}
&\normord{e^{2b\phi(u)}}  \;:=\exp\biggl(2b\phi(u)+2b^2G_{T,N}(u,u)\biggr),\label{eq:normnedef}\\ 
&\normord{F(\phi)}\; 
:=\exp \biggl(2\sum_{j=1}^k \alpha_j \phi(q_j) +2\sum_{j=1}^k\alpha_j^2 G_{T,N}(q_j, q_j)\biggr),\label{eq:vndef}
\end{align}
Here $q_1,\ldots,q_k$ are distinct points in $M_T$ and $\alpha_1,\ldots,\alpha_k$ are complex numbers with real parts strictly bigger than $-\frac{1}{2b}$. These will remain fixed henceforth. We define the frequently occurring quantity
\[
  w := -\frac{\sum_{j=1}^k \alpha_j}{b},
\] 
which is assumed to be an integer. 
The regularized Euclidean correlation function is defined as 
\begin{equation}\label{eq:cutoff-correlator-with-line}
\langle F\rangle_{T,N,\epsilon}
:= \int \normord{F(\phi)}e^{-S_{N,\epsilon}(\phi)}\,\md_N\phi,
\end{equation}
where $\md_N\phi$ denotes Lebesgue measure on the finite-dimensional subspace
$\mathrm{span}\{e_0,\dots,e_{L_N}\}$, and
\[
S_{N,\epsilon}(\phi)
:=\frac{1}{4\pi}\int_{M_T}\biggl\{-|\nabla \phi(u)|^2-\frac{2\pi  \ep c(\phi)^2}{4T} +4\pi\mu\normord{e^{2b\phi(u)}} \biggr\}\,
\mathrm{d}u,
\]
where 
\[
c(\phi):= \frac{1}{4T}\int_{M_T} \phi(u) \, \dd u
\] 
is the zero mode of $\phi$. Note that this is a Gaussian integral where the quadratic term has the \emph{wrong sign}.

In \cite[Section~3]{chatterjee25}, the wrong-sign field is realized by expanding in an eigenbasis and taking wrong-sign Gaussian coefficients. In our setting, write the regularized centered wrong-sign field as
\[
X_{N}(u)=\sum_{k=1}^{L_N} \xi_k e_k(u),
\]
where $(\xi_k)_{1\le k\le L_N}$ are centered independent wrong-sign Gaussian random variables with variances
$\E[\xi_k^2]=-2\pi \lambda_k^{-1}$ for $1\le k\le L_N$, so that the covariance is 
\[
  \E(X_N(u)X_N(v)) = - G_{T,N}(u,v).
\] 
Additionally, we let $\xi_0$ be a wrong-sign centered Gaussian random variable with $\E[\xi_0^2] = -\ep^{-1}$.
In terms of this field, one may rewrite the expression in equation~\eqref{eq:cutoff-correlator-with-line} as the expectation of a
functional of $X_{N}$ and $\xi_0$:
\begin{equation}\label{eq:cutoff-correlator-wrong-sign-X}
\langle F\rangle_{T,N,\epsilon}
=\E\biggl[ \normord{F(\xi_0+X_{N})}\exp\biggl(-\mu e^{2b\xi_0}\int_{M_T} \normord{e^{2bX_{N}(u)}} \,\mathrm{d}u\biggr)\biggr].
\end{equation}
By the theory developed in \cite[Section 2.4]{chatterjee25}, one may replace each wrong-sign coefficient $\xi_k$ by $\mathrm{i}Z_k$, where $(Z_k)_{0\le k\le L_N}$ are
independent standard (right sign) Gaussians with $\E[Z_k^2]=-\E[\xi_k^2]$, and thereby
rewrite the quantity in equation~\eqref{eq:cutoff-correlator-wrong-sign-X} as 
\begin{equation}\label{eq:cutoff-correlator-regular-expectation}
\langle F\rangle_{T,N,\epsilon}
=\E\biggl[\normord{F(\i Z_0+ \mathrm{i}\varphi_{N})}
\exp\biggl(-\mu e^{2\i bZ_0}\int_{M_T} \normord{e^{2\i b\varphi_{N}(u)}}\,\mathrm{d}u\biggr)\biggr],
\end{equation}
where 
\[
  \varphi_{N}(u):=\sum_{k=1}^{L_N} Z_ke_k(u)
\] 
is the usual (right-sign) cutoff centered Gaussian free field on $M_T$. The following lemma evaluates this quantity.
\begin{lmm}\label{lem:denominator-evaluation}
We have 
\begin{align*}
\langle F\rangle_{T,N,\epsilon}
&=\sum_{n=0}^\infty \frac{(-\mu)^n}{n!}\exp\biggl(-\frac{2b^2 (n-w)^2}{\ep} - 4\sum_{1\le j<j'\le k}\alpha_j \alpha_{j'}G_{T,N}(q_j, q_{j'})\biggr) \\
&\quad  \cdot \int_{M_T^n} \exp\biggl(- 4b \sum_{j=1}^k \sum_{l=1}^n\alpha_j G_{T,N}(q_j,u_l) - 4b^2 \sum_{1\le l<l'\le n} G_{T,N}(u_l, u_{l'}) \biggr)\, \mathrm{d} u_1\cdots \mathrm{d} u_n.
\end{align*}
\end{lmm}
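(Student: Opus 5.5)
The plan is to start from the right-sign representation \eqref{eq:cutoff-correlator-regular-expectation}, expand the exponential of the interaction in its Taylor series, and evaluate each term as a Gaussian expectation. Write
\[
  Y:=\mu e^{2\i bZ_0}\int_{M_T}\normord{e^{2\i b\varphi_N(u)}}\,\dd u,\qquad \exp(-Y)=\sum_{n=0}^\infty\frac{(-1)^n Y^n}{n!}.
\]
By Fubini,
\[
  Y^n=\mu^n e^{2\i bnZ_0}\int_{M_T^n}\prod_{l=1}^n\normord{e^{2\i b\varphi_N(u_l)}}\,\dd u_1\cdots\dd u_n.
\]
First, though, we must justify interchanging the sum, the expectation and the $u$-integrals. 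We use that $|\normord{e^{2\i b\varphi_N(u)}}|=e^{2b^2G_{T,N}(u,u)}$. Since $G_{T,N}$ is a finite sum of smooth functions, this quantity is bounded on the compact set $M_T$, so $|Y|\le \mu\cdot 4T\cdot C_N$ deterministically. Also, $|\normord{F(\i Z_0+\i\varphi_N)}|$ is $\exp$ of a linear form in the Gaussians $Z_0,\dots,Z_{L_N}$ (coming from $\Im\alpha_j$) times a constant, so it is integrable. Hence
\[
  \sum_n \E\bigl[|\normord{F}|\,|Y|^n/n!\bigr]\le e^{4T\mu C_N}\,\E|\normord{F}|<\infty,
\]
and dominated convergence together with Fubini allows all interchanges.

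Second, for fixed $n$ and $u_1,\dots,u_n$ we compute
\[
  \E\Bigl[e^{2\i(\sum_j\alpha_j+nb)Z_0}\,\exp\Bigl(2\i\sum_j\alpha_j\varphi_N(q_j)+2\i b\sum_l\varphi_N(u_l)\Bigr)\Bigr]
\]
times the deterministic renormalization factors. Since $Z_0$ is independent of $\varphi_N$, the two factors split.

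For the zero mode, $\sum_j\alpha_j+nb=b(n-w)$, and with $\E Z_0^2=\ep^{-1}$ we get
\[
  \E\, e^{2\i b(n-w)Z_0}=e^{-2b^2(n-w)^2/\ep}.
\]

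For the centered field, use $\E e^{\i L}=e^{-\E L^2/2}$ for a (complex-coefficient) linear combination $L$ of Gaussians, which holds by analytic continuation in the coefficients. Here $\Cov(\varphi_N(u),\varphi_N(v))=G_{T,N}(u,v)$, since $\E Z_k^2=2\pi/\lambda_k$. This gives
\[
  \exp\Bigl(-2\sum_{j,j'}\alpha_j\alpha_{j'}G_{T,N}(q_j,q_{j'})-4b\sum_{j,l}\alpha_jG_{T,N}(q_j,u_l)-2b^2\sum_{l,l'}G_{T,N}(u_l,u_{l'})\Bigr).
\]
The diagonal terms $-2\alpha_j^2G_{T,N}(q_j,q_j)$ and $-2b^2G_{T,N}(u_l,u_l)$ exactly cancel the renormalization factors in \eqref{eq:normnedef} and \eqref{eq:vndef}, once these are evaluated at $\i\varphi_N$. (Plugging $\i\phi$ into the definitions keeps the $+2\alpha^2G$ counterterm, which is what makes the cancellation work.) The off-diagonal terms, counted twice, become $-4\sum_{j<j'}$ and $-4b^2\sum_{l<l'}$. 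Collecting $(-\mu)^n/n!$ then yields the stated formula.

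The only delicate point is the sign bookkeeping. One must check that the renormalization conventions, transported through the substitution $\phi\mapsto\i\phi$ of \cite{chatterjee25}, produce $+2\alpha^2G_{T,N}$ in exactly the form that cancels $-\tfrac12\E L^2$ on the diagonal. One must also check that $\E Z_0^2=\ep^{-1}$ produces the coefficient $2b^2/\ep$. The convergence and justification steps are routine, because everything is finite-dimensional with bounded Green's function.
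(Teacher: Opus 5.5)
Your proposal is correct and follows essentially the same route as the paper. Like the paper, you expand the interaction exponential in a power series, interchange the sum, the expectation and the $u$-integrals using boundedness of $G_{T,N}$, and evaluate the Gaussian expectation. The zero mode gives $e^{-2b^2(n-w)^2/\epsilon}$, and the diagonal covariance terms cancel the counterterms. Your explicit domination, namely the deterministic bound $|Y|\le 4T\mu C_N$ together with integrability of the observable factor as the exponential of a real linear form in the Gaussians, is a more detailed version of the paper's one-line Fubini justification.
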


\begin{proof}
By expanding in a power series and applying Fubini's theorem (justified here since $G_{T,N}$ is bounded, which implies that the absolute value of the term inside the expectation below is growing at most exponentially in $n$),
\[
\langle F\rangle_{T,N,\epsilon}
=\sum_{n=0}^\infty \frac{(-\mu)^n}{n!}\int_{M_T^n}
\E\biggl[\normord{F(\i Z_0+ \mathrm{i}\varphi_N)} e^{2\i n b Z_0}\prod_{l=1}^n \normord{e^{2\i b\varphi_N(u_l)}}\biggr]\,
\mathrm{d}u_1\cdots \mathrm{d}u_n.
\]
Recalling the definitions \eqref{eq:normnedef} and \eqref{eq:vndef}, we get
\begin{align*}
&\normord{F(\i Z_0+ \mathrm{i}\varphi_N)} e^{2\i n b Z_0}\prod_{l=1}^n \normord{e^{2b\mathrm{i}\varphi_N(u_l)}} \\
&\qquad =\exp\biggl(2\i \biggl(\sum_{j=1}^k \alpha_j + nb\biggr) Z_0 + 2\i \sum_{j=1}^k \alpha_j \varphi_N(q_j) +2\sum_{j=1}^k\alpha_j^2 G_{T,N}(q_j, q_j)
\\ 
&\qquad \qquad+2\i b\sum_{l=1}^n \varphi_N(u_l) +2b^2\sum_{l=1}^n G_{T,N}(u_l,u_l)\biggr).
\end{align*}
Since $\varphi_N$ is a centered Gaussian field with covariance $G_{T,N}$ and $Z_0$ is an independent Gaussian random variable with mean zero and variance $\ep^{-1}$, the expected value of the above quantity is 
\begin{align*}
&\exp\biggl(-\frac{2b^2 (n-w)^2}{\ep} - 4\sum_{1\le j<j'\le k}\alpha_j \alpha_{j'}G_{T,N}(q_j, q_{j'})\\ 
&\qquad \qquad - 4b \sum_{j=1}^k \sum_{l=1}^n\alpha_j G_{T,N}(q_j,u_l) - 4b^2 \sum_{1\le l<l'\le n} G_{T,N}(u_l, u_{l'}) \biggr).
\end{align*}
Plugging this into the first display of the proof yields the desired formula.
\end{proof}
This identifies the regularized finite-volume correlator. We next remove the zero-mode and ultraviolet regularizations.

\subsection{Removing the regularization}\label{sec:zerodenom}
We will now remove the regularization from the finite-volume theory by first sending $\ep$ to zero and then sending $N$ to infinity.
We first record the negative screening consequence of the zero-mode selection rule.
\begin{proof}[Proof of Theorem \ref{thm:euc} in the case $w<0$]
Suppose that $w$ is a negative integer. In the expansion from
Lemma~\ref{lem:denominator-evaluation}, the zero-mode factor in the $n$th term
is
\[
  \exp\bigl(-2b^2(n-w)^2/\epsilon\bigr),
  \qquad n\in\Z_{\ge0}.
\]
Since $n-w\ne0$ for every $n\ge0$, each term tends to zero as
$\epsilon\downarrow0$. The estimate used in the proof of
Lemma~\ref{lem:denominator-evaluation-integer} dominates the whole series
uniformly for $0<\epsilon\le1$: after taking real parts, the insertion terms
are $O(n)$ for fixed $T,N$, while the positive-semidefinite estimate for
$G_{T,N}$ controls the screening-screening pair sum by $O(n)$. Thus the $n$th
integral is bounded by $C^n$ on the finite torus, and the factor $1/n!$ makes
the resulting majorant summable. Dominated convergence therefore
gives
\[
  \lim_{\epsilon\downarrow0}\langle F\rangle_{T,N,\epsilon}=0
\]
for every fixed $T$ and $N$. The subsequent limits $N\to\infty$ and
$T\to\infty$ are therefore also zero. This proves the
claim.
\end{proof}

Throughout the rest of this subsection, we work under the assumptions that $\Re(\alpha_j)>-\frac{1}{2b}$ for each $j$ and $w$ is a nonnegative integer.

\begin{lmm}\label{lem:denominator-evaluation-integer}
We have 
\begin{align*}
\inn{F}_{T,N}&:= \lim_{\ep\to 0}\inn{F}_{T,N,\ep} \\ 
&=\frac{(-\mu)^w}{w!}\exp\biggl( - 4\sum_{1\le j<j'\le k}\alpha_j \alpha_{j'}G_{T,N}(q_j, q_{j'})\biggr) \\
&\quad  \cdot \int_{M_T^w} \exp\biggl(- 4b \sum_{j=1}^k \sum_{l=1}^w\alpha_j G_{T,N}(q_j,u_l) - 4b^2 \sum_{1\le l<l'\le w} G_{T,N}(u_l, u_{l'}) \biggr)\, \mathrm{d} u_1\cdots \mathrm{d} u_w,
\end{align*}
where the integral is interpreted as $1$ if $w=0$.
\end{lmm}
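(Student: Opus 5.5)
The plan is to pass to the limit $\epsilon\to0$ term by term in the series of Lemma~\ref{lem:denominator-evaluation}, by dominated convergence over $n$. For fixed $T,N$ and $n\ge0$, the zero-mode factor $\exp(-2b^2(n-w)^2/\epsilon)$ equals $1$ when $n=w$. For $n\neq w$ it is bounded by $1$ and tends to $0$ as $\epsilon\downarrow0$, since $w$ is an integer and so $|n-w|\ge1$. Once the series is shown to be dominated uniformly in $\epsilon\in(0,1]$ by a summable majorant, only the $n=w$ term survives in the limit, and this is exactly the claimed formula. When $w=0$ the surviving term is the $0$-fold integral, interpreted as $1$.

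\emph{Summable majorant.} Write $a_n$ for the absolute value of the $n$th term with the zero-mode factor removed. Then
\[
a_n\le \frac{\mu^n}{n!}\,e^{4\sum_{j<j'}|\alpha_j\alpha_{j'}||G_{T,N}(q_j,q_{j'})|}\int_{M_T^n}\exp\Bigl(-4b\sum_{j,l}\Re(\alpha_j) G_{T,N}(q_j,u_l)-4b^2\sum_{l<l'}G_{T,N}(u_l,u_{l'})\Bigr)\,\dd u .
\]
Since $G_{T,N}$ is a finite sum of smooth eigenfunctions, it is bounded on $M_T\times M_T$; let $K_N:=\sup|G_{T,N}|$. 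The insertion terms then contribute at most $\exp(4b\,n\sum_j|\Re\alpha_j|K_N)$, which is $e^{O(n)}$.

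For the pair sum we use positive semidefiniteness of the kernel $G_{T,N}$ given by \eqref{eq:gtnrep}. We have
\[
\sum_{l,l'=1}^n G_{T,N}(u_l,u_{l'})=2\pi\sum_{k=1}^{L_N}\lambda_k^{-1}\Bigl(\sum_l e_k(u_l)\Bigr)^2\ge0 ,
\]
so that
\[
-2\sum_{l<l'}G_{T,N}(u_l,u_{l'})\le \sum_l G_{T,N}(u_l,u_l)\le nK_N .
\]
The pair term is therefore at most $e^{2b^2nK_N}$. The integral is then bounded by $(4T)^nC_N^n$, so $a_n\le C'(C'')^n/n!$, which is summable and independent of $\epsilon$. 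Dominated convergence for series then gives the lemma.

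The main (modest) obstacle is this uniform control of the screening–screening interaction. Because $G_{T,N}$ need not be nonnegative pointwise, a naive bound would be $e^{O(n^2)}$, which the $1/n!$ cannot absorb. The positive-definiteness identity above is what reduces it to $e^{O(n)}$. Everything else is boundedness of the finite-rank cutoff kernel.
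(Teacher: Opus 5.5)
Your proposal is correct and follows essentially the same route as the paper. Both pass $\epsilon\to0$ inside the series of Lemma~\ref{lem:denominator-evaluation} by dominated convergence, using positive semidefiniteness of the cutoff kernel $G_{T,N}$ to bound the screening pair sum by $O(n)$ and boundedness of $G_{T,N}$ for the insertion terms. This yields the same $C^n/n!$ majorant, uniform in $\epsilon$, and only the $n=w$ term survives.
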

The proof of this lemma is deferred to \secref{sec:aux-zero-mode-limit}. Having sent $\ep\to 0$ in Lemma \ref{lem:denominator-evaluation-integer}, we will now send $N\to\infty$ to complete the process of removing all regularizations in equation \eqref{eq:cutoff-correlator-regular-expectation}. But before this can be done, we need to make a specific choice for the basis $\{e_k\}_{k\ge 0}$. The final answer will not depend on this choice, but a basis needs to be fixed so that we can send $N\to\infty$ and obtain the correct limit.

We choose the standard Fourier basis on the flat torus $M_T$ with periodic boundary conditions. Let $\Z^2_*$ be obtained from $\Z^2$ by choosing exactly one element from each pair $(m,n), (-m,-n)\in \Z^2$. Then a complete orthonormal basis of $L_{\R}^2(M_T)$ consisting of eigenfunctions of $-\Delta$ is given by 
\begin{align*}
e_{m,n}^{\mathrm{c}}(t,x)&:=\frac{1}{\sqrt{2T}}\cos\biggl(\pi\frac{m}{T}t+\pi n x\biggr),\\
e_{m,n}^{\mathrm{s}}(t,x)&:=\frac{1}{\sqrt{2T}}\sin\biggl(\pi\frac{m}{T}t+\pi n x\biggr), 
\qquad (m,n)\in\Z_*^2,
\end{align*}
with eigenvalues
\[
\lambda_{m,n}=\pi^2\biggl(\frac{m^2}{T^2}+n^2\biggr).
\]
With this choice of basis, we define the cutoff Green's function to be 
\begin{align}\label{eq:gtnalt}
&G_{T,N}((t,x), (s,y)) \notag \\ 
&:= 2\pi \sum_{\substack{(m,n)\in \Z^2_*,\ |m|\le N,\ |n|\le N\\ (m,n)\neq(0,0)}}\frac{1}{\lambda_{m,n}}(e_{m,n}^{\mathrm{c}}(t,x) e_{m,n}^{\mathrm{c}}(s,y)+ e_{m,n}^{\mathrm{s}}(t,x) e_{m,n}^{\mathrm{s}}(s,y)).
\end{align}
Lemma \ref{lem:GN-form} of \secref{sec:green} identifies this $G_{T,N}$ with the Fourier-series cutoff used in the Appendix, Lemma \ref{lem:gN-to-g} shows pointwise convergence to a limit function $G_T$ away from the diagonal in a suitable sense as $N\to\infty$, and Proposition \ref{prop:green} shows that $G_T$ acts as a well-defined integral kernel on smooth functions and is the inverse of $-\frac{1}{2\pi}\Delta$ on mean-zero smooth functions (as expected). 
The following result gives the limit of $\inn{F}_{T,N}$ as $N\to\infty$ as a formula involving the kernel $G_T$. 
\begin{thm}\label{thm:denominator-limit-N}
We have 
\begin{align*}
\inn{F}_{T}&:= \lim_{N\to \infty}\inn{F}_{T,N} \\ 
&=\frac{(-\mu)^w}{w!}\exp\biggl( - 4\sum_{1\le j<j'\le k}\alpha_j \alpha_{j'}G_{T}(q_j, q_{j'})\biggr) \\
&\quad  \cdot \int_{M_T^w} \exp\biggl(- 4b \sum_{j=1}^k \sum_{l=1}^w\alpha_j G_{T}(q_j,u_l) - 4b^2 \sum_{1\le l<l'\le w} G_{T}(u_l, u_{l'}) \biggr)\, \mathrm{d} u_1\cdots \mathrm{d} u_w.
\end{align*}
\end{thm}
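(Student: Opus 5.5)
The prefactor $\exp(-4\sum_{j<j'}\alpha_j\alpha_{j'}G_{T,N}(q_j,q_{j'}))$ converges to the corresponding expression with $G_T$, because the $q_j$ are distinct and Lemma~\ref{lem:gN-to-g} gives pointwise convergence $G_{T,N}\to G_T$ off the diagonal. Everything therefore reduces to passing to the limit inside the $w$-fold integral over $M_T^w$. We may assume $w\ge1$, since for $w=0$ the integral is $1$ and there is nothing to prove. The integrand converges pointwise almost everywhere: the set where some $u_l$ coincides with a $q_j$ or with another $u_{l'}$ is Lebesgue-null, and off that set each $G_{T,N}$ term converges. The plan is to invoke dominated convergence, or Vitali convergence, with a majorant that is uniform in $N$.

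To build the majorant we need two uniform comparisons between $G_{T,N}$ and the logarithm.

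\textbf{(a) Lower bound.} We want
\[
  G_{T,N}(u,v)\ge -\log\bigl(\dist(u,v)+N^{-1}\bigr)-C,
\]
or at least a bound of the form $G_{T,N}(u,v)\ge -\log(\dist(u,v)\vee N^{-1})-C$. This should follow from the explicit Fourier form in Lemma~\ref{lem:GN-form}: the square cutoff is a Fej\'er- or Dirichlet-type smoothing of the logarithmic singularity of $G_T$, plus a bounded smooth part.

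\textbf{(b) Upper bound.} We want
\[
  G_{T,N}(u,v)\le -\log\dist(u,v)+C
  \qquad\text{and}\qquad
  G_{T,N}(u,u)\le \log N+C,
\]
with $C$ independent of $N$.

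Taking real parts, the exponent in the integrand is
\[
  -4b\sum_{j,l}\Re(\alpha_j)\,G_{T,N}(q_j,u_l)\;-\;4b^2\sum_{l<l'}G_{T,N}(u_l,u_{l'}).
\]
The pair sum carries a negative coefficient, so the lower bound (a) is the one needed for it. For the insertion terms we use (a) when $\Re\alpha_j>0$ and (b) when $\Re\alpha_j<0$. This yields a majorant of Coulomb-gas type,
\[
  \prod_{j,l}\dist(q_j,u_l)^{4b\Re\alpha_j}\;\prod_{l<l'}\bigl(\dist(u_l,u_{l'})+N^{-1}\bigr)^{4b^2},
\]
with the terms $\Re\alpha_j>0$ also regularized at scale $N^{-1}$.

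The factors with $\Re\alpha_j>0$ are bounded on the compact torus, so only the negative-charge insertions are actually singular. The condition $\Re\alpha_j>-\frac{1}{2b}$ gives exponent $4b\Re\alpha_j>-2$, which is integrable in two dimensions. Using $(x+N^{-1})^{4b^2}\le (x+1)^{4b^2}$, the pair factors are bounded uniformly in $N$ on the compact torus $M_T$. Hence the majorant is dominated by the $N$-independent integrable function
\[
  C\prod_{j:\,\Re\alpha_j<0}\ \prod_{l}\dist(q_j,u_l)^{4b\Re\alpha_j},
\]
which is a product of integrable functions of the separate variables $u_l$. Dominated convergence then gives the stated limit.

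The main obstacle is proving the uniform log bounds (a) and (b) for the square-cutoff kernel $G_{T,N}$, especially the upper bound near the points $q_j$ with $\Re\alpha_j<0$, where we need $G_{T,N}(q,u)\le -\log\dist(q,u)+C$ uniformly in $N$. Dirichlet-type truncations oscillate, so I would first write $G_{T,N}$ through the explicit one-dimensional summation in the $x$-frequency from Lemma~\ref{lem:GN-form}, which produces geometric series in $e^{-\pi|m|t/T}$ type terms. I would then compare these with the closed form of $G_T$ (the torus analogue of $g$), splitting into $\dist(u,v)\le N^{-1}$ and $\dist(u,v)>N^{-1}$ and bounding the tails of the truncated sums by $O(1)$ via Abel summation. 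If a clean pointwise upper bound fails, the fallback is a uniform $L^p$ bound with some $p>1$ on $\exp(-4b\Re\alpha_j G_{T,N}(q,\cdot))$, which gives uniform integrability and lets Vitali's theorem replace dominated convergence.
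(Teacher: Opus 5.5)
Your reduction is the same as the paper's. You use dominated convergence on $M_T^w$, a uniform lower bound on $G_{T,N}$ for the pair terms and for the insertions with $\Re\alpha_j\ge 0$, and a uniform logarithmic upper bound for the insertions with $\Re\alpha_j<0$. Integrability then comes from $-4b\Re\alpha_j<2$ and the distinctness of the $q_j$.

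The gap is that the proposal stops exactly where the content lies. The two $N$-uniform estimates are only asserted ("should follow") and then flagged as the main obstacle, and the route you sketch does not give uniformity. The square cutoff is a Dirichlet truncation, not a Fej\'er one. The product Dirichlet kernel is not positive and has $L^1$ norm of order $(\log N)^2$, so $G_T\ge -C$ does not transfer to $G_{T,N}$. Bounding truncated one-dimensional tails by summation by parts, as in the proof of Lemma~\ref{lem:gN-to-g}, gives constants of size $|1-e^{\i\pi t/T}|^{-1}$ (or $|1-e^{\i\pi x}|^{-1}$). These blow up as $u-v$ approaches the coordinate axes through the singularity, which is precisely where the bounds are needed. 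You also ask for more than is used: only $G_{T,N}\ge -C$ is needed, not the sharper (a), and the diagonal bound $G_{T,N}(u,u)\le \log N+C$ plays no role.

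The missing idea, which is how the paper proves Lemmas~\ref{lem:gN-lower} and~\ref{lem:gN-log-upper}, is to compare the sharp cutoff with the heat (Abel) regularization $g^{\epsilon}$ at the matched scale $\epsilon_N=(N+1)^{-2}$. The difference of the two Fourier multipliers is absolutely summable with an $N$-uniform bound:
\[
  |g_{T,N}-g^{\epsilon_N}|\le \frac{\pi}{2T}\sum_{(m,n)\ne(0,0)}\frac{\bigl|1_{\{|m|,|n|\le N\}}-e^{-\epsilon_N\lambda_{m,n}}\bigr|}{\lambda_{m,n}}\le C .
\]
Inside the square each term is at most $\epsilon_N$ and there are $O(N^2)$ of them. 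Outside, the sum is $\lesssim\int_1^\infty e^{-\kappa s^2}s^{-1}\,\dd s$. The $1/\lambda$ decay of the Green's function coefficients is what defeats the Lebesgue-constant problem.

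Next write $g^{\epsilon}=2\pi\int_\epsilon^\infty\bigl(p_s-\frac{1}{4T}\bigr)\,\dd s$ and use the method-of-images formula showing $p_s\ge 0$ (Lemma~\ref{lem:heat-kernel-positive}). This gives $g^{\epsilon}\ge -C$ immediately. The bound $g^{\epsilon}\le C+|\log r|$ comes from the $(k,l)=(0,0)$ Gaussian image, since the other images and the large-$s$ tail are bounded. With these two lemmas in hand, your dominated-convergence argument goes through as written.
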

\begin{proof}
If $w=0$, the claim follows immediately from Lemma~\ref{lem:denominator-evaluation-integer}, Lemma~\ref{lem:gN-to-g}, and Lemma~\ref{lem:GN-form}.
So assume $w\ge 1$. Fix distinct points $u_1,\dots,u_w\in M_T$. Throughout this proof, $C,C_1,\ldots$ will denote positive constants that do not depend on $u_1,\ldots,u_w$. By
Lemma~\ref{lem:gN-to-g} and Lemma~\ref{lem:GN-form}, 
$G_{T,N}(u_l,u_{l'})\allowbreak\to G_T(u_l,u_{l'})$ for each $l\neq l'$. To justify dominated convergence, first note that by Lemma~\ref{lem:gN-lower},
\begin{align*}
&\Re\biggl\{- 4b \sum_{j=1}^k \sum_{l=1}^w\alpha_j G_{T,N}(q_j,u_l) - 4b^2 \sum_{1\le l<l'\le w} G_{T,N}(u_l, u_{l'})\biggr\} \\ 
&\le C_1 -4b \sum_{j=1}^k\sum_{l=1}^w\Re(\alpha_j)G_{T,N}(q_j,u_l).
\end{align*}
Now, if $\Re(\alpha_j) \ge 0$ for some $j$, then Lemma~\ref{lem:gN-lower} gives 
\[
  -4b\Re(\alpha_j) G_{T,N}(q_j, u_l)\le C_2.
\]
On the other hand, if $\Re(\alpha_j) < 0$, then Lemma \ref{lem:gN-log-upper} gives 
\[
  -4b\Re(\alpha_j) G_{T,N}(q_j, u_l)\le C_3 -4b\Re(\alpha_j) |\log d(q_j, u_l)|,
\]
where $d(q_j, u_l)$ denotes the geodesic distance between $q_j$ and $u_l$ on $M_T$. Since $\Re(\alpha_j)> -\frac{1}{2b}$ for each $j$, we conclude that there is some $\delta>0$ such that for any $u_1,\ldots,u_w$,
\begin{align*}
&\biggl|\exp\biggl(- 4b \sum_{j=1}^k \sum_{l=1}^w\alpha_j G_{T,N}(q_j,u_l) - 4b^2 \sum_{1\le l<l'\le w} G_{T,N}(u_l, u_{l'}) \biggr)\biggr| \\ 
&= \exp\biggl(\Re\biggl\{- 4b \sum_{j=1}^k \sum_{l=1}^w\alpha_j G_{T,N}(q_j,u_l) - 4b^2 \sum_{1\le l<l'\le w} G_{T,N}(u_l, u_{l'})\biggr\}\biggr) \\ 
&\le \exp\biggl(C_1 +(2-\delta) \sum_{j=1}^k\sum_{l=1}^w|\log d(q_j,u_l)|\biggr).
\end{align*}
Since $M_T$ is a two-dimensional manifold and $u_1,\ldots,u_w$ are distinct, it is straightforward to see that the above upper bound has a finite integral over $M_T^w$. This allows us to apply the dominated convergence theorem and complete the proof.
\end{proof}
At this point the finite-volume theory has been constructed without regularization. The remaining task is to send the torus size to infinity.

\subsection{Euclidean theory in infinite volume}
We will now construct the Euclidean theory in infinite volume by sending  $T\to\infty$. A key step is to identify the limiting behavior of $g_T$ as $T\to \infty$. In Lemmas \ref{lem:gtlimit} and \ref{lem:gtlimit2} from~\secref{sec:green}, we show that for any distinct $(t,x), (s,y)\in M$, 
\begin{align*}
\lim_{T\to \infty} \biggl(G_T((t,x), (s,y))-\frac{\pi T}{6}\biggr) = -\frac{\pi |t-s|}{2}-\log|1-e^{-\pi|t-s|+\pi \i (x-y)}|. 
\end{align*}
Motivated by this, we define the renormalized Green's function in finite volume as 
\[
  \widetilde G_T(u,v):=G_T(u,v)-\frac{\pi}{6}T,
\]
and its infinite-volume limit as 
\[
  \widetilde G((t,x),(s,y))
  :=
  -\frac{\pi |t-s|}{2}
  -\log|1-e^{-\pi |t-s|+\pi \i (x-y)}|.
\]
We also define certain functions for notational convenience. For $T>0$, define $\Phi_T:M^w \to \C$ as $\Phi_T(u_1,\ldots,u_w)=0$ if $(u_1,\ldots,u_w)\notin M_T^w$, and if $(u_1,\ldots,u_w)\in M_T^w$, we let
\[
  \Phi_T(u_1,\dots,u_w)
  :=
  \exp\biggl(
    -4b\sum_{j=1}^k\sum_{l=1}^w \alpha_j \widetilde G_T(q_j,u_l)
    -4b^2\sum_{1\le l<l'\le w}\widetilde G_T(u_l,u_{l'})
  \biggr).
\]
We define the $T=\infty$ version of $\Phi_T$, denoted $\Phi$, by replacing $\widetilde G_T$ by $\widetilde G$ everywhere in the above formula.  
The following theorem gives the limit of $\inn{F}_T$ as $T\to\infty$, after renormalizing by a suitable $T$-dependent factor. This is actually Theorem \ref{thm:euc} presented slightly differently; thus, the proof of this theorem, given below, completes the proof of Theorem \ref{thm:euc}.
\begin{thm}\label{thm:denominator-limit-T}
We have
\begin{align*}
&\lim_{T\to\infty}
\exp\biggl\{ 
\frac{\pi T}{3}\sum_{j=1}^k \alpha_j(b-\alpha_j)
\biggr\}\,\inn{F}_T \\
&\qquad=
\frac{(-\mu)^w}{w!}
\exp\biggl(
-4\sum_{1\le j<j'\le k}\alpha_j\alpha_{j'}\widetilde G(q_j,q_{j'})
\biggr) \int_{M^w}
\Phi(u_1,\ldots,u_w)\,\dd u_1\cdots \dd u_w,
\end{align*}
and the integral on the right is absolutely convergent.
\end{thm}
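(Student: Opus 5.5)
The plan is to start from the formula of Theorem~\ref{thm:denominator-limit-N}, substitute $G_T=\widetilde G_T+\frac{\pi T}{6}$ everywhere, and show that the resulting constant exactly cancels the prefactor. Write $S:=\sum_j\alpha_j=-bw$ and $\sum_{j<j'}\alpha_j\alpha_{j'}=\frac12(S^2-\sum_j\alpha_j^2)$. The $T$-dependent part of the exponent is then $\frac{\pi T}{6}$ times
\[
-2\Bigl(S^2-\sum_j\alpha_j^2\Bigr)-4bwS-2b^2w(w-1)=2\sum_j\alpha_j^2+2b^2w=2\sum_j\alpha_j(\alpha_j-b).
\]
Since $\frac{\pi T}{6}\cdot 2\sum_j\alpha_j(\alpha_j-b)=\frac{\pi T}{3}\sum_j\alpha_j(\alpha_j-b)$, this is exactly cancelled by $\exp\{\frac{\pi T}{3}\sum_j\alpha_j(b-\alpha_j)\}$. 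So the renormalized quantity equals
\[
\frac{(-\mu)^w}{w!}\exp\Bigl(-4\sum_{j<j'}\alpha_j\alpha_{j'}\widetilde G_T(q_j,q_{j'})\Bigr)\int_{M^w}\Phi_T .
\]
The prefactor converges by Lemmas~\ref{lem:gtlimit} and~\ref{lem:gtlimit2}. The same lemmas give $\Phi_T\to\Phi$ pointwise on distinct configurations, since $M_T\uparrow M$. It remains to justify dominated convergence on $M^w$ and to prove absolute convergence of the limit.

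For the majorant, take real parts. The screening charges $b$ are real, so only $\Re\alpha_j$ enters the insertion–screening terms. Split $\widetilde G$ into its linear part $-\frac{\pi}{2}|t-s|$ and its logarithmic part $-\log|1-e^{-\pi|t-s|+\pi\i(x-y)}|$.

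The logarithmic part is bounded above away from coincidence and behaves like $-\log d$ near it. It is handled exactly as in the proof of Theorem~\ref{thm:denominator-limit-N}:
\begin{itemize}
\item screening–screening terms are bounded above, because $-4b^2\widetilde G\le C$ from the $+\log$ side;
\item insertion–screening terms with $\Re\alpha_j<0$ give $(2-\delta)|\log d|$ singularities, which are locally integrable.
\end{itemize}

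For the linear part I will use the one-dimensional Coulomb gas identity. Give real charges $\Re\alpha_j$ at the times of the $q_j$ and $b$ at each $t_l$; their total is $\Re S+bw=0$. Then
\[
\frac{\pi}{2}\cdot 4\cdot\Bigl(-\tfrac12\Bigr)\cdot 2\sum_{i<i'}q_iq_{i'}|t_i-t_{i'}|\cdot(-1)=-\pi\int_{\R}F(t)^2\,dt,
\]
up to the fixed insertion–insertion contribution, where $F$ is the cumulative charge function. (I only use that the energy equals $-2\pi\int F^2$ up to constants; the sign works because charge neutrality makes the sum over pairs equal to $-\frac12\int F^2$.)

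Outside the compact window $[\min_j t(q_j),\max_j t(q_j)]$, the function $F$ takes values in $b\Z$. So every screening variable that leaves the window by distance $L$ forces $\int F^2\ge b^2L$. This gives a majorant of the form $\prod_l e^{-c\,\dist(t_l,\mathrm{window})}$ times integrable logarithmic singularities. That majorant is integrable on $M^w$, which proves absolute convergence of $\int_{M^w}\Phi$.

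The main obstacle is making this majorant uniform in $T$. The finite-torus $\widetilde G_T$ is periodic in time, so its linear part is not literally $-\frac{\pi}{2}|t-s|$. I would first try to extract from the explicit Fourier/theta form of $G_T$ in \secref{sec:green} a two-sided bound
\[
\widetilde G_T(u,v)=-\frac{\pi}{2}|t-s|+\frac{\pi}{4T}|t-s|^2+(\text{log part})+O(1)\quad\text{for }|t-s|\le T,
\]
uniformly in $T$. The quadratic correction is the Green's function of a neutral charge on a circle of length $2T$, and $h_T(r):=-\frac{\pi}{2}r+\frac{\pi}{4T}r^2$ is the periodic one-dimensional kernel. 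With this kernel the Coulomb identity still holds in the form $\int_{\text{circle}}(F-\bar F)^2\ge0$, which gives the same exponential decay of screenings away from the window, now measured along the circle. This is enough to dominate $\Phi_T$ by a fixed integrable function on $M^w$ for all large $T$, and dominated convergence then finishes the proof.
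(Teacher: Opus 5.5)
Your proposal is correct, and its skeleton is the paper's. You use the same cancellation of the $\frac{\pi T}{6}$ constant, and the same convergence of the prefactor and pointwise convergence of $\Phi_T$ via Lemmas~\ref{lem:gtlimit} and~\ref{lem:gtlimit2}. Your majorant also rests on the same splitting $\widetilde g_T(t,x)=-f_T(d_T(t))+h_T(t,x)$ with $f_T(r)=\frac{\pi}{2}r-\frac{\pi}{4T}r^2$. This is exactly the bound you propose to extract; the paper obtains it from \eqref{eq:gtlimit-g-fourier}--\eqref{eq:gtlimit-bn-eval} and Lemma~\ref{lem:htlemma}.

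The genuine difference is how charge neutrality produces decay. In Lemma~\ref{lem:renorm-int-tail} the paper uses only two elementary properties of $f_T$:
\begin{itemize}
\item $f_T$ is Lipschitz, which collapses each insertion--screening linear term to $-4b^2wf_T(|t_l|)+O(1)$;
\item $f_T$ is concave, hence subadditive, which bounds the screening--screening terms by $4b^2(w-1)\sum_l f_T(|t_l|)$.
\end{itemize}
The net bound $-4b^2\sum_l f_T(|t_l|)\le-\pi b^2\sum_l|t_l|$ needs no identity at all.

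Your electrostatic route is exact and more conceptual. Decay is measured from the insertion window, and the quadratic correction is visibly the circle Green's function. It needs one correction and one extra step.

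\emph{The correction.} The line identity is $\sum_{i<i'}q_iq_{i'}|t_i-t_{i'}|=-\int_{\R}F^2$, not $-\frac12\int F^2$. So the real linear exponent is exactly $-2\pi\int F^2$, up to a constant from the fixed insertion pairs, which is what you ultimately use.

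\emph{The extra step.} On the circle, positivity of $\int(F-\bar F)^2$ alone gives no decay. Outside the window, $F-\bar F$ is a staircase in $c+b\Z$ with an unknown offset $c$. You must add that at most one level lies within $b/2$ of $0$. That level occupies a single arc between consecutive outside screenings, so $\int(F-\bar F)^2\ge\frac{b^2}{4w}\sum_l D_l$, where $D_l$ is the circular distance to the window. Then $D_l\ge\operatorname{dist}_{\R}(t_l,W)-2A$ for $t_l\in[-T,T]$, $W\subset[-A,A]$, $T\ge A$, which makes the majorant $T$-independent.

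After these adjustments the two routes end in the same dominated-convergence step. Yours pays with a weaker, $w$-dependent decay rate. In exchange, your observation that the logarithmic part is bounded above away from coincidence (a $\log_+$ bound rather than $|\log r_T|$) avoids the $e^{\epsilon|t_l|}$ absorption the paper needs in \eqref{eq:r-to-d-bound}.
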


\begin{proof}
Recall the formula for $\inn{F}_T$ from Theorem~\ref{thm:denominator-limit-N}. Since $G_T=\widetilde G_T+\frac{\pi}{6}T$, we may factor out the constant contribution:
\begin{align*}
\inn{F}_T
&=
\frac{(-\mu)^w}{w!}
\exp\biggl(
-4\sum_{1\le j<j'\le k}\alpha_j\alpha_{j'} \widetilde G_T(q_j,q_{j'})
\biggr) \\
&\quad \cdot
\exp\biggl\{ 
-\frac{2\pi T}{3}
\biggl(
\sum_{1\le j<j'\le k}\alpha_j\alpha_{j'}
+bw\sum_{j=1}^k\alpha_j
+b^2\binom{w}{2}
\biggr)
\biggr\} \\ 
&\qquad \qquad \cdot \int_{M^w}\Phi_T(u_1,\dots,u_w)\,\dd u_1\cdots \dd u_w .
\end{align*}
Using $\sum_{j=1}^k \alpha_j=-bw$, we obtain
\begin{align*}
-\frac{2\pi T}{3}
\biggl(
\sum_{1\le j<j'\le k}\alpha_j\alpha_{j'}
+bw\sum_{j=1}^k\alpha_j
+b^2\binom{w}{2}
\biggr)
&=
\frac{\pi T}{3}\biggl(\sum_{j=1}^k\alpha_j^2+b^2w\biggr)\\ 
&= -\frac{\pi T}{3}\sum_{j=1}^k \alpha_j(b - \alpha_j).
\end{align*}
Therefore,
\begin{align*}
&\exp\biggl\{
\frac{\pi T}{3}\sum_{j=1}^k\alpha_j(b -\alpha_j)
\biggr\}\,\inn{F}_T \notag\\
&\qquad=
\frac{(-\mu)^w}{w!}
\exp\biggl(
-4\sum_{1\le j<j'\le k}\alpha_j\alpha_{j'} \widetilde G_T(q_j,q_{j'})
\biggr)  \int_{M^w}\Phi_T(u_1,\dots,u_w)\,\dd u_1\cdots \dd u_w.
\end{align*}
We already know that $\widetilde G_T(q_j,q_{j'}) \to \widetilde G(q_j,q_{j'})$ as $T\to \infty$. So the proof will be complete if we can show that 
\begin{align}\label{eq:phittoshow}
\lim_{T\to \infty}\int_{M^w}\Phi_T(u_1,\dots,u_w)\,\dd u_1\cdots \dd u_w = \int_{M^w}\Phi(u_1,\dots,u_w)\,\dd u_1\cdots \dd u_w.
\end{align}
Again, we know that the integrand on the left converges pointwise to the integrand on the right as $T\to \infty$. Thus, we only need to show that the limit on the left can be moved inside the integral. 
To this end, fix $R>0$, and let
\[
K_R:=([-R,R]\times[-1,1])^w.
\]
On $K_R$, the integrand is dominated by the integrable bound from
Lemma~\ref{lem:renorm-int-tail} of \secref{sec:bnkbounds}. So, 
\[
\lim_{T\to\infty}\int_{K_R}\Phi_T(u_1,\ldots,u_w)\,\dd u_1\cdots \dd u_w
=
\int_{K_R}\Phi(u_1,\ldots,u_w)\,\dd u_1\cdots \dd u_w.
\]
Next, again by Lemma~\ref{lem:renorm-int-tail}, there are positive constants  $T_0,C,\eta, \delta$, independent of $R$, such that 
\begin{align*}
&\sup_{T\ge T_0}\int_{M^w\setminus K_R} |\Phi_T(u_1,\ldots,u_w)|\,\dd u_1\cdots \dd u_w \\ 
&\le
\int_{M^w\setminus K_R}
C \exp\biggl(-\eta\sum_{l=1}^w \|u_l\|\biggr)
\prod_{j=1}^k\prod_{l=1}^w d(q_j,u_l)^{-(2-\delta)}
\,\dd u_1\cdots \dd u_w =: \ep(R),
\end{align*}
where $d$ denotes geodesic distance on $M$. 
From this, Fatou's lemma shows that 
\[
  \int_{M^w\setminus K_R} |\Phi(u_1,\ldots,u_w)|\,\dd u_1\cdots \dd u_w\le \ep(R).
\]
Combining the last three displays, we get 
\begin{align*}
\limsup_{T\to\infty} \biggl|\int_{M^w}\Phi_T(u_1,\ldots,u_w)\, \dd u_1 \cdots \dd u_w - \int_{M^w} \Phi(u_1,\ldots,u_w)\, \dd u_1\cdots \dd u_w\biggr|\le \ep(R).
\end{align*}
But the left side does not depend on $R$, and the dominated convergence theorem shows that $\ep(R)\to 0$ as $R\to \infty$. This proves equation \eqref{eq:phittoshow}, thereby completing the proof of the main claim of the theorem. For the finiteness of the integral appearing in the theorem statement, we simply invoke the bound from Lemma~\ref{lem:renorm-int-tail} and apply Fatou's lemma.
\end{proof}
This completes the Euclidean part of the construction and, with it, the proof of Theorem~\ref{thm:euc}. We now turn to analytic continuation and the passage to Lorentzian signature.

\section{Analytic continuation}\label{sec:anacont}
In this section, our main goal is to prove Theorem \ref{thm:main}. The case $w< 0$ is immediate, so we will assume throughout that $w$ is a nonnegative integer. We carry out the proof by first establishing a stronger analytic-continuation statement on the ordered tube and then identifying its Lorentzian boundary values. The task is divided into subsections. In the first subsection below, we inspect the analytic continuation of the function $g$ to the open right half-plane and establish some of its properties.
\subsection{Analytic continuation of the Green's function}\label{sec:anagreen}
Recall the function $g:\R^2\to\R$,
\[
  g(t,x) = -\frac{\pi |t|}{2}-\log|1-e^{-\pi |t|+\pi \i x}|,
\]
that gave us the mean-zero Green's function $G(u,v)=g(u-v)$ on the cylinder $M$. We now work out the analytic continuation of $g$ in the time coordinate to the open right half-plane. Observe that for $t>0$, 
\begin{align*}
g(t,x) &= -\frac{\pi t}{2}-\frac{1}{2}\log|1-e^{-\pi |t|+\pi \i x}|^2\\ 
&= -\frac{\pi t}{2}-\frac{1}{2}\log(1-e^{-\pi t+\pi \i x}) - \frac{1}{2}\log(1-e^{-\pi t-\pi \i x}),
\end{align*}
where the logarithms in the second line denote the analytic branch of the logarithm on $\C \setminus(-\infty,0]$ that is real-valued on the real line. Now suppose in the above formula that we replace $t$ by a complex variable $\tau = a+\i b$, with $a>0$. Note that 
\begin{align*}
\Re(1-e^{-\pi \tau+\pi \i x}) &= \Re(1-e^{-\pi a +\pi \i (x-b)})\\ 
&= 1- e^{-\pi a}\cos(\pi(x-b)) > 0,
\end{align*}
since $|e^{-\pi a}\cos(\pi(x-b))| < 1$. This shows that $g(t,x)$ has a straightforward analytic continuation in the time coordinate to the open right half-plane, given by 
\begin{align}\label{eq:greencont}
g(\tau,x) &= -\frac{\pi\tau}{2}-\frac{1}{2}\log(1-e^{-\pi \tau+\pi \i x}) - \frac{1}{2}\log(1-e^{-\pi \tau-\pi \i x}).
\end{align}
The function now extends by continuity to $(\i t , x)$ where $t,x\in \R$ and $t \pm x \notin 2\Z$. We will also use the following reflected boundary-value convention for arguments with negative real part: for $a<0$, set $g(a+\i b,x):=g(-a+\i b,x)$, away from the same singular points. This is a time-ordering convention, not an analytic continuation across the imaginary axis. The following estimate for $g$ will be useful.
\begin{lmm}\label{lem:gcontlmm}
There are positive universal constants $C_1,C_2,C_3$ such that for any $\tau$ in the open right half-plane and any $x\in\R$, if we define
\[
  d_\pm(\tau,x):=\min_{k\in\Z}|\tau\pm \i x-2\i k|,
  \qquad
  d(\tau,x):=\min\{d_+(\tau,x),d_-(\tau,x)\},
\]
then
\begin{align*}
  -C_1 \le \Re\biggl(g(\tau,x)+\frac{\pi\tau}{2}\biggr) \le C_2 + |\log d(\tau,x)|,
  \qquad
  \biggl|\Im\biggl(g(\tau,x)+\frac{\pi\tau}{2}\biggr)\biggr|\le C_3.
\end{align*}
\end{lmm}
\begin{proof}
Throughout this proof, $C_0, C_1,\ldots$ will denote positive universal constants whose values may change from line to line. 
Write $\tau=a+\i b$ with $a>0$, and define
\[
  z_\pm:=1-e^{-\pi\tau\pm \pi \i x}.
\]
From equation~\eqref{eq:greencont},
\[
  g(\tau,x)+\frac{\pi\tau}{2}= -\frac12\log z_+ -\frac12\log z_-.
\]
For either sign,
\[
\Re(z_\pm)=1-e^{-\pi a}\cos(\pi(x\mp b))\ge 1-e^{-\pi a}>0.
\]
So $z_\pm$ lies in the open right half-plane. Hence, for our branch of logarithm,
$|\arg(z_\pm)|<\frac\pi2$. Therefore
\[
\biggl|\Im\biggl(g(\tau,x)+\frac{\pi\tau}{2}\biggr)\biggr|
=\frac12|\arg z_++\arg z_-|
\le \frac12(|\arg z_+|+|\arg z_-|)
\le \frac\pi2.
\]
This proves the bound on the imaginary part claimed in the lemma. For the lower bound on the real part, note that since $|z_\pm|\le 1+|e^{-\pi\tau\pm \pi \i x}|\le 2$, we get
\[
\Re\biggl(g(\tau,x)+\frac{\pi\tau}{2}\biggr)
=-\frac12\log|z_+|-\frac12\log|z_-|
\ge -\log 2.
\]
Now we prove the upper bound on the real part. Let
\[
  w_\pm:=\pi(\tau\mp \i x)=\pi a+\i\pi(b\mp x).
\]
Choose $k_\pm\in\Z$ so that
\[
  |\Im(w_\pm)-2\pi k_\pm|=\dist(\Im(w_\pm),2\pi\Z),
\]
and define
\[
  \widetilde b_\pm:=\Im(w_\pm)-2\pi k_\pm\in[-\pi,\pi].
\]
Then
\[
  |z_\pm|
  =|1-e^{-w_\pm}|
  =|1-e^{-\pi a+\i\widetilde b_\pm}|.
\]
Applying Lemma~\ref{lem:one-minus-exp-lower} with $u=\pi a$ and $v=\widetilde b_\pm$, we get
\[
  |z_\pm|\ge C_0\min\biggl\{1,\sqrt{(\pi a)^2+\widetilde b_\pm^2}\biggr\}.
\]
But
\[
  \sqrt{(\pi a)^2+\widetilde b_\pm^2}=|w_\pm-2\pi\i k_\pm|=|\pi(\tau\mp\i x-2\i k_\pm)|\ge \pi d(\tau,x).
\]
Thus, we conclude that 
\[
  |z_\pm|\ge C_1\min\{1,d(\tau,x)\}.
\]
Hence,
\begin{align*}
\Re\biggl(g(\tau,x)+\frac{\pi\tau}{2}\biggr)
&=-\frac12\log|z_+|-\frac12\log|z_-|\\
&\le -\log(\min\{|z_+|,|z_-|\})\\
&\le -\log(C_1\min\{1,d(\tau,x)\})\\
&\le C_2+|\log d(\tau,x)|.
\end{align*}
This completes the proof.
\end{proof}
With the analytically continued Green's function under control, we can now reorganize the full correlation function into pieces suited to continuation.

\subsection{Decomposing the Euclidean correlation}
Recall that the Euclidean correlation is given by
\begin{align*}
C_{\alpha_1,\ldots,\alpha_k}(q_1;\ldots; q_k)&= \frac{(-\mu)^w}{w!} A_{\alpha_1,\ldots, \alpha_k}(q_1;\ldots;q_k)B_{\alpha_1,\ldots,\alpha_k}(q_1;\ldots;q_k)
\end{align*}
where 
\begin{align*}
A_{\alpha_1,\ldots, \alpha_k}(q_1;\ldots;q_k) &:= \exp\biggl(
-4\sum_{1\le j<j'\le k}\alpha_j\alpha_{j'} g(q_j-q_{j'})
\biggr),
\end{align*}
and 
\begin{align*}
B_{\alpha_1,\ldots,\alpha_k}(q_1;\ldots;q_k)&:= \int_{M^w} \exp\biggl(
    -4b\sum_{j=1}^k\sum_{l=1}^w \alpha_j g(q_j-u_l)
    \\ 
    &\qquad \qquad -4b^2\sum_{1\le l<l'\le w}g(u_l-u_{l'})
  \biggr)\,\dd u_1\cdots \dd u_w.
\end{align*}
Let us now denote $q_j = (t_j, x_j)$. Fixing $\alpha_1,\ldots,\alpha_k$ and $x_1,\ldots,x_k$, let us denote the above functions simply by $A(t_1,\ldots,t_k)$ and $B(t_1,\ldots,t_k)$. We will refer to these as the ``free part'' and the ``Coulomb part'' of the correlation. Our goal is to analytically continue these functions to the domain 
\[
  \Omega_k := \{(t_1,\ldots,t_k)\in \C^k: \Re(t_1)<\cdots < \Re(t_k) \},
\]
and then extend continuously to obtain the values at $(\i t_1,\ldots,\i t_k)$ for some given real numbers $t_1,\ldots,t_k$. 

\subsection{Continuing the free part}\label{sec:free}
The free part of the correlation is easier to continue analytically. Let $g$ denote the analytic continuation of the Green's function $g$, as displayed in equation \eqref{eq:greencont}. For $(t_1,\ldots,t_k)\in \Omega_k$, define 
\begin{align*}
A(t_1,\ldots,t_k) &:= \exp\biggl(
-4\sum_{1\le j<j'\le k}\alpha_j\alpha_{j'} g(t_{j'}-t_{j}, x_{j'} -x_{j})\biggr).
\end{align*}
Note that $\Re(t_{j'}-t_{j}) > 0$ for every $j' > j$. Since $g(t,x)$, as a function of $t$, is analytic on the open right half-plane, the above function is analytic on $\Omega_k$. Moreover, it coincides with $A$ for real $t_1,\ldots,t_k$, since $g(t,x) = g(-t,-x)$ for real $t,x$. This gives the required analytic continuation of the free part of the correlation.
Thus the free part poses no further difficulty; the main work lies in continuing the Coulomb integral.

\subsection{Continuing the Coulomb part}\label{sec:coulomb}
Consider $t_1<\cdots<t_k\in \R$. 
Let us first write the Coulomb part by expressing the variables $u_1,\ldots,u_w$ explicitly in coordinates:
\begin{align*}
B(t_1,\ldots,t_k) &= \int_{\R^w}\int_{[-1,1]^w} \exp\biggl(
    -4b\sum_{j=1}^k\sum_{l=1}^w \alpha_j g(t_j-a_l, x_j - b_l)
    \\ 
    &\qquad \qquad -4b^2\sum_{1\le l<l'\le w}g(a_l-a_{l'}, b_l - b_{l'})\biggr) \, \dd b_1\cdots \dd b_w\, \dd a_1 \cdots \dd a_w.
\end{align*}
First, using symmetry between $a_1,\ldots,a_w$ in the integral, and the fact that $g(t,x)=g(-t,-x)$, we can rewrite this as 
\begin{align*}
B(t_1,\ldots,t_k) &= w!\int_{a_1<\cdots< a_w}\int_{[-1,1]^w} \exp\biggl(
    -4b\sum_{j=1}^k\sum_{l=1}^w \alpha_j g(t_j-a_l, x_j - b_l)
    \\ 
    &\qquad \qquad -4b^2\sum_{1\le l<l'\le w}g(a_{l'}-a_{l}, b_{l'} - b_{l})\biggr) \, \dd b_1\cdots \dd b_w\, \dd a_1 \cdots \dd a_w.
\end{align*}
Next, let us write the above as a sum of functions 
\[
  B(t_1,\ldots,t_k) = \sum_{1\le n_1 \le \cdots \le n_k\le w+1} B_{n_1,\ldots,n_k}(t_1,\ldots,t_k),
\]
where $B_{n_1,\ldots,n_k}$ is obtained by restricting the integral to those $a_1<\cdots<a_w$ such that 
\begin{itemize}
\item $a_l < t_1$ for all $l< n_1$, 
\item $t_j < a_l < t_{j+1}$ for all $n_j \le l< n_{j+1}$, $1\le j< k$, and 
\item $a_l > t_k$ for all $l\ge n_k$. 
\end{itemize}
Let $\Gamma_{n_1,\ldots,n_k}$ denote the set of all such $(a_1,\ldots,a_w)$. (To justify this decomposition, fix any $a_1<\cdots<a_w$. The exceptional case where some $a_l=t_j$ lies in a finite union of codimension-one hyperplanes, hence has Lebesgue measure zero and does not affect the integral. So assume $a_l\neq t_j$ for all $l,j$. For each $1\le j\le k$, let $n_j$ be the smallest $l$ such that $a_l>t_j$ if such an $l$ exists, and let $n_j=w+1$ otherwise. Then $1\le n_1\le\cdots\le n_k\le w+1$, the point $(a_1,\ldots,a_w)$ belongs to $\Gamma_{n_1,\ldots,n_k}$, and this index tuple is unique.) Again using $g(t,x)=g(-t,-x)$, we can write
\begin{align*}
&B_{n_1,\ldots,n_k}(t_1,\ldots,t_k) \\ 
&= w!\int_{\Gamma_{n_1,\ldots, n_k}}\int_{[-1,1]^w} \exp\biggl\{ 
    -4b\sum_{l=1}^{n_1-1} \sum_{j=1}^k \alpha_j g(t_j -a_l, x_j-b_l) \\ 
    &\qquad \qquad -4b \sum_{r=1}^{k-1}\sum_{l=n_r}^{n_{r+1}-1}\biggl(\sum_{j=1}^r \alpha_jg(a_l - t_j, b_l -x_j) + \sum_{j=r+1}^k \alpha_j g(t_j - a_l, x_j -b_l)\biggr) \\ 
    &\qquad \qquad -4b \sum_{l=n_k}^w \sum_{j=1}^k \alpha_j g(a_l - t_j, b_l -x_j) \\ 
    &\qquad \qquad -4b^2\sum_{1\le l<l'\le w}g(a_{l'}-a_{l}, b_{l'} - b_{l})\biggr\} \, \dd b_1\cdots \dd b_w\, \dd a_1 \cdots \dd a_w.
\end{align*}
The purpose of writing things in the above manner is to ensure that the time arguments in the $g$'s all have positive real parts. The next step is to apply a change of variables to the above integral, by reparametrizing $a_l$ as 
\begin{align}\label{eq:alcl}
a_l = 
\begin{cases}
t_1 + c_l &\text{ if } l < n_1,\\ 
c_l t_{j+1} + (1-c_l)t_j &\text{ if } n_j\le l< n_{j+1},\, 1\le j< k,\\ 
t_k + c_l &\text{ if } l \ge n_k.
\end{cases}
\end{align}
Then note that 
\begin{align*}
\dd a_1 \cdots \dd a_w = \biggl(\prod_{j=1}^{k-1}(t_{j+1}-t_j)^{n_{j+1}-n_j}\biggr) \, \dd c_1\cdots \dd c_w,
\end{align*}
and the range of integration for $(c_1,\ldots,c_w)$ is the region
\begin{align*}
\Delta_{n_1,\ldots,n_k} &:= \biggl\{(c_1,\ldots,c_w)\in \R^w:c_1<\cdots<c_{n_1-1}<0,\\
&\qquad\qquad 0<c_{n_j}<\cdots<c_{n_{j+1}-1}<1\text{ for each }1\le j<k,\\
&\qquad\qquad 0<c_{n_k}<\cdots<c_w\biggr\},
\end{align*}
with the convention that constraints corresponding to empty index ranges are omitted. The crucial fact is that this range has no dependence on $t_1,\ldots,t_k$. With this reparametrization, we have 
\begin{align}\label{eq:bform}
&B_{n_1,\ldots,n_k}(t_1,\ldots,t_k) = w!\biggl(\prod_{j=1}^{k-1}(t_{j+1}-t_j)^{n_{j+1}-n_j}\biggr)\notag \\ 
&\qquad \cdot \int_{\Delta_{n_1,\ldots, n_k}}\int_{[-1,1]^w} \exp\biggl\{ 
    -4b\sum_{l=1}^{n_1-1} \sum_{j=1}^k \alpha_j g(t_j -a_l, x_j-b_l)\notag  \\ 
    &\qquad \qquad -4b \sum_{r=1}^{k-1}\sum_{l=n_r}^{n_{r+1}-1}\biggl(\sum_{j=1}^r \alpha_jg(a_l - t_j, b_l -x_j) + \sum_{j=r+1}^k \alpha_j g(t_j - a_l, x_j -b_l)\biggr)\notag  \\ 
    &\qquad \qquad -4b \sum_{l=n_k}^w \sum_{j=1}^k \alpha_j g(a_l - t_j, b_l -x_j)\notag  \\ 
    &\qquad \qquad -4b^2\sum_{1\le l<l'\le w}g(a_{l'}-a_{l}, b_{l'} - b_{l})\biggr\} \, \dd b_1\cdots \dd b_w\, \dd c_1 \cdots \dd c_w,
\end{align}
where now each $a_l$ is a function of $c_l$ given by the formula \eqref{eq:alcl}. The purpose of this reparametrization is made clear by the following lemma.
\begin{lmm}\label{lem:reparam}
If we explicitly write each $a_l$ in equation \eqref{eq:bform} as a function of $c_l$ as written in equation \eqref{eq:alcl}, and $(t_1,\ldots,t_k)\in \Omega_k$, then the time argument of each $g$ in equation \eqref{eq:bform} has positive real part.
\end{lmm}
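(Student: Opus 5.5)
The plan is a direct case check. I will go through the four types of $g$-terms in equation \eqref{eq:bform} and, for each, write the time argument as an explicit affine combination of $t_1,\ldots,t_k$ with coefficients depending on the real parameters $c_l$. The only input needed is that on $\Omega_k$ every difference $t_{j'}-t_j$ with $j'>j$ has positive real part, together with the sign constraints on the $c_l$ built into $\Delta_{n_1,\ldots,n_k}$. Throughout, take $(c_1,\ldots,c_w)$ in the open set $\Delta_{n_1,\ldots,n_k}$, so that all inequalities defining it are strict.

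\emph{Left-outer screening variables} ($l<n_1$). Here $a_l=t_1+c_l$ with $c_l<0$. Then $t_j-a_l=(t_j-t_1)-c_l$, and its real part is $\Re(t_j-t_1)+|c_l|>0$; this also holds for $j=1$.

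\emph{Right-outer screening variables} ($l\ge n_k$). Here $a_l=t_k+c_l$ with $c_l>0$. Then $a_l-t_j=(t_k-t_j)+c_l$, which has real part at least $c_l>0$.

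\emph{Middle screening variables} ($n_r\le l<n_{r+1}$). Here $a_l=c_lt_{r+1}+(1-c_l)t_r$ with $0<c_l<1$, so $\Re(a_l)$ lies strictly between $\Re(t_r)$ and $\Re(t_{r+1})$.
\begin{itemize}
\item For $j\le r$, the argument $a_l-t_j=(t_r-t_j)+c_l(t_{r+1}-t_r)$ has real part at least $c_l\Re(t_{r+1}-t_r)>0$.
\item For $j\ge r+1$, the argument $t_j-a_l=(t_j-t_{r+1})+(1-c_l)(t_{r+1}-t_r)$ has real part at least $(1-c_l)\Re(t_{r+1}-t_r)>0$.
\end{itemize}

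\emph{Screening--screening terms.} For $l<l'$ I need $\Re(a_{l'}-a_l)>0$. The cleanest route is to note that $l\mapsto\Re(a_l)$ is strictly increasing, checked in two steps.
\begin{itemize}
\item Within a block it holds because the $c_l$ increase strictly and the block map $c\mapsto\Re(a)$ is strictly increasing. For the middle blocks its slope is $\Re(t_{r+1}-t_r)>0$; for the outer blocks the slope is $1$.
\item Across blocks it holds because block $r$ takes real parts in $(\Re t_r,\Re t_{r+1})$, the left block in $(-\infty,\Re t_1)$, and the right block in $(\Re t_k,\infty)$. These intervals are disjoint and ordered, since $\Re t_1<\cdots<\Re t_k$.
\end{itemize}

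I do not expect a real obstacle. The only care needed is bookkeeping with empty index ranges and making sure the strict inequalities of $\Delta_{n_1,\ldots,n_k}$, rather than their closures, are used. Boundary points of $\Delta_{n_1,\ldots,n_k}$ form a null set and are irrelevant to the integral.
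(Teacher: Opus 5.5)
Your proof is correct. The three insertion cases (left tail, right tail, middle block with $j\le r$ or $j\ge r+1$) match the paper's case analysis, including the decompositions of the time arguments.

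The one difference is in the screening--screening step. The paper writes $a_l=t_1+\xi_l+\sum_{r=1}^{k-1}\theta_{l,r}(t_{r+1}-t_r)$ with explicit real coefficients. It then checks that $\xi_l$ and every $\theta_{l,r}$ are nondecreasing in $l$, with at least one strict increase whenever $l<l'$. You instead show directly that $l\mapsto\Re(a_l)$ is strictly increasing. You do this by monotonicity inside each block and by placing the blocks' real parts in the disjoint, ordered intervals $(-\infty,\Re t_1)$, $(\Re t_r,\Re t_{r+1})$, $(\Re t_k,\infty)$.

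Your route is shorter and avoids the coefficient bookkeeping. The paper's route gives a slightly stronger structural statement: $a_{l'}-a_l$ is a nonnegative combination, not identically zero, of $1$ and the increments $t_{r+1}-t_r$. That makes explicit that the screening points are ordered along the contour $\gamma$, independently of taking real parts.

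For the lemma as stated, which only asks for positive real part, the two arguments do the same job. Your closing remark about boundary null sets is unnecessary, since $\Delta_{n_1,\ldots,n_k}$ is defined by strict inequalities.
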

\begin{proof}
Fix $t=(t_1,\ldots,t_k)\in\Omega_k$ and $(c_1,\ldots,c_w)\in\Delta_{n_1,\ldots,n_k}$. We prove, case by case, that every time argument in \eqref{eq:bform} has positive real part. Since $t\in\Omega_k$, we have
\[
\Re(t_{r+1}-t_r)>0,\qquad 1\le r<k.
\]
Hence, for any $p\ge q$,
\[
\Re(t_p-t_q)\ge 0,
\]
with strict inequality when $p>q$.
Also, by definition of $\Delta_{n_1,\ldots,n_k}$,
\[
c_l<0\ \text{for }l<n_1,\qquad 0<c_l<1\ \text{for }n_1\le l<n_k,\qquad c_l>0\ \text{for }l\ge n_k.
\]
For the first sum in \eqref{eq:bform}, note that if $l<n_1$, then $a_l=t_1+c_l$; so for any $1\le j\le k$,
\[
t_j-a_l=(t_j-t_1)-c_l.
\]
Here $\Re(t_j-t_1)\ge 0$ and $-c_l>0$, and hence
\[
\Re(t_j-a_l)>0.
\]
For the middle sum in equation~\eqref{eq:bform}, fix $1\le r<k$ and $n_r\le l<n_{r+1}$, so that $a_l=c_l t_{r+1}+(1-c_l)t_r$. If $1\le j\le r$, then
\[
a_l-t_j=c_l(t_{r+1}-t_r)+(t_r-t_j).
\]
Since $c_l>0$, we have $\Re(c_l(t_{r+1}-t_r))>0$, and $\Re(t_r-t_j)\ge 0$. Therefore,
\[
\Re(a_l-t_j)>0.
\]
If $r+1\le j\le k$, then
\[
t_j-a_l=(t_j-t_{r+1})+(1-c_l)(t_{r+1}-t_r).
\]
Since $1-c_l>0$, we have $\Re((1-c_l)(t_{r+1}-t_r))>0$, and $\Re(t_j-t_{r+1})\ge 0$. Hence,
\[
\Re(t_j-a_l)>0.
\]
For the third sum in equation~\eqref{eq:bform}, note that if $l\ge n_k$, then $a_l=t_k+c_l$; so for any $1\le j\le k$,
\[
a_l-t_j=(t_k-t_j)+c_l.
\]
Here $\Re(t_k-t_j)\ge 0$ and $c_l>0$, so
\[
\Re(a_l-t_j)>0.
\]
Finally, let us inspect the pair terms in equation \eqref{eq:bform}. For $1\le l<l'\le w$, we need to show that $\Re(a_{l'}-a_l)>0$. It is convenient to write each $a_l$ in the form
\[
  a_l=t_1+\xi_l+\sum_{r=1}^{k-1}\theta_{l,r}(t_{r+1}-t_r),
\]
where
\[
\xi_l:=
\begin{cases}
 c_l, & l<n_1\ \text{or}\ l\ge n_k,\\
 0, & n_1\le l<n_k,
\end{cases}
\]
and, for each $1\le r\le k-1$,
\[
\theta_{l,r}:=
\begin{cases}
0, & l<n_1,\\
1, & n_j\le l<n_{j+1}\ \text{for some }j\in\{1,\ldots,k-1\}\text{ with }r<j,\\
c_l, & n_j\le l<n_{j+1}\ \text{for some }j\in\{1,\ldots,k-1\}\text{ with }r=j,\\
0, & n_j\le l<n_{j+1}\ \text{for some }j\in\{1,\ldots,k-1\}\text{ with }r>j,\\
1, & l\ge n_k.
\end{cases}
\]
(For a fixed $l$ in the middle region $n_1\le l<n_k$, the index $j$ above is the unique one with $n_j\le l<n_{j+1}$.) Therefore,
\[
  a_{l'}-a_l
  =
  (\xi_{l'}-\xi_l)
  +\sum_{r=1}^{k-1}(\theta_{l',r}-\theta_{l,r})(t_{r+1}-t_r).
\]
From the explicit formulas and the ordering $c_1<\cdots<c_w$ on $\Delta_{n_1,\ldots,n_k}$, we have
\[
  \xi_{l'}-\xi_l\ge 0,
  \qquad
  \theta_{l',r}-\theta_{l,r}\ge 0\ \text{for all }r.
\]
Moreover, because $l<l'$, at least one of these inequalities is strict. Since each $\Re(t_{r+1}-t_r)>0$, it follows that
\[
  \Re(a_{l'}-a_l)>0.
\]
This completes the proof.
\end{proof}

Lemma \ref{lem:reparam} suggests the following candidate for the analytic continuation of $B_{n_1,\ldots,n_k}$ to $\Omega_k$. Take the same formula as in equation \eqref{eq:bform}, and replace each $g$ by its analytic continuation on the open right half-plane in the time coordinate. The following lemma shows that this works.
\begin{lmm}\label{lem:bnk}
If we define $B_{n_1,\ldots,n_k}$ on $\Omega_k$ as above, then it is an analytic function on this domain.
\end{lmm}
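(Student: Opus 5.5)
The plan is the standard parameter-integral argument: holomorphy of the integrand for each fixed integration point, plus a locally uniform integrable majorant, then Morera/Fubini (or differentiation under the integral). The prefactor $\prod_{j}(t_{j+1}-t_j)^{n_{j+1}-n_j}$ is a polynomial, hence entire, so it suffices to treat the integral.

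Fix $(c,b)\in\Delta_{n_1,\ldots,n_k}\times[-1,1]^w$ off the measure-zero set where some $g$ argument hits a singular point. Each $a_l$ is affine in $(t_1,\ldots,t_k)$. By Lemma~\ref{lem:reparam}, every time argument has positive real part on $\Omega_k$, and $g(\cdot,x)$ is holomorphic on the open right half-plane by \eqref{eq:greencont}. Hence the integrand is holomorphic in each $t_j$ separately on $\Omega_k$, and is jointly continuous. By Hartogs, or simply by Osgood's lemma for continuous separately holomorphic functions, it is holomorphic on $\Omega_k$.

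The main step is the domination. Fix a compact $K\subset\Omega_k$; on $K$ the gaps $\Re(t_{r+1}-t_r)$ are bounded below by some $\kappa>0$ and all $|t_j|\le R$. I would bound the real part of the exponent using Lemma~\ref{lem:gcontlmm}:
\[
\Re g(\tau,x)=-\tfrac{\pi}{2}\Re\tau+O(1)+E,
\]
where the error $E$ lies between $0$ and $|\log d(\tau,x)|$. The linear parts $-\tfrac{\pi}{2}\Re\tau$ combine just as in the Euclidean computation. For the outer variables ($l<n_1$ or $l\ge n_k$) this gives exponential decay $e^{-\eta|c_l|}$, using $\sum_j\alpha_j=-bw$ exactly as in the proof of Theorem~\ref{thm:denominator-limit-T}. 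That argument already relied on Lemma~\ref{lem:renorm-int-tail}, and I would reuse its combinatorial estimate, now with $\Re$ of the time arguments, which differ from the real case only by bounded amounts on $K$.

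The logarithmic singularities need separate treatment:
\begin{itemize}
\item Pair terms carry the sign $-4b^2\Re g\le 4b^2C_1+\ldots$, so they are bounded above, since $\Re g\ge -C_1-\tfrac{\pi}{2}\Re\tau$.
\item Insertion terms with $\Re\alpha_j\ge0$ are likewise bounded above.
\item Insertion terms with $\Re\alpha_j<0$ contribute at most $d^{-4b|\Re\alpha_j|}$, with $4b|\Re\alpha_j|<2$.
\end{itemize}

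The difficulty is that $d(\tau,x)$ for complex $\tau$ measures distance to the points $\mp\i x+2\i k$, not a Euclidean distance in $(c_l,b_l)$. I would control it by $d\ge\Re\tau$, and for the middle variables $\Re(t_j-a_l)\ge\kappa\min(c_l,1-c_l)$ or similar. This makes the singularity at worst $c^{-(2-\delta)}$ in a single real variable, which is not integrable in one dimension.

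So the key point, and the main obstacle, is the joint integrability in $(c_l,b_l)$. I would first try to show $d(\tau,x)\gtrsim |\Re\tau|+|\Im\tau\pm x-2k|$, and note that $\Im\tau$ depends on $c_l$ only through a coefficient bounded on $K$. The map $(c_l,b_l)\mapsto(\Re\tau,\Im\tau\pm b_l)$ then has Jacobian bounded below, since $\partial_{b_l}$ gives $1$ in the second coordinate and $\partial_{c_l}\Re\tau$ is of size at least $\kappa$. A change of variables then reduces the singularity to a two-dimensional $|\cdot|^{-(2-\delta)}$, which is locally integrable. Several insertions can be near the same point only if $t_j$ coincide in real part, which is excluded on $K$ except for the boundary point $c_l\in\{0,1\}$ attached to consecutive times; there only the two adjacent insertions matter, and Hölder's inequality splits the product.

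With an integrable majorant uniform over $K$, continuity follows by dominated convergence. Holomorphy then follows by Fubini plus Morera in each variable. Analyticity of $B_{n_1,\ldots,n_k}$ on $\Omega_k$ follows.
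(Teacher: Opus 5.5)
Your proposal is essentially the paper's proof. The shared steps are:
\begin{itemize}
\item the polynomial prefactor is entire, and the integrand is holomorphic pointwise by Lemma~\ref{lem:reparam};
\item a majorant uniform in $t$ comes from the linear/remainder split of $g$ (the paper takes $t_m$ on the boundary of a triangle in a one-variable slice rather than a compact $K$);
\item $\sum_j\alpha_j=-bw$ gives exponential tails in the outer $c_l$;
\item the affine map $(c_l,b_l)\mapsto(\Re\tau,\Im\tau\pm\chi)$ has invertible linear part, so each insertion singularity $d^{-p_j}$ with $p_j<2$ is a locally integrable two-dimensional one;
\item Fubini--Morera then gives holomorphy in each variable, and Hartogs gives joint holomorphy (your continuity-plus-Osgood variant is equally fine).
\end{itemize}

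The only slip is the H\"older remark. For $n_r\le l<n_{r+1}$, the $j=r$ factor can blow up only at $c_l=0$ and the $j=r+1$ factor only at $c_l=1$, so the two never blow up at a common point. You should simply localize, as the paper does. A global H\"older splitting would require $p_r+p_{r+1}<2$, which the hypothesis $\Re(\alpha_j)>-\frac{1}{2b}$ does not guarantee.
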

\begin{proof}
Let $\boldsymbol{n}:=(n_1,\ldots,n_k)$ and write
\[
  B_{\boldsymbol{n}}(t)=P_{\boldsymbol{n}}(t)\,I_{\boldsymbol{n}}(t),
\]
where $t=(t_1,\ldots,t_k)\in\Omega_k$,
\[
  P_{\boldsymbol{n}}(t):=w!\prod_{j=1}^{k-1}(t_{j+1}-t_j)^{n_{j+1}-n_j},
\]
and $I_{\boldsymbol{n}}(t)$ is the $\Delta_{\boldsymbol{n}}\times[-1,1]^w$ integral in equation \eqref{eq:bform}. Since each exponent $n_{j+1}-n_j$ is a nonnegative integer, $P_{\boldsymbol{n}}$ is entire. So it is enough to show that $I_{\boldsymbol{n}}$ is analytic on $\Omega_k$.

Fix $(c_1,\ldots,c_w)\in\Delta_{\boldsymbol{n}}$ and $(b_1,\ldots,b_w)\in[-1,1]^w$. By Lemma \ref{lem:reparam}, the integrand is holomorphic in $t$ pointwise. We need a suitable bound on the integrand to ensure that the holomorphy is retained after integration. 

For $\tau$ in the open right half-plane and $x\in\R$, let
\[
  R(\tau,x):=g(\tau,x)+\frac{\pi\tau}{2}.
\]
By Lemma~\ref{lem:gcontlmm}, there are universal constants $C_0,C_1,C_2>0$ such that
\begin{equation}\label{eq:bnk-rem-bounds}
  -C_0\le \Re R(\tau,x)\le C_1+|\log d(\tau,x)|,
  \qquad
  |\Im R(\tau,x)|\le C_2,
\end{equation}
where $d$ is the functional defined in Lemma~\ref{lem:gcontlmm}.  Let $E(t,c,b)$ be the exponent in the integrand, and decompose
\[
  E=E_{\mathrm{lin}}+E_{\mathrm{rem}},
\]
where $E_{\mathrm{lin}}$ is obtained by replacing each $g(\tau,x)$ by $-\pi\tau/2$, and $E_{\mathrm{rem}}$ is the sum of the remainder terms.

Fix $m\in\{1,\ldots,k\}$, fix $t_j$ for $j\neq m$, and let $\gamma$ be the boundary of a triangle in the corresponding one-variable slice of $\Omega_k$. We now bound the integrand uniformly for $t_m\in\gamma$.
Here and below, $O_\gamma(1)$ denotes any quantity whose absolute value is bounded by a constant depending on $\gamma$ (and the fixed data $b,w,\alpha_1,\ldots,\alpha_k,x_1,\ldots,x_k,\boldsymbol{n}$), but independent of $t_m\in\gamma$ and of the integration variables $(c_1,\ldots,c_w,b_1,\ldots,b_w)$. Write
\[
L:=\{1,\ldots,n_1-1\},\qquad M:=\{n_1,\ldots,n_k-1\},\qquad R:=\{n_k,\ldots,w\}.
\]
On $\Delta_{\boldsymbol{n}}$, we have $0<c_l<1$ for $l\in M$, so only the $c_l$'s with $l$ in $L$ or $R$ are unbounded. Take $l\in L$. In $E_{\mathrm{lin}}$, the insertion part involving this $l$ contributes
\[
2\pi b\sum_{j=1}^k\alpha_j(t_j-t_1-c_l)
=(-2\pi b^2w)(-c_l)+O_\gamma(1),
\]
using $\sum_{j=1}^k\alpha_j=-bw$. For pair terms, fix this $l$ and write the pair indices as $p<q$. Terms containing $c_l$ are of two kinds: $p=l<q$ and $p<q=l$. If $p=l<q$, then $2\pi b^2(a_q-a_l)$ contributes $+2\pi b^2(-c_l)$ (because $l<n_1$ implies $a_l=t_1+c_l$), and there are exactly $w-l$ such values of $q$. If $p<q=l$, then $2\pi b^2(a_l-a_p)$ contributes $-2\pi b^2(-c_l)$, and there are exactly $l-1$ such values of $p$. Therefore the net coefficient of $(-c_l)$ coming from pair linear terms is
\[
2\pi b^2((w-l)-(l-1))=2\pi b^2(w-2l+1).
\]
Thus, the net coefficient of $(-c_l)$ coming from all linear terms is 
\[
2\pi b^2(w-2l+1)-2\pi b^2w = -2\pi b^2(2l-1).
\]
Similarly, for $l\in R$, the linear parts of the insertion terms give
\[
2\pi b\sum_{j=1}^k\alpha_j(c_l+t_k-t_j)
=-2\pi b^2w c_l+O_\gamma(1),
\]
and the net coefficient of $c_l$ from the linear parts of the pair terms is 
\[
2\pi b^2((l-1)-(w-l)) =2\pi b^2(2l-w-1).
\]
Thus, the net coefficient of $c_l$ from all terms is
\[
-2\pi b^2w + 2\pi b^2(2l-w-1)= -2\pi b^2(2(w-l)+1).
\]
From the above discussion, we deduce that there exist constants $\sigma>0$ and $C_\gamma$, independent of the $b$'s and $c$'s, such that
\begin{equation}\label{eq:bnk-linear-tail}
\Re (E_{\mathrm{lin}})
\le C_\gamma
-\sigma\sum_{l\in L}(-c_l)
-\sigma\sum_{l\in R}c_l.
\end{equation}
Let us now inspect $E_{\mathrm{rem}}$ using equation \eqref{eq:bnk-rem-bounds}. Consider first one insertion remainder term. Write
\[
\alpha_j=\alpha_j^{\mathrm{R}}+\i\alpha_j^{\mathrm{I}},\qquad
R(\tau,x)=U(\tau,x)+\i V(\tau,x).
\]
Then
\[
\Re(-4b\alpha_jR(\tau,x))
=-4b(\alpha_j^{\mathrm{R}}U-\alpha_j^{\mathrm{I}}V)
\le -4b\alpha_j^{\mathrm{R}}U+4b|\alpha_j^{\mathrm{I}}|\,|V|.
\]
By equation~\eqref{eq:bnk-rem-bounds}, $|V|\le C_2$, so the second term is bounded by a constant. For the first term we split into two cases. If $\alpha_j^{\mathrm{R}}\ge 0$, then $U\ge -C_0$ gives 
\[
-4b\alpha_j^{\mathrm{R}}U\le 4b\alpha_j^{\mathrm{R}}C_0.
\]
If $\alpha_j^{\mathrm{R}}<0$, then $U\le C_1+|\log d(\tau,x)|$ gives 
\[
-4b\alpha_j^{\mathrm{R}}U
\le -4b\alpha_j^{\mathrm{R}}C_1 +(-4b\alpha_j^{\mathrm{R}})|\log d(\tau,x)|.
\]
Combining the two cases and absorbing all constant terms into $C_{j,\gamma}$ gives
\[
\Re(-4b\alpha_jR(\tau,x))
\le C_{j,\gamma}+p_j|\log d(\tau,x)|,
\qquad
p_j:=\max\{0,-4b\Re(\alpha_j)\}.
\]
Because $\Re(\alpha_j)>-\frac{1}{2b}$ by assumption, we get
\[
0\le p_j<2.
\]
Now consider one pair remainder term. Note that 
\[
\Re(-4b^2R(\tau,x))=-4b^2\Re R(\tau,x)\le 4b^2C_0,
\]
again by equation \eqref{eq:bnk-rem-bounds}. So pair remainder terms contribute only bounded constants and no logarithmic singular factors.

Summing all insertion and pair remainder terms, and absorbing all bounded contributions into a single constant, we obtain
\begin{equation}\label{eq:bnk-rem-est}
\Re (E_{\mathrm{rem}})
\le C_\gamma+\sum_{j=1}^k\sum_{l=1}^w p_j\,|\log d_{j,l}|,
\end{equation}
where $0\le p_j<2$ for each $j$ and $d_{j,l}$ denotes the corresponding $d$-quantity from Lemma~\ref{lem:gcontlmm}.
More explicitly, for each insertion factor we write it as $g(\tau_{j,l},\chi_{j,l})$ and set
\[
  d_{j,l}:=d(\tau_{j,l},\chi_{j,l})
  =\min\{d_+(\tau_{j,l},\chi_{j,l}),d_-(\tau_{j,l},\chi_{j,l})\},
\]
with
\[
  d_\pm(\tau,\chi):=\min_{q\in\Z}|\tau\pm \i\chi-2\i q|.
\]
Here the insertion arguments are exactly those from equation \eqref{eq:bform}:
\[
\tau_{j,l}=
\begin{cases}
t_j-a_l=t_j-t_1-c_l, & l<n_1,\\
a_l-t_j=c_l(t_{r+1}-t_r)+(t_r-t_j), & n_r\le l<n_{r+1},\ 1\le j\le r,\\
t_j-a_l=(t_j-t_{r+1})+(1-c_l)(t_{r+1}-t_r), & n_r\le l<n_{r+1},\ r+1\le j\le k,\\
a_l-t_j=t_k-t_j+c_l, & l\ge n_k,
\end{cases}
\]
and
\[
\chi_{j,l}=
\begin{cases}
x_j-b_l, & l<n_1,\\
b_l-x_j, & n_r\le l<n_{r+1},\ 1\le j\le r,\\
x_j-b_l, & n_r\le l<n_{r+1},\ r+1\le j\le k,\\
b_l-x_j, & l\ge n_k.
\end{cases}
\]
Exponentiating the inequality \eqref{eq:bnk-rem-est}, we get
\[
e^{\Re (E_{\mathrm{rem}})}
\le e^{C_\gamma}\prod_{j=1}^k\prod_{l=1}^w e^{p_j|\log d_{j,l}|}.
\]
For any $r>0$ and $p\ge 0$,
\begin{align*}
e^{p|\log r|}
&=
\begin{cases}
r^{-p}, & 0<r\le 1,\\
r^{p}, & r\ge 1,
\end{cases}\\ 
&= (r\wedge 1)^{-p}(r\vee 1)^p
\le (r\wedge 1)^{-p}(1+r)^p.
\end{align*}
Applying this with $r=d_{j,l}$ and absorbing $e^{C_\gamma}$ into $C_\gamma$, we get
\[
e^{\Re (E_{\mathrm{rem}})}
\le C_\gamma\prod_{j=1}^k\prod_{l=1}^w (d_{j,l}\wedge 1)^{-p_j}(1+d_{j,l})^{p_j}.
\]
The singular factor $(d_{j,l}\wedge 1)^{-p_j}$ depends only on the two real variables $(c_l,b_l)$ corresponding to that insertion. To show that the full product over $j$ is locally integrable for fixed $l$, we must rule out the possibility that several such factors become singular at the same point. Since $\gamma$ is a compact triangle contained in the one-variable slice of $\Omega_k$, there exists $\eta_\gamma>0$ such that
\[
  \Re(t_{r+1}-t_r)\ge \eta_\gamma
  \qquad\text{for all }1\le r<k\text{ and all }t_m\in\gamma.
\]
Consequently,
\[
  \Re(t_p-t_q)\ge \eta_\gamma\qquad\text{whenever }p>q.
\]
Fix $l$. We inspect the explicit formulas for $\tau_{j,l}$ case by case. If $l<n_1$, then $\tau_{j,l}=t_j-t_1-c_l$ and $c_l<0$. Hence for every $j>1$,
\[
  \Re\tau_{j,l}=\Re(t_j-t_1)-c_l\ge \eta_\gamma,
\]
so $d_{j,l}\ge \eta_\gamma$ and the corresponding factor is uniformly bounded. Thus only the $j=1$ factor can become singular. Its singular points satisfy $c_l=0$ and $b_l\equiv x_1\pmod 2$; because $b_l\in[-1,1]$, only finitely many such points occur.

If $n_r\le l<n_{r+1}$ for some $1\le r<k$, then $0<c_l<1$. For $1\le j\le r$ we have
\[
  \tau_{j,l}=c_l(t_{r+1}-t_r)+(t_r-t_j),
\]
while for $r+1\le j\le k$,
\[
  \tau_{j,l}=(t_j-t_{r+1})+(1-c_l)(t_{r+1}-t_r).
\]
If $j<r$, then $\Re(t_r-t_j)\ge \eta_\gamma$, so again $d_{j,l}\ge \eta_\gamma$. If $j>r+1$, then $\Re(t_j-t_{r+1})\ge \eta_\gamma$, so $d_{j,l}\ge \eta_\gamma$ as well. Therefore only the adjacent insertions $j=r$ and $j=r+1$ can be singular. Moreover,
\[
  \tau_{r,l}=c_l(t_{r+1}-t_r)
\]
can approach the singular set only when $c_l\to 0$, whereas
\[
  \tau_{r+1,l}=(1-c_l)(t_{r+1}-t_r)
\]
can approach it only when $c_l\to 1$. Hence these two possible singularities lie on different boundary faces of the strip $0<c_l<1$, so they cannot pile up at the same $(c_l,b_l)$. Again, for each face there are only finitely many relevant values of $b_l$ because $b_l\in[-1,1]$.

If $l\ge n_k$, then $\tau_{j,l}=t_k-t_j+c_l$ and $c_l>0$. For every $j<k$,
\[
  \Re\tau_{j,l}=\Re(t_k-t_j)+c_l\ge \eta_\gamma,
\]
so only the $j=k$ factor can become singular, necessarily near points with $c_l=0$ and $b_l\equiv x_k\pmod 2$.

We have therefore shown that for each fixed $l$, near any candidate singular point there is at most one unbounded factor among $\{(d_{j,l}\wedge 1)^{-p_j}:1\le j\le k\}$; all the other factors are uniformly bounded there. For that single potentially singular factor, the map
\[
  (c_l,b_l)\longmapsto \tau_{j,l}(c_l)\pm \i\chi_{j,l}(b_l)-2\i q
\]
is affine with invertible linear part (uniformly for $t_m\in\gamma$), so $d_{j,l}$ is comparable to Euclidean distance from the relevant singular point. Since $p_j<2$, this shows that the whole product
\[
  \prod_{j=1}^k (d_{j,l}\wedge 1)^{-p_j}
\]
is locally integrable in $(c_l,b_l)$. Because different values of $l$ involve disjoint variable pairs $(c_l,b_l)$, the full singular product $\prod_{j=1}^k\prod_{l=1}^w(d_{j,l}\wedge1)^{-p_j}$ is locally integrable on $\Delta_{\boldsymbol n}\times[-1,1]^w$.

For the growth factor, for $t_m\in\gamma$ every relevant $\tau$ is affine in $(c_l,t_m)$ with bounded coefficients, while $b_l\in[-1,1]$ and the $x_j$ are fixed. Choosing $k=0$ in the definition of $d_\pm$ gives
\[
d_{j,l}\le |\tau\pm \i x|\le C_\gamma(1+|c_l|),
\]
and so,
\[
(1+d_{j,l})^{p_j}\le C_\gamma'(1+|c_l|)^{p_j}.
\]
Thus the product of all growth factors has at most polynomial growth in the unbounded $c_l$ variables. Combining this with the linear tail bound \eqref{eq:bnk-linear-tail} yields
\[
|\text{integrand}(t,c,b)|\le \Psi_\gamma(t,c,b),
\]
where $\text{integrand}(t,c,b)$ denotes the integrand in equation \eqref{eq:bform}, and $\Psi_\gamma$ is an explicit product of:
\begin{itemize}
\item locally integrable singular factors $(d_{j,l}\wedge 1)^{-p_j}$ (with exponents $<2$),
\item polynomial factors in $|c_l|$, and
\item exponential tails in the unbounded directions $c_l\to -\infty$ for $l\in L$ and $c_l\to +\infty$ for~$l\in R$.
\end{itemize}
Now view $\Psi_\gamma$ as a function of $(t_m,c,b)$, with $(t_j)_{j\ne m}$ fixed. The hidden constants in the above bounds depend continuously on $t$ on $\Omega_k$; restricting to the compact slice obtained by varying $t_m\in\gamma$, they are uniformly bounded. Therefore the same singular-exponent and tail estimates imply
\[
\Psi_\gamma\in L^1(\gamma\times\Delta_{\boldsymbol n}\times[-1,1]^w),
\]
(with arc-length measure on $\gamma$), and
\[
|\text{integrand}(t,c,b)|\le \Psi_\gamma(t,c,b).
\]
Hence Fubini's theorem applies, giving:
\[
\int_{\gamma} I_{\boldsymbol{n}}(t)\,\dd t_m
=\int_{\Delta_{\boldsymbol{n}}}\int_{[-1,1]^w}
\biggl(\int_{\gamma}\text{integrand}(t,c,b)\,\dd t_m\biggr)
\dd b\,\dd c.
\]
The innermost integral on the right vanishes, since $\text{integrand}(t,c,b)$ is analytic in $t_m$.  By Morera's theorem, this shows that $I_{\boldsymbol n}$ is holomorphic in $t_m$. Since $m$ was arbitrary, $I_{\boldsymbol n}$ is separately holomorphic on $\Omega_k$. Thus, Hartogs' theorem~\cite[Theorem~2.2.8]{hormander73} implies that $I_{\boldsymbol n}$ is jointly holomorphic on $\Omega_k$. This completes the proof.
\end{proof}
We have now constructed the holomorphic continuation on the ordered domain. It remains to identify its Lorentzian boundary values and match them with the contour formula stated earlier.

\subsection{Calculating the Lorentzian correlators}\label{sec:mainproof}
The point of this final subsection is to convert the holomorphic continuation obtained above into an explicit contour formula on the ordered tube and then identify its boundary values with the time-ordered contour formula from equation~\eqref{eq:clorentz}. Recall the set $\Omega_k$ defined in equation~\eqref{eq:omegakplus}. Take any $(t_1,\ldots,t_k)\in \Omega_k$. Let $\gamma$ be the contour that comes in horizontally from $t_1-\infty$ to $t_1$, then goes via straight lines from $t_1$ to $t_2$, then from $t_2$ to $t_3$, and so on, until it reaches $t_k$. Then, finally, it goes horizontally from $t_k$ to $t_k+\infty$. A schematic is shown in Figure~\ref{fig:gamma-contour-general}.

\begin{figure}[t!]
\centering
\begin{tikzpicture}[x=1.2cm,y=1.0cm,>=stealth, scale = 1.2]
  \draw[->] (-3.2,0) -- (3.2,0) node[below right] {$\Re z$};
  \draw[->] (0,-2.4) -- (0,2.4) node[above left] {$\Im z$};

  \coordinate (T1) at (-1.85,-0.5);
  \coordinate (T2) at (-1.2,-0.98);
  \coordinate (T3) at (-0.45,0.28);
  \coordinate (T4) at (0.18,-0.06);
  \coordinate (T5) at (1.22,1.25);

  \draw[very thick,blue,->] (-3.0,-0.5) -- (T1);
  \draw[very thick,blue,->] (T1) -- (T2);
  \draw[very thick,blue,->] (T2) -- (T3);
  \draw[very thick,blue,->] (T3) -- (T4);
  \draw[very thick,blue,->] (T4) -- (T5);
  \draw[very thick,blue,->] (T5) -- (3.0,1.25);

  \fill (T1) circle (1.9pt) node[below=1pt] {$t_1$};
  \fill (T2) circle (1.9pt) node[above=1pt] {$t_2$};
  \fill (T3) circle (1.9pt) node[below = 1pt] {$t_3$};
  \fill (T4) circle (1.9pt) node[below = 1pt] {$t_4$};
  \fill (T5) circle (1.9pt) node[above = 1pt] {$t_5$};
  \node at (0.5,0.53) {$\gamma$};
\end{tikzpicture}
\caption{Contour $\gamma$ for analytic continuation in the case $k=5$, with each increment $t_{j+1}-t_j$ in the open first quadrant.}
\label{fig:gamma-contour-general}
\end{figure}

Fix a piecewise $C^1$ parametrization of $\gamma$ whose parameter $s$
increases from $-\infty$ to $\infty$ along the orientation of $\gamma$, and set
\[
  M_\gamma:=\gamma\times[-1,1].
\]
For $u=(\gamma(s),x)$ and $u'=(\gamma(s'),x')$ in $M_\gamma$, define the
\emph{contour-ordered difference}
\[
  \mathcal T_\gamma(u-u')
  :=
  \begin{cases}
    u-u' &\text{if } s> s',\\
    u'-u &\text{if } s< s'.
  \end{cases}
\]
When two screening variables have the same contour parameter, which occurs only
on a null set in the integrations, the value may be chosen arbitrarily. The
external points $v_j:=(t_j,x_j)\in M_\gamma$ are marked visits ordered by their
labels; hence, even if two visits have the same contour parameter, for $j<j'$
we set
\[
  \mathcal T_\gamma(v_j-v_{j'})=v_{j'}-v_j=(t_{j'}-t_j,x_{j'}-x_j).
\]
The opposite ordering is used for $j>j'$, and the same marked convention is used for mixed external-screening differences, with arbitrary choices on the measure-zero set of ties. We define contour integration on $M_\gamma^w$ by
\begin{align*}
\int_{M_\gamma^w} f(u_1,\ldots,u_w)\, \dd u_1 \cdots \dd u_w
&:= \int_{[-1,1]^w}\int_{\R^w}
    f((\gamma(s_1),b_1),\ldots,(\gamma(s_w),b_w)) \\
&\qquad \cdot \gamma'(s_1)\cdots \gamma'(s_w)\,
   \dd s_1\cdots \dd s_w\, \dd b_1 \cdots \dd b_w.
\end{align*}
where the finitely many corner points of the piecewise $C^1$ contour do not
affect the value of the integral. With these notations, we define
\begin{align}\label{eq:canacont}
&C_{\alpha_1,\ldots,\alpha_k}(t_1,x_1;\ldots; t_k,x_k)\notag \\
&:= \frac{(-\mu)^w}{w!}
\exp\biggl(
-4\sum_{1\le j<j'\le k}\alpha_{j}\alpha_{j'}
  g(\mathcal T_\gamma(v_j-v_{j'}))
\biggr) \notag \\
&\qquad \qquad \cdot \int_{M_\gamma^w} \exp\biggl(
    -4b\sum_{j=1}^k\sum_{l=1}^w \alpha_j g(\mathcal T_\gamma(v_j-u_l)) \notag \\
&\qquad \qquad \qquad \qquad   -4b^2\sum_{1\le l<l'\le w}g(\mathcal T_\gamma(u_l-u_{l'}))
  \biggr)\,\dd u_1\cdots \dd u_w,
\end{align}
where $u_1,\ldots,u_w\in M_\gamma$. This is the direct complex-time analogue
of the time-ordered Lorentzian formula \eqref{eq:clorentz}. The sector
decomposition below simply unwraps the $\mathcal T_\gamma$-notation and
recovers, sector by sector, the analytically continued formulas from
\secref{sec:coulomb}.
We will now prove the following analytic-continuation theorem on the ordered tube, which implies Theorem \ref{thm:main}.

\begin{thm}[Analytic continuation on the ordered tube]\label{thm:main2}
The function $C_{\alpha_1,\ldots,\alpha_k}$ defined in equation \eqref{eq:canacont} is analytic in the region $\Omega_k$. Moreover, it approaches the Euclidean correlation function $C_{\alpha_1,\ldots,\alpha_k}$ from equation~\eqref{eq:ceuc} as $(t_1,\ldots,t_k)$ approaches strictly increasing real $k$-tuples (and is the unique analytic function on $\Omega_k$ to do so), and the Lorentzian correlation function $C^{\mathrm{L}}_{\alpha_1,\ldots,\alpha_k}$ from equation~\eqref{eq:clorentz} as $(t_1,\ldots,t_k)$ approaches boundary points $(\i s_1,\ldots,\i s_k)$ with $s_1,\ldots,s_k\in\R$ satisfying $(s_j-s_{j'})\pm (x_j -x_{j'})\notin 2\Z$ for all $1\le j<j'\le k$. Lastly, this extension of $C_{\alpha_1,\ldots,\alpha_k}$ to this expanded domain is continuous, and the integral defining it is absolutely convergent everywhere on the domain.
\end{thm}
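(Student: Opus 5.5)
The plan is to identify the contour formula \eqref{eq:canacont} on $\Omega_k$ with the sum $\frac{(-\mu)^w}{w!}A(t)\sum_{\boldsymbol n}B_{\boldsymbol n}(t)$ built in \secref{sec:free}--\secref{sec:coulomb}. Once that identity holds, analyticity on $\Omega_k$ follows at once from Lemma~\ref{lem:bnk}.

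\textbf{Sector decomposition of the contour integral.} For $t\in\Omega_k$ the contour $\gamma$ is injective. Its parameter order therefore agrees with the order of $\Re$ along each piece, and it splits into the left ray, the segments $[t_j,t_{j+1}]$, and the right ray. I would expand $\int_{M_\gamma^w}$ by recording which piece each $u_l$ lies on. Symmetrizing over the $w!$ orderings of the contour parameters $s_1<\cdots<s_w$, this becomes $w!$ times the integral over ordered configurations. I then parametrize each piece affinely: $a_l=t_1+c_l$ with $c_l<0$ on the left ray, $a_l=c_lt_{j+1}+(1-c_l)t_j$ with $c_l\in(0,1)$ on the segments, and $a_l=t_k+c_l$ with $c_l>0$ on the right ray. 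Under this parametrization $\gamma'(s)\,\dd s$ becomes exactly the Jacobian $\prod_j(t_{j+1}-t_j)^{n_{j+1}-n_j}\dd c$.

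The contour-ordered difference $\mathcal T_\gamma$ picks out precisely the orientation used in \eqref{eq:bform}: later-minus-earlier. This includes the marked convention for the external points, so the external pair terms give $g(t_{j'}-t_j,\cdot)$ for $j<j'$, which is the free part $A$. Hence the integrand coincides term by term with the sector integrands, and \eqref{eq:canacont} equals $\frac{(-\mu)^w}{w!}A\sum_{\boldsymbol n}B_{\boldsymbol n}$. Absolute convergence on $\Omega_k$ comes from the dominating function $\Psi_\gamma$ built in the proof of Lemma~\ref{lem:bnk}.

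\textbf{Euclidean limit and uniqueness.} As $t$ tends to a strictly increasing real tuple, the bounds $\Psi$ stay uniform on a compact neighbourhood, and dominated convergence returns $B_{\boldsymbol n}$ at real arguments. The decomposition of \secref{sec:coulomb} then reassembles $C$ from \eqref{eq:ceuc}. For uniqueness, an analytic function on the connected tube $\Omega_k$ is determined by its values on the totally real set $\{t_1<\cdots<t_k\}\subset\R^k$, which is a set of uniqueness. So any analytic function on $\Omega_k$ with these real boundary values agrees with ours.

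\textbf{Lorentzian boundary values and continuity (main obstacle).} Here $\Re(t_{r+1}-t_r)\downarrow 0$, so the margin $\eta_\gamma$ used in Lemma~\ref{lem:bnk} collapses. Three things go wrong at once:
\begin{itemize}
\item the time arguments of $g$ reach the imaginary axis, where singularities sit at $t\pm x\in2\Z$;
\item non-adjacent insertions can become simultaneously singular for a single screening variable;
\item the segments may overlap, because the contour is no longer injective.
\end{itemize}

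My approach is to work directly in the $(c,b)$ coordinates, where the integration domain $\Delta_{\boldsymbol n}\times[-1,1]^w$ is fixed. The integrand converges pointwise to the integrand of \eqref{eq:clorentz}, with $\mathcal T$ determined by the contour parameter, which is exactly the sector structure. It then remains to produce a single dominating function valid on a neighbourhood of the boundary point within $\Omega_k\cup\{\text{boundary}\}$.

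\emph{Pair terms.} Lemma~\ref{lem:gcontlmm} gives $\Re(-4b^2R)\le 4b^2C_0$, which holds up to the boundary by continuity. So pair terms are harmless.

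\emph{Insertion singularities.} Here I use the non-light-cone condition on the $v_j$. Fix a screening variable. As $c_l$ moves along a segment, $\tau_{j,l}$ sweeps a straight line, and for each external $j$ the set where $d_{j,l}$ is small is a finite collection of points in the $(c_l,b_l)$ plane. Two such points for different $j$ would coincide only if $v_j-v_{j'}$ were lightlike. Near the boundary point these points stay uniformly separated, with affine maps of uniformly invertible linear part, provided the segment directions stay away from horizontal, which they do in a small neighbourhood. Each factor is then locally integrable because $p_j<2$.

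\emph{Tails.} The exponential decay along the two rays comes from the linear part $E_{\mathrm{lin}}$ and does not depend on $\eta_\gamma$. The bound \eqref{eq:bnk-linear-tail} therefore persists.

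Dominated convergence then gives continuity on the expanded domain and identifies the limit with $C^{\mathrm L}$. The step I expect to be hardest is making the separation of singular points uniform in a full neighbourhood, and handling degenerate segments where $t_{j+1}-t_j\to0$ with $s_j=s_{j+1}$. For the latter I would fall back on the original $a$-variables along the segment, where the Jacobian factor absorbs the degeneration.
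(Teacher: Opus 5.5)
Your identification of \eqref{eq:canacont} with $\frac{(-\mu)^w}{w!}A\sum_{\boldsymbol n}B_{\boldsymbol n}$, the Euclidean limit, and the uniqueness argument all follow the paper and are fine; for uniqueness you use the identity theorem on the convex tube $\Omega_k$, with the open set $\{t_1<\cdots<t_k\}\subset\R^k$ as a uniqueness set, in place of the induction in Lemma~\ref{lem:uniq-analytic-cont}.

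The gap is in the Lorentzian step. There you carry over the interior mechanism of Lemma~\ref{lem:bnk}: isolated point singularities in $(c_l,b_l)$, affine maps with uniformly invertible linear part, and integrability because $p_j<2$. That mechanism breaks down exactly at the boundary. For $l$ on the segment from $t_r$ to $t_{r+1}$, the real-linear part of $(c_l,b_l)\mapsto\tau_{j,l}\pm\i\chi_{j,l}$ is $(c,b)\mapsto c\,(t_{r+1}-t_r)\pm\i b$. Its determinant is $\pm\Re(t_{r+1}-t_r)$, which tends to $0$. Keeping the segment away from horizontal only keeps $c_l\mapsto\Im\tau_{j,l}$ nondegenerate, and that is precisely what turns the singular set into a \emph{line}. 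The region where $d_{j,l}$ is small is a thin strip around the light-cone lines $\Im a_l\pm b_l\in s_j\pm x_j+2\Z$, not a shrinking neighbourhood of a point. Across a line, the bound $(d_{j,l}\wedge1)^{-p_j}$ with $d=\min\{d_+,d_-\}$ is useless. Indeed $p_j=\max\{0,-4b\Re(\alpha_j)\}$ can lie in $[1,2)$, and $\dist(\cdot,\text{line})^{-p}$ is not integrable for $p\ge1$.

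The missing idea, which is what the paper uses, is to keep the two factors $|1-e^{-\pi\tau+\pi\i\chi}|$ and $|1-e^{-\pi\tau-\pi\i\chi}|$ separate. Then each light-cone family carries its own exponent $q_j=\max\{0,-2b\Re(\alpha_j)\}<1$. The non-light-cone hypothesis should then be used for what it actually gives: \emph{parallel} singular lines from different insertions are distinct. It does not separate the singular sets of different insertions. The $+$ line of $v_j$ and the $-$ line of $v_{j'}$ cross, so your uniform-separation claim is false. What does survive is this: at most one line of each direction passes through any point, crossings are transverse, and the local model $|u|^{-q_j}|v|^{-q_{j'}}$ is integrable.

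The same refinement is needed at the tail endpoints. Once $\Re(t_j-t_1)\to0$, a left-tail screening variable near $c_l=0$ can be singular against every insertion, not only against $v_1$. Opposite-family coincidences there are not excluded by the non-light-cone condition. They give a point singularity of order $q_j+q_{j'}<2$, whereas your bound gives $p_j+p_{j'}$, which can exceed $2$.

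Without this per-family power counting you do not obtain the uniform $L^1$ majorant on which your dominated convergence, continuity, and absolute convergence claims rely. Falling back on the $a$-variables for degenerate segments is reasonable; the paper works with $a=\i u$ on the limiting middle contour. But the integrability argument there must also be made in terms of lines rather than points.
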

\begin{proof}
Fix $\alpha_1,\ldots,\alpha_k$ and $x_1,\ldots,x_k$, and write
\[
  C(t):=C_{\alpha_1,\ldots,\alpha_k}(t_1,x_1;\ldots;t_k,x_k),
\]
where $C_{\alpha_1,\ldots,\alpha_k}$ is the function defined in equation \eqref{eq:canacont}. 
Let $A(t)$ denote the free factor from equation \eqref{eq:canacont}. Since the
external points $v_1,\ldots,v_k$ occur along $\gamma$ in index order, this is
explicitly
\[
  A(t)=
  \exp\biggl(
    -4\sum_{1\le j<j'\le k}\alpha_j\alpha_{j'} g(t_{j'}-t_j,x_{j'}-x_j)
  \biggr).
\]
Let
\begin{align*}
  B^\gamma(t)
  &:=
  \int_{M_\gamma^w}
  \exp\biggl(
    -4b\sum_{j=1}^k\sum_{l=1}^w \alpha_j g(\mathcal T_\gamma(v_j-u_l)) \\
  &\qquad \qquad   -4b^2\sum_{1\le l<l'\le w}g(\mathcal T_\gamma(u_l-u_{l'}))
  \biggr)
  \,\dd u_1\cdots \dd u_w,
\end{align*}
so that
\[
  C(t)=\frac{(-\mu)^w}{w!}A(t)B^\gamma(t).
\]
Take $t$ in the domain $\Omega_k$ from the theorem (i.e., each $t_{j+1}-t_j$
lies in the open right half-plane), and consider the contour $\gamma$ from equation~\eqref{eq:canacont}. Write
\[
\gamma=\Gamma_0\cup\Gamma_1\cup\cdots\cup\Gamma_{k-1}\cup\Gamma_k,
\]
where
\begin{align*}
&\Gamma_0:=\{t_1+c:\ c<0\},\\
&\Gamma_r:=\{(1-c)t_r+ct_{r+1}:\ 0<c<1\}\quad (1\le r<k),\\
&\Gamma_k:=\{t_k+c:\ c>0\}.
\end{align*}
Because $\mathcal T_\gamma(u-u')=\mathcal T_\gamma(u'-u)$, the screening-screening
part of the integrand is symmetric under permutations of the screening
variables, and the mixed part is obviously symmetric as well. Hence we may
restrict to screening points $u_1,\ldots,u_w$ ordered along the orientation of
$\gamma$ and multiply by $w!$. We then partition according to how many ordered
points lie in each segment: for $1\le n_1\le\cdots\le n_k\le w+1$, the indices
\[
  l<n_1,\quad n_r\le l<n_{r+1}\ (1\le r<k),\quad l\ge n_k
\]
correspond exactly to
\[
  u_l\in\Gamma_0\times[-1,1],\qquad
  u_l\in\Gamma_r\times[-1,1],\qquad
  u_l\in\Gamma_k\times[-1,1],
\]
respectively. Writing $u_l=(a_l,b_l)$ and parameterizing the time coordinates by
\[
a_l=
\begin{cases}
t_1+c_l,& l<n_1,\\
(1-c_l)t_r+c_l t_{r+1},& n_r\le l<n_{r+1},\ 1\le r<k,\\
t_k+c_l,& l\ge n_k,
\end{cases}
\]
with $(c_1,\ldots,c_w)\in\Delta_{n_1,\ldots,n_k}$ gives exactly the change of
variables \eqref{eq:alcl} and Jacobian factor
\[
\dd a_1\cdots\dd a_w=
\biggl(\prod_{r=1}^{k-1}(t_{r+1}-t_r)^{n_{r+1}-n_r}\biggr)
\dd c_1\cdots\dd c_w.
\]
On such an ordered sector, the time-ordering is explicit:
\[
  \mathcal T_\gamma(u_l-u_{l'})=u_{l'}-u_l,
  \qquad 1\le l<l'\le w.
\]
Moreover,
\[
  \mathcal T_\gamma(v_j-u_l)=
  \begin{cases}
    v_j-u_l, & l<n_1,\\
    u_l-v_j, & n_r\le l<n_{r+1},\ 1\le j\le r,\\
    v_j-u_l, & n_r\le l<n_{r+1},\ r+1\le j\le k,\\
    u_l-v_j, & l\ge n_k,
  \end{cases}
\]
with $1\le r<k$ in the middle cases. Since
\[
  v_j-u_l=(t_j-a_l,x_j-b_l),\qquad
  u_l-v_j=(a_l-t_j,b_l-x_j),
\]
and
\[
  u_{l'}-u_l=(a_{l'}-a_l,b_{l'}-b_l),
\]
substituting into the time-ordered contour integral yields, term by term in
$(n_1,\ldots,n_k)$, precisely the formula \eqref{eq:bform} used in
Lemma~\ref{lem:bnk} (with $g$ interpreted via its analytic continuation on
$\Re\tau>0$). Therefore
\[
  B^\gamma(t)
  =
  \sum_{1\le n_1\le\cdots\le n_k\le w+1}B_{n_1,\ldots,n_k}(t),
\]
where the right side is exactly the analytically continued Coulomb decomposition
introduced in \secref{sec:coulomb}.

Now, the free part $A$ is analytic on $\Omega_k$ by the discussion in \secref{sec:free}. Also, by Lemma~\ref{lem:bnk}, each $B_{n_1,\ldots,n_k}$ is analytic on that same region. Since there are only finitely many index tuples $(n_1,\ldots,n_k)$, $B^\gamma$ is analytic there. Therefore $C$ is analytic on $\Omega_k$.

Next, let $s=(s_1,\ldots,s_k)$ with $s_1<\cdots<s_k$ real, and let $t^{(m)}\in\Omega_k$ satisfy $t^{(m)}\to s$. For each index tuple $\boldsymbol n=(n_1,\ldots,n_k)$, write
\[
  B_{\boldsymbol n}(t)=P_{\boldsymbol n}(t)I_{\boldsymbol n}(t)
\]
as in the proof of Lemma~\ref{lem:bnk}. The factor $P_{\boldsymbol n}$ is polynomial, hence continuous at $s$. For $I_{\boldsymbol n}$, the estimates proved in Lemma~\ref{lem:bnk} give an $L^1$ majorant uniform for $t$ in a small compact neighborhood of $s$ inside $\Omega_k$: the singular exponents are still $<2$, and the same exponential tails in the unbounded $c_l$-directions hold. Since each time argument in equation \eqref{eq:bform} stays in the open right half-plane for $t\in\Omega_k$ and converges to a positive real value at $t=s$, and $g(\tau,x)$ is continuous up to that boundary, the integrand converges pointwise. Therefore, by dominated convergence,
\[
  I_{\boldsymbol n}(t^{(m)})\to I_{\boldsymbol n}(s),\qquad
  B_{\boldsymbol n}(t^{(m)})\to B_{\boldsymbol n}(s).
\]
Summing over finitely many $\boldsymbol n$ gives
\[
  B^\gamma(t^{(m)})\to \sum_{\boldsymbol n} B_{\boldsymbol n}(s).
\]
At real $s$, this finite sum is exactly the Euclidean Coulomb factor (the same partition and change of variables that produced equation \eqref{eq:bform} from the Euclidean integral), i.e., the $B$-term in equation \eqref{eq:ceuc}. Also $A(t^{(m)})\to A(s)$. Hence
\[
  C(t^{(m)})\to C_{\alpha_1,\ldots,\alpha_k}(s_1,x_1;\ldots;s_k,x_k)
\]
with the right-hand side given by equation \eqref{eq:ceuc}. Since the sequence
$t^{(m)}\to s$ was arbitrary in $\Omega_k$, this proves the Euclidean boundary
limit.

For uniqueness on $\Omega_k$, let $\widetilde C$ be any analytic function on
$\Omega_k$ with the same Euclidean boundary values, and set
\[
  H:=\widetilde C-C.
\]
Then $H$ is analytic on $\Omega_k$ and has boundary limit $0$ at every point of
\[
  \mathcal E_k:=\{(r_1,\ldots,r_k)\in\R^k:r_1<\cdots<r_k\}.
\]
Introduce coordinates
\[
  u_1:=t_1,\qquad u_j:=t_j-t_{j-1}\quad (2\le j\le k),
\]
which biholomorphically map $\Omega_k$ onto
\[
  D_k:=\C\times {\mathbb H}_+^{k-1},\qquad \mathbb{H}_+:=\{z\in\C:\Re z>0\},
\]
and map $\mathcal E_k$ onto
\[
  B_k:=\R\times(0,\infty)^{k-1}.
\]
Define
\[
  \widehat H(u_1,\ldots,u_k):=H(u_1,u_1+u_2, \ldots, u_1+\cdots+u_k).
\]
Then $\widehat H$ is analytic on $D_k$ and is zero on $B_k$. Therefore,
Lemma~\ref{lem:uniq-analytic-cont} from~\secref{sec:unique} gives
$\widehat H\equiv 0$ on $D_k$, and hence $H\equiv 0$ on $\Omega_k$. So the
analytic continuation to $\Omega_k$ is unique.

Finally, let $s = (s_1,\ldots,s_k)$ be a real $k$-tuple satisfying the condition that $(s_j-s_{j'})\pm (x_j - x_{j'})\notin 2\Z$ for all $1\le j<j'\le k$, and let $t^{(m)}\in\Omega_k$ satisfy $t^{(m)}\to \i s$ as $m\to\infty$. Set
\[
  v_j:=(\i s_j,x_j),\qquad 1\le j\le k,
\]
and let $\widetilde\gamma$ be the contour from the definition \eqref{eq:clorentz} associated with $v_1,\ldots,v_k$. For each index tuple $\boldsymbol n=(n_1,\ldots,n_k)$, we again write
\[
  B_{\boldsymbol n}(t)=P_{\boldsymbol n}(t)I_{\boldsymbol n}(t).
\]
As above, $P_{\boldsymbol n}(t^{(m)})\to P_{\boldsymbol n}(\i s)$. For $I_{\boldsymbol n}$, one must refine the estimate from Lemma~\ref{lem:bnk} slightly and use the non-light-cone hypothesis on $s$. The exponential tail bounds in the unbounded $c_l$-directions are unchanged. On the middle part of the limiting contour, after writing $a=\i u$, the possible insertion singularities are governed by the affine lines
\[
u\mp b=s_j\mp x_j+2m,\qquad m\in\Z.
\]
Because $(s_j-s_{j'})\pm(x_j-x_{j'})\notin 2\Z$, no two parallel lines coming from different insertions coincide; any remaining intersections are transverse. Therefore, after an affine change of coordinates near any candidate singular point, the local singular factor is of the form $|u|^{-q_r}$ or $|u|^{-q_r}|v|^{-q_{r'}}$, where
\[
q_j:=\max\{0,-2b\Re(\alpha_j)\}<1.
\]
These model singularities are locally integrable, and by compactness they give an $L^1$ majorant uniform for $t$ in a small compact neighborhood of $\i s$ inside $\overline \Omega_k$. Moreover, by the boundary-value formula \eqref{eq:gimag}, each $g$-factor in \eqref{eq:bform} converges pointwise as $t^{(m)}\to \i s$. 

The limiting argument is precisely the time-ordered difference determined by $\widetilde\gamma$: if an ordered screening point lies before $v_1$ on $\widetilde\gamma$, then every mixed factor is of the form $g(v_j-u_l)$; if it lies on the segment from $v_r$ to $v_{r+1}$, then the mixed factors are $g(u_l-v_j)$ for $j\le r$ and $g(v_j-u_l)$ for $j\ge r+1$; and if it lies after $v_k$, then every mixed factor is of the form $g(u_l-v_j)$. (When $v_r=v_{r+1}$, that segment is empty and there is nothing to check.) 

Likewise, for screening points ordered along the contour one has $\mathcal T(u_{l'}-u_l)=u_{l'}-u_l$ whenever $l<l'$. Therefore, by dominated convergence, each sector integral converges to the corresponding ordered-sector contribution in the contour integral from \eqref{eq:clorentz}. Performing on that contour integral the same ordered-sector decomposition as above shows that the sum of these limits is exactly its screening part. Also, since $v_{j'}$ occurs later than $v_j$ on $\widetilde\gamma$ whenever $j<j'$, we have $\mathcal T(v_j-v_{j'})=v_{j'}-v_j$, and so the continuity of \eqref{eq:gimag} gives convergence of the free factor to the free factor in \eqref{eq:clorentz}. Summing over finitely many $\boldsymbol n$, and combining with the convergence of the free factor $A(t^{(m)})$, we obtain
\[
  C(t^{(m)})\longrightarrow C^{\mathrm L}_{\alpha_1,\ldots,\alpha_k}(s_1,x_1;\ldots;s_k,x_k),
\]
which is exactly equation \eqref{eq:clorentz}. This proves the Lorentzian limit.

It remains to justify absolute convergence of the integral in \eqref{eq:clorentz}, and continuity of the resulting boundary function on its domain. For any target point satisfying the same non-light-cone condition, the very same sectorwise argument applies in a sufficiently small neighborhood: the tail bounds are unchanged, parallel singular lines stay separated by the non-light-cone hypothesis, and the only remaining intersections are transverse, hence locally integrable. Evaluating the resulting majorant at the target point gives absolute convergence of the contour integral, and dominated convergence for nearby boundary points shows that the value of the contour integral in \eqref{eq:clorentz} depends continuously on $(t_1,x_1),\ldots,(t_k,x_k)$. Hence the analytic continuation extends continuously to the full domain of $C^{\mathrm L}_{\alpha_1,\ldots,\alpha_k}$.
\end{proof}
This completes the ordered-tube analytic-continuation theorem and, with it, the proof of Theorem~\ref{thm:main}.

\section{Quantization}\label{sec:quantization}
In this section, we carry out the quantization program outlined in \secrefrange{sec:smear}{sec:gnsquant}. The first step is to prove Theorem \ref{thm:smear}.
We first establish convergence for smeared observables, then construct the state space and pairing, next analyze translation covariance on the represented state space, and finally prove locality for the represented local net.

\subsection{Expectations of smeared observables}\label{sec:smearproof}
We prove Theorem~\ref{thm:smear}. Fix $n_1,\ldots,n_k\in\mathcal I_b$ and
$f_1,\ldots,f_k\in C_c^\infty(M,\C)$. Let
$\boldsymbol n=(n_1,\ldots,n_k)$ and set $\alpha_j:=bn_j$. Since
$n_j\in\mathcal I_b$, we have
\[
  |\alpha_j|<\frac1{\sqrt2},\qquad 1\le j\le k.
\]
In particular, because $b<\frac{1}{\sqrt2}$,
\[
  \alpha_j>-\frac1{\sqrt2}>-\frac1{2b}.
\]
Let
\[
  w:=-\sum_{j=1}^k n_j.
\]
If $w<0$, then by definition $C^{\mathrm L}_{b\boldsymbol n}\equiv 0$, and
there is nothing to prove. We therefore assume throughout the proof that
$w\ge0$. The Lorentzian contour formula of Theorem~\ref{thm:main} is then
available for the charges $\alpha_1,\ldots,\alpha_k$. Choose numbers
\[
  T_-<S_-<S_+<T_+
\]
such that
\[
  \supp(f_j)\subset [S_-,S_+]\times[-1,1],
  \qquad 1\le j\le k,
\]
and put
\[
  K:=[S_-,S_+]\times[-1,1].
\]
Since $\prod_{j=1}^k f_j(u_j)$ is bounded and supported in $K^k$, it is enough
to prove that
\begin{align}\label{tag51}
  \int_{K^k}
  |C^{\mathrm L}_{b\boldsymbol n}(u_1,\ldots,u_k)|
  \,\dd u_1\cdots \dd u_k<\infty .
\end{align}
Changing the values of the point correlator on the light-cone exceptional set
does not affect the integral, because that exceptional set has Lebesgue measure
zero. The set $K^k$ is the union, up to a null set, of the finitely many ordered
sectors
\[
  D_\sigma
  :=
  \{((t_1,x_1),\ldots,(t_k,x_k))\in K^k:
      t_{\sigma(1)}<\cdots<t_{\sigma(k)}\},
  \qquad \sigma\in S_k.
\]
It suffices to prove the claim \eqref{tag51} on each $D_\sigma$. We first treat the
standard sector
\[
  D:=\{((t_1,x_1),\ldots,(t_k,x_k))\in K^k:
      t_1<\cdots<t_k\}.
\]
Fix $\boldsymbol u=(u_1,\ldots,u_k)\in D$, with $u_j=(t_j,x_j)$, and write
$v_j=(\i t_j,x_j)$. On this sector, the contour in equation \eqref{eq:clorentz} is
\[
  \gamma=\Gamma_-\cup \Gamma_0\cup \Gamma_+,
\]
where
\[
  \Gamma_-:=\{\i t_1-r:r>0\},\qquad
  \Gamma_0:=\{\i s:t_1<s<t_k\},\qquad
  \Gamma_+:=\{\i t_k+r:r>0\}.
\]
Theorem~\ref{thm:main} gives
\[
  C^{\mathrm L}_{b\boldsymbol n}(u_1,\ldots,u_k)
  =
  \frac{(-\mu)^w}{w!}A(\boldsymbol u)
  \int_{M_\gamma^w}
  B_{\mathcal T}(\boldsymbol u;U_1,\ldots,U_w)
  \,\dd U_1\cdots \dd U_w,
\]
where
\[
  A(\boldsymbol u)
  :=
  \exp\biggl(
    -4\sum_{1\le i<j\le k}
    \alpha_i\alpha_j g(\i(t_j-t_i),x_j-x_i)
  \biggr),
\]
and
\begin{align*}
  B_{\mathcal T}(\boldsymbol u;U_1,\ldots,U_w)
  &:=
  \exp\biggl(
    -4b\sum_{i=1}^k\sum_{m=1}^w
      \alpha_i g(\mathcal T(v_i-U_m)) \\ 
&\qquad \qquad     -4b^2\sum_{1\le m<m'\le w}
      g(\mathcal T(U_m-U_{m'}))
  \biggr).
\end{align*}
Here $U_m=(a_m,b_m)\in M_\gamma$, and $\mathcal T$ denotes the
contour-ordering in equation \eqref{eq:clorentz}. The screening integrand is symmetric
in $U_1,\ldots,U_w$, because for screening-screening factors the ordered
difference is always the later point minus the earlier point along the
contour. Thus, after canceling the harmless factor $w!$, it is enough to
estimate the integral over sectors in which the screening variables are ordered
along $\gamma$.

We decompose such an ordered screening sector according to how many screening
variables lie on $\Gamma_-$, how many lie on the bounded vertical part, and how
many lie on $\Gamma_+$. Fix one such sector. Suppose that the first $p$
screening variables lie on $\Gamma_-$, the next $q$ lie on the bounded vertical
part, and the remaining $w-p-q$ lie on $\Gamma_+$. For each middle screening
variable choose a fixed $\rho_m\in\{1,\ldots,k-1\}$ such that this variable
lies on the segment from $\i t_{\rho_m}$ to $\i t_{\rho_m+1}$. We write
\begin{align*}
  &a_l=\i t_1-r_l,\qquad 1\le l\le p,\\ 
  &a_{p+m}=\i s_m,\qquad
  t_{\rho_m}<s_m<t_{\rho_m+1},\qquad 1\le m\le q,\\ 
  &a_l=\i t_k+r'_l,\qquad p+q+1\le l\le w.
\end{align*}
The ordered-sector condition imposes only ordering, or simplex, inequalities
on the coordinates along each contour piece. More explicitly, the orientation
on the left tail gives
\[
  r_1>\cdots>r_p>0,
\]
the orientation on the right tail gives
\[
  0<r'_{p+q+1}<\cdots<r'_w,
\]
and the vertical part orders the variables $s_m$ within each chosen segment,
in addition to the restrictions $t_{\rho_m}<s_m<t_{\rho_m+1}$. For upper
bounds we may enlarge this ordered sector to the corresponding product region.
In particular, we simplify the integration domains to 
\[
  r_l>0,\qquad r'_l>0,\qquad s_m\in(T_-,T_+).
\]
We first isolate the decay of the horizontal variables. Write
\[
  g(\tau,x)=-\frac{\pi\tau}{2}+R(\tau,x),
  \qquad \Re\tau>0.
\]
Let $E_{\rm lin}$ be the part of the exponent in
$A(\boldsymbol u)B_{\mathcal T}(\boldsymbol u;U_1,\ldots,U_w)$ obtained by
replacing every occurrence of $g(\tau,x)$ involving at least one screening
variable by $-\pi\tau/2$. The free factor $A(\boldsymbol u)$ contains no
horizontal screening variable and will be treated later in the compact
light-cone estimate.

Consider first a left-tail variable $U_l$, $1\le l\le p$. The mixed
external-screening linear terms involving $U_l$ contribute, in real part,
\[
  2\pi b\sum_{i=1}^k \alpha_i r_l
  =
  -2\pi b^2 w\,r_l,
\]
because $\sum_i\alpha_i=-bw$. The screening-screening linear terms involving
$U_l$ contribute
\[
  2\pi b^2((w-l)-(l-1))r_l
  =
  2\pi b^2(w-2l+1)r_l.
\]
Hence the total coefficient of $r_l$ in $\Re E_{\rm lin}$ is
\[
  -2\pi b^2(2l-1).
\]
Similarly, if $p+q+1\le l\le w$ is a right-tail variable, then the mixed
external-screening linear terms give $-2\pi b^2w\,r'_l$, while the
screening-screening linear terms give
\[
  2\pi b^2((l-1)-(w-l))r'_l
  =
  2\pi b^2(2l-w-1)r'_l.
\]
Thus the total coefficient of $r'_l$ in $\Re E_{\rm lin}$ is
\[
  -2\pi b^2(2(w-l)+1).
\]
It follows that there is a constant $C$ such that, uniformly for
$\boldsymbol u\in D$ and for all variables in the fixed sector,
\[
  \Re E_{\rm lin}
  \le
  C
  -2\pi b^2\sum_{l=1}^p(2l-1)r_l
  -2\pi b^2\sum_{l=p+q+1}^w(2(w-l)+1)r'_l .
\]
The remainder terms are controlled by Lemma~\ref{lem:gcontlmm}. Pair
screening-screening remainder terms with coefficient $-4b^2$ are bounded above
because $\Re R$ is bounded below. Mixed external-screening remainder terms are
bounded above by a constant times a finite product of logarithmic factors.
Consequently there are constants $C,N>0$ and positive constants $c_l,c'_l$
such that all terms involving large horizontal variables are bounded by
exponential decay times a polynomial factor.

Now split each left and right horizontal variable into a small and a large
part:
\[
  0\le r_l\le 1 \quad\text{or}\quad r_l>1,
  \qquad
  0\le r'_l\le 1 \quad\text{or}\quad r'_l>1.
\]
There are only finitely many such choices. Fix one. Let $L_0$ and $R_0$ be the
sets of left and right horizontal variables whose corresponding parameters are
at most $1$, and let $L_\infty$ and $R_\infty$ be the complementary sets.

If $l\in L_\infty$ or $l\in R_\infty$, then every mixed
external-horizontal argument involving that variable has real part at least
$1$. Hence such a variable cannot produce a light-cone singularity with any
external insertion. Screening-screening factors involving such a variable have
positive exponent $2b^2$ in each light-cone coordinate at possible
coincidences, and therefore give no singular upper bound. Combining this
observation with the preceding linear estimate and Lemma~\ref{lem:gcontlmm},
we obtain, uniformly in all remaining variables,
\[
  |\hbox{large-horizontal part of the integrand}|
  \le
  C
  \prod_{l\in L_\infty} e^{-c_l r_l}(1+r_l)^N
  \prod_{l\in R_\infty} e^{-c'_l r'_l}(1+r'_l)^N,
\]
for some $C, c_l, c_l', N > 0$. 
The right side is integrable over the large horizontal variables. After
integrating those variables out, it remains to prove integrability of the
compact part consisting of the external variables, the middle screening
variables, and the small horizontal variables.

We use the following terminology for the compact power-counting argument. For a
vertical point $(\i s,y)$, its $\sigma$ light-cone coordinate, with
$\sigma\in\{+,-\}$ identified with $\pm1$, is $s+\sigma y$. For an external
point $(t_i,x_i)$ this means $t_i+\sigma x_i$. For a horizontal screening point
$(\i s+h,y)$, we keep the same light-cone coordinate $s+\sigma y$ and call
$h$ its height coordinate; thus vertical points have height $0$, left-tail
points have negative height, and right-tail points have positive height. A
light-cone singularity is a possible singularity caused by equality, modulo the
period $2$, of such light-cone coordinates, together with equality of the
height coordinates when horizontal screening points are involved.

We now describe a majorant for the compact integral, and then use power counting to show that the majorant has a finite integral. Write the middle
screening variables as
\[
  W_m=(\i s_m,y_m),\qquad 1\le m\le q,
\]
with $s_m\in(T_-,T_+)$ and $y_m\in[-1,1]$. For $l\in L_0$ write
\[
  L_l=(\i t_1-r_l,y_l^-),
  \qquad 0\le r_l\le1,\quad y_l^-\in[-1,1],
\]
and for $l\in R_0$ write
\[
  R_l=(\i t_k+r_l^+,y_l^+),
  \qquad 0\le r_l^+\le1,\quad y_l^+\in[-1,1].
\]
For each $\sigma\in\{+,-\}$, identified with $\pm1$, set
\[
  p_i^\sigma:=t_i+\sigma x_i,\qquad
  q_m^\sigma:=s_m+\sigma y_m.
\]
For $\theta \in \R$, let 
\[
  \delta(\theta):=\dist(\theta,2\Z),
\]
and for $(a,h),(a',h')\in\R^2$ define
\[
  \Delta((a,h),(a',h'))
  :=
  (\delta(a-a')^2+|h-h'|^2)^{1/2}.
\]
The external light-cone points are
\[
  P_i^\sigma:=(p_i^\sigma,0),\qquad 1\le i\le k.
\]
The screening light-cone points left after the large variables have been
integrated out are
\[
  \mathscr S_\sigma
  :=
  \{(q_m^\sigma,0):1\le m\le q\}
  \cup
  \{(t_1+\sigma y_l^-,-r_l):l\in L_0\}
  \cup
  \{(t_k+\sigma y_l^+,r_l^+):l\in R_0\}.
\]
Thus each point in $\{P_i^\sigma\}\cup\mathscr S_\sigma$ is written as
$(a,h)$, where $a$ is the relevant $\sigma$ light-cone coordinate and $h$ is
the height coordinate.
The sign convention for the height of a left-tail point is only notational:
with this convention, left-left and right-right screening differences have
heights $|r_l-r_{l'}|$ and $|r_l^+-r_{l'}^+|$, while a left-right difference
has height $r_l+r_{l'}^+$.

For real $\beta,\rho,\xi$, the boundary formula \eqref{eq:gimag} gives
\begin{align}\label{eq:comp1}
  |e^{-4\beta g(\i\rho,\xi)}|
  =
  |1-e^{-\pi\i(\rho+\xi)}|^{2\beta}
  |1-e^{-\pi\i(\rho-\xi)}|^{2\beta}.
\end{align}
Also, there is a universal constant $C\ge1$ such that for all $\theta\in\R$,
\begin{align}\label{eq:comp2}
  C^{-1}\delta(\theta)
  \le
  |1-e^{-\pi\i\theta}|
  \le
  C\delta(\theta).
\end{align}
Finally, choose $m\in\Z$ such that $|\theta-2m|=\delta(\theta)$. Since
\[ 
  1-e^{-\pi h-\pi\i\theta}=1-e^{-\pi h-\pi\i(\theta-2m)},
\] 
applying
Lemma~\ref{lem:one-minus-exp-lower} with $u=\pi h$ and
$v=-\pi(\theta-2m)\in[-\pi,\pi]$ gives, uniformly for $0\le h\le2$ and
$\theta\in\R$,
\[
  |1-e^{-\pi h-\pi\i\theta}|
  \asymp
  (h^2+\delta(\theta)^2)^{1/2}.
\]
The exponential factor $e^{2\pi\beta h}$ coming from the linear term of
$g(h+\i\rho,\xi)$ is bounded above and below by constants on $0\le h\le2$.

Applying these comparisons to the free external-external factors, the
external-screening factors, and the screening-screening factors, and using
$\delta(\theta)=\delta(-\theta)$ to remove the contour-order signs, the
remaining compact absolute integrand is bounded by a constant times
$G_+G_-$, where
\[
  G_\sigma
  :=
  \prod_{1\le i<j\le k}
    \Delta(P_i^\sigma,P_j^\sigma)^{2\alpha_i\alpha_j}
  \prod_{i=1}^k\prod_{Z\in\mathscr S_\sigma}
    \Delta(P_i^\sigma,Z)^{2b\alpha_i}
  \prod_{\substack{Z,Z'\in\mathscr S_\sigma\\ Z<Z'}}
    \Delta(Z,Z')^{2b^2}.
\]
Here $Z<Z'$ denotes any fixed ordering of the finite set
$\mathscr S_\sigma$. The order is irrelevant for integrability.

It remains to prove that $G_+G_-$ is locally integrable on the compact domain
of the variables
\[
  (t_i,x_i)_{i=1}^k,\qquad
  (s_m,y_m)_{m=1}^q,\qquad
  (r_l,y_l^-)_{l\in L_0},\qquad
  (r_l^+,y_l^+)_{l\in R_0}.
\]
Since the domain is compact, local integrability suffices. Fix a point of this
domain. After shrinking to a sufficiently small coordinate neighborhood, every
factor $\delta(a-a')$ is either bounded above and below by positive constants
or is comparable to $|a-a'-2m|$ for a uniquely determined $m\in\Z$. Thus every
singular factor in $G_+G_-$ is locally comparable to a Euclidean norm of an
affine function of the integration variables. Hence the affine-distance form of
Lemma~\ref{lem:1d-power-counting} applies directly. It remains only to verify,
for each flat in this finite arrangement, that the sum of the exponents of the
factors vanishing on that flat is greater than the negative of its codimension.

Let us spell out what these flats are in the present situation. For a fixed
sign $\sigma\in\{+,-\}$, the factors in $G_\sigma$ are indexed by pairs of
points in the finite set consisting of the external light-cone points
$P_i^\sigma$ and the screening light-cone points in $\mathscr S_\sigma$. After
the localization just described, the vanishing of one such pair factor is the
affine condition that the two corresponding light-cone coordinates agree
modulo the locally fixed element of $2\Z$, together with equality of the height
coordinates appearing in $\Delta$. Thus a flat for one fixed $\sigma$ is
obtained by imposing a finite collection of such pairwise coincidences. If one
draws a graph whose vertices are the points $P_i^\sigma$ and
$Z\in\mathscr S_\sigma$, and whose edges are the pair factors required to
vanish, then the nontrivial connected components of this graph are exactly the
collapsing clusters: all points in such a component are forced to coincide in
the corresponding one-dimensional light-cone coordinate, with the additional
height equalities when horizontal screening points are present. Isolated
vertices impose no condition. Conversely, every one-family collapsing cluster,
or finite collection of disjoint such clusters, gives one of these flats.

For the product $G_+G_-$, let $F$ be an arbitrary flat in the combined
arrangement. For each sign $\sigma\in\{+,-\}$, define a graph
$\Gamma_\sigma(F)$ on the vertices $P_i^\sigma$ and $Z\in\mathscr S_\sigma$ by
joining two vertices exactly when the corresponding pair factor in $G_\sigma$
vanishes identically on $F$. Equivalently, two vertices are joined when, on
$F$, their $\sigma$ light-cone coordinates are forced to agree up to the
locally fixed element of $2\Z$, and their height coordinates are forced to
agree whenever the height condition is nontrivial. Thus the edge relation is
the pairwise part of the equivalence relation ``being forced to coincide in
the $\sigma$ family on $F$.'' Passing from the originally imposed edges to all
edges whose zero set contains $F$ simply saturates the graph under all
coincidences that follow from the definition of $F$. Hence the connected
components of $\Gamma_\sigma(F)$ are the equivalence classes of this forced
coincidence relation. The nontrivial components are precisely the maximal
$\sigma$ light-cone collapsing clusters of $F$. For such a component $C$, let
$A(C)$ be the set of external labels in $C$, let $n(C)$ be the number of
screening vertices in $C$, and set

\[
  r(C):=|A(C)|+n(C)-1.
\]
For any set $A$ of external labels and any $n\ge0$, define
\[
  \Theta(A,n)
  :=
  2\sum_{\substack{i,j\in A\\ i<j}}\alpha_i\alpha_j
  +
  2bn\sum_{i\in A}\alpha_i
  +
  b^2 n(n-1).
\]
The contribution of all factors whose zero sets contain $F$ is therefore
\[
  E(F)
  =
  \sum_{\sigma\in\{+,-\}}
  \sum_{C\in\mathcal C_\sigma(F)}
  \Theta(A(C),n(C)),
\]
where $\mathcal C_\sigma(F)$ denotes the set of nontrivial components of
$\Gamma_\sigma(F)$. Define also
\[
  R(F)
  :=
  \sum_{\sigma\in\{+,-\}}
  \sum_{C\in\mathcal C_\sigma(F)} r(C).
\]
We claim that $\operatorname{codim}F\ge R(F).$ To see this, write $a_Z^\tau$ and
$h_Z$ for the light-cone and height coordinates of a vertex $Z$ in the
$\tau$-family. Choose a root $\rho_C\in C$ in every component
$C\in\mathcal C_\tau(F)$. The light-cone part of the collapse equations
contains
\[
  \mathcal E(F):\qquad
  a_Z^\tau-a_{\rho_C}^\tau=c_{Z,C},
  \qquad
  C\in\mathcal C_\tau(F),\quad Z\in C\setminus\{\rho_C\},
  \quad \tau\in\{+,-\},
\]
for suitable locally fixed constants $c_{Z,C}$. The number of equations in
$\mathcal E(F)$ is exactly $R(F)$. Let $\mathcal H(F)$ denote the height
equalities forced by the same vanishing $\Delta$-factors, namely
$h_Z=h_{\rho_C}$ whenever one of $Z$ and $\rho_C$ is horizontal.

If all light-cone coordinates were independent, the equations in
$\mathcal E(F)$ would already have rank $R(F)$. In the actual variables, the
only possible loss of rank comes from horizontal screening vertices attached to
the same endpoint of the contour. If $e(l)=1$ for a left-tail point and
$e(l)=k$ for a right-tail point, then
\[
  a_l^\tau-p_{e(l)}^\tau=\tau(y_l-x_{e(l)}),
  \qquad \tau\in\{+,-\}.
\]
Thus a $+$ relative equation and a $-$ relative equation can duplicate each
other only through relations of the form $y_l-y_{l'}=\mathrm{const}$, or
$y_l-x_{e(l)}=\mathrm{const}$, among horizontal points with the same endpoint
anchor. But in exactly these cases the corresponding $\Delta$-vanishing also
imposes the independent height relation $h_l-h_{l'}=0$, or $h_l=0$. Hence the
height equations compensate any such loss of light-cone rank:
\[
  \operatorname{rank}\mathcal E(F)+\operatorname{rank}\mathcal H(F)
  \ge R(F).
\]
Since the height variables are independent of the light-cone variables, and
since $F$ is contained in the common affine zero set of these equations,
\[
  \operatorname{codim}F
  \ge
  \operatorname{rank}\mathcal E(F)+\operatorname{rank}\mathcal H(F)
  \ge R(F).
\]
Thus it remains to prove the one-family cluster inequality
$\Theta(A,n)>-(|A|+n-1)$, since summing that inequality over the components of
an arbitrary $F$ will give 
\[ 
  E(F)>-R(F)\ge-\operatorname{codim}F.
\]
We now check these inequalities explicitly. Consider first a cluster in one
fixed light-cone family $\sigma\in\{+,-\}$. Suppose that the cluster contains
$m$ external points, indexed by a set $A\subset\{1,\ldots,k\}$, and $n$
screening points from $\mathscr S_\sigma$. By the definition above, the exponent
contributed by the pairwise factors internal to this cluster is $\Theta(A,n)$.
If $m=0$, then $\Theta(A,n)=b^2n(n-1)\ge0$, so this cluster causes no
singularity. Assume now that $m\ge1$. Set
\[
  S:=\sum_{i\in A}\alpha_i,\qquad
  Q:=\sum_{i\in A}\alpha_i^2.
\]
Then
\[
  \Theta(A,n)
  =
  S^2-Q+2bnS+b^2n(n-1),
\]
and therefore
\begin{align*}
  \Theta(A,n)+m+n-1
  &=
  S^2-Q+2bnS+b^2n(n-1)+m+n-1  \\
  &=
  (S+bn)^2-Q-b^2n+m+n-1.
\end{align*}
Since $|\alpha_i|<\frac{1}{\sqrt2}$ and $b<\frac{1}{\sqrt2}$, we have
\[
  Q<\frac m2,\qquad b^2<\frac12.
\]
Thus
\[
  \Theta(A,n)+m+n-1
  >
  (S+bn)^2+\frac{m+n}{2}-1.
\]
If $m+n\ge3$, the right-hand side is strictly positive. The remaining cases
with $m+n=2$ are immediate:
\[
  2\alpha_i\alpha_j+1>0
\]
for two external points, because $|\alpha_i|,|\alpha_j|<\frac{1}{\sqrt2}$;
\[
  2b\alpha_i+1>0
\]
for one external point and one screening point, because
$\alpha_i>-\frac{1}{\sqrt2}$ and $b<\frac{1}{\sqrt2}$; and
\[
  2b^2+1>0
\]
for two screening points. Hence, for every one-family cluster containing
$m+n\ge2$ points,
\[
  \Theta(A,n)>-(m+n-1).
\]
This is precisely the one-family cluster inequality used above. Summing it
over the nontrivial components of $\Gamma_\sigma(F)$ for both signs gives
\[
  E(F)>-R(F)\ge-\operatorname{codim}F.
\]
These are exactly the inequalities required by Lemma~\ref{lem:1d-power-counting}. This proves that 
$G_+G_-$ is locally integrable.

We have proved that, after the large horizontal variables are integrated out,
the compact integral over the external variables, middle screening variables,
and small horizontal endpoint variables is finite. Since there are only
finitely many choices of the large/small decomposition and finitely many
ordered screening sectors, the contribution from the standard ordered sector
$D$ is finite.

Now fix an arbitrary $\sigma\in S_k$. On $D_\sigma$, the contour in
\eqref{eq:clorentz} still enters horizontally at $\i t_1$, visits
$\i t_2,\ldots,\i t_k$ in label order, and exits horizontally from $\i t_k$.
The bounded part of the contour is now a finite union of oriented vertical
segments, some of which may be traversed downward and some of which may overlap
as subsets of the imaginary axis. We decompose the ordered screening region
according to the contour segment occupied by each screening variable and the
order along that oriented segment. All middle screening times still lie in the
fixed compact interval $(T_-,T_+)$.

The horizontal-tail estimate is unchanged: for a left-tail variable the real
part of every mixed external-horizontal argument is $r_l$, and for a
right-tail variable it is $r'_l$, independent of the relative order of the
external times. The same linear computation therefore gives the same
exponential decay at infinity. After taking absolute values, reversing the
orientation of a bounded vertical segment or reversing a contour-ordered
difference only changes the sign of the corresponding light-cone difference,
and this has no effect because $\delta(\theta)=\delta(-\theta)$. Thus every sector over $D_\sigma$ is dominated by the same type of compact
majorant $G_+G_-$, with the same power-counting verification. Since there are
only finitely many permutations, contour-segment decompositions, and
large/small horizontal decompositions, the integral over every $D_\sigma$ is
finite. This proves equation \eqref{tag51}, and hence Theorem~\ref{thm:smear}.

\subsection{Hermiticity of the algebraic form}\label{sec:hermitproof}
We now prove Theorem~\ref{thm:ma-hermitian}.  The first step is the following.
\begin{lmm}\label{lem:invo}
For all $A,B\in \ma$, $(AB)^* = B^*A^*$ and $(A^*)^* = A$.
\end{lmm}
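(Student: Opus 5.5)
The plan is to reduce everything to generators and word structure, since $\ma=T(\mathcal G)$ is a tensor algebra. First I will make the involution precise. On $\mathcal G=\bigoplus_{n\in\mathcal I_b}\mathcal G_n$, the map $\mo_{f,n}\mapsto\mo_{\bar f,n}$ is well defined and conjugate linear, because $f\mapsto\bar f$ is conjugate linear on $\mathcal F$ and respects the linearity relations used to define $\mathcal G_n$. On a degree-$k$ tensor $\mo_{f_1,n_1}\cdots\mo_{f_k,n_k}$, I define
\[
(\mo_{f_1,n_1}\cdots\mo_{f_k,n_k})^*:=\mo_{\bar f_k,n_k}\cdots\mo_{\bar f_1,n_1}.
\]
I will check that this is well defined on $\mathcal G^{\otimes k}$ by observing that it is the composition of two maps. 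The first is the conjugate-linear map $\mathcal G^{\otimes k}\to\mathcal G^{\otimes k}$ induced by the conjugate-linear involution on each factor; this is well defined because it is the complex-linear map on the conjugate space. The second is the linear flip $x_1\otimes\cdots\otimes x_k\mapsto x_k\otimes\cdots\otimes x_1$. Setting $1^*=1$ and summing over degrees then gives a conjugate-linear map on all of $\ma$. This step is the only place where some care is needed: the definition must be phrased via the universal property of the tensor product, so that it is independent of representatives modulo multilinearity.

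Next I will verify $(A^*)^*=A$. Both sides are conjugate linear applied twice, hence linear in $A$, so it suffices to check the identity on monomials. Reversing a word twice restores its original order, and $\bar{\bar f}=f$, so the identity holds.

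Finally I will verify $(AB)^*=B^*A^*$. Both sides are conjugate bilinear in $(A,B)$, so it again suffices to take $A$ and $B$ to be monomials, say words $x_1\cdots x_k$ and $y_1\cdots y_m$. Their product is the concatenation $x_1\cdots x_ky_1\cdots y_m$. Applying $*$ reverses this word and conjugates each letter, giving $y_m^*\cdots y_1^*x_k^*\cdots x_1^*$, which is exactly $B^*A^*$. The cases where $A$ or $B$ equals $1$ are trivial.

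There is no real obstacle beyond the well-definedness point noted in the first step.
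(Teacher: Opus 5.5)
Your proposal is correct and follows essentially the same route as the paper: both reduce to generators and monomials of the tensor algebra, with $(AB)^*=B^*A^*$ coming from word reversal and $(A^*)^*=A$ from $\bar{\bar f}=f$ together with linearity of $A\mapsto (A^*)^*$. Your check that reversal plus conjugation is well defined on each $\mathcal G^{\otimes k}$ fills in a point the paper leaves implicit when it says the involution is ``extended conjugate-linearly and anti-multiplicatively.''
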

\begin{proof}
On a generator, $(\mo_{f,n}^*)^*=\mo_{\bar{\bar f},n}=\mo_{f,n}$.
Since the involution was extended conjugate-linearly and anti-multiplicatively
to the tensor algebra, the identities $(AB)^*=B^*A^*$ and $(A^*)^*=A$ follow
for all $A,B\in\ma$.
\end{proof}

The next lemma is the key ingredient in the proof of Hermiticity.
\begin{lmm}\label{lem:herm}
For any $A\in \ma$, $\omega(A^*) = \overline{\omega(A)}$.
\end{lmm}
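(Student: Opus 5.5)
By linearity and conjugate-linearity of $A\mapsto A^*$, it suffices to treat $A=1$, which is trivial, and a monomial $A=\mo_{f_1,n_1}\cdots\mo_{f_k,n_k}$. Then $A^*=\mo_{\bar f_k,n_k}\cdots\mo_{\bar f_1,n_1}$, so
\[
  \omega(A^*)=\int_{M^k}\prod_{j=1}^k \overline{f_j(u_j)}\,
  C^{\mathrm L}_{bn_k,\ldots,bn_1}(u_k,\ldots,u_1)\,\dd u_1\cdots\dd u_k .
\]
The claim therefore reduces to the pointwise identity
\[
  C^{\mathrm L}_{bn_k,\ldots,bn_1}(u_k,\ldots,u_1)=\overline{C^{\mathrm L}_{bn_1,\ldots,bn_k}(u_1,\ldots,u_k)}
\]
for almost every $(u_1,\ldots,u_k)$ in the non-light-cone set. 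Integrability is supplied by Theorem~\ref{thm:smear}, and the light-cone exceptional set has measure zero. If $w<0$, both sides vanish, so we may assume $w\ge0$. The charges $\alpha_j=bn_j$ are real, and so is $\mu$; hence the prefactor $(-\mu)^w/w!$ is real.

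The key input is a conjugation identity for $g$. For $\tau$ in the open right half-plane, the map $\tau\mapsto\overline{g(\bar\tau,x)}$ is analytic, and it agrees with $g(\tau,x)$ for real $\tau>0$, where $g$ is real. Hence $\overline{g(\bar\tau,x)}=g(\tau,x)$. Passing to boundary values gives $\overline{g(\i t,x)}=g(-\i t,x)$. One can also check this directly from \eqref{eq:gimag}, since the principal logarithm commutes with conjugation off the cut.

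Next I compare the contours. Let $\gamma$ be the contour attached to $(v_1,\ldots,v_k)$, with $v_j=(\i t_j,x_j)$, and let $\gamma^\sharp$ be the contour for the reversed tuple $(v_k,\ldots,v_1)$. Then $\gamma^\sharp$ enters at $\i t_k$, visits the points in reverse order, and exits at $\i t_1$. Complex conjugation $z\mapsto\bar z$ maps the contour for $\i t_j$ to the contour for $-\i t_j$. I would then apply the map $z\mapsto -z$ to $\bar\gamma$: this sends $\i t_j\mapsto\i t_j$ and reverses the horizontal directions, so $-\bar\gamma$ is exactly $\gamma^\sharp$ traversed in the opposite orientation.

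I would therefore use the substitution $u=(a,y)\mapsto(-\bar a,y)$ in the contour integral, with the contour parameter reversed. Under this substitution, a time-ordered difference $\mathcal T(v_j-u_l)$ on $\gamma$ with time part $\tau$ becomes, on $\gamma^\sharp$, the oppositely ordered difference with time part $-\bar\tau$ in the reversed orientation. The reversal of orientation swaps the ordering, so the later-minus-earlier difference on $\gamma^\sharp$ has time part $\bar\tau$, and its spatial part is negated, giving $-x$.

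Here an evenness relation is needed, $g(\tau,-x)=g(\tau,x)$, which is clear from \eqref{eq:greencont} because the two logarithms are simply swapped. Together with the conjugation identity, each factor $g$ in the reversed correlator is then the complex conjugate of the corresponding factor in the original one. The same holds for the free part: $\mathcal T(v_j-v_{j'})$ for the reversed labels becomes $v_j-v_{j'}$ with $j<j'$, which is the negative of $v_{j'}-v_j$, and conjugation together with the evenness relations match it with the original factor.

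The measure transforms as $\gamma'(s)\,\dd s\mapsto\overline{\gamma'(s)}\,\dd s$ after accounting for both the sign of $-\bar a$ and the orientation reversal; this yields a factor of $(+1)$ per variable. Combined with the real exponents, this gives the conjugate.

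The main obstacle is the careful bookkeeping of the time-ordering and orientation signs on the non-injective contour, including the marked external visits and the tie conventions on null sets. I would handle it sector by sector, using the same ordered-sector decomposition as in the proof of Theorem~\ref{thm:main2}. In each sector, I would verify that the map sends the sector for $\gamma$ bijectively onto the reversed sector for $\gamma^\sharp$, and that it pairs each $g$-argument with its conjugate via the relations $\overline{g(\i t,x)}=g(-\i t,x)=g(-\i t,-x)$. These relations are what make each factor match.
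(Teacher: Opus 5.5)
Your argument is correct, but it follows a different route from the paper.

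The paper never touches the Lorentzian contour integral in this proof. It works at the level of analytic continuation. It sets $F(t)=C_{\alpha_1,\ldots,\alpha_k}(t_1,x_1;\ldots;t_k,x_k)$ and $G(t)=\overline{C_{\overline{\alpha_k},\ldots,\overline{\alpha_1}}(-\overline{t_k},x_k;\ldots;-\overline{t_1},x_1)}$ on $\Omega_k$, checks that $G$ is analytic, and shows that $G$ has the same Euclidean boundary values as $F$. For that last step it uses three facts: the Euclidean correlator is real, it is permutation symmetric, and it is invariant under the reflection $a_l\mapsto -a_l$ (via $g(t,x)=g(-t,x)$ on real arguments). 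It then invokes the uniqueness clause of Theorem~\ref{thm:main2} and takes boundary values at $(\i s_1,\ldots,\i s_k)$ and $(\i s_k,\ldots,\i s_1)$.

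You instead verify the same pointwise identity directly on the explicit formula \eqref{eq:clorentz}. You need only the Schwarz reflection $\overline{g(\tau,x)}=g(\overline{\tau},x)$, the evenness $g(\tau,-x)=g(\tau,x)$, and the absolute convergence supplied by Theorem~\ref{thm:main}.

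The two routes buy different things. The paper's version shows Hermiticity as inherited from Euclidean reflection symmetry, with no time-ordering bookkeeping, but it depends on the uniqueness and boundary-value parts of Theorem~\ref{thm:main2}. Yours is elementary and purely Lorentzian.

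The bookkeeping you flag as the main obstacle goes away with one choice of parametrization. Parametrize the reversed contour as $\gamma^\sharp(s):=-\overline{\gamma(-s)}$. Since $s\mapsto -s$ reverses every contour order at once, including the marked order of external visits, every later-minus-earlier difference $(\tau,x)$ on $\gamma$ becomes $(\overline{\tau},-x)$ on $\gamma^\sharp$. Also $(\gamma^\sharp)'(-s)=\overline{\gamma'(s)}$. So no sector-by-sector check is needed.
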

\begin{proof}
By linearity, it is enough to consider a monomial
\[
  A=\mo_{f_1,n_1}\cdots\mo_{f_k,n_k}.
\]
Set $\boldsymbol n=(n_1,\ldots,n_k)$, write
$\boldsymbol n^\#=(n_k,\ldots,n_1)$, set $\alpha_j:=bn_j$, and
$w:=-\sum_{j=1}^k n_j$. If $w<0$, then both $\omega(A)$ and
$\omega(A^*)$ vanish,
because reversing the labels leaves $w$ unchanged. Thus assume $w\ge0$. We
first claim that whenever $(u_1,\ldots,u_k)$ satisfies the non-light-cone
condition \eqref{eq:lightcond},
\begin{align}\label{eq:conjclaim}
  C_{b\boldsymbol n}^{\mathrm L}(u_1,\ldots,u_k)
  =
  \overline{C_{b\boldsymbol n^\#}^{\mathrm L}(u_k,\ldots,u_1)}.
\end{align}
To prove this, fix $x_1,\ldots,x_k$, and for  $(t_1,\ldots,t_k)\in \Omega_k$, define
\[
  F(t_1,\ldots,t_k)
  :=
  C_{\alpha_1,\ldots,\alpha_k}(t_1,x_1;\ldots;t_k,x_k),
\]
and
\[
  G(t_1,\ldots,t_k)
  :=
  \overline{C_{\bar\alpha_k,\ldots,\bar\alpha_1}(-\bar {t_k},x_k;\ldots; -\bar {t_1},x_1)}.
\]
If $(t_1,\ldots,t_k)\in\Omega_k$, then
$(-\bar t_k,\ldots,-\bar t_1)\in\Omega_k$ as well, and so $G$ is well-defined. Moreover,~if 
\[
\sum_{n_1,\ldots,n_k\in \Z_{\ge0}} a_{n_1,\ldots,n_k} (\tau_1-t_1^0)^{n_1}\cdots(\tau_k - t_k^0)^{n_k}
\]
is the power series expansion of the analytic function 
\[
(\tau_1,\ldots,\tau_k)\mapsto  C_{\bar\alpha_k,\ldots,\bar\alpha_1}(-\tau_k,x_k;\ldots; -\tau_1,x_1) 
\]
in a neighborhood of a point $(t_1^0,\ldots,t_k^0)\in \Omega_k$, then for any $(t_1,\ldots,t_k)$ in the corresponding neighborhood of $(\bar{t_1^0},\ldots,\bar{t_k^0})$, 
\begin{align*}
G(t_1,\ldots,t_k)
  &=
  \overline{C_{\bar\alpha_k,\ldots,\bar\alpha_1}(-\bar {t_k},x_k;\ldots; -\bar {t_1},x_1)}\\
  &=\overline{\sum_{n_1,\ldots,n_k\in \Z_{\ge0}} a_{n_1,\ldots,n_k} (\bar{t_1}-t_1^0)^{n_1}\cdots(\bar{t_k} - t_k^0)^{n_k}}\\
  &= \sum_{n_1,\ldots,n_k\in \Z_{\ge0}} \bar{a_{n_1,\ldots,n_k}} (t_1-\bar{t_1^0})^{n_1}\cdots(t_k - \bar{t_k^0})^{n_k}.
\end{align*}
This shows that $G$ is analytic on
$\Omega_k$. Next, equation~\eqref{eq:ceuc}
and Theorem~\ref{thm:euc} show that complex conjugation of the Euclidean correlator  simply replaces each
charge by its conjugate, because all coefficients and the kernel $g$ are real
on real arguments. Therefore, for real numbers $r_1<\cdots<r_k$, 
\[
  G(r_1,\ldots,r_k)
  =
  C_{\alpha_k,\ldots,\alpha_1}(-r_k,x_k;\ldots; -r_1,x_1).
\]
Because the Euclidean correlator is invariant under simultaneous permutation of
the labeled triples $(\alpha_j,t_j,x_j)$, we get
\[
  G(r_1,\ldots,r_k)
  =
  C_{\alpha_1,\ldots,\alpha_k}(-r_1,x_1;\ldots; -r_k,x_k).
\]
Changing variables $a_l\mapsto -a_l$ in equation~\eqref{eq:ceuc} and using
the real-axis symmetry $g(t,x)=g(-t,x)$, we get
\[
  C_{\alpha_1,\ldots,\alpha_k}(-r_1,x_1;\ldots; -r_k,x_k)
  =
  C_{\alpha_1,\ldots,\alpha_k}(r_1,x_1;\ldots; r_k,x_k)
  =
  F(r_1,\ldots,r_k).
\]
Thus $F$ and $G$ are analytic on $\Omega_k$ and coincide there on every
strictly increasing real $k$-tuple. Therefore, by the uniqueness statement in
Theorem~\ref{thm:main2}, they agree on all of $\Omega_k$. In other words, 
\[
  F(t_1,\ldots,t_k)
  =
  \overline{C_{\bar\alpha_k,\ldots,\bar\alpha_1}(-\bar {t_k},x_k;\ldots; -\bar {t_1},x_1)}.
\]
Now let $(s_1,\ldots,s_k)\in\R^k$ satisfy condition \eqref{eq:lightcond} for our chosen $x_1,\ldots,x_k$, and choose
$t^{(m)}\in\Omega_k$ with $t^{(m)}\to(\i s_1,\ldots,\i s_k)$. Then
$(-\overline{t_k^{(m)}},\ldots,-\overline{t_1^{(m)}})\in\Omega_k$ and
converges to $(\i s_k,\ldots,\i s_1)$. Applying the Lorentzian boundary value
statement from Theorem~\ref{thm:main2} and letting $m\to\infty$ proves the
claim \eqref{eq:conjclaim}. Since the light cone boundary set has measure zero, this identity holds
almost everywhere on $M^k$. Since
\[
  A^*=\mo_{\bar f_k,n_k}\cdots\mo_{\bar f_1,n_1},
\]
we have
\begin{align*}
\omega(A^*)
&=\int_{M^k}\prod_{j=1}^k\overline{f_{k+1-j}(u_j)}\,
C_{b\boldsymbol n^\#}^{\mathrm L}(u_1,\ldots,u_k)\,\dd u_1\cdots\dd u_k\\
&=\overline{\int_{M^k}\prod_{j=1}^k f_{k+1-j}(u_j)\,
\overline{C_{b\boldsymbol n^\#}^{\mathrm L}(u_1,\ldots,u_k)}\,\dd u_1\cdots\dd u_k}\\
&=\overline{\int_{M^k}\prod_{j=1}^k f_{k+1-j}(u_j)\,
C_{b\boldsymbol n}^{\mathrm L}(u_k,\ldots,u_1)\,\dd u_1\cdots\dd u_k}\\
&=\overline{\int_{M^k}\prod_{j=1}^k f_j(u_j)\,
C_{b\boldsymbol n}^{\mathrm L}(u_1,\ldots,u_k)\,\dd u_1\cdots\dd u_k}
=\overline{\omega(A)}.
\end{align*}
This completes the proof of the lemma.
\end{proof}

We are now ready to complete the proof of Theorem \ref{thm:ma-hermitian}. 
\begin{proof}[Proof of Theorem~\ref{thm:ma-hermitian}]
By Lemma \ref{lem:invo} and Lemma \ref{lem:herm}, 
\[
\inn{B,A}_\ma = \omega(B^*A) = \omega((A^*B)^*) = \bar{\omega(A^*B)} = \bar{\inn{A,B}_\ma}.
\]
This proves the claim.
\end{proof}

\subsection{Construction of the state space}\label{sec:innlmmproof}
In this section, we prove Theorem~\ref{thm:innlmm}, Proposition~\ref{prop:mh0-indefinite},  Theorem~\ref{thm:obsrep}, and Lemma \ref{lem:5x}. 

\begin{proof}[Proof of Theorem \ref{thm:innlmm}] 
By definition,
\[
  \mh=\overline{\mh_0},
\]
so $\mh_0$ is dense in $\mh$. For each $A\in\ma$, let $p_A:\mh\to\C$
denote the coordinate projection
\[
  p_A((z_B)_{B\in\ma}) := z_A.
\]
By the definition of the product topology, each $p_A$ is continuous. We first prove that the pairing is well-defined. Suppose $\Phi(A_1)=\Phi(A_2)$. Then, by definition of $\Phi$,
\[
  \inn{C^*A_1}=\inn{C^*A_2} \qquad\text{for all } C\in\ma.
\]
Hence, for every $C\in\ma$,
\begin{align*}
  p_{A_1}(\Phi(C))
  &=\inn{A_1^*C} = \bar{\inn{C^*A_1}} = \bar{\inn{C^*A_2}} \\
  &=\inn{A_2^*C} = p_{A_2}(\Phi(C)),
\end{align*}
where we used Theorem~\ref{thm:ma-hermitian}. Therefore
$p_{A_1}-p_{A_2}$ vanishes on the dense subset $\Phi(\ma)\subseteq\mh$.
Since $p_{A_1}-p_{A_2}$ is continuous, it vanishes on all of
$\mh=\overline{\Phi(\ma)}$. Therefore $z_{A_1}=z_{A_2}$ for all
$z\in\mh$, proving that
\[
  (\Phi(A),z):=z_A
\]
is well-defined.

Next, fix $z\in\mh$. Choose a net $(C_\nu)$ in $\ma$ such that $\Phi(C_\nu)\to z$ in the product topology; see~\secref{sec:app-tvs} for background on nets and their relation to closure. For any $A,B\in\ma$ and $\lambda,\mu\in\C$,
\begin{align*}
z_{\lambda A+\mu B}
&= \lim_\nu \Phi(C_\nu)_{\lambda A+\mu B}
= \lim_\nu \inn{(\lambda A+\mu B)^* C_\nu} \\
&= \bar{\lambda}\lim_\nu \inn{A^*C_\nu}+\bar{\mu}\lim_\nu \inn{B^*C_\nu}
= \bar{\lambda} z_A+\bar{\mu} z_B.
\end{align*}
So $A\mapsto z_A$ is conjugate linear on $\ma$. Also, $\Phi$ is  linear, since
\[
  \Phi(\lambda A+\mu B)=\lambda\Phi(A)+\mu\Phi(B).
\]
Consequently, for $u_j=\Phi(A_j)\in\mh_0$,
\[
  (\lambda u_1+\mu u_2,z)
  =(\Phi(\lambda A_1+\mu A_2),z)
  =z_{\lambda A_1+\mu A_2}
  =\bar\lambda z_{A_1}+\bar \mu z_{A_2}
  =\bar\lambda(u_1,z)+\bar \mu(u_2,z).
\]
Hence the pairing is conjugate linear in the first argument.

For linearity in the second argument, fix $u=\Phi(A)\in\mh_0$ and
take any $z_1,z_2\in\mh$ and $\lambda,\mu\in\C$. Then
\begin{align*}
(u,\lambda z_1+\mu z_2)
&=(\lambda z_1+\mu z_2)_A\\
&=\lambda (z_1)_A+\mu (z_2)_A=\lambda(u,z_1)+\mu(u,z_2).
\end{align*}
So $z\mapsto (u,z)$ is linear, and the pairing is sesquilinear.

Now let $u=\Phi(A)$ and $v=\Phi(B)$ be in $\mh_0$. Then
\[
  (u,v)
  =\Phi(B)_A
  =\omega(A^*B)
  =\langle A,B\rangle_{\ma}.
\]
By Theorem~\ref{thm:ma-hermitian}, this shows that the restriction of the pairing to $\mh_0\times\mh_0$ is Hermitian.

Next, suppose that $(u,z)=0$ for all $u\in\mh_0$. Taking $u=\Phi(A)$ gives
\[
  0=(\Phi(A),z)=z_A \qquad\text{for all } A\in\ma.
\]
Hence every coordinate of $z=(z_A)_{A\in\ma}$ is zero, so $z=0$ in $\C^{\ma}$ (and therefore in $\mh$). The converse implication is immediate.
In particular, if $v\in\mh_0$ and $(u,v)=0$ for all $u\in\mh_0$, then $v=0$.
Thus the Hermitian form on $\mh_0$ is nondegenerate. 
\end{proof}

\begin{proof}[Proof of Proposition~\ref{prop:mh0-indefinite}]
Let
\[
  u_+:=\Omega=\Phi(1).
\]
Then, by Theorem~\ref{thm:innlmm},
\[
  (u_+,u_+)=(\Phi(1),\Phi(1))=\omega(1)=1>0.
\]
To construct a negative vector, fix a nonnegative function
$\rho\in C_c^\infty((-1,1)^2,\R)$ with
\[
  \int_{\R\times[-1,1]} \rho(t,x)\,\dd t\,\dd x=1.
\]
For $0<\ep<\frac12$, define
\[
  f_\ep(t,x):=\ep^{-2}\rho\biggl(\frac{t}{\ep},\frac{x}{\ep}\biggr),
\]
viewed as a smooth compactly supported function on the cylinder $M$. Then
$f_\ep\ge0$, $f_\ep\in C_c^\infty(M,\R)$,
\[
  \int_M f_\ep(u)\,\dd u=1,
\]
and the support of $f_\ep$ shrinks to $(0,0)$ as $\ep\downarrow0$.
Set
\[
  A_\ep:=\mo_{f_\ep,-b}\in\ma,
  \qquad
  u_\ep:=\Phi(A_\ep)-\omega(A_\ep)\Omega\in\mh_0.
\]
We first compute the one-point function. Taking $k=1$ and $\alpha_1=-b$ in
Theorem~\ref{thm:euc} (which is allowed since $-b > -\frac{1}{2b}$, because $b <\frac{1}{\sqrt{2}}$), we obtain a Euclidean one-point correlator that is
independent of the insertion point and equal to
\[
  c_b
  :=
  -\mu\int_M e^{4b^2 g(u)}\,\dd u.
\]
The integral is finite by Theorem~\ref{thm:euc}, and the integrand is strictly positive. Thus,  $c_b<0$. Since a constant analytic continuation remains the same constant,
Theorem~\ref{thm:main} gives the same value for the Lorentzian one-point
function. Therefore
\[
  \omega(A_\ep)=c_b\int_M f_\ep(u)\,\dd u=c_b<0
\]
for every $\ep$.

Next, we prove the integrated two-point estimate
\begin{align}\label{eq:5x}
\lim_{\ep \to 0} \omega(A_\epsilon^*A_\epsilon)= 0.
\end{align}
Note that $A_\ep^* = A_\ep$, and thus,
\begin{align*}
\omega(A_\ep^* A_\ep) &= \int_{M^2} f_\ep(u)f_\ep(v) C_{-b,-b}^{\mathrm{L}}(u,v) \, \dd u\, \dd v.
\end{align*}
Since the right side is a weighted average of $C_{-b,-b}^{\mathrm{L}}(u,v)$ in a neighborhood of radius $O(\ep)$ around zero in $M^2$, it suffices to show that  
\begin{align}\label{eq:5x0}
\lim_{u,v\to 0} C_{-b,-b}^{\mathrm{L}}(u,v) = 0.
\end{align}
Now, with $u=(t,x)$ and $v= (s,y)$, we have
\begin{align*}
C_{-b,-b}^{\mathrm{L}}(u,v) &= \frac{\mu^2}{2} e^{-4b^2 g(\i (t-s), x-y)}I(u,v),
\end{align*}
where $I(u,v)$ is the integral appearing in the Lorentzian correlator formula \eqref{eq:clorentz}. By equation \eqref{eq:comp1}, 
\begin{align*}
|e^{-4b^2 g(\i (t-s), x-y)}| &\le |1-e^{-\pi \i (t-s+x-y)}|^{2b^2} |1-e^{-\pi \i (t-s-x+y)}|^{2b^2}. 
\end{align*}
Clearly, the upper bound tends to zero as $u,v\to 0$. Thus, to prove the claim \eqref{eq:5x0}, it suffices to show that $I(u,v)$ is bounded in a neighborhood of zero.

Let $\gamma$ be the contour for the Lorentzian correlator in this problem, so that 
\begin{align*}
  I(u,v) &=
\int_{\gamma}
\exp\biggl(
4b^2\sum_{l=1}^2 g(\mathcal T(u-z_l))
\notag\\
&\qquad\qquad
+4b^2\sum_{\ell=1}^2 g(\mathcal T(v-z_l))
-4b^2 g(\mathcal T(z_1-z_2))
\biggr)
\,\dd z_1\,\dd z_2.
\end{align*}
Applying the Cauchy--Schwarz inequality, we get 
\[
  |I(u,v)|\le \sqrt{J(u)J(v)},
\]
where
\begin{align*}
J(u) &:= \int_{\gamma}
\exp\biggl(
8b^2\sum_{l=1}^2 \Re(g(\mathcal T(u-z_l)))-4b^2 \Re(g(\mathcal T(z_1-z_2)))
\biggr)
\,|\dd z_1|\,|\dd z_2|, 
\end{align*}
and $J(v)$ is defined similarly, replacing $u$ by $v$. We will now show that $J(u)$ is uniformly bounded when $u$ and $v$ lie in a fixed compact neighborhood of zero. (Note that $J(u)$ is not just a function of $u$; it also depends on $v$ through the definition of the contour $\gamma$.) The uniform boundedness of $J(v)$ will follow similarly. 

Suppose that $t\le s$. Then the  contour $\gamma$ can be written as $\Gamma_-\cup \Gamma_0\cup \Gamma_+$, where 
\[
\Gamma_-:=\{\i t-r:r>0\},\quad
\Gamma_0:=\{\i \tau:t<\tau<s\},\quad
\Gamma_+:=\{\i s+r:r>0\}.
\]
(If $t>s$, we just have to redefine $\Gamma_0 := \{\i \tau:t>\tau>s\}$. The proof will then proceed exactly the same as follows.) 
The integral defining $J(u)$ splits up into parts depending on which of the above segments contain $z_1$ and which contains $z_2$. For example, if $J_{++}(u)$ denotes the part where $z_1$ and $z_2$ are both in $\Gamma_+$, then 
\begin{align*}
J_{++}(u) &= 2\int_{0< r_1< r_2<\infty}\int_{a_1, a_2\in [-1,1]}\exp\biggl(
8b^2\sum_{l=1}^2 \Re(g(\i (s - t) + r_l, a_l-x))\\ 
&\qquad -4b^2 \Re(g(r_2-r_1, a_2 - a_1))
\biggr)
\,\dd a_1 \,\dd a_2\, \dd r_2\, \dd r_1. 
\end{align*}
From the formula \eqref{eq:gdefine0} for $g$, it follows that the integrand is bounded above by a constant times 
\begin{align*}
e^{-6\pi b^2r_1-2\pi b^2r_2}\prod_{l=1}^2 \prod_{\sigma\in \{-1,1\}}|1-e^{-\pi (\i (s - t) + r_l+ \i \sigma(a_l-x))}|^{-4b^2}.
\end{align*}
Now, if $r,\theta \in \R$, then
\begin{align}\label{eq:rtheta}
  |1-re^{\i \theta}|^2 = 1-2r\cos \theta + r^2 \ge 1-2r+r^2=(1-r)^2.
\end{align}
This shows that the quantity from the previous display is bounded above by
\begin{align*}
e^{-6\pi b^2r_1-2\pi b^2r_2}\prod_{l=1}^2(1-e^{-\pi r_l})^{-8b^2}.
\end{align*}
Note that this bound has no dependence on $u$ or $v$. Moreover, since $8b^2 < 1$, its integral over $0<r_1<r_2<\infty$ is finite. Thus, $J_{++}(u)$ is bounded above by a quantity that does not depend on $u,v$.

Next, let $J_{0+}(u)$ denote the part where $z_1\in \Gamma_0$ and $z_2\in \Gamma_+$. Then
\begin{align*}
J_{0+}(u)
&= 2\int_t^s\int_0^\infty\int_{a_1,a_2\in[-1,1]}
\exp\biggl(
8b^2\Re(g(\i (\tau-t),a_1-x))
\notag\\
&\qquad\qquad\qquad
+8b^2\Re(g(\i (s-t)+r,a_2-x))\\ 
&\qquad \qquad \qquad -4b^2\Re(g(r+\i(s-\tau),a_2-a_1))
\biggr)
\,\dd a_1\,\dd a_2\,\dd r\,\dd \tau.
\end{align*}
By equations \eqref{eq:gdefine0}, \eqref{eq:gimag}, and \eqref{eq:rtheta}, the integrand is bounded above by a constant times
\begin{align*}
&e^{-2\pi b^2r}(1-e^{-\pi r})^{-8b^2}\prod_{\sigma\in\{-1,1\}}|1-e^{-\pi\i((\tau-t)+\sigma(a_1-x))}|^{-4b^2}.
\end{align*}
The integral of the above with respect to $(r,\tau,a_1)$ factorizes into the product of an integral over $r$ and an integral over $(\tau, a_1)$. The integral over $r$ is finite and has no dependence on $(u,v)$. For the integral over $(\tau, a_1)$, a simple change of variable gives
\begin{align*}
&\int_{t}^s \int_{-1}^1 \prod_{\sigma\in\{-1,1\}}|1-e^{-\pi\i((\tau-t)+\sigma(a_1-x))}|^{-4b^2} \, \dd a_1 \, \dd \tau\\ 
&= \int_0^{s-t} \int_{-1}^1 \prod_{\sigma\in\{-1,1\}}|1-e^{-\pi\i(\xi+\sigma c)}|^{-4b^2} \, \dd c \, \dd \xi\\ 
&\le \int_0^{1} \int_{-1}^1 \prod_{\sigma\in\{-1,1\}}|1-e^{-\pi\i(\xi+\sigma c)}|^{-4b^2} \, \dd c \, \dd \xi,
\end{align*}
assuming that $s-t\le 1$. Again, since $8b^2 <1$, the above integral is finite; and it has no dependence on $(u,v)$. This shows that $J_{0+}(u)$ is uniformly bounded in the region where $0\le s-t\le 1$.

Next, let $J_{00}(u)$ denote the part where $z_1,z_2\in \Gamma_0$. Then
\begin{align*}
J_{00}(u)
&= 2\int_{t<\tau_1<\tau_2<s}\int_{a_1,a_2\in[-1,1]}
\exp\biggl(
8b^2\sum_{l=1}^2 \Re(g(\i (\tau_l-t),a_l-x))
\notag\\
&\qquad\qquad\qquad
-4b^2\Re(g(\i(\tau_2-\tau_1),a_2-a_1))
\biggr)
\,\dd a_1\,\dd a_2\,\dd \tau_2\,\dd \tau_1.
\end{align*}
By equation \eqref{eq:gimag}, the interaction factor is bounded above by a constant, because each term
\[
  |1-e^{-\pi\i((\tau_2-\tau_1)\pm(a_2-a_1))}|^{2b^2}
\]
is at most $2^{2b^2}$. Therefore the integrand is bounded by a constant times 
\[
  \prod_{l=1}^2\prod_{\sigma\in\{-1,1\}} |1 - e^{-\pi \i((\tau_l-t)+\sigma(a_l-x))}|^{-4b^2}.
\]
Dropping the ordering constraint $\tau_1<\tau_2$ and using the same change of variables as above for each pair $(\tau_l,a_l)$, we get the uniform boundedness of $J_{00}(u)$.

The remaining pieces are estimated similarly. The term $J_{--}(u)$ is the same as $J_{++}(u)$ by symmetry, $J_{-0}(u)$ is treated exactly like $J_{0+}(u)$, and $J_{-+}(u)$ is even easier because it has exponential decay in both tail variables. Therefore every piece in the decomposition of $J(u)$ is uniformly bounded on a compact neighborhood of zero, and hence so is $J(u)$ itself. The same argument applies to $J(v)$. Consequently $I(u,v)$ is bounded in a neighborhood of zero. Combined with the vanishing of the prefactor $e^{-4b^2g(\i(t-s),x-y)}$, this proves \eqref{eq:5x0}, and hence also \eqref{eq:5x}.

Now note that since \(A_\epsilon^*=A_\epsilon\), Theorem \ref{thm:innlmm} gives
\[
(u_\epsilon,u_\epsilon)
=
\omega((A_\epsilon-\omega(A_\epsilon)1)^*
(A_\epsilon-\omega(A_\epsilon)1)).
\]
Using \(\omega(A_\epsilon)=c_b\in\mathbb R\), this becomes
\[
(u_\epsilon,u_\epsilon)
=
\omega(A_\epsilon^*A_\epsilon)-|c_b|^2 .
\]
By the estimate just proved, \(\omega(A_\epsilon^*A_\epsilon)\to 0\). Therefore, for all
sufficiently small \(\epsilon\),
\[
(u_\epsilon,u_\epsilon)<0.
\]
Taking \(u_-:=u_\epsilon\) for such an \(\epsilon\) completes the proof.
\end{proof}

\begin{proof}[Proof of Theorem~\ref{thm:obsrep}]
It is easy to see that $\pi(C)$ is linear and continuous, as a map from $\mh_0$ into $\C^\ma$ (since we have not yet shown that it maps into $\mh$). If $A,B,C\in\ma$, then
\[
  (\pi(C)\Phi(A))_B
  =\Phi(A)_{C^*B}
  =\inn{(C^*B)^*A}
  =\inn{B^*CA}
  =\Phi(CA)_B.
\]
Thus, 
\[
  \pi(C)\Phi(A)=\Phi(CA)\in\mh_0.
\]
In particular, this shows that $\pi(C)(\mh_0)\subseteq\mh_0$. Continuity implies
\[
  \pi(C)(\mh)=\pi(C)(\overline{\mh_0})\subseteq\overline{\mh_0}=\mh.
\]
Thus $\pi(C)$ is a well-defined continuous linear map
on $\mh$, and item~1 follows. Next, for $C,D\in\ma$ and $z\in\mh$, we have
\[
  (\pi(C+D)z)_A=z_{(C+D)^*A}=z_{C^*A}+z_{D^*A}=((\pi(C)+\pi(D))z)_A,
\]
and similarly $\pi(\lambda C)=\lambda\pi(C)$ for every $\lambda\in\C$. Also,
\[
  (\pi(C)\pi(D)z)_A=(\pi(D)z)_{C^*A}=z_{D^*C^*A}=z_{(CD)^*A}=(\pi(CD)z)_A,
\]
while clearly $(\pi(1)z)_A=z_A$. Therefore $C\mapsto\pi(C)$ is a unital
algebra representation.

Now let $u=\Phi(A)\in\mh_0$ and $z\in\mh$. Then by item 1, 
\[
  (\pi(C)u,z)
  =(\Phi(CA),z)
  =z_{CA}
  =(\pi(C^*)z)_A
  =(u,\pi(C^*)z).
\]
This proves item~2. Next, set $\Omega:=\Phi(1)$. Then for every $C\in\ma$, item 1 gives 
\[
  \pi(C)\Omega=\Phi(C).
\]
Hence
\[
  \operatorname{span}\{\pi(C)\Omega:C\in\ma\}=\Phi(\ma)=\mh_0.
\]
Thus $\Omega$ is cyclic, proving item~3. Finally,
\[
  (\Omega,\pi(A)\Omega)
  =(\Phi(1),\Phi(A))
  =\langle 1,A\rangle_{\ma}
  =\omega(A), 
\]
which proves item~4.
\end{proof}

\begin{proof}[Proof of Lemma \ref{lem:5x}]
Uniqueness of the adjoint follows from Theorem \ref{thm:innlmm}: if \(S_1,S_2\) both satisfy
\((Tu,z)=(u,S_i z)\) for all \(u\in \mh_0,z\in \mh\), then
\((u,(S_1-S_2)z)=0\) for all \(u\in \mh_0\), hence \((S_1-S_2)z=0\) for every \(z\in \mh\).

It is immediate from the definition that \((T^\dagger)^\dagger=T\), and that the identity
operator belongs to \(\operatorname{End}_{\dagger}(\mh)\). If \(T,R\in
\operatorname{End}_{\dagger}(\mh)\), then \(TR\) preserves \(\mh_0\), and for
\(u\in \mh_0,z\in \mh\),
\[
(TRu,z)=(Ru,T^\dagger z)=(u,R^\dagger T^\dagger z).
\]
Similarly,
\[
(R^\dagger T^\dagger u,z)=(u,TRz).
\]
Thus \(TR\in\operatorname{End}_{\dagger}(\mh)\) and
\[
(TR)^\dagger=R^\dagger T^\dagger .
\]
Linearity is checked in the same way. Hence \(\operatorname{End}_{\dagger}(\mh)\) is a
unital \(*\)-algebra, with \(*\) denoted by \(\dagger\).

The definition of \(\mathcal T_\dagger\) makes \(T\mapsto T^\dagger\) continuous.
Composition with a fixed continuous linear map is continuous for \(\mathcal T_b\): on the
right because continuous linear maps send bounded sets to bounded sets, and on the left
because the pullback of a continuous seminorm by a continuous linear map is again a
continuous seminorm. Applying this both to operators and to their adjoints proves separate
continuity of multiplication for \(\mathcal T_\dagger\).

Finally, Theorem \ref{thm:obsrep} gives
\[
(\pi(C)u,z)=(u,\pi(C^*)z)
\]
for all \(u\in \mh_0,z\in \mh\). Applying the same identity with \(C^*\) in place of \(C\)
gives
\[
(\pi(C^*)u,z)=(u,\pi(C)z).
\]
Also \(\pi(C)\) and \(\pi(C^*)\) preserve \(\mh_0\). Hence
\(\pi(C)\in\operatorname{End}_{\dagger}(\mh)\) and
\(\pi(C)^\dagger=\pi(C^*)\).
\end{proof}

\subsection{Cylinder translations and represented covariance}\label{sec:timeev-details}
We now prove Proposition~\ref{prop:aqft-net} and
Theorem~\ref{thm:timeev-main-properties}.

\begin{proof}[Proof of Proposition~\ref{prop:aqft-net}]
Cylinder translations preserve smooth compact support, and
\[
  \eta_{a,\theta}\bar f=\overline{\eta_{a,\theta}f},
  \qquad f\in\mathcal F.
\]
Since $\eta_{a,\theta}$ was extended multiplicatively from the generators,
each $\eta_{a,\theta}$ is a $*$-automorphism of $\ma$, and the pullback
definition immediately gives
\[
  \eta_{a,\theta}\eta_{a',\theta'}=\eta_{a+a',\,\theta+\theta'},
  \qquad
  \eta_{0,0}=\mathrm{Id}_{\ma},
\]
with the second coordinate understood modulo $2$.

To prove invariance of the point correlator, fix
$\boldsymbol n=(n_1,\ldots,n_k)\in\mathcal I_b^k$. If
$-\sum_{j=1}^k n_j<0$, then $C_{b\boldsymbol n}^{\mathrm L}\equiv0$ and
invariance is immediate. Otherwise, let $u_j=(t_j,x_j)$ and write
$v_j=(\i t_j,x_j)$. Let $\gamma$ be the contour used in the definition
\eqref{eq:clorentz} of $C_{b\boldsymbol n}^{\mathrm L}(u_1,\ldots,u_k)$, and let
\[
  \widetilde v_j:=(\i(t_j+a),x_j+\theta),
  \qquad
  \widetilde\gamma:=\gamma+\i a.
\]
Then $\widetilde\gamma$ is exactly the contour appearing in the definition of
$C_{b\boldsymbol n}^{\mathrm L}(T_{a,\theta}u_1,\ldots,T_{a,\theta}u_k)$. If
$\mathcal T_\gamma$ and $\mathcal T_{\widetilde\gamma}$ denote the corresponding
time-ordering operations, then shifting every screening variable by
$(\i a,\theta)$ preserves the order along the contour and leaves all
time-ordered differences unchanged:
\[
\begin{aligned}
  \mathcal T_{\widetilde\gamma}(\widetilde v_j-\widetilde v_{j'})
  &=\mathcal T_\gamma(v_j-v_{j'}),\\
  \mathcal T_{\widetilde\gamma}(\widetilde v_j-\widetilde u_l)
  &=\mathcal T_\gamma(v_j-u_l),\\
  \mathcal T_{\widetilde\gamma}(\widetilde u_l-\widetilde u_{l'})
  &=\mathcal T_\gamma(u_l-u_{l'}),
\end{aligned}
\]
where $u_l=(\gamma(s_l),b_l)$ and
$\widetilde u_l=(\gamma(s_l)+\i a,b_l+\theta)$. Since every $g$-factor in
equation \eqref{eq:clorentz} depends only on these differences, the integrand is
unchanged. Also, $\widetilde\gamma'(s)=\gamma'(s)$, and the spatial integral is
unchanged by translation modulo the $2$-periodicity in the second coordinate.
Therefore the change of variables $u_l\mapsto \widetilde u_l$ in
equation \eqref{eq:clorentz} gives
\[
C_{b\boldsymbol n}^{\mathrm{L}}(T_{a,\theta}u_1,\ldots,T_{a,\theta}u_k)
=
C_{b\boldsymbol n}^{\mathrm{L}}(u_1,\ldots,u_k)
\]
for every admissible integer charge tuple \(\boldsymbol n\) and every configuration in the
non-light-cone domain.

If $C=\mo_{f_1,n_1}\cdots\mo_{f_k,n_k}$, then changing variables
$u_j\mapsto T_{a,\theta}u_j$ in the defining integral gives
\[
  \omega(\eta_{a,\theta}C)=\omega(C).
\]
By linearity this holds for all $C\in\ma$. If $O_1\subset O_2$, then every
generator of $\ma(O_1)$ is also a generator of $\ma(O_2)$, so
$\ma(O_1)\subset \ma(O_2)$. Moreover, if $\supp(f)\subset O$, then
$\supp(\eta_{a,\theta}f)\subset T_{a,\theta}O$, hence
$\eta_{a,\theta}\ma(O)\subset \ma(T_{a,\theta}O)$. Applying this inclusion to
$(-a,-\theta)$ and then composing with $\eta_{a,\theta}$ gives the reverse
inclusion, so equality holds. Finally, $\omega_O(1)=1$, and for every
$A\in\ma(O)$,
\[
  \omega_O(A^*)=\omega(A^*)=\overline{\omega(A)}=\overline{\omega_O(A)},
\]
where the middle identity follows from Theorem~\ref{thm:ma-hermitian} with
$B=1$. Thus $\omega_O$ is a normalized Hermitian functional on $\ma(O)$, and
Proposition~\ref{prop:aqft-net} follows.
\end{proof}

\begin{proof}[Proof of Theorem~\ref{thm:timeev-main-properties}]
A priori, it is easy to see that $U_{\mathrm{cyl}}(a,\theta)$ is a linear and
continuous map from $\mh$ into $\C^\ma$, although we do not know that it maps into $\mh$. For
$A,B\in\ma$,
\begin{align*}
  (U_{\mathrm{cyl}}(a,\theta)\Phi(B))_A
  &=\Phi(B)_{\eta_{-a,-\theta}A}
   =\omega((\eta_{-a,-\theta}A)^*B) \\
  &=\omega(\eta_{-a,-\theta}(A^*\eta_{a,\theta}B)))
   =\omega(A^*\eta_{a,\theta}B)
   =\Phi(\eta_{a,\theta}B)_A,
\end{align*}
where we used Proposition~\ref{prop:aqft-net}. Hence
$U_{\mathrm{cyl}}(a,\theta)(\mh_0)\subseteq\mh_0$, and therefore
\[
  U_{\mathrm{cyl}}(a,\theta)(\mh)
  =U_{\mathrm{cyl}}(a,\theta)(\overline{\mh_0})
  \subseteq\overline{\mh_0}=\mh.
\]
This proves that $U_{\mathrm{cyl}}(a,\theta)$ is a well-defined continuous linear map from $\mh$ into $\mh$, and the previous display proves item~1 of Theorem~\ref{thm:timeev-main-properties}.

Next, for $z\in\mh$ and $A\in\ma$,
\begin{align*}
  (U_{\mathrm{cyl}}(a,\theta)U_{\mathrm{cyl}}(a',\theta')z)_A
  &=(U_{\mathrm{cyl}}(a',\theta')z)_{\eta_{-a,-\theta}A}
   =z_{\eta_{-a',-\theta'}\eta_{-a,-\theta}A} \\
  &=z_{\eta_{-(a+a'),-(\theta+\theta')}A}
   =(U_{\mathrm{cyl}}(a+a',\theta+\theta')z)_A.
\end{align*}
Hence $(U_{\mathrm{cyl}}(a,\theta))_{(a,\theta)}$ is an action of
$\R\times\mathbb S^1$ on $\mh$, and
\[
  U_{\mathrm{cyl}}(a,\theta)^{-1}=U_{\mathrm{cyl}}(-a,-\theta).
\]
Since both maps are continuous, each $U_{\mathrm{cyl}}(a,\theta)$ is a
homeomorphism.

Now let $u=\Phi(A)\in\mh_0$ and $z\in\mh$. Then by item 1, 
\[
  (U_{\mathrm{cyl}}(a,\theta)u,z)
  =(\Phi(\eta_{a,\theta}A),z)
  =z_{\eta_{a,\theta}A}
  =(U_{\mathrm{cyl}}(-a,-\theta)z)_A
  =(u,U_{\mathrm{cyl}}(-a,-\theta)z).
\]
Applying this with $z=U_{\mathrm{cyl}}(a,\theta)v$ and
using the inverse relation gives
\[
  (U_{\mathrm{cyl}}(a,\theta)u,U_{\mathrm{cyl}}(a,\theta)v)=(u,v),
  \qquad u,v\in\mh_0.
\]
This proves item~2. Since $\Omega=\Phi(1)$, item~1 gives
\[
  U_{\mathrm{cyl}}(a,\theta)\Omega=\Phi(\eta_{a,\theta}1)=\Phi(1)=\Omega.
\]
Now let $C\in\ma$, $z\in\mh$, and $A\in\ma$. Then
\begin{align*}
  (U_{\mathrm{cyl}}(a,\theta)\pi(C)U_{\mathrm{cyl}}(a,\theta)^{-1}z)_A
  &=(\pi(C)U_{\mathrm{cyl}}(-a,-\theta)z)_{\eta_{-a,-\theta}A} \\
  &=(U_{\mathrm{cyl}}(-a,-\theta)z)_{C^*\eta_{-a,-\theta}A}
   =z_{\eta_{a,\theta}(C^*\eta_{-a,-\theta}A)} \\
  &=z_{\eta_{a,\theta}(C)^*A}
   =(\pi(\eta_{a,\theta}C)z)_A,
\end{align*}
because $\eta_{a,\theta}$ is a $*$-automorphism. This proves item~3. 

To prove item 4, first note that by item 2, \(U_{\mathrm{cyl}}(a,\theta)\) is adjointable
and
\[
U_{\mathrm{cyl}}(a,\theta)^\dagger=U_{\mathrm{cyl}}(a,\theta)^{-1}
=U_{\mathrm{cyl}}(-a,-\theta).
\]
Thus conjugation by \(U_{\mathrm{cyl}}(a,\theta)\),
\[
\Gamma_{a,\theta}(T)
:=
U_{\mathrm{cyl}}(a,\theta)T U_{\mathrm{cyl}}(a,\theta)^{-1},
\]
maps \(\operatorname{End}_{\dagger}(\mh)\) onto itself and satisfies
\[
\Gamma_{a,\theta}(T)^\dagger
=
\Gamma_{a,\theta}(T^\dagger).
\]
Since \(U_{\mathrm{cyl}}(a,\theta)\) and its inverse send bounded sets to bounded sets,
\(\Gamma_{a,\theta}\) is a homeomorphism of
\(\operatorname{End}_{\dagger}(\mh)\) for the graph-bounded topology
\(\mathcal T_\dagger\).

If $O_1\subset O_2$, then Proposition~\ref{prop:aqft-net} gives
$\ma(O_1)\subset\ma(O_2)$, hence
$\pi(\ma(O_1))\subset\pi(\ma(O_2))$. Therefore the closed unital $*$-subalgebra
generated by $\pi(\ma(O_1))$ is contained in the corresponding closed unital
$*$-subalgebra generated by $\pi(\ma(O_2))$, that is,
\[
  \mathfrak A_{\mathrm{loc}}(O_1)\subset\mathfrak A_{\mathrm{loc}}(O_2).
\]
Likewise, item~3 gives
\[
  \Gamma_{U_{\mathrm{cyl}}(a,\theta)}\bigl(\pi(\ma(O))\bigr)
  =\pi\bigl(\eta_{a,\theta}\ma(O)\bigr)
  =\pi\bigl(\ma(T_{a,\theta}O)\bigr),
\]
again by Proposition~\ref{prop:aqft-net}. Since
$\Gamma_{U_{\mathrm{cyl}}(a,\theta)}$ is a homeomorphism and an algebra
automorphism of $\mathrm{End}_\dagger(\mh)$, it carries the closed
unital $\dagger$-subalgebra generated by $\pi(\ma(O))$ onto the closed unital
$\dagger$-subalgebra generated by $\pi(\ma(T_{a,\theta}O))$. Hence
\[
  U_{\mathrm{cyl}}(a,\theta)\,\mathfrak A_{\mathrm{loc}}(O)
  \,U_{\mathrm{cyl}}(a,\theta)^{-1}
  =\mathfrak A_{\mathrm{loc}}(T_{a,\theta}O).
\]
Finally, by definition $U(t)=U_{\mathrm{cyl}}(t,0)$ and
$\tau_t=\eta_{-t,0}$, so item~1 gives
\[
  U(t)\Phi(A)=\Phi(\eta_{t,0}A)=\Phi(\tau_{-t}A),\qquad A\in\ma,
\]
and the one-parameter-group statement in item 5 is just the \(\theta=0\) special case of
the algebraic action already proved. No continuity in the group parameter is asserted here. This completes the proof of
Theorem~\ref{thm:timeev-main-properties}.
\end{proof}

\subsection{Locality for the operator algebra}\label{sec:locproof}
In this section we prove Theorem \ref{thm:microcausality} and Theorem \ref{thm:einstein-causality}. A key technical input is a
simple branch-symmetry identity for the function $g$, recorded in
Lemma~\ref{lem:spacelike-bridge-branch}. We use it first to prove Theorem \ref{thm:microcausality} using a combinatorial argument. 

We then prove Theorem~\ref{thm:einstein-causality}.  With the pointwise exchange statement in hand, we
reformulate it in the notation of Lorentzian correlators, derive exchange of
adjacent spacelike separated blocks in Lorentzian correlators, use that
symmetry to obtain an algebraic commutativity identity for smeared
observables, and finally pass from the algebraic represented cores to the
closed represented local algebras.

\subsubsection{Locality of the Green's function}
Recall from \secref{sec:einstein-causality} the function 
\(d_{1}\) on \(\R^2\) and the corresponding notion of
spacelike separation on the Lorentzian cylinder. The following lemma is a key ingredient in the proof of locality.

\begin{lmm}\label{lem:spacelike-bridge-branch}
Let \(t,x\in\mathbb R\) and suppose that $|t|<d_{1}(x)$. 
Then $g(\i t,x)=g(-\i t,-x)$.
\end{lmm}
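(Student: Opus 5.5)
We compute directly from the boundary formula \eqref{eq:gimag}, essentially repeating the computation of the neutral two-point exchange phase. Write $L(y):=\log(1-e^{-\pi\i y})$ with the principal branch, defined for $y\notin 2\Z$. Then \eqref{eq:gimag} reads
\[
  g(\i t,x)=-\frac{\pi\i t}{2}-\frac12 L(t+x)-\frac12 L(t-x),
  \qquad
  g(-\i t,-x)=\frac{\pi\i t}{2}-\frac12 L(-t-x)-\frac12 L(-t+x).
\]
Both sides are defined, because $|t|<d_1(x)$ forces $t\pm x\notin 2\Z$: indeed $d_1(x)\le |x+2m|$ for every $m$, so $t+x=2m'$ would give $d_1(x)\le|t|$.

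Subtracting the two expressions and using the identity $L(-y)-L(y)=\pi\i(r(y)-1)$ derived earlier (where $r(y)\in(0,2)$ is the representative of $y$ modulo $2$), we get
\[
  g(\i t,x)-g(-\i t,-x)=-\pi\i t+\frac{\pi\i}{2}\bigl(r(t+x)+r(t-x)-2\bigr).
\]
It therefore suffices to show that $r(t+x)+r(t-x)=2+2t$ whenever $|t|<d_1(x)$.

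The reduction to a fundamental domain is straightforward. Both sides of the claim, and the hypothesis, are invariant under $x\mapsto x+2m$, since $g(\i t,\cdot)$ is $2$-periodic by \eqref{eq:gimag}. So we may assume $x\in(-1,1]$, which gives $d_1(x)=|x|$ and hence $|t|<|x|$; in particular $x\neq0$. Replacing $x$ by $-x$ swaps $t+x$ and $t-x$, so we may also assume $0<x\le1$.

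Under these assumptions we have $-x<t<x$. This places $t+x$ in $(0,2x)\subset(0,2)$, so $r(t+x)=t+x$. It also places $t-x$ in $(-2x,0)\subset(-2,0)$, so $r(t-x)=t-x+2$. The sum is therefore $2t+2$, and the difference above vanishes, as required.

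The only potential pitfall is the branch bookkeeping. The identity $L(-y)-L(y)=\pi\i(r(y)-1)$ needs $1-e^{\pm\pi\i y}$ to lie in the open right half-plane, which holds for $y\notin2\Z$. The case $x=1$ gives $t+x\in(0,2)$ and $t-x\in(-2,0)$ strictly, so it needs no separate treatment. Beyond these checks, this is a routine case computation.
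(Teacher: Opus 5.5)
Your proof is correct and is essentially the paper's argument: reduce by $2$-periodicity and a reflection to $0<x\le 1$, apply the branch identity $L(-y)-L(y)=\pi\mathrm{i}(r(y)-1)$, and observe that $r(t+x)+r(t-x)=2t+2$. The only difference is cosmetic: you reflect via $x\mapsto -x$, using that both $g(\mathrm{i}t,x)$ and $g(-\mathrm{i}t,-x)$ are even in $x$, whereas the paper uses $(t,x)\mapsto(-t,-x)$; both reductions are valid, and your explicit check that $t\pm x\notin 2\mathbb{Z}$ is a useful addition.
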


\begin{proof}
Choose a lift of \(x\) with \(|t|<|x|\le 1\); the general case follows by the
\(2\)-periodicity in \(x\). It is enough to consider \(0<x\le 1\), since the
case \(x<0\) is obtained by replacing \((t,x)\) by \((-t,-x)\). Let
\[
  L(y):=\log(1-e^{-\pi \i y}),
\]
with the principal branch of $\log$. For \(y\notin2\mathbb Z\), let
\(r(y)\in(0,2)\) be the representative of \(y\) modulo \(2\). Write
\(y=r(y)+2m\) for some \(m\in\mathbb Z\).  Since $1-e^{\pi\i y}$ and $1-e^{-\pi\i y}$ are both in the open right half-plane and $\log a - \log b = \log (a/b)$ for $a,b$ in the open right half-plane,
\[
\begin{aligned}
  L(-y)-L(y)
  &= \log\biggl(\frac{1-e^{\pi \i y}}{1-e^{-\pi \i y}}\biggr) \\
  &= \log(-e^{\pi \i y}) = \log(-e^{\pi \i r(y)}) = \log(e^{\pi \i (r(y)-1)}).
\end{aligned}
\]
Since \(r(y)-1\in(-1,1)\), the number \(e^{\pi \i (r(y)-1)}\) has principal
argument \(\pi(r(y)-1)\in(-\pi,\pi)\), so the principal logarithm gives
\begin{equation}\label{eq:log-branch-difference}
  L(-y)-L(y)=\i\pi(r(y)-1).
\end{equation}
Since \(|t|<x\le 1\), we have
\[
  t+x\in(0,2),\qquad t-x\in(-2,0).
\]
Thus
\[
  r(t+x)=t+x,\qquad r(t-x)=t-x+2.
\]
Using the definition of $g$, equation~\eqref{eq:log-branch-difference},
and the last two identities gives
\[
\begin{aligned}
  g(\i t,x)-g(-\i t,-x)
  &=
  -\pi \i t
  +\frac{\i\pi}{2}\{r(t+x)+r(t-x)-2\}  \\
  &=
  -\pi \i t+\frac{\i\pi}{2}(2t)=0.
\end{aligned}
\]
This proves the claim.
\end{proof}

\subsubsection{Phantom charges}
The next lemma allows us to insert ``phantom charges'' without altering the Lorentzian correlator. This is later used to deform the contour $\gamma$ in Theorem \ref{thm:main} without altering the outcome. 

\begin{lmm}[Zero-charge insertions are invisible]
\label{lem:zero-charge-invisible}
Let \(\alpha_1,\ldots,\alpha_k\) satisfy the hypotheses of Theorem \ref{thm:main}, and let $u_1,\ldots,u_k\in M$ satisfy the non-light-cone condition. Take any additional point $u_*$ so that the expanded list $u_1,\ldots,u_k,u_*$ also satisfies the non-light-cone condition. Then for any $0\le r\le k$, 
\begin{align*}
&C^{\mathrm{L}}_{\alpha_1,\ldots,\alpha_r,0,\alpha_{r+1},\ldots,\alpha_k}
(u_1,\ldots,u_r,u_*,u_{r+1},\ldots,u_k)  \\
&\qquad =
C^{\mathrm{L}}_{\alpha_1,\ldots,\alpha_k}(u_1,\ldots,u_k).
\end{align*}
\end{lmm}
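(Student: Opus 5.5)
The plan is to prove the identity at the level of holomorphic functions on the ordered tubes, using the uniqueness and boundary-value statements in Theorem~\ref{thm:main2}, and not to compare the two contour integrals directly. Comparing the contours directly is awkward: the two sides of the claimed identity use different contours. The left side's contour also visits $\i t_*$ between $\i t_r$ and $\i t_{r+1}$, where $u_*=(t_*,x_*)$.

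\emph{Step 1 (Euclidean level).} Insert $\alpha_*=0$ at position $r+1$.
\begin{itemize}
\item The hypotheses of Theorem~\ref{thm:main} still hold: $\Re(0)=0>-\frac{1}{2b}$, and the screening number $w=-\frac1b\sum_j\alpha_j$ is unchanged.
\item In the formula of Theorem~\ref{thm:euc}, every factor involving the new insertion carries a coefficient $\alpha_*$. These are $-4\alpha_*\alpha_j g(q_*-q_j)$ and $-4b\alpha_* g(q_*-u_l)$, and all of them vanish.
\item Hence the $(k+1)$-point Euclidean correlator with the zero charge equals the $k$-point correlator at the remaining points, for all distinct real configurations.
\item If $w<0$, both sides vanish identically, and there is nothing further to prove.
\end{itemize}

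\emph{Step 2 (ordered tube).} Let $F$ be the analytic continuation on $\Omega_{k+1}$ of the $(k+1)$-point correlator, in the variables $(t_1,\ldots,t_r,t_*,t_{r+1},\ldots,t_k)$. Let $C$ be the continuation on $\Omega_k$ of the $k$-point one. Let $\pi:\Omega_{k+1}\to\C^k$ delete the coordinate $t_*$.
\begin{itemize}
\item Deleting one entry of a chain $\Re\tau_1<\cdots<\Re\tau_{k+1}$ leaves a strictly increasing chain, so $\pi(\Omega_{k+1})\subset\Omega_k$.
\item Therefore $C\circ\pi$ is analytic on $\Omega_{k+1}$.
\item By Step~1 and the Euclidean boundary statement of Theorem~\ref{thm:main2}, $F$ and $C\circ\pi$ have the same limits at strictly increasing real $(k+1)$-tuples.
\item The uniqueness clause of Theorem~\ref{thm:main2} (via Lemma~\ref{lem:uniq-analytic-cont}) then gives $F=C\circ\pi$ on all of $\Omega_{k+1}$.
\end{itemize}

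\emph{Step 3 (Lorentzian boundary values).} Choose $t^{(m)}\in\Omega_{k+1}$ with $t^{(m)}\to(\i s_1,\ldots,\i s_r,\i s_*,\i s_{r+1},\ldots,\i s_k)$, where the real times are those of the given points.
\begin{itemize}
\item The expanded list satisfies the non-light-cone condition, so Theorem~\ref{thm:main2} gives $F(t^{(m)})\to C^{\mathrm L}_{\alpha_1,\ldots,\alpha_r,0,\ldots,\alpha_k}(u_1,\ldots,u_r,u_*,\ldots,u_k)$.
\item Meanwhile $\pi(t^{(m)})\in\Omega_k$ converges to $(\i s_1,\ldots,\i s_k)$, and the original list satisfies the non-light-cone condition.
\item So the same theorem gives $C(\pi(t^{(m)}))\to C^{\mathrm L}_{\alpha_1,\ldots,\alpha_k}(u_1,\ldots,u_k)$.
\item Since $F(t^{(m)})=C(\pi(t^{(m)}))$ for every $m$, the two limits coincide.
\end{itemize}

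The only real point to check is that Theorem~\ref{thm:main2} applies to arbitrary approach sequences inside the tube. In particular it must apply to sequences whose projections approach the boundary along the sequence $\pi(t^{(m)})$, which is not of any special form. This is covered by the continuity of the extension asserted in Theorem~\ref{thm:main2}. It is exactly why the non-light-cone hypothesis is imposed on both the original and the expanded lists. The edge cases $r=0$ and $r=k$, where the phantom sits before $u_1$ or after $u_k$, need no separate treatment, since the projection argument does not depend on the position of the deleted coordinate.
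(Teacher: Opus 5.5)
Your proposal is correct and takes the same route as the paper. It observes that the zero charge drops out of the Euclidean formula, identifies the two analytic continuations on $\Omega_{k+1}$ via the uniqueness clause of Theorem~\ref{thm:main2} (Lemma~\ref{lem:uniq-analytic-cont}), and then passes to the Lorentzian boundary values. You are somewhat more explicit than the paper about the coordinate-deletion map into $\Omega_k$, the $w<0$ case, and the fact that the boundary statement holds for arbitrary approach sequences, but the argument is the same.
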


\begin{proof}
The screening number $w$ remains unchanged after adding the extra charge. In the Euclidean formula, every free factor involving the inserted charge has exponent
zero, and every screening-external factor involving the inserted charge also has exponent
zero. Thus the Euclidean correlator with the inserted zero-charge point is identically the
original Euclidean correlator:
\begin{align*}
&
C_{\alpha_1,\ldots,\alpha_r,0,\alpha_{r+1},\ldots,\alpha_k}
(q_1,\ldots,q_r,q_*,q_{r+1},\ldots,q_k)  \\
&\qquad =
C_{\alpha_1,\ldots,\alpha_k}(q_1,\ldots,q_k), \qquad q_1,\ldots,q_k,q_*\in M.
\end{align*}
Let us write $u_j = (t_j,x_j)$ and $u_* = (t_*,x_*)$. Then by the above observation, we see that  on the enlarged ordered tube $\Omega_{k+1}$, the function
\[
(\tau_1,\ldots,\tau_r,\tau_*,\tau_{r+1},\ldots,\tau_k)
\longmapsto
C_{\alpha_1,\ldots,\alpha_k}(\tau_1,x_1;\ldots;\tau_k,x_k)
\]
is analytic and has the same Euclidean boundary values as the analytic continuation of
the enlarged zero-charge correlator
\begin{align*}
&(\tau_1,\ldots,\tau_r,\tau_*,\tau_{r+1},\ldots,\tau_k)\\
&\longmapsto
C_{\alpha_1,\ldots,\alpha_r,0,\alpha_{r+1},\ldots,\alpha_k}(\tau_1,x_1;\ldots;\tau_r,x_r; \tau_*,x_*; \tau_{r+1},x_{r+1};\ldots;\tau_k,x_k).
\end{align*}
By uniqueness of analytic continuation to the ordered
tube (Lemma \ref{lem:uniq-analytic-cont}), the two analytic functions coincide. Taking Lorentzian boundary values gives the
claim.
\end{proof}

\subsubsection{Backtrack erasure}
The insertion of phantom charges allows us to deform the contour $\gamma$ in Theorem \ref{thm:main} in certain ways, without affecting the result.
\begin{lmm}[Backtrack erasure]\label{lem:be}
The contour $\gamma$ in Theorem \ref{thm:main} can be deformed in the following way, without altering the result. For any $1\le j\le k-1$, the part of the contour from $\i t_j$ to $\i t_{j+1}$ can be replaced by any number of piecewise linear movements in either direction on the imaginary axis, joined end to end, as long as it starts from $\i t_j$ and ends at~$\i t_{j+1}$. 
\end{lmm}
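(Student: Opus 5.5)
The idea is to realize any backtracking path from $\i t_j$ to $\i t_{j+1}$ as the honest contour of an enlarged correlator with phantom insertions, and then erase the phantoms using Lemma~\ref{lem:zero-charge-invisible}.

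Write the replacement path as $\i t_j=\i\sigma_0\to\i\sigma_1\to\cdots\to\i\sigma_m=\i t_{j+1}$, with linear pieces on the imaginary axis. Choose auxiliary spatial coordinates $y_1,\ldots,y_{m-1}$ and consider the enlarged list of insertions
\[
(u_1,\alpha_1),\ldots,(u_j,\alpha_j),\ ((\sigma_1,y_1),0),\ldots,((\sigma_{m-1},y_{m-1}),0),\ (u_{j+1},\alpha_{j+1}),\ldots,(u_k,\alpha_k).
\]
By the definition of $\gamma$ in \eqref{eq:clorentz}, the contour for this enlarged ordered correlator is exactly the original contour with the segment from $\i t_j$ to $\i t_{j+1}$ replaced by the prescribed path. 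Hence the enlarged correlator equals the contour integral over the deformed path.

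Two points have to be checked for this to be the contour integral we want.
\begin{enumerate}
\item Every factor involving a zero charge is identically $1$: free factors carry $\alpha_j\cdot 0$, and screening--phantom factors carry $-4b\cdot0$. So the integrand over the deformed contour is literally the integrand of \eqref{eq:clorentz} with the new $\gamma$.
\item The time-ordering $\mathcal T$ is computed by the contour parameter and the labels alone. So the phantom points only act as marked vertices of $\gamma$ and influence nothing except the shape of the contour.
\end{enumerate}

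Next apply Lemma~\ref{lem:zero-charge-invisible} $m-1$ times, inserting the phantoms one at a time. This shows that the enlarged correlator equals $C^{\mathrm L}_{\alpha_1,\ldots,\alpha_k}(u_1,\ldots,u_k)$, which is the desired equality.

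To invoke the lemma, the enlarged list must satisfy the non-light-cone condition at every stage. The $\sigma_i$ are fixed by the path, but the $y_i$ are free. For each $i$, the excluded set $\{y: (\sigma_i-t)\pm(y-x)\in2\Z\}$ over all previously present points $(t,x)$ is finite in $[-1,1]$, so I will pick the $y_i$ successively outside these finite sets.

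The remaining care is with the corners of the path. If the path turns at a point equal to some $\i t_l$, or if two turning points coincide, this causes no harm. Coincident times with distinct $y$'s are allowed, and ties are broken by labels, exactly as in the convention for marked visits.

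Finally, a path with repeated or degenerate pieces, for example zero-length moves, is handled by simply merging them first. A single linear piece that is not broken at a vertex is the same contour already, since the contour integral is parametrization independent.

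I expect the main obstacle to be justifying that Lemma~\ref{lem:zero-charge-invisible}, as stated, really identifies the enlarged contour formula with the deformed one. Its proof passes through analytic continuation on $\Omega_{k+1}$ and then takes Lorentzian boundary values via Theorem~\ref{thm:main2}, and that boundary value is by definition the contour formula for the enlarged ordered list. So the key check is that Theorem~\ref{thm:main2}, applied to the enlarged tuple with zero charges, yields exactly the contour built from the concatenated path. This is what the description of $\gamma$ as visiting $\i t_1,\ldots$ in index order guarantees, including non-monotone visits. I would state this identification explicitly as the first step, and then conclude by induction on $m$.
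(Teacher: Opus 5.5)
Your proposal is correct and follows essentially the same route as the paper: insert zero-charge phantom insertions at the turning points of the path, with spatial coordinates chosen to keep the non-light-cone condition, invoke Lemma~\ref{lem:zero-charge-invisible}, and observe that the only effect on the contour formula \eqref{eq:clorentz} is the replacement of the segment of $\gamma$. Your extra bookkeeping only makes explicit what the paper leaves implicit: choosing the $y_i$ one at a time outside finite exceptional sets, inserting the phantoms one at a time, handling ties and degenerate pieces, and noting that zero-charge factors are identically $1$.
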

\begin{proof}
For any $r_1,\ldots, r_n \in \R$ and $y_1,\ldots, y_n\in [-1,1]$, let us insert phantom charges at $(r_1,y_1),\ldots,(r_n, y_n)$ in between $(t_j,x_j)$ and $(t_{j+1}, x_{j+1})$. Given $r_1,\ldots,r_n$, we choose $y_1,\ldots,y_n$ such that these new insertions preserve the no-light-cone condition. By Lemma~\ref{lem:zero-charge-invisible}, the insertion of these phantom charges does not alter the value of the correlation. Moreover, the only effect of inserting these charges in the formula for the correlation is that it alters the contour $\gamma$, which now goes from $\i t_j$ to $\i r_1$, then to $\i r_2,\ldots,\i r_n$, and finally to $\i t_{j+1}$. This proves the claim.
\end{proof}

\subsubsection{An algebraic lemma}
Let $n$ be a positive integer. Consider a collection of symbols $a_1^\pm,\ldots,a_n^\pm$, $b_1^\pm,\ldots, b_n^\pm$, and $S_{ij}^\pm$, $1\le i<j\le n$. Let $\mc$ be the free commutative algebra over $\C$ generated by these symbols. We will now define certain elements of $\mc$. First, define 
\[
        \delta_i^A:=a_i^+-a_i^-,
        \quad
        \delta_i^B:=b_i^+-b_i^- ,
        \quad 
        \Delta_{ij}:=S_{ij}^+-S_{ij}^-.
\]
Next, define 
\[
        U_i:=a_i^+b_i^-,
        \quad 
        W_{1,i}:=a_i^-b_i^-,
        \quad
        W_{2,i}:=a_i^-b_i^+,
        \quad
        W_{3,i}:=a_i^+b_i^+.
\]
Next, for $c,d\in \{1,2,3\}$, let
\[
S_{ij}^{cd} := 
\begin{cases}
S_{ij}^+ &\text{ if } c =1 \text{ or } d =3,\\
S_{ij}^- &\text{ otherwise.}
\end{cases}
\]
Finally, let $\sigma_1 = 1$, $\sigma_2=-1$, and $\sigma_3 = 1$. We will prove the following lemma.
\begin{lmm}\label{lem:alglmm}
Let $D\in \mc$ be defined as 
\begin{align*}
        D
        &:=
        \sum_{c_1,\ldots,c_n\in\{1,2,3\}}
        \biggl(\prod_{i=1}^n \sigma_{c_i}\biggr)
        \biggl[
        \prod_{i=1}^n U_i
        -
        \prod_{i=1}^n W_{c_i,i}
        \biggr]
        \prod_{1\le i<j\le n} S_{ij}^{c_i c_j}.
\end{align*}
Then $D$ belongs to the ideal $\mathcal{I}$ of the algebra $\mc$ generated by elements of the form 
\begin{align}\label{eq:chain}
        \delta_{i_0}^A
        \Delta_{i_0 i_1}
        \Delta_{i_1 i_2}
        \cdots
        \Delta_{i_{l-1}i_l}
        \delta_{i_l}^B,
        \qquad
        1\le i_0<i_1<\cdots<i_l\le n,
\end{align}
where \(l\ge0\). For \(l=0\), the generator is simply $\delta_i^A\delta_i^B$. 
\end{lmm}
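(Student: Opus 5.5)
We argue by induction on $n$. The base case is a direct computation. For $n=1$ there are no $S$-factors and $\sigma_1-\sigma_2+\sigma_3=1$, so
\[
D=U_1-(W_{1,1}-W_{2,1}+W_{3,1})=a_1^+b_1^--a_1^-b_1^-+a_1^-b_1^+-a_1^+b_1^+=-\delta_1^A\delta_1^B\in\mathcal I .
\]
The sign pattern $\sigma=(1,-1,1)$ is designed to produce exactly this factorization. For the inductive step we isolate the largest index $n$ and sum over $c_n$ first.

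The key observation concerns the factors $S_{in}^{c_ic_n}$ with $i<n$. They equal $S_{in}^+$ for every $i$ when $c_n=3$. When $c_n\in\{1,2\}$ they equal $S_{in}^+$ if $c_i=1$ and $S_{in}^-$ otherwise. Since these factors agree for $c_n=1$ and $c_n=2$, we can do the following:
\begin{itemize}
\item In the $U$-part, the $c_n=1$ and $c_n=2$ contributions cancel, because $\sigma_1+\sigma_2=0$. Only $c_n=3$ survives, with the factor $U_n\prod_{i<n}S_{in}^+$.
\item In the $W$-part, the $c_n=1$ and $c_n=2$ contributions combine into $W_{1,n}-W_{2,n}=-a_n^-\delta_n^B$, multiplied by a reduced sum over $(c_1,\dots,c_{n-1})$ carrying the $c$-dependent weights $S_{in}^{\pm}$. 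The $c_n=3$ contribution carries $W_{3,n}\prod_{i<n}S_{in}^+$.
\end{itemize}
Combining $U_n$ and $W_{3,n}$ against the common factor $\prod_{i<n}S_{in}^+$ produces $(a_n^+b_n^--a_n^+b_n^+)=-a_n^+\delta_n^B$. So every piece of $D$ picks up a factor $\delta_n^B$, or else is a difference of reduced sums of the same shape as $D$ with $n-1$ indices.

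To run the induction we need a strengthened hypothesis, because the reduced sums carry extra weights depending on whether $c_i=1$. Since $c_i=1$ is exactly the case in which $W_{c_i,i}$ uses $b_i^-$, these weights can be absorbed into modified symbols:
\[
\tilde b_i^-:=b_i^-S_{in}^+,\qquad \tilde b_i^+:=b_i^+S_{in}^- .
\]
Then
\[
\tilde\delta_i^B=\tilde b_i^+-\tilde b_i^-=\delta_i^B S_{in}^- - b_i^-\Delta_{in}.
\]
The modified generator $\tilde\delta_i^B$ is therefore a combination of $\delta_i^B$ and $\Delta_{in}$. Multiplying a chain ending in $\tilde\delta_i^B$ by the factor $\delta_n^B$ obtained above gives elements of the form $(\text{chain ending at }i)\cdot\delta_n^B$, which lie in $\mathcal I$ since $\delta_i^B\in\mathcal I$-generators already, and also the genuinely new elements $\cdots\Delta_{i_{l-1}i}\,\Delta_{in}\,\delta_n^B$, which are exactly the extended chains in \eqref{eq:chain} with $i_l=n$.

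The statement we will actually prove by induction is therefore the following. For any specialization of $b_i^\pm$ (and of $U_i$, consistently) by such weights, $D$ lies in the ideal generated by the chains built from the corresponding $\tilde\delta^B$'s. The main obstacle is making this bookkeeping consistent. We must check that the same substitution $b_i^-\mapsto b_i^-S_{in}^+$ is compatible with the $U$-part, where $U_i=a_i^+b_i^-$ also uses $b_i^-$ but now multiplies $S_{in}^+$ for every $c_i$, and that the leftover difference terms (from $\prod S_{in}^+$ versus the $c$-dependent weights) factor through some $\Delta_{in}$. We would first verify the case $n=2$ by hand to fix the right form of the strengthened hypothesis, and then write the general step.
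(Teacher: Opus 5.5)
Your top-level reduction is correct. Summing over $c_n$ first, only $c_n=3$ survives in the $U$-part. The $c_n\in\{1,2\}$ terms of the $W$-part combine to $-a_n^-\delta_n^B$ times a sum whose weights ($S_{in}^+$ if $c_i=1$, $S_{in}^-$ otherwise) are absorbed exactly by $\tilde b_i^-=b_i^-S_{in}^+$, $\tilde b_i^+=b_i^+S_{in}^-$. Your formula for $\tilde\delta_i^B$ is also right.

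However, the argument stops exactly where an idea is needed, and the check you propose for finishing it would not suffice. Write $X_U,X_W$ for the $U$- and $W$-sums over indices $<n$, so $D_{n-1}=X_U-X_W$. Let a tilde denote the substitution $b_i^\pm\mapsto\tilde b_i^\pm$, and a hat the substitution $b_i^\pm\mapsto b_i^\pm S_{in}^+$. Each term contains exactly one $b_i^\pm$ for each $i<n$, so a hat just multiplies by $\prod_{i<n}S_{in}^+$. Your computation gives
\[
D_n=U_n\widehat{X}_U-W_{3,n}\widehat{X}_W+a_n^-\delta_n^B\widetilde{X}_W,
\]
hence, by the induction hypothesis, $D_n\equiv a_n^-\delta_n^B(\widetilde{X}_W-\widehat{X}_W)$ modulo $\mathcal{I}$.

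Knowing that this difference ``factors through some $\Delta_{in}$'' is not enough. Every generator of $\mathcal{I}$ contains a $\delta^A$ factor, so $\mathcal{I}$ is annihilated by the substitution $a_i^+\mapsto a_i^-$, while $\Delta_{in}\delta_n^B$ is not; thus $\Delta_{in}\delta_n^B\notin\mathcal{I}$. Each such $\Delta_{in}$ must be preceded by an $A$-chain ending at $i$. Extracting that directly from a telescoping of $\prod_{i<n}S_{in}^{\pm}$ is a statement of the same type as the lemma itself.

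The missing step is to route the difference through the $U$-sums. This is exactly where your concern about the $U$-part resolves in your favour: $U_i=a_i^+b_i^-$ involves only $b_i^-$, and $\tilde b_i^-=b_i^-S_{in}^+$, so $\widetilde{X}_U=\widehat{X}_U$. Insert $\widetilde{X}_W=\widehat{X}_U-\widetilde{D}_{n-1}$ and $\widehat{X}_W=\widehat{X}_U-\widehat{D}_{n-1}$ into the display above, and use $U_n-W_{3,n}+a_n^-\delta_n^B=-\delta_n^A\delta_n^B$. This gives the exact identity
\[
D_n=W_{3,n}\,\widehat{D}_{n-1}-\delta_n^A\delta_n^B\,\widehat{X}_U-a_n^-\delta_n^B\,\widetilde{D}_{n-1}.
\]
No strengthened hypothesis needs a separate proof. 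Since $\mathcal{C}$ is free, the tilde substitution is an algebra homomorphism. Applying it to an expression of $D_{n-1}$ as a combination of chains shows that $\widetilde{D}_{n-1}$ lies in the ideal generated by chains ending in $\tilde\delta_i^B=\delta_i^BS_{in}^--b_i^-\Delta_{in}$. Multiplied by $\delta_n^B$, these land in $\mathcal{I}$, as you anticipated. Since $\widehat{D}_{n-1}=\prod_{i<n}S_{in}^+D_{n-1}\in\mathcal{I}$ by induction, the induction closes.

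For comparison, the paper does not induct. It expands everything in $\delta^A,\delta^B,\Delta$ at once and notes that the coefficient of each monomial factors as $\prod_iM(p_i,L_i,R_i)$, with $M$ a three-term sum over $c$. From the table of $M$ it reads off two facts. A nonzero coefficient containing a $\delta^B$ factor forces a directed $\Delta$-path from a $\delta^A$ index to a $\delta^B$ index (possibly of length zero). The $\delta^B$-free terms either vanish or cancel against the $U$-part. Your route, once completed, replaces that table by the one-step recursion above.
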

\begin{proof}
First, we substitute
\[
        a_i^+=a_i^-+\delta_i^A,\quad
        b_i^+=b_i^-+\delta_i^B,\quad
        S_{ij}^+=S_{ij}^-+\Delta_{ij},
\]
and write $D$ as a polynomial in $a_i^-$, $b_i^-$, $S_{ij}^+$, $\delta_i^A$, $\delta_i^B$, and $\Delta_{ij}$, for $1\le i<j\le n$. To write this out carefully, we proceed as follows. Introduce, for \(c\in\{1,2,3\}\),
\[
        r_c:=1_{\{c=3\}},
        \quad
        \ell_c:=1_{\{c=1\}}.
\]
Then we have 
\[
        W_{c,i}
        =
        (a_i^-+r_c\delta_i^A)
        (b_i^-+(1-\ell_c)\delta_i^B),\quad S_{ij}^{cd} = S_{ij}^+ -(1-\ell_c)(1-r_d)\Delta_{ij}. 
\]
With these substitutions, we have
\begin{align*}
        D_2
        &:=
        \sum_{c_1,\ldots,c_n\in\{1,2,3\}}
        \prod_{i=1}^n \sigma_{c_i}
        \prod_{i=1}^n W_{c_i,i}
        \prod_{1\le i<j\le n} S_{ij}^{c_i c_j}\\
        &= \sum_{c_1,\ldots,c_n\in\{1,2,3\}} \sum_{P,Q,E} Y_{P,Q,E}
        \prod_{i=1}^n \sigma_{c_i}
        \prod_{i\in P} r_{c_i} \delta_i^A \prod_{i\in Q} (1-\ell_{c_i}) \delta_i^B\\
        &\hskip1in \cdot \prod_{(i,j)\in E}(1-\ell_{c_i})(1-r_{c_j}) \Delta_{ij},
\end{align*}
where $Y_{P,Q,E}$ is a polynomial in $a_i^-$, $b_i^-$, and $S_{ij}^+$, for $1\le i<j\le n$; $P,Q$ are subsets of $\{1,\ldots,n\}$; and $E$ is a subset of $\{(i,j):1\le i<j\le n\}$.

Since $Y_{P,Q,E}$ has no dependence on $c_1,\ldots,c_n$, we can interchange the order of summation and write
\[
D_2 = \sum_{P,Q,E} Y_{P,Q,E}Z_{P,Q,E},
\]
where 
\begin{align*}
Z_{P,Q,E} &:= \sum_{c_1,\ldots,c_n\in\{1,2,3\}} \prod_{i=1}^n \sigma_{c_i}
        \prod_{i\in P} r_{c_i} \delta_i^A \prod_{i\in Q} (1-\ell_{c_i}) \delta_i^B\\
        &\hskip1in \cdot \prod_{(i,j)\in E}(1-\ell_{c_i})(1-r_{c_j}) \Delta_{ij}\\
        &= C_{P,Q,E} \prod_{i\in P} \delta_i^A \prod_{i\in Q} \delta_i^B\prod_{(i,j)\in E} \Delta_{ij},
\end{align*}
with 
\[
C_{P,Q,E} := \sum_{c_1,\ldots,c_n\in\{1,2,3\}} \prod_{i=1}^n \sigma_{c_i}
        \prod_{i\in P} r_{c_i} \prod_{i\in Q} (1-\ell_{c_i})  \prod_{(i,j)\in E} (1-\ell_{c_i})(1-r_{c_j}).
\]
Suppose that $Q\ne \varnothing$ and that the coefficient $C_{P,Q,E}$ is nonzero. Then, we claim there is some $i_0\in P$ and $i_l\in Q$ such that either $i_0=i_l$, or $i_0<i_l$ and there is a directed path from $i_0$ to $i_l$ through the directed graph defined by $E$.

To see this, first note that since $r_c$ takes values in $\{0,1\}$, we have $r_c^m = r_c$ for any positive integer $m$. The same is true for $\ell_c$ and $1-\ell_c$. Thus,
\begin{align*}
\prod_{i\in P} r_{c_i} \prod_{i\in Q} (1-\ell_{c_i})  \prod_{(i,j)\in E}(1-\ell_{c_i})(1-r_{c_j}) &= \prod_{i=1}^n r_{c_i}^{p_i} (1-\ell_{c_i})^{L_i} (1-r_{c_i})^{R_i},
\end{align*}
where 
\begin{align*}
p_i &:= 1_{\{i\in P\}}, \\
L_i &:= 1_{\{i\in Q \text{ or } i \text{ has an outgoing edge in } E\}},\\
R_i &:= 1_{\{i \text{ has an incoming edge in } E\}}.
\end{align*}
From this, we get
\begin{align*}
C_{P,Q,E} &= \sum_{c_1,\ldots,c_n\in\{1,2,3\}} \prod_{i=1}^n \sigma_{c_i}r_{c_i}^{p_i} (1-\ell_{c_i})^{L_i} (1-r_{c_i})^{R_i}= \prod_{i=1}^n M(p_i,L_i,R_i),
\end{align*}
where
\[
M(p,L,R) := \sum_{c=1}^3 \sigma_c r_{c}^{p} (1-\ell_{c})^{L} (1-r_{c})^{R}.
\]
A simple calculation yields the following table for $M$:
\begin{align}\label{eq:table1}
\begin{array}{c|c|c|c}
p & L & R & M(p,L,R)\\
\hline
0&0&0&1\\
0&1&0&0\\
0&0&1&0\\
0&1&1&-1\\
1&0&0&1\\
1&1&0&1\\
1&0&1&0\\
1&1&1&0 .
\end{array}
\end{align}
Now, if there is some $i$ such that $i$ is both in $P$ and $Q$, then the claim is already proved. So let us assume that $P,Q$ are disjoint. Take any $i\in Q$. Then $L_i=1$ and (since $i\notin P$) $p_i = 0$. If $i$ has no incoming edge in $E$, then $R_i=0$. But then by table \eqref{eq:table1}, $M(p_i,L_i,R_i) =0$, and this is impossible since $C_{P,Q,E}\ne 0$. Thus, $i$ must have an incoming edge in $E$. Let $(i_1,i)$ be such an edge. If $i_1\notin P$, then the same argument shows that $i_1$ must have an incoming edge $(i_2,i_1)$ in $E$. The process cannot continue indefinitely. This yields a directed path from $P$ to $Q$ in $E$, proving the claim.

Thus, we have proved the claim that if  $Q\ne \varnothing$ and $C_{P,Q,E}\ne 0$, then there is a path from $P$ to $Q$ in $E$. Consequently, for any $Q\ne \varnothing$, $Y_{P,Q,E}Z_{P,Q,E} \in \mathcal{I}$. 

Next, suppose that $Q = \varnothing$ but $E \ne \varnothing$, and $C_{P,Q,E}\ne 0$. Take any $i\in P$. Then $p_i=1$. So, for $C_{P,Q,E}$ to be nonzero, $R_i$ must be zero (by table \eqref{eq:table1}). In other words, $i$ cannot have an incoming edge in $E$. On the other hand, if $i\notin P$, then $p_i=0$, and therefore by table \eqref{eq:table1}, we must have $L_i=R_i$. But since $Q=\varnothing$ the condition $L_i=R_i$ implies that $i$ has both an incoming edge in $E$ and an outgoing edge in $E$. To summarize, we have shown that every $i$ has either no incoming edge in $E$, or both an incoming and an outgoing edge in $E$. We claim that this is impossible. To see this, take the largest $i$ that has at least one edge from $E$ that is incident to it. This is well-defined, since $E$ is nonempty. Since $i$ is the largest such vertex, there can be no outgoing edge incident to it. So there must be an incoming edge but no outgoing edge, which contradicts our previous deduction.  This proves that if $Q=\varnothing$ and $E\ne \varnothing$, then $C_{P,Q,E}$ must be zero. 

Thus, the only part of $D_2$ that we have not yet shown to be in $\mathcal{I}$ is 
\[
D_2' := \sum_P Y_{P,\varnothing, \varnothing} Z_{P, \varnothing, \varnothing}. 
\]
We will take care of this part later. Next, writing $U_i = (a_i^-+\delta_i^A)b_i^-$, we get the expansion
\begin{align*}
        D_1
        &:=
        \sum_{c_1,\ldots,c_n\in\{1,2,3\}}\prod_{i=1}^n \sigma_{c_i}\prod_{i=1}^n U_i\prod_{1\le i<j\le n} S_{ij}^{c_i c_j}\\
        &=  \sum_{c_1,\ldots,c_n\in\{1,2,3\}}\sum_{P,E}W_{P,E}\prod_{i=1}^n \sigma_{c_i}\prod_{i\in P}\delta_i^A \prod_{(i,j)\in E} (1-\ell_{c_i})(1-r_{c_j})\Delta_{ij},
\end{align*}
where $W_{P,E}$ is a polynomial in $a_i^-$, $b_i^-$, and $S_{ij}^+$, for $1\le i<j\le n$; $P$ is a subset of $\{1,\ldots,n\}$; and $E$ is a subset of $\{(i,j):1\le i<j\le n\}$.

Since $W_{P,E}$ has no dependence on $c_1,\ldots,c_n$, we can interchange the order of summation and write
\[
D_1 = \sum_{P,E} W_{P,E}V_{P,E},
\]
where 
\begin{align*}
V_{P,E} &:= \sum_{c_1,\ldots,c_n\in\{1,2,3\}} \prod_{i=1}^n \sigma_{c_i}
        \prod_{i\in P} \delta_i^A  \prod_{(i,j)\in E} (1-\ell_{c_i})(1-r_{c_j}) \Delta_{ij}\\
        &= U_{P,E} \prod_{i\in P} \delta_i^A \prod_{(i,j)\in E} \Delta_{ij},
\end{align*}
with 
\[
U_{P,E} := \sum_{c_1,\ldots,c_n\in\{1,2,3\}} \prod_{i=1}^n \sigma_{c_i}  \prod_{(i,j)\in E} (1-\ell_{c_i})(1-r_{c_j}).
\]
Define 
\begin{align*}
L_i &:= 1_{\{i \text{ has an outgoing edge in } E\}},\quad R_i := 1_{\{i \text{ has an incoming edge in } E\}}.
\end{align*}
Then 
\begin{align*}
U_{P,E} &= \sum_{c_1,\ldots,c_n\in\{1,2,3\}} \prod_{i=1}^n \sigma_{c_i}(1-\ell_{c_i})^{L_i} (1-r_{c_i})^{R_i}\\
&= \prod_{i=1}^n \biggl(\sum_{c=1}^3  \sigma_c (1-\ell_c)^{L_i}(1-r_c)^{R_i}\biggr) = \prod_{i=1}^n M(0, L_i, R_i).
\end{align*}
Suppose that $E\ne \varnothing$. Then for $U_{P,E}$ to be nonzero, the above formula shows that each $M(0,L_i,R_i)$ has to be nonzero. By table \eqref{eq:table1}, this means that each $i$ has either no incoming or outgoing edges in $E$, or has an incoming and an outgoing edge in $E$. But this is impossible, as can be seen by considering the largest $i$ which has an edge in $E$ incident to it. Thus, $U_{P,E} = 0$ whenever $E\ne \varnothing$. We conclude that 
\[
D_1 = \sum_P W_{P,\varnothing} V_{P,\varnothing}. 
\]
Our final claim is that \(D_1=D_2'\). Indeed, when \(Q=\varnothing\) and
\(E=\varnothing\), no \(\delta_i^B\)-factor and no \(\Delta_{ij}\)-factor
has been selected. Therefore
\[
        Y_{P,\varnothing,\varnothing}
        =
        \biggl(\prod_{i\notin P} a_i^-\biggr)
        \biggl(\prod_{i=1}^n b_i^-\biggr)
        \biggl(\prod_{1\le i<j\le n} S_{ij}^+\biggr).
\]
Also,
\begin{align*}
Z_{P,\varnothing,\varnothing}
&=
\sum_{c_1,\ldots,c_n\in\{1,2,3\}}
\prod_{i=1}^n\sigma_{c_i}
\prod_{i\in P} r_{c_i}\delta_i^A  \\
&=
\biggl(\prod_{i\in P}\delta_i^A\biggr)
\biggl(\prod_{i\notin P}\sum_{c=1}^3\sigma_c\biggr)
\biggl(\prod_{i\in P}\sum_{c=1}^3\sigma_c r_c\biggr)
=
\prod_{i\in P}\delta_i^A,
\end{align*}
since \(\sum_{c=1}^3\sigma_c=1\) and
\(\sum_{c=1}^3\sigma_c r_c=1\). Similarly, in the expansion of \(D_1\),
\[
        W_{P,\varnothing}
        =
        \biggl(\prod_{i\notin P} a_i^-\biggr)
        \biggl(\prod_{i=1}^n b_i^-\biggr)
        \biggl(\prod_{1\le i<j\le n} S_{ij}^+\biggr),
        \qquad
        V_{P,\varnothing}
        =
        \prod_{i\in P}\delta_i^A .
\]
Thus
\[
        W_{P,\varnothing}V_{P,\varnothing}
        =
        Y_{P,\varnothing,\varnothing}Z_{P,\varnothing,\varnothing}
\]
for every \(P\), and hence \(D_1=D_2'\). Since the preceding argument showed
that \(D_2-D_2'\in\mathcal I\), we conclude that
\[
        D=D_1-D_2=(D_1-D_2')-(D_2-D_2')\in\mathcal I .
\]
This completes the proof.
\end{proof}

\subsubsection{Transitivity of strict timelike separation}
We will say that two points $u = (t,x)$ and $v = (s,y)$ in $M$ are strictly timelike separated if
\[
|t-s| > d_1(x-y). 
\]
If we moreover have that $t< s$, then we will write $u \prec v$. The following lemma shows that this relation is transitive.
\begin{lmm}[Transitivity of timelike separation]\label{lem:transitive}
If $u,v,w\in M$ satisfy $u\prec v$ and $v\prec w$, then $u\prec w$.
\end{lmm}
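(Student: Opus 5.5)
The plan is to lift to the universal cover and use the triangle inequality for the periodic distance $d_1$. Write $u=(t,x)$, $v=(s,y)$, $w=(r,z)$. The hypotheses give
\[
  t<s,\quad s-t>d_1(x-y),\qquad s<r,\quad r-s>d_1(y-z).
\]
From the first and third inequalities we immediately get $t<r$. So it remains only to show $r-t>d_1(x-z)$.

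For this I would first record that $d_1$ is a seminorm-like function on $\R$. It is even, it is $2$-periodic, and it satisfies the subadditivity $d_1(\xi+\zeta)\le d_1(\xi)+d_1(\zeta)$. Subadditivity is one line. Choose $m,m'\in\Z$ attaining $d_1(\xi)=|\xi+2m|$ and $d_1(\zeta)=|\zeta+2m'|$. Then
\[
  d_1(\xi+\zeta)\le |\xi+\zeta+2(m+m')|\le |\xi+2m|+|\zeta+2m'|.
\]
Applying this with $\xi=x-y$ and $\zeta=y-z$ gives
\[
  d_1(x-z)\le d_1(x-y)+d_1(y-z)<(s-t)+(r-s)=r-t.
\]
Hence $|r-t|=r-t>d_1(x-z)$, so $u$ and $w$ are strictly timelike separated with $t<r$, that is, $u\prec w$.

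There is no real obstacle here. The only point needing care is that the minimizing integer shift in $d_1$ may differ between the two pairs, and subadditivity of $d_1$ handles exactly this. The strict inequalities then propagate directly through the sum.
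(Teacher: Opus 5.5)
Your proof is correct and matches the paper's argument: the paper also proves $d_1(a+b)\le d_1(a)+d_1(b)$ by choosing minimizing integer shifts, and then chains $r-t>d_1(x-y)+d_1(y-z)\ge d_1(x-z)$. The opening remark about lifting to the universal cover is never used, since subadditivity of $d_1$ does all the work.
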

\begin{proof}
Write $u = (t,x)$, $v=(s,y)$, and $w = (r,z)$. Then $t<s<r$, and 
\[
s - t > d_1(x-y), \quad r-s > d_1(y-z). 
\]
We claim that for any $a,b$, $d_1(a+b)\le d_1(a)+d_1(b)$. To see this, find integers $m,n$ that minimize $|a+2m|$ and $|b+2n|$. Then 
\[
d_1(a+b) \le |a+b+2(m+n)|\le |a+2m|+|b+2n|=d_1(a)+d_1(b).
\]
Having proved the claim, we may now conclude that 
\begin{align*}
r - t > d_1(x-y)+d_1(y-z) \ge d_1(x-z).
\end{align*}
This shows that $u\prec w$. 
\end{proof}

\subsubsection{Proof of Theorem \ref{thm:microcausality}}
Inserting phantom charges if necessary, let us assume that $2\le p\le k-2$. Swapping $v_p,v_{p+1}$ if necessary, let us assume that $t_p\le t_{p+1}$. The case $t_p=t_{p+1}$ is easy, because the formula~\eqref{eq:clorentz} remains unchanged if we swap $v_p,v_{p+1}$ in this case. So let us assume that $t_p<t_{p+1}$.

Write $v_i = (t_i, x_i)$, $i=1,\ldots,k$. Let $\gamma_1$ and $\gamma_2$ denote the contours for the two orders. We will now replace them both by a common contour, using Lemma \ref{lem:be}. For $\gamma_1$, replace the path from $\i t_p$ to $\i t_{p+1}$ by a path that first goes from $\i t_p$ to $\i t_{p+1}$, then back to $\i t_p$, and finally to $\i t_{p+1}$. Call this new path $\gamma$. For $\gamma_2$, replace the path from $\i t_{p-1}$ to $\i t_{p+1}$ by a path that first goes from $\i t_{p-1}$ to $\i t_p$ and then to $\i t_{p+1}$. After this, replace the path from $\i t_{p}$ to $\i t_{p+2}$ by a path that first goes from $\i t_p$ to $\i t_{p+1}$, and then to $\i t_{p+2}$. Notice that this new path is also $\gamma$. 

To summarize, the contour $\gamma$ goes from $\i t_{p-1}$ to $\i t_p$, then to $\i t_{p+1}$, then back to $\i t_p$, then again to $\i t_{p+1}$, and finally to $\i t_{p+2}$. The part $\i t_p \to \i t_{p+1}\to \i t_p \to \i t_{p+1}$ will be called the ``hook''. The hook has three segments, which will be referred to as segments $1$, $2$, and $3$. 

Write the difference between the two correlations as a $w$-fold contour integral over $\gamma$ of the difference between the respective integrands. Let $u_1,\ldots, u_w$ denote the screening charges in this integral, and write $u_i = (\i \tau_i, y_i)$. 

Consider the part of the integral where $\tau_{i_1},\ldots, \tau_{i_n}$ are in the hook and $\tau_l$ is outside the hook for all $l\notin \{i_1,\ldots, i_n\}$. Here $0\le n\le w$ is arbitrary. Clearly, the integral splits as the sum of these parts. We will show that each part is zero. By symmetry, it suffices to prove this when  $i_1=1,\ldots, i_n = n$. Let $I_0$ denote this part of the integral.

Take any point $(u_1,\ldots,u_w)$ within the integral $I_0$.  For $i=1,\ldots,n$, let $c_i\in \{1,2,3\}$ denote the segment number (in the hook) to which $u_i$ belongs. For $c\in \{1,2,3\}$, define
\[
\sigma_c :=
\begin{cases}
1 &\text{ if } c = 1 \text{ or } 3,\\
-1 &\text{ if } c = 2.
\end{cases}
\]
Then the integration of $\tau_1,\ldots,\tau_n$ in the integral $I_0$ can be written (schematically) as 
\begin{align}\label{eq:schematic}
&\i^n \int_{t_p}^{t_{p+1}}\cdots \int_{t_p}^{t_{p+1}} \sum_{c_1,\ldots,c_n\in \{1,2,3\}}\prod_{i=1}^n \sigma_{c_i}\biggl\{\ldots\biggr\} \, \dd \tau_1\cdots \dd \tau_n\notag\\
&= \i^n n! \int_{t_p <\tau_1<\cdots<\tau_n< t_{p+1}} \sum_{c_1,\ldots,c_n\in \{1,2,3\}}\prod_{i=1}^n \sigma_{c_i}\biggl\{\ldots\biggr\} \, \dd \tau_1\cdots \dd \tau_n.
\end{align}
We will now explicitly write down the term in $\{\ldots\}$ above, and then use Lemma \ref{lem:alglmm} and the spacelike separation of $v_p, v_{p+1}$ to prove that the integrand evaluates to zero. The integrand is a difference of contributions coming from the original-order integral and the exchanged-order integral. Each contribution is a product of several exponentials. Let $H$ denote the product of  exponentials that are common to the two orders. Each remaining exponential is one of the following:
\begin{align*}
&a_i^+ := e^{-4\alpha_p b g(\i (\tau_i - t_p), y_i - x_p)}, \quad a_i^- := e^{-4\alpha_p b g(\i (t_p - \tau_i), x_p - y_i)},\\
&b_i^+ := e^{-4\alpha_{p+1} b g(\i (\tau_i - t_{p+1}), y_i - x_{p+1})}, \quad b_i^- := e^{-4\alpha_{p+1} b g(\i (t_{p+1} - \tau_i), x_{p+1} - y_i)},\\
&S_{ij}^+ := e^{-4b^2 g(\i(\tau_j-\tau_i),y_j-y_i)}, \quad S_{ij}^- := e^{-4b^2 g(\i(\tau_i-\tau_j),y_i-y_j)},
\end{align*}
where $1\le i\le n$ and $1\le i<j\le n$. In the original-order integral, each $\tau_i$ comes after $t_p$ and before $t_{p+1}$ in the contour $\gamma$. This gives an external-screening contribution of $U_i:= a_i^+ b_i^-$, irrespective of the value of $c_i$. On the other hand, for the exchanged-order integral the value of $c_i$ matters. If $c_i =1$, then $\tau_i$ is on segment 1 of the hook, where $\tau_i$ comes before both $t_p$ and $t_{p+1}$. This results in a contribution of $W_{1,i} := a_i^-b_i^-$. If $c_i=2$, then $\tau_i$ comes after $t_{p+1}$ and before $t_p$. Thus, the contribution is $W_{2,i} := a_i^-b_i^+$. Lastly, if $c_i=3$, then $\tau_i$ comes after both $t_p$ and $t_{p+1}$, which gives a contribution of $W_{3,i} := a_i^+b_i^+$. 

The screening-screening contributions are the same for both orders. If $1\le i<j\le n$, then the order of placement of $\tau_i$ and $\tau_j$ on the contour depends on the values of $c_i,c_j$. If $c_i < c_j$, then $\tau_i$ comes before $\tau_j$. If $c_i > c_j$, then $\tau_j$ comes before $\tau_i$. If $c_i = c_j$, then there are two cases. If $c_i = c_j =1$ or $3$, then $\tau_i$  comes before $\tau_j$. But if $c_i=c_j=2$, then $\tau_j$ comes before $\tau_i$. The whole situation can be expressed more succinctly as follows. If $c_i=1$ or $c_j =3$, then $\tau_i$ comes before $\tau_j$; otherwise, $\tau_j$ comes before $\tau_i$. Thus the exchanged-order contribution from the pair $(i,j)$ is $S_{ij}^{c_ic_j}$, where $S_{ij}^{cd} = S_{ij}^+$ if $c=1$ or $d=3$, and $S_{ij}^{cd} = S_{ij}^-$ otherwise.

Putting it all together, we see that the integrand in equation \eqref{eq:schematic} is equal to 
\begin{align*}
H \sum_{c_1,\ldots,c_n\in\{1,2,3\}}
        \biggl(\prod_{i=1}^n \sigma_{c_i}\biggr)
        \biggl[
        \prod_{i=1}^n U_i
        -
        \prod_{i=1}^n W_{c_i,i}
        \biggr]
        \prod_{1\le i<j\le n} S_{ij}^{c_i c_j}.
\end{align*}
By Lemma \ref{lem:alglmm}, this integrand can be written as a sum of terms, each of which has a factor of the form 
\[
        \delta_{i_0}^A
        \Delta_{i_0 i_1}
        \Delta_{i_1 i_2}
        \cdots
        \Delta_{i_{l-1}i_l}
        \delta_{i_l}^B,
        \qquad
        1\le i_0<i_1<\cdots<i_l\le n,
\]
where $l\ge0$; for $l=0$, this means the factor $\delta_{i_0}^A\delta_{i_0}^B$. Also,
\[
        \delta_i^A:=a_i^+-a_i^-,
        \quad
        \delta_i^B:=b_i^+-b_i^- ,
        \quad 
        \Delta_{ij}:=S_{ij}^+-S_{ij}^-.
\]
We claim that any such factor evaluates to zero. To see this, note that $\delta_i^A = 0$ if $v_p$ and $u_i$ are spacelike separated, $\delta_i^B = 0$ if $u_i$ and $v_{p+1}$ are spacelike separated, and $\Delta_{ij}=0$ if $u_i$ and $u_j$ are spacelike separated. Thus, for the above factor to be nonzero, $v_p$ and $u_{i_0}$ have to be strictly timelike separated, $u_{i_r}, u_{i_{r+1}}$ have to be strictly timelike separated for $0\le r<l$, and $u_{i_l}$ and $v_{p+1}$ have to be strictly timelike separated (all ignoring null sets of lightlike separation). Moreover, the hook variables in the factor are ordered as $t_p<\tau_{i_0}<\cdots < \tau_{i_l}<t_{p+1}$. Hence any nonzero factor would force $v_p \prec u_{i_0}\prec \cdots \prec u_{i_l}\prec v_{p+1}$. By Lemma \ref{lem:transitive}, we conclude that $v_p \prec v_{p+1}$. But this is false by assumption. Thus, none of the factors can have nonzero value. This completes the proof.

\subsubsection{Locality for the operator algebra}
We obtain the following corollary of Theorem \ref{thm:microcausality}.

\begin{cor}[Spacelike block exchange]\label{cor:spacelike-block-exchange}
Let \(I\) and \(J\) be two adjacent blocks of external insertions, and suppose that every point in the \(I\)-block is spacelike separated from every point in
the \(J\)-block. Then the Lorentzian correlator is unchanged when the two
blocks are exchanged, with their internal orders preserved.
\end{cor}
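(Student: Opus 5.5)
This follows from Theorem~\ref{thm:microcausality} by decomposing the block exchange into adjacent transpositions. Label the external insertions so that the \(I\)-block occupies positions \(r+1,\ldots,r+|I|\) and the \(J\)-block occupies positions \(r+|I|+1,\ldots,r+|I|+|J|\). The target ordering is reached from the original by the standard shuffle. The last element of \(I\) is moved past all elements of \(J\), one adjacent swap at a time. Then the second-to-last element of \(I\) is moved the same way, and so on. At every intermediate stage the two entries being swapped consist of one point from \(I\) and one point from \(J\), and they sit in consecutive positions. The internal orders of \(I\) and \(J\) are never disturbed. Since every \(I\)-point is spacelike separated from every \(J\)-point, each such swap is exactly the situation covered by Theorem~\ref{thm:microcausality}. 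Each swap therefore leaves the Lorentzian correlator unchanged, and chaining the \(|I|\,|J|\) equalities gives the claim.

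The one thing to check at each step is that the hypotheses of Theorem~\ref{thm:microcausality} continue to hold for the permuted configuration. The charges are merely permuted, so the conditions \(\Re(\alpha_j)>-\frac{1}{2b}\) and the value of \(w\) are unchanged. The non-light-cone condition \eqref{eq:lightcond} is symmetric in the pair \(j,j'\), because \(t+x\in2\Z\) or \(t-x\in2\Z\) is invariant under \((t,x)\mapsto(-t,-x)\). Hence the condition holds for every reordering of the same set of points. The spacelike hypothesis for the swapped pair is also symmetric, since \(|t-s|<d_1(x-y)\) is symmetric in \(u\) and \(v\).

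There is no real obstacle. The only bookkeeping point is to make sure that each intermediate ordering is a legitimate input to the ordered correlator \(C^{\mathrm L}\), meaning that its contour is defined and the non-light-cone condition holds. This is automatic by the symmetry noted above. If \(w<0\), both sides vanish identically by Theorem~\ref{thm:main}, so nothing needs proving.
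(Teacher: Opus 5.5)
Your proposal is correct and is essentially the paper's argument. The paper likewise reduces the block exchange to finitely many adjacent transpositions of an $I$-point with a $J$-point (it moves the $J$-block leftward rather than the $I$-block rightward) and applies Theorem~\ref{thm:microcausality} at each step. Your check that the charge conditions, the screening number, and the non-light-cone condition are invariant under reordering makes explicit bookkeeping that the paper leaves implicit.
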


\begin{proof}
Move the insertions of the \(J\)-block leftward past the insertions of the
\(I\)-block by finitely many adjacent transpositions. Each transposition is
allowed by the spacelike separation hypothesis and
Theorem~\ref{thm:microcausality}.
\end{proof}

We are now ready to prove Theorem \ref{thm:einstein-causality}.
\begin{proof}[Proof of Theorem~\ref{thm:einstein-causality}]
We first prove the algebraic commutativity identity. Let
$A\in\ma(O_1)$, $B\in\ma(O_2)$, and let $X,Y\in\ma$. By linearity in all four
variables, it is enough to treat the case where $A$, $B$, $X$, and $Y$ are
monomials. If either $A$ or $B$ is the unit, the identity is immediate, so
assume that $A$ and $B$ have positive lengths. Write
\[
  A=\mo_{f_1,n_1}\cdots \mo_{f_r,n_r},\qquad
  B=\mo_{g_1,m_1}\cdots \mo_{g_s,m_s},
\]
with $n_i,m_j\in\mathcal I_b$, $\supp(f_i)\subseteq O_1$, and
$\supp(g_j)\subseteq O_2$. Because $X$ and $Y$ are monomials in $\ma$, every
point correlator that appears in the smeared integrals for $\omega(XABY)$ and
$\omega(XBAY)$ still has integer charges. In those integrals, the insertions
belonging to $A$ and the insertions belonging to $B$ occur as adjacent blocks,
possibly with spectator insertions coming from $X$ and $Y$. The total screening
number is the same for the two orders. If it is negative, both ordered point
correlators vanish. If it is
nonnegative, then the Lorentzian correlator formula applies. Because $O_1$ and
$O_2$ are spacelike separated, every point in the $A$-block is spacelike
separated from every point in the $B$-block. Corollary~\ref{cor:spacelike-block-exchange}
therefore gives pointwise equality of the two Lorentzian point correlators
corresponding to the orders $XABY$ and $XBAY$, away from the light-cone
exceptional set. That exceptional set has Lebesgue measure zero, and
Theorem~\ref{thm:smear} supplies absolute integrability of the relevant smeared
correlators. Therefore
\begin{equation}\label{eq:algebraic-locality}
  \omega(XABY)=\omega(XBAY)
\end{equation}
for all $X,Y\in\ma$.

Now let $E,F\in\ma$. Since $\pi(C)\Phi(E)=\Phi(CE)$ for every
$C\in\ma$, we
have
\[
  \pi(A)\pi(B)\Phi(E)=\Phi(ABE),
  \qquad
  \pi(B)\pi(A)\Phi(E)=\Phi(BAE).
\]
For each coordinate $F\in\ma$,
\[
\begin{aligned}
  \bigl(\Phi(ABE)-\Phi(BAE)\bigr)_F
  &=
  \omega(F^*ABE)-\omega(F^*BAE).
\end{aligned}
\]
Since $A\in\ma(O_1)$ and $B\in\ma(O_2)$, equation
\eqref{eq:algebraic-locality}, applied with $X=F^*$ and $Y=E$, shows that the right side above is zero. Thus
\[
  \Phi(ABE)=\Phi(BAE),
\]
and therefore
\[
  \pi(A)\pi(B)\Phi(E)=\pi(B)\pi(A)\Phi(E)
\]
for every $E\in\ma$. Since $\mh_0=\Phi(\ma)$ is dense in
$\mh$, and $\pi(A)$ and $\pi(B)$ are continuous, we obtain
\[
  [\pi(A),\pi(B)]=0
\]
on all of $\mh$. By linearity, the same conclusion holds for all
$A\in\ma(O_1)$ and $B\in\ma(O_2)$. Hence every element of the unital
$*$-subalgebra generated by $\pi(\ma(O_1))$ commutes with every element of the
corresponding unital $*$-subalgebra generated by $\pi(\ma(O_2))$.

It remains to pass to the closed represented  local algebras. Let
\[
  S\in\mathfrak A_{\mathrm{loc}}(O_1),
  \qquad
  T\in\mathfrak A_{\mathrm{loc}}(O_2).
\]
Choose nets \(S_i\) and \(T_j\) in the unital \(\dagger\)-subalgebras
\(\pi(\mathcal A(O_1))\) and \(\pi(\mathcal A(O_2))\), respectively, such that
\[
S_i\to S,\qquad T_j\to T
\]
in the graph-bounded topology \(\mathcal T_\dagger\). In particular, these convergences
hold in the underlying topology \(\mathcal T_b\) of uniform convergence on bounded subsets
of \(\mh\). By the
algebraic part just proved, we have
\[
  S_iT_j=T_jS_i
  \qquad\text{for all }i,j.
\]
Composition with a fixed continuous linear map is continuous in this topology.
Indeed, if $R_\nu\to R$, then $R_\nu U\to RU$ because a continuous linear map
$U$ sends bounded sets to bounded sets; and $UR_\nu\to UR$ because for every
continuous seminorm $p$ on $\mh$, the seminorm $p\circ U$ is again continuous.
Therefore, first letting $i$ tend to the limit gives
\[
  ST_j=T_jS
\]
for every $j$, and then letting $j$ tend to the limit gives
\[
  ST=TS.
\]
This proves the theorem.
\end{proof}

This completes the quantization program announced in Section~\ref{sec:gnsquant}.

\phantomsection
\addcontentsline{toc}{section}{Acknowledgements}
\section*{Acknowledgements}
I thank Ashoke Sen for posing the problem, and Edward Witten for numerous invaluable suggestions. I also thank Beatrix M\"uhlmann, Steve Shenker, and Douglas Stanford for illuminating discussions. The research was supported in part by NSF grant DMS-2450608 and the Simons Collaboration grant on ``Probabilistic Paths to QFT''.

\bibliographystyle{abbrvnat}

\phantomsection
\addcontentsline{toc}{section}{References}
\bibliography{myrefs-hilbertspace}

\appendix

\setcounter{section}{0}

\refstepcounter{section}   
\section*{Appendix}        
\addcontentsline{toc}{section}{Appendix}

This appendix collects material that would have interrupted the main narrative: details on the torus Green's function, technical lemmas used at isolated points in the proofs, and general background on topological vector spaces and the standard AQFT framework.


\subsection{Green's function on the torus}\label{sec:green}
Recall that $M_T$ is the torus $[-T,T]\times [-1,1]$ endowed with the flat metric, and $\Delta$ is its Laplace--Beltrami operator. We define the mean-zero Green's function $G_T$ as the inverse of $-\frac{1}{2\pi}\Delta$ on the space of mean-zero smooth functions on $M_T$, in the following sense. For any $f\in C^\infty(M_T)$, we have 
\begin{align}\label{eq:gtdefinition}
  -\frac{1}{2\pi}G_T \Delta f = -\frac{1}{2\pi}\Delta G_T f = f - c(f),
\end{align}
where 
\[
  c(f) := \frac{1}{4T}\int_{M_T} f(u)\, \dd u
\]
is the zero mode of $f$. It turns out that $G_T$ is an integral kernel, that is, a map from $M_T^2$ into $\R$, acting on smooth functions as 
\[
G_T f(u) = \int_{M_T} G_T(u,v) f(v)\, \dd v.  
\]
In the following subsections, we define $G_T$ and show that it has the required property. Along the way, we also define a useful sequence of approximations of $G_T$.

\subsubsection{The truncated Green's function and its limit}
For $t,x\in \R$ and $N\ge 1$, define 
\begin{align*}
&g_{T,N}(t,x)
:=\frac{\pi}{2T}\sum_{\substack{|m|\le N,\ |n|\le N\\ (m,n)\neq(0,0)}}
\frac{1}
{\lambda_{m,n}}\exp\biggl(\pi \mathrm{i} \frac{m}{T}t+\pi \mathrm{i} n x\biggr),
\end{align*}
where 
\[
  \lambda_{m,n} := \pi^2 \biggl(\frac{m^2}{T^2} + n^2\biggr).
\]
We define the truncated Green's function $G_{T,N}:M_T^2 \to \R$ as 
\[
  G_{T,N}((t,x),(s,y)):= g_{T,N}(t-s,x-y).
\]
The following lemma shows that this $G_{T,N}$ is the same as the $G_{T,N}$ defined in equation~\eqref{eq:gtnalt}.
\begin{lmm}\label{lem:GN-form}
The function $G_{T,N}$ defined above is the same as the function $G_{T,N}$ defined in equation \eqref{eq:gtnalt}.
\end{lmm}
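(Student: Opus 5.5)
The plan is to expand the product of the two real-trigonometric eigenfunctions in \eqref{eq:gtnalt} into complex exponentials, then symmetrize over the half-lattice $\Z^2_*$ to recover the full symmetric sum defining $g_{T,N}$.

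First, for each $(m,n)\in\Z^2_*$ with $(m,n)\neq(0,0)$, I will use the addition formula $\cos A\cos B+\sin A\sin B=\cos(A-B)$ with $A=\pi\frac mT t+\pi nx$ and $B=\pi\frac mT s+\pi ny$. This gives
\[
e_{m,n}^{\mathrm c}(t,x)e_{m,n}^{\mathrm c}(s,y)+e_{m,n}^{\mathrm s}(t,x)e_{m,n}^{\mathrm s}(s,y)=\frac{1}{2T}\cos\biggl(\pi\frac mT(t-s)+\pi n(x-y)\biggr).
\]
Hence the summand in \eqref{eq:gtnalt} equals $\frac{2\pi}{2T\lambda_{m,n}}\cos(\cdots)$, and it depends only on the differences $(t-s,x-y)$.

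Second, I will write $\cos\theta=\frac12(e^{\i\theta}+e^{-\i\theta})$. The key observations are that $\lambda_{m,n}=\lambda_{-m,-n}$, and that the truncation region $\{|m|\le N,\ |n|\le N,\ (m,n)\ne(0,0)\}$ is invariant under $(m,n)\mapsto(-m,-n)$. By the definition of $\Z^2_*$, which contains exactly one element from each such pair, the truncated nonzero lattice is the disjoint union of the chosen half and its negative. The sum over $\Z^2_*$ of $\frac{1}{2}(e^{\i\theta_{m,n}}+e^{-\i\theta_{m,n}})/\lambda_{m,n}$ is therefore $\frac12$ times the full sum of $e^{\i\theta_{m,n}}/\lambda_{m,n}$ over all $(m,n)$ in the truncated region. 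Collecting constants gives $\frac{2\pi}{2T}\cdot\frac12=\frac{\pi}{2T}$, which matches the prefactor in $g_{T,N}$.

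There is no real obstacle here. The only point needing care is the bookkeeping of $\Z^2_*$: one must check that the chosen representatives pair off correctly inside the truncation box. This holds because the box is symmetric and $(0,0)$ is excluded.
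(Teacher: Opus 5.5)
Your proposal is correct and follows the paper's proof: both use the addition formula to get $\frac{1}{2T}\cos\bigl(\pi\frac{m}{T}(t-s)+\pi n(x-y)\bigr)$, split the cosine into two exponentials, and use $\lambda_{m,n}=\lambda_{-m,-n}$ together with the symmetry of the truncation box to turn the half-lattice sum over $\Z^2_*$ into the full sum with prefactor $\frac{\pi}{2T}$. Your bookkeeping of $\Z^2_*$ and of the constants is accurate.
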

\begin{proof}
Let $u:= t-s$ and $v:= x-y$. Then note that
\begin{align*}
&e_{m,n}^{\mathrm{c}}(t,x) e_{m,n}^{\mathrm{c}}(s,y)+ e_{m,n}^{\mathrm{s}}(t,x) e_{m,n}^{\mathrm{s}}(s,y)\\ 
&= \frac{1}{2T}\biggl\{\cos\biggl(\pi\frac{m}{T}t+\pi nx\biggr)\cos\biggl(\pi\frac{m}{T}s+\pi ny\biggr) \\ 
&\qquad \qquad + \sin\biggl(\pi\frac{m}{T}t+\pi nx\biggr)\sin\biggl(\pi\frac{m}{T}s+\pi ny\biggr)\biggr\}\\
&= \frac{1}{2T}\cos\biggl(\pi \frac{m}{T}u + \pi nv\biggr)\\ 
&= \frac{1}{4T}\biggl\{\exp\biggl(\pi \i \frac{m}{T}u + \pi\i n v\biggr)+ \exp\biggl(-\pi \i \frac{m}{T}u - \pi\i nv\biggr)\biggr\}.
\end{align*}
By the definitions of $G_{T,N}$ and $\Z^2_*$, this shows that 
\begin{align*}
&G_{T,N}((t,x),(s,y)) \\ 
&= \frac{\pi}{2T} \sum_{\substack{|m|\le N, \ |n|\le N\\ (m,n)\in\Z^2_*\setminus\{(0,0)\}}}\frac{1}{\lambda_{m,n}}
\Biggl\{\exp\biggl(\pi \i \frac{m}{T}u + \pi\i nv\biggr)+ \exp\biggl(-\pi \i \frac{m}{T}u - \pi\i nv\biggr)\Biggr\} \\
&= \frac{\pi}{2T} \sum_{\substack{|m|\le N, \ |n|\le N\\ (m,n)\ne (0,0)}}\frac{1}{\lambda_{m,n}}\exp\biggl(\pi \i \frac{m}{T}u + \pi\i nv\biggr)
= g_{T,N}(u,v).
\end{align*}
This completes the proof.
\end{proof}
To show that $G_{T,N}$ converges to the desired Green's function as $N\to \infty$, we first prove pointwise convergence. For that, we need two lemmas from analysis. The first one is the following.

\begin{lmm}\label{lem:1d-lattice-identity}
For any $a>0$ and $\theta \in \R$, we have 
\begin{align*}
\sum_{k\in \Z} \frac{e^{\i k\theta}}{k^2+a^2} &=\frac{\pi}{a}\sum_{k\in \Z} e^{-a |2\pi k + \theta|}.
\end{align*}
As a consequence, we have that for any $a\ge 1$ and $\theta\in[-\pi,\pi]$, 
\begin{equation*}
0\le \sum_{k\in\mathbb{Z}}\frac{e^{\i k\theta}}{k^2+a^2}\le \frac{C}{a}e^{-a|\theta|}
\end{equation*}
for some universal constant $C$.
\end{lmm}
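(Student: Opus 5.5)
The identity is the Poisson summation formula applied to the Lorentzian $h(y):=(y^2+a^2)^{-1}$, evaluated at the shifted points $y=k+\theta/(2\pi)$; equivalently, it is the Fourier series of the $2\pi$-periodization of $e^{-a|\cdot|}$. The plan is to start from the explicit transform
\[
  \int_{\R}\frac{e^{-\i \xi y}}{y^2+a^2}\,\dd y=\frac{\pi}{a}e^{-a|\xi|},
\]
obtained by a residue computation or by inverting the elementary integral $\int_\R e^{-a|x|}e^{\i yx}\,\dd x=2a/(a^2+y^2)$. Define the $2\pi$-periodic function
\[
  F(\theta):=\frac{\pi}{a}\sum_{k\in\Z}e^{-a|2\pi k+\theta|}.
\]
The series converges uniformly because the terms decay geometrically in $|k|$. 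Moreover, $F$ is continuous and piecewise $C^1$, with corners only at $\theta\in 2\pi\Z$, so $F$ has bounded variation.

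The next step is to compute the Fourier coefficients of $F$ by unfolding the periodization:
\[
  \frac{1}{2\pi}\int_{-\pi}^{\pi}F(\theta)e^{-\i k\theta}\,\dd\theta=\frac{1}{2\pi}\cdot\frac{\pi}{a}\int_{\R}e^{-a|x|}e^{-\i kx}\,\dd x=\frac{1}{k^2+a^2}.
\]
These coefficients are absolutely summable, so the Fourier series converges uniformly. Since $F$ is continuous, that limit equals $F(\theta)$ at every $\theta$, and this gives the first identity. The only point that needs care is this justification of pointwise convergence, and absolute summability settles it.

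For the consequence, the lower bound $0\le\cdots$ is immediate from the right-hand side, which is a sum of positive terms. For the upper bound, take $a\ge1$ and $\theta\in[-\pi,\pi]$. The $k=0$ term is $e^{-a|\theta|}$. For $k\neq0$ we have $|2\pi k+\theta|\ge 2\pi|k|-\pi\ge\pi|k|$, and also $|2\pi k+\theta|\ge\pi\ge|\theta|$. Combining these,
\[
  |2\pi k+\theta|\ge |\theta|+\tfrac{\pi}{2}(|k|-1)+\tfrac{\pi}{2}\cdot 0\ge|\theta|+\tfrac{\pi}{2}(|k|-1),
\]
using $|2\pi k+\theta|\ge\max(|\theta|,\pi|k|)\ge\tfrac12(|\theta|+\pi|k|)$. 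This bound only loses a factor $\tfrac12$ in the exponent, so it has to be done more carefully.

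To do it carefully, note that for $k\ge1$ and $\theta\in[-\pi,\pi]$ we have $2\pi k+\theta\ge 2\pi(k-1)+\pi+\theta$. Also $\pi+\theta\ge|\theta|$ when $\theta\ge0$. When $\theta<0$, use instead $2\pi k+\theta\ge 2\pi k-\pi\ge \pi+2\pi(k-1)\ge|\theta|+2\pi(k-1)$. Hence in every case $|2\pi k+\theta|\ge|\theta|+2\pi(|k|-1)$, and the case $k\le-1$ follows symmetrically. Summing the tail then gives
\[
  \sum_k e^{-a|2\pi k+\theta|}\le e^{-a|\theta|}\Bigl(1+2\sum_{j\ge0}e^{-2\pi a j}\Bigr)\le C e^{-a|\theta|}
\]
uniformly in $a\ge1$. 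Multiplying by $\pi/a$ yields the stated bound.
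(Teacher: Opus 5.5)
Your proof is correct and follows the paper's route. The paper cites the Poisson summation formula for $f(x)=(\pi/a)e^{-2\pi a|x|}$, whose Fourier transform is $(a^2+\xi^2)^{-1}$; your computation of the Fourier coefficients of the periodization $F$ verifies exactly that instance by hand, and your tail estimate via $|2\pi k+\theta|\ge|\theta|+2\pi(|k|-1)$ and a geometric series is the paper's use of $2\pi\pm\theta\ge|\theta|$ (your first, abandoned attempt at the tail bound, whose $\max\ge\frac12(\cdot)$ justification does not actually give the displayed inequality, can simply be deleted).
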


\begin{proof}
Let $f(x) := (\pi/a)e^{-2\pi a|x|}$. The Fourier transform of $f$ is 
\begin{align*}
\hat{f}(\xi) &= \int_{\R} f(x) e^{-2\pi \i x\xi} \dd x\\ 
&= \frac{\pi}{a}\int_0^\infty  e^{-(2\pi a+2\pi \i \xi)x}\dd x + \frac{\pi}{a}\int_0^\infty  e^{-(2\pi a-2\pi \i \xi) x} \dd x\\
&= \frac{\pi}{a(2\pi a+2\pi \i \xi)} + \frac{\pi}{a(2\pi a-2\pi \i \xi)}\\ 
&= \frac{1}{a^2+\xi^2}.
\end{align*}
Then by the Poisson summation formula~\cite[Theorem~3.2.8]{grafakos_classical_2014}, 
\begin{align*}
\sum_{k\in \Z} \frac{e^{\i k\theta}}{k^2+a^2} &= \sum_{k\in \Z} e^{\i k \theta} \hat{f}(k)\notag \\
&= \sum_{k\in \Z}  f \biggl(k + \frac{\theta}{2\pi}\biggr) = \frac{\pi}{a}\sum_{k\in \Z} e^{-a |2\pi k + \theta|}.
\end{align*}
This proves the first claim. Next, suppose that  $\theta \in [-\pi,\pi]$ and $a\ge 1$. Then $2\pi \pm \theta \ge |\theta|$, and hence,
\begin{align*}
\sum_{k\in \Z} e^{-a |2\pi k + \theta|} &= e^{-a|\theta|} + \sum_{k=1}^\infty e^{-a (2\pi k + \theta)} + \sum_{k=1}^\infty e^{-a(2\pi k - \theta)}\\
&= e^{-a|\theta|} + \frac{e^{-a(2\pi + \theta)}}{1-e^{-2\pi a}} + \frac{e^{-a(2\pi - \theta)}}{1-e^{-2\pi a}}\le Ce^{-a|\theta|}, 
\end{align*}
where $C$ is universal. This completes the proof.
\end{proof}
The second lemma from analysis that we need is the following consequence of summation by parts.
\begin{lmm}\label{lem:sumparts}
Let $\{a_n\}_{n\ge 1}$ be a sequence of complex numbers and $\{b_n\}_{n\ge 1}$ be a sequence of positive real numbers decreasing to zero. Suppose that 
\[
  C := \sup_{N\ge 1}\biggl|\sum_{n=1}^N a_n\biggr| <\infty.
\] 
Then the limit 
\begin{align*}
\sum_{n=1}^\infty a_n b_n := \lim_{N\to\infty} \sum_{n=1}^N a_n b_n
\end{align*}
exists, and its absolute value is bounded above by $2Cb_1$.
\end{lmm}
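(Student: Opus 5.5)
This is the standard Abel summation argument. Write $A_N:=\sum_{n=1}^N a_n$ with $A_0:=0$, so that $|A_N|\le C$ for all $N$. For $N\ge1$, summation by parts gives
\[
  \sum_{n=1}^N a_nb_n=\sum_{n=1}^N (A_n-A_{n-1})b_n = A_Nb_N+\sum_{n=1}^{N-1}A_n(b_n-b_{n+1}).
\]
The plan is to treat the two pieces on the right separately: show the boundary term vanishes in the limit, and show the series converges absolutely.

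For the boundary term, we have $|A_Nb_N|\le Cb_N\to0$, since $b_n\downarrow 0$.

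For the series, the hypothesis that $b_n$ is decreasing makes every increment $b_n-b_{n+1}$ nonnegative. Hence
\[
  \sum_{n=1}^{N-1}|A_n|(b_n-b_{n+1})\le C\sum_{n=1}^{N-1}(b_n-b_{n+1})=C(b_1-b_N)\le Cb_1,
\]
because the sum telescopes. The partial sums of $\sum_n |A_n|(b_n-b_{n+1})$ are therefore bounded and nondecreasing, so $\sum_n A_n(b_n-b_{n+1})$ converges absolutely. This proves that $\lim_{N\to\infty}\sum_{n=1}^N a_nb_n$ exists and equals $\sum_{n=1}^\infty A_n(b_n-b_{n+1})$.

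For the bound, the limiting value has absolute value at most $Cb_1$, which is no larger than the claimed $2Cb_1$. The same estimate also holds uniformly in $N$ for the partial sums, since
\[
  \biggl|\sum_{n=1}^N a_nb_n\biggr|\le Cb_N+C(b_1-b_N)=Cb_1\le 2Cb_1.
\]

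There is no real obstacle in this argument. The only point needing care is the sign: the telescoping bound uses the monotonicity of $b_n$ to guarantee $b_n-b_{n+1}\ge0$, and $b_n>0$ with $b_n\to0$ to control both the boundary term and the final value of the telescoped sum.
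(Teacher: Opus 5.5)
Your proof is correct and uses the same Abel summation argument as the paper: summation by parts, then telescoping of $\sum_n (b_n-b_{n+1})$. The only difference is organizational. The paper bounds an arbitrary block, $\bigl|\sum_{n=M}^N a_nb_n\bigr|\le 2Cb_M$, and gets existence from the Cauchy criterion. You instead sum by parts from $n=1$, use absolute convergence of $\sum_n A_n(b_n-b_{n+1})$, and obtain the slightly sharper bound $Cb_1$.
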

\begin{proof}
Let $A_n:= a_1+\cdots +a_n$ and $A_0:=0$. Then for any $M\le N$,
\begin{align*}
\sum_{n=M}^N a_nb_n &= \sum_{n=M}^N (A_n - A_{n-1})b_n\\ 
&= \sum_{n=M}^N A_n b_n - \sum_{n=M-1}^{N-1} A_n b_{n+1}\\
&= A_N b_N -A_{M-1} b_M+ \sum_{n=M}^{N-1}A_n(b_n - b_{n+1}).
\end{align*}
From the given conditions, this gives
\begin{align*}
\biggl|\sum_{n=M}^N a_nb_n \biggr| &\le Cb_N + C b_M + C\sum_{n=M}^{N-1}|b_n - b_{n+1}|\\ 
&= Cb_N + Cb_M + C(b_M - b_N) = 2Cb_M.
\end{align*}
It is clear that this suffices to prove both assertions.
\end{proof}
We are now ready to prove pointwise convergence of $g_{T,N}$ as $N\to \infty$.
\begin{lmm}\label{lem:gN-to-g}
For every $(t,x)\in\R^2$ such that $t\not\equiv 0\pmod{2T}$ or $x \not\equiv 0 \pmod{2}$, the limit 
\[
  g_T(t,x) := \lim_{N\to\infty} g_{T,N}(t,x)
\]
exists and is finite.
\end{lmm}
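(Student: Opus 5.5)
We sum over $n$ first for each fixed $m$, using Lemma~\ref{lem:1d-lattice-identity}, and then handle the remaining sum over $m$ with Lemma~\ref{lem:sumparts}. Since the double sum is only conditionally convergent, the truncation $|m|,|n|\le N$ matters. The plan is to show that the square partial sums converge, by reducing them to an absolutely convergent part plus a one-dimensional Fourier series that converges by summation by parts.

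First we write $1/\lambda_{m,n}=\pi^{-2}\,(n^2+a_m^2)^{-1}$ with $a_m:=|m|/T$.

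For the $m=0$ row we have the one-dimensional sum
\[
\sum_{0<|n|\le N}\frac{e^{\pi\i nx}}{\pi^2 n^2}.
\]
It converges absolutely, so it causes no difficulty.

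For $m\neq0$ we split the $n$-sum as the full sum $\sum_{n\in\Z}$ minus its tail $\sum_{|n|>N}$. By Lemma~\ref{lem:1d-lattice-identity}, with $\theta$ the representative of $\pi x$ in $[-\pi,\pi]$, the full sum equals
\[
\frac{\pi}{a_m}\sum_{k\in\Z}e^{-a_m|2\pi k+\theta|}.
\]
Whenever $\theta\neq0$, this is $O(a_m^{-1}e^{-a_m|\theta|})$ once $a_m\ge1$, which decays exponentially in $|m|$. The finitely many $m$ with $a_m<1$ are harmless. The tail is bounded by
\[
\sum_{|n|>N}\frac{1}{n^2+a_m^2}\lesssim \min\bigl(1/N,\;N/a_m^2\bigr).
\]
Summing this bound over $|m|\le N$ gives
\[
\sum_{|m|\le TN}\frac1N+\sum_{|m|>TN}\frac{N T^2}{m^2}=O(1),
\]
which does not tend to zero. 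So the crude bound fails, and the tail needs more care.

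\textbf{Case $x\not\equiv0\pmod 2$.} We handle the tails by summation by parts in $n$. The partial sums of $e^{\pi\i nx}$ are bounded by $C_x$, and $n\mapsto (n^2+a_m^2)^{-1}$ is decreasing, so Lemma~\ref{lem:sumparts} bounds the tail by $2C_x/(N^2+a_m^2)$. Summing over $|m|\le N$ gives at most
\[
2C_x\sum_{|m|\le N}\frac{1}{N^2+m^2/T^2}=O(1/N)\to0.
\]
The main parts $\sum_{m}(\text{full }n\text{-sum})e^{\pi\i mt/T}$ converge absolutely because of the exponential decay in $|m|$. This proves the limit exists in this case.

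\textbf{Case $x\equiv0$ and $t\not\equiv0\pmod{2T}$.} Here we swap the roles of $m$ and $n$. The row $n=0$ contributes $\sum_{0<|m|\le N}T^2/(\pi^2m^2)\,e^{\pi\i mt/T}$, which converges absolutely. For $n\neq0$ we sum over all $m\in\Z$ via Lemma~\ref{lem:1d-lattice-identity} with lattice spacing rescaled by $T$; this gives decay like $e^{-c|n|}$ because $t/T\not\equiv0\pmod 2$. The $m$-tails are bounded by summation by parts as before, using bounded partial sums of $e^{\pi\i mt/T}$, which gives $O(1/N)$.

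Since the truncation region is a square, it is symmetric in the two indices, so the same decomposition applies with the roles exchanged. The main obstacle is the uniform control of the tails. The crude bound shown above is only $O(1)$, so the Abel summation step, which exploits the nonzero frequency in one coordinate, is essential.
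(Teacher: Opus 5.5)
Your proof is correct and is essentially the paper's argument. You split each square partial sum into full inner sums, which decay exponentially in the outer index by Lemma~\ref{lem:1d-lattice-identity}, plus inner tails, which Lemma~\ref{lem:sumparts} bounds by $O(1/N)$ in total using the nonzero frequency in the inner index. The only difference is cosmetic: the paper's main case sums over $m$ first (using $t\not\equiv 0 \pmod{2T}$) and gets the other case by swapping $m$ and $n$, whereas you do it the other way around.
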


\begin{proof}
Since $g_{T,N}(t,x)$ is periodic with period $2T$ in the first coordinate and period $2$ in the second coordinate, we may assume without loss that $(t,x)\in M_T\setminus\{(0,0)\}$. Define $\alpha := \pi t/T$ and $\beta := \pi x$. First, suppose that $\alpha\neq 0$. 
Define, for $n\neq 0$,
\[
B_n:=\sum_{m\in\mathbb{Z}}\frac{e^{\i\alpha m}}{\lambda_{m,n}},
\qquad
B_{n,N}:=\sum_{|m|\le N}\frac{e^{\i\alpha m}}{\lambda_{m,n}},
\]
and for $n=0$ define
\[
B_0^\ast:=\sum_{m\in\mathbb{Z}\setminus\{0\}}\frac{e^{\i\alpha m}}{\lambda_{m,0}},
\qquad
B_{0,N}^\ast:=\sum_{0<|m|\le N}\frac{e^{\i\alpha m}}{\lambda_{m,0}}.
\]
(Note that all of the above series are absolutely convergent.) By definition,
\begin{align}\label{eq:gnexp}
g_{T,N}(t,x)&=\frac{\pi}{2T}\biggl(B_{0,N}^\ast+\sum_{0<|n|\le N}e^{\i\beta n}B_{n,N}\biggr).
\end{align}
We claim that 
\begin{equation}\label{eq:blimit}
\lim_{N\to \infty}\sum_{0<|n|\le N}|B_n-B_{n,N}| = 0,
\qquad
\lim_{N\to \infty}|B_0^\ast-B_{0,N}^\ast|=0.
\end{equation}
To prove this, first note that 
\begin{align*}
B_n - B_{n,N}&=\sum_{|m|> N}\frac{e^{\i\alpha m}}{\lambda_{m,n}},
\end{align*}
that $\lambda_{m,n}$ is positive, $1/\lambda_{m,n}$ decreases to zero as $m\to\infty$, and that for any $0\le M_1\le M_2$, 
\[
  \sum_{m=M_1}^{M_2}e^{\pm\i\alpha m} = \frac{e^{\pm\i \alpha M_1} - e^{\pm\i \alpha (M_2+1)}}{1-e^{\pm \i \alpha}},
\]
whose absolute value is clearly bounded by a constant $C_1(\alpha)$ that depends only on $\alpha$. Thus, by Lemma \ref{lem:sumparts},
\begin{align*}
|B_n - B_{n,N}| &\le \frac{2C_1(\alpha)}{\lambda_{N,n}} = \frac{C_2(\alpha)T^2}{N^2+n^2T^2}.
\end{align*}
This gives
\begin{align*}
\sum_{0<|n|\le N}|B_n-B_{n,N}| &\le C_2(\alpha)T^2\sum_{0<|n|\le N} \frac{1}{N^2+n^2T^2}\le \frac{C_3(\alpha)T^2}{N}. 
\end{align*}
Also, since $\lambda_{m,0}$ grows quadratically in $m$, $B_0^* - B_{0,N}^*\to 0$ as $N\to\infty$. This completes the proof of the claim \eqref{eq:blimit}. So, by equation \eqref{eq:gnexp}, it only remains to show that the limit
\begin{align}\label{eq:bnlim}
\lim_{N\to\infty}\sum_{0<|n|\le N}e^{\i\beta n}B_{n}
\end{align}
exists and is finite. Since $\alpha \in [-\pi,\pi]$ and $\alpha \ne 0$, Lemma \ref{lem:1d-lattice-identity} implies that for $|n|$ so large that $|n|T\ge1$, 
\[
  0\le B_n\le \frac{CT}{|n|}e^{-|\alpha| |n| T},
\]
for some universal constant $C$. 
Clearly, this proves that the limit displayed in equation~\eqref{eq:bnlim} exists and is finite. This completes the proof when $\alpha \ne 0$. The same argument works when $\beta \ne 0$, upon swapping $m,n$ everywhere.
\end{proof}
Having obtained $g_T$, we define, for distinct points $(t,x),(s,y)\in M_T$,
\begin{align}\label{eq:gtdefn}
  G_T((t,x),(s,y)) := g_T(t-s, x-y).
\end{align}
This is our candidate for the Green's function on $M_T$. Note that we have only proved its existence; we still have to show that it satisfies the criterion \eqref{eq:gtdefinition}. This will have to wait until the next subsection.

\subsubsection{Some properties of the Green's function}
Our goal in this subsection is to prove some upper and lower bounds for $G_T$, and use those to show that $G_T$ satisfies the criterion \eqref{eq:gtdefinition}. These estimates will also be useful for other purposes. 
First, the following lemma proves the positivity of the heat kernel on the torus, which we need as a technical input.

\begin{lmm}\label{lem:heat-kernel-positive}
For $t>0$ and $x,y\in \R$, define
\[
p_t(x,y):=\frac{1}{4T}\sum_{(m,n)\in\mathbb{Z}^2}\exp\biggl(-t\lambda_{m,n}+ \pi \i\frac{m}{T} x+ \pi \i ny\biggr).
\]
Then $p_t(x,y)$ has the alternative expression 
\[
  p_t(x,y) = \frac{1}{4\pi t}\sum_{k,l\in\Z}
\exp\biggl(-\frac{(x-2Tk)^2+(y-2l)^2}{4t}\biggr).
\]
In particular, $p_t(x,y)\ge 0$ for all $t>0$ and $x,y\in \R$. 
\end{lmm}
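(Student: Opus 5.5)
The plan is to prove the Gaussian image-sum identity by Poisson summation, separately in each coordinate, since the double series factorizes. With $\lambda_{m,n}=\pi^2(m^2/T^2+n^2)$, the summand splits as
\[
\exp\biggl(-t\pi^2\frac{m^2}{T^2}+\pi\i\frac{m}{T}x\biggr)\cdot\exp\bigl(-t\pi^2n^2+\pi\i ny\bigr).
\]
The double sum is absolutely convergent for $t>0$, so it equals the product of the two one-dimensional sums. It therefore suffices to establish two one-dimensional identities:
\[
\sum_{m\in\Z}e^{-t\pi^2m^2/T^2+\pi\i mx/T}=\frac{T}{\sqrt{\pi t}}\sum_{k\in\Z}e^{-(x-2Tk)^2/(4t)},
\]
\[
\sum_{n\in\Z}e^{-t\pi^2n^2+\pi\i ny}=\frac{1}{\sqrt{\pi t}}\sum_{l\in\Z}e^{-(y-2l)^2/(4t)}.
\]
Multiplying these together with the prefactor $\frac{1}{4T}$ gives $\frac{1}{4T}\cdot\frac{T}{\pi t}=\frac{1}{4\pi t}$, which is exactly the claimed expression.

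For each one-dimensional identity, apply the Poisson summation formula $\sum_k f(k)=\sum_m\hat f(m)$ (as in Lemma~\ref{lem:1d-lattice-identity}, citing \cite[Theorem~3.2.8]{grafakos_classical_2014}) to a Gaussian. For the second sum, take $f(s)=\exp(-(y-2s)^2/(4t))$. Its Fourier transform is computed from the standard Gaussian integral $\int e^{-a u^2-2\pi\i u\xi}\,\dd u=\sqrt{\pi/a}\,e^{-\pi^2\xi^2/a}$ with the substitution $u=y-2s$. This gives
\[
\hat f(\xi)=\tfrac12 e^{-\pi\i \xi y}\sqrt{4\pi t}\,e^{-4\pi^2t\xi^2/4}=\sqrt{\pi t}\,e^{-\pi^2t\xi^2-\pi\i\xi y}.
\]
Reindexing $m\to -m$ then gives the identity. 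The first sum is the same computation with $f(s)=\exp(-(x-2Ts)^2/(4t))$; equivalently, it follows from the second after rescaling $y=x/T$ and $t\to t/T^2$, which accounts for the factor $T$. Gaussians are Schwartz, so Poisson summation applies without caveat.

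Nonnegativity is then immediate from the alternative expression, since every term is a positive exponential. The only real care needed is bookkeeping of the constants ($2\pi$ conventions, the factor $\frac12$ from the substitution $u=y-2s$, and the sign of the phase). I expect this to be the main, though modest, obstacle; a sanity check is that integrating $p_t(x,y)$ over the torus $M_T$ should give $1$, which it does for both expressions.
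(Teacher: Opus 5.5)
Your proposal is correct and is essentially the paper's argument: you factor $p_t$ into two one-dimensional theta series, apply Poisson summation to a Gaussian in each coordinate, and track the constants $\frac{1}{4T}\cdot\frac{T}{\sqrt{\pi t}}\cdot\frac{1}{\sqrt{\pi t}}=\frac{1}{4\pi t}$, which match. The only cosmetic difference is the side you start from: you apply Poisson summation to the image-side Gaussian $s\mapsto e^{-(y-2s)^2/(4t)}$, while the paper applies it to the frequency-side Gaussian $f(s)=e^{-t\pi^2 s^2/T^2}$ modulated by $e^{2\pi\i s\xi}$ and then sets $\xi=x/(2T)$, which is the same identity read in the other direction.
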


\begin{proof}
Fix $t>0$ and set
\begin{align*}
p_t^{(1)}(x)&:=\frac{1}{2T}\sum_{m\in\mathbb{Z}}\exp\biggl( -\frac{t\pi^2 m^2}{T^2} + \pi\I \frac{m}{T}x\biggr),\\ 
p_t^{(2)}(y)&:=\frac{1}{2}\sum_{n\in\mathbb{Z}}\exp(-t\pi^2 n^2+\pi\I ny).
\end{align*}
Then $p_t(x,y)=p_t^{(1)}(x)\,p_t^{(2)}(y)$.
We will prove a formula for $p_t^{(1)}(x)$, and the corresponding formula for $p_t^{(2)}(y)$ is identical with $T$ replaced by $1$. To that end, consider on $\mathbb{R}$ the function $f(x):=\exp(-t\pi^2 x^2/T^2)$, whose Fourier transform is
\[
\widehat f(\xi)=\int_{\mathbb{R}} f(x)e^{-2\pi \I x\xi}\,\dd x = \frac{T}{\sqrt{\pi t}}\exp\biggl(-\frac{T^2\xi^2}{t}\biggr).
\]
Applying the Poisson summation formula~\cite[Theorem~3.2.8]{grafakos_classical_2014} to the function
$g(x):=f(x)e^{2\pi\I x\xi}$, whose Fourier transform is
$\widehat g(\eta)=\widehat f(\eta-\xi)$, we get
$$\sum_{m\in\mathbb{Z}}f(m)e^{2\pi\I m\xi}=\sum_{k\in\mathbb{Z}}\widehat f(k-\xi).$$
Taking $\xi=x/2T$ and dividing both sides by $2T$ gives
\[
p_t^{(1)}(x) = \frac{1}{2T}\sum_{m\in\mathbb{Z}}\exp\biggl( -\frac{t\pi^2 m^2}{T^2}+\pi\I \frac{m}{T}x\biggr)
=\frac{1}{2\sqrt{\pi t}}\sum_{k\in\mathbb{Z}}\exp\biggl(-\frac{(x-2Tk)^2}{4t}\biggr),
\]
which completes the proof.
\end{proof}

We now use the above lemma to get a uniform lower bound for $g_{T,N}$ (and hence, $g_T$). In the proof of this lemma as well as in the remainder of this subsection, we will use the notation $X\lesssim Y$ to mean that $X\le CY$ for some constant $C$ that may depend only on $T$. 
\begin{lmm}\label{lem:gN-lower}
There exists $C>0$, depending only on $T$, such that
\[
g_{T,N}(t,x)\ge -C\quad\text{for all }N\ge1,\ t,x\in \R.
\]
Consequently, the same lower bound holds for $g_T$.
\end{lmm}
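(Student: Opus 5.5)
The plan is to write $g_{T,N}$ as a time integral of the truncated heat kernel and then use the positivity from Lemma~\ref{lem:heat-kernel-positive}. For $\lambda>0$ we have $1/\lambda=\int_0^\infty e^{-s\lambda}\,\dd s$. Splitting this at $s=1$ gives
\[
g_{T,N}(t,x)=\frac{\pi}{2T}\sum_{\substack{|m|,|n|\le N\\(m,n)\ne(0,0)}}\biggl(\int_0^1 e^{-s\lambda_{m,n}}\,\dd s+\frac{e^{-\lambda_{m,n}}}{\lambda_{m,n}}\biggr)e^{\pi\i\frac{m}{T}t+\pi\i nx}.
\]
The second part is bounded in absolute value by $\frac{\pi}{2T}\sum_{(m,n)\ne(0,0)}e^{-\lambda_{m,n}}/\lambda_{m,n}$. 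This sum is finite, depends only on $T$, and does not depend on $N$, $t$ or $x$.

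The first part is $\int_0^1\bigl(2\pi\,p^{N}_s(t,x)-\tfrac{\pi}{2T}\bigr)\,\dd s$, where $p^N_s$ is the heat kernel of Lemma~\ref{lem:heat-kernel-positive} truncated to $|m|,|n|\le N$. The $-\pi/(2T)$ comes from removing the $(m,n)=(0,0)$ term, and it contributes exactly $-\pi/(2T)$ after integration, which is harmless. The difficulty is that the truncated kernel need not be nonnegative. It factorizes, however, as a product of two one-dimensional Dirichlet-truncated theta series, $p_s^{(1),N}(t)\,p_s^{(2),N}(x)$.

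To handle the truncation I would not compare truncated and untruncated kernels pointwise. Instead I would replace the sharp cutoff with the Fejér mean. Concretely, I would note that the desired bound only needs to hold for a fixed lower constant, and control the tail error
\[
\biggl|p^N_s-p_s\biggr|\le \frac{1}{4T}\sum_{\max(|m|,|n|)>N}e^{-s\lambda_{m,n}}.
\]
This is small when $sN^2$ is large, but not uniformly in $s\in(0,1)$. So I split the $s$-integral at $s_N:=N^{-2}$.

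For $s\ge N^{-2}$, I use $p_s\ge0$. The tail error, integrated over $s\in[N^{-2},1]$, is bounded by
\[
\frac{1}{4T}\sum_{\max(|m|,|n|)>N}\frac{e^{-\lambda_{m,n}/N^2}}{\lambda_{m,n}}\lesssim\sum_{r>N}\frac{r\,e^{-c r^2/N^2}}{r^2}\lesssim 1,
\]
uniformly in $N$, by comparing the sum with $\int_1^\infty e^{-c\rho^2}\rho^{-1}\,\dd\rho$.

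For $s<N^{-2}$, I bound the truncated kernel crudely by its number of terms: $|p^N_s|\le (2N+1)^2/(4T)$. Integrated over an interval of length $N^{-2}$, this contributes $O(1)$.

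Adding the pieces gives $g_{T,N}\ge -C(T)$ for all $N$, $t$ and $x$. The bound passes to $g_T$ by the pointwise convergence of Lemma~\ref{lem:gN-to-g}.

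The main obstacle I expect is the tail estimate on $[N^{-2},1]$. The region $\max(|m|,|n|)>N$ is a square complement, not a disc complement, so I would use $\lambda_{m,n}\ge c\,\max(|m|,|n|)^2$ with $c=\pi^2\min(1,T^{-2})$ and count lattice points on square shells, about $8r$ points per shell $r$. If that estimate turned out to be off by a logarithm, the fallback is to move the split point to $s_N=N^{-2}\log N$, or to use Fejér-type smoothing together with positivity of Fejér kernels.
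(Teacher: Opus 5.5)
Your argument is correct and is essentially the paper's proof, just reorganized. The paper compares $g_{T,N}$ with the heat-regularized kernel $g^{\epsilon_N}=2\pi\int_{\epsilon_N}^\infty\bigl(p_s-\frac{1}{4T}\bigr)\,\dd s$, where $\epsilon_N=(N+1)^{-2}$, and uses positivity of $p_s$ from Lemma~\ref{lem:heat-kernel-positive}; its two comparison errors (the inside-square count $(2N+1)^2\epsilon_N$ and the outside-square sum of $e^{-\epsilon_N\lambda_{m,n}}/\lambda_{m,n}$) are exactly your $[0,N^{-2}]$ piece and your tail piece. The shell count gives $\sum_{r>N}e^{-cr^2/N^2}/r\le\int_1^\infty e^{-cu^2}u^{-1}\,\dd u$ with no logarithmic loss, so the Fej\'er smoothing and the fallback split point are not needed.
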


\begin{proof}
Clearly, it suffices to work with $(t,x)\in M_T$. Define 
\begin{align}\label{eq:alphabetadef}
\alpha:=\pi\frac{t}{T},\quad
\beta:=\pi x.
\end{align}
By pairing $(m,n)$ with $(-m,-n)$, we get
\begin{equation}\label{eq:cos-form}
g_{T,N}(t,x)=\frac{\pi}{2T}\sum_{\substack{|m|\le N,\ |n|\le N\\(m,n)\neq(0,0)}}
\frac{1}{\lambda_{m,n}}\cos(\alpha m+\beta n).
\end{equation}
For $\epsilon>0$, define the Abel-regularized kernel
\[
g^\epsilon(t,x):=\frac{\pi}{2T}\sum_{(m,n)\in\mathbb{Z}^2\setminus\{(0,0)\}}
\frac{e^{-\epsilon\lambda_{m,n}}}{\lambda_{m,n}}\,
e^{\i  (\alpha m+\beta n)}.
\]
This series converges absolutely for each $\epsilon>0$. Let $p_s$ be the heat kernel defined in Lemma~\ref{lem:heat-kernel-positive}. By the identity
$\lambda^{-1}e^{-\epsilon \lambda}=\int_\epsilon^\infty e^{-s\lambda}\,ds$ and Fubini (justified by absolute
convergence when $\epsilon>0$), we have
\begin{align}\label{eq:gepsilonform}
g^\epsilon(t,x)
&=\frac{\pi}{2T}\sum_{(m,n)\neq(0,0)}\frac{e^{-\epsilon\lambda_{m,n}}}{\lambda_{m,n}}\,
e^{\i(\alpha m+\beta n)} \notag \\
&=\frac{\pi}{2T}\sum_{(m,n)\neq(0,0)}e^{\i(\alpha m+\beta n)}
\int_\epsilon^\infty e^{-s\lambda_{m,n}}\,\dd s \notag \\
&=\int_\epsilon^\infty \biggl(\frac{\pi}{2T}\sum_{(m,n)\neq(0,0)}
e^{-s\lambda_{m,n}}e^{\i(\alpha m+\beta n)}\biggr)\, \dd s \notag \\
&=2\pi \int_\epsilon^\infty\biggl(p_s(t,x)-\frac{1}{4T}\biggr)\,\dd s.
\end{align}
Now recall that by Lemma~\ref{lem:heat-kernel-positive}, $p_s$ is a nonnegative function. Thus,
\[
\int_\epsilon^1\biggl(p_s-\frac{1}{4T}\biggr)\,\dd s\ge -\frac{1-\epsilon}{4T}\ge -\frac{1}{4T}.
\]
Next, let $\lambda_\star:=\min\{\pi^2/T^2,\pi^2\}$. Then note that for $s\ge1$,
\begin{align}\label{eq:pttail}
\biggl|p_s(t,x)-\frac{1}{4T}\biggr|
&=\frac{1}{4T}\biggl|\sum_{(m,n)\neq(0,0)}e^{-s\lambda_{m,n}}e^{\i(\alpha m+\beta n)}\biggr|\notag\\ 
&\le \frac{1}{4T}\sum_{(m,n)\neq(0,0)}e^{-s\lambda_{m,n}}
\le C_1 e^{-(s-1)\lambda_\star},
\end{align}
where
\[
C_1:=\frac{1}{4T}\sum_{(m,n)\neq(0,0)}e^{-\lambda_{m,n}}<\infty.
\]
Therefore,
\[
\int_1^\infty\biggl(p_s-\frac{1}{4T}\biggr)\,\dd s\ge -\int_1^\infty C_1 e^{-(s-1)\lambda_\star}\,\dd s
= -\frac{C_1}{\lambda_\star}.
\]
Combining with the above estimates yields the uniform lower bound
\begin{equation}\label{eq:abel-lower}
g^\epsilon(t,x)\ge -2\pi \biggl(\frac{1}{4T}+\frac{C_1}{\lambda_\star}\biggr)=:-C_{\mathrm{Abel}}
\quad\text{for all }\epsilon>0,\ (t,x)\in M_T.
\end{equation}
The next step is to compare $g_{T,N}$ with $g^{\epsilon_N}$, where $\epsilon_N:=(N+1)^{-2}$. Note that 
\[
g_{T,N}(t,x)-g^{\epsilon_N}(t,x)
=\frac{\pi}{2T}\sum_{(m,n)\neq(0,0)}
\frac{1_{\{|m|\le N,|n|\le N\}}-e^{-\epsilon_N\lambda_{m,n}}}{\lambda_{m,n}}
e^{\i(\alpha m+\beta n)},
\]
where $1_{\{|m|\le N,|n|\le N\}}$ is $1$ if $|m|\le N$ and $|n|\le N$, and $0$ otherwise. 
Thus, 
\[
|g_{T,N}(t,x)-g^{\epsilon_N}(t,x)|
\le \frac{\pi}{2T}\sum_{(m,n)\neq(0,0)}\frac{\big|1_{\{|m|\le N,|n|\le N\}}-e^{-\epsilon_N\lambda_{m,n}}\big|}{\lambda_{m,n}}
=:R_N.
\]
Split $R_N=R_N^{\mathrm{in}}+R_N^{\mathrm{out}}$ into contributions from the square
$\{|m|\le N,\, |n|\le N\}$ and its complement. Take any $(m,n)\ne 0$ inside the square. Using  $|1-e^{-x}|\le x$, we get
\[
\frac{|1-e^{-\epsilon_N\lambda_{m,n}}|}{\lambda_{m,n}}\le \epsilon_N.
\]
There are at most $(2N+1)^2$ such terms. Hence,
\[
R_N^{\mathrm{in}}\le \frac{\pi(2N+1)^2}{2T}\epsilon_N\le \frac{2\pi}{T}.
\]
Next, take $(m,n)$ outside the square. Then $m^2+n^2\ge (N+1)^2$. Moreover,
\[
\lambda_{m,n}\ge \kappa(m^2+n^2),\qquad
\kappa:=\frac{\pi^2}{\max\{T^2,1\}}.
\]
This gives
\[
R_N^{\mathrm{out}}
\le \frac{\pi}{2T\kappa}\sum_{\text{outside}}\frac{e^{-\epsilon_N\kappa(m^2+n^2)}}{m^2+n^2}.
\]
Comparing the lattice sum to a polar-coordinate integral, we get
\[
\sum_{\text{outside}}\frac{e^{-\epsilon_N\kappa(m^2+n^2)}}{m^2+n^2}
\lesssim 
\int_{r\ge N+1} \frac{e^{-\epsilon_N\kappa r^2}}{r^2}r\,\dd r
=\int_{s\ge 1} e^{-\kappa s^2}\, \frac{\dd s}{s}<\infty.
\]
This gives $R_N^{\mathrm{out}}\lesssim 1$. Since we have already established that $R_N^{\mathrm{in}}\lesssim 1$, we get 
\begin{align}\label{eq:rnbound}
R_N \le C_{\mathrm{compare}}
\end{align} 
for some positive constant $C_{\mathrm{compare}}$ that depends only on $T$.  Combining this with equation~\eqref{eq:abel-lower}, we obtain
\[
g_{T,N}(t,x)\ge g^{\epsilon_N}(t,x)-C_{\mathrm{compare}}\ge -C_{\mathrm{Abel}}-C_{\mathrm{compare}}=:-C,
\]
which proves the claim.
\end{proof}

The last property of $g_T$ we need is a logarithmic upper bound on its absolute value.
\begin{lmm}\label{lem:gN-log-upper}
Take any $t,x\in \R$. Let
\[
d_T(t):=\min_{k\in\Z}|t-2Tk|,\quad d_1(x):=\min_{k\in\Z}|x-2k|,\quad
r:=\sqrt{d_T(t)^2+d_1(x)^2}.
\]
There exists $C>0$, depending only on $T$, such that for all $N\ge1$,
\[
|g_{T,N}(t,x)|\le C+|\log r|.
\]
Since the bound has no dependence on $N$, it holds also for $g_T(t,x)$.
\end{lmm}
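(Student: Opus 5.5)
We will reuse the Abel-regularization machinery from the proof of Lemma~\ref{lem:gN-lower}. Take $(t,x)\in M_T\setminus\{(0,0)\}$, so that $d_T(t)=|t|$ and $d_1(x)=|x|$, and set $\epsilon_N=(N+1)^{-2}$. By the comparison bound \eqref{eq:rnbound},
\[
|g_{T,N}(t,x)-g^{\epsilon_N}(t,x)|\le C_{\mathrm{compare}}.
\]
It therefore suffices to prove
\[
|g^\epsilon(t,x)|\le C+|\log r|
\qquad\text{uniformly in }\epsilon\in(0,1].
\]
The lower bound $g^\epsilon\ge -C_{\mathrm{Abel}}$ is already \eqref{eq:abel-lower}, so only an upper bound for $g^\epsilon$ is needed.

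For the upper bound, use the heat-kernel representation \eqref{eq:gepsilonform}:
\[
g^\epsilon(t,x)=2\pi\int_\epsilon^\infty \biggl(p_s(t,x)-\frac{1}{4T}\biggr)\,\dd s.
\]
The tail $\int_1^\infty$ is bounded in absolute value by $C_1/\lambda_\star$, by \eqref{eq:pttail}. On $[\epsilon,1]$, drop the $-1/(4T)$ term, which only decreases the integral, and bound $p_s$ using the image-sum formula of Lemma~\ref{lem:heat-kernel-positive}:
\[
p_s(t,x)=\frac{1}{4\pi s}\sum_{k,l\in\Z}\exp\biggl(-\frac{(t-2Tk)^2+(x-2l)^2}{4s}\biggr).
\]
The term $k=l=0$ equals $(4\pi s)^{-1}e^{-r^2/4s}$. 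Since $(t,x)\in M_T$, every other image satisfies $(t-2Tk)^2+(x-2l)^2\ge c_T(k^2+l^2)$ with $c_T>0$. Hence for $s\le 1$ the sum of the nonzero images is at most
\[
\frac{1}{4\pi s}\sum_{(k,l)\ne 0}e^{-c_T(k^2+l^2)/4s}\le C_T,
\]
because $s^{-1}e^{-c/s}$ is bounded on $(0,1]$ and the series converges.

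It remains to estimate
\[
\int_0^1 \frac{1}{4\pi s}e^{-r^2/4s}\,\dd s=\frac{1}{4\pi}\int_{r^2/4}^\infty \frac{e^{-v}}{v}\,\dd v.
\]
For $r^2/4<1$ this is at most $\frac{1}{4\pi}\bigl(\log(4/r^2)+1\bigr)\le C+\frac{1}{2\pi}|\log r|$. For $r\ge 2$ it is bounded, and $r$ is in any case bounded above by $\sqrt{T^2+1}$ on $M_T$. Combining the pieces gives
\[
g^\epsilon(t,x)\le C+|\log r|,
\]
in fact with constant $1$ in front of $|\log r|$, and hence the same bound for $|g_{T,N}(t,x)|$.

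The point $(t,x)=(0,0)$, where $r=0$, is trivial because the right-hand side is $+\infty$. The conclusion for $g_T$ follows by letting $N\to\infty$ and using Lemma~\ref{lem:gN-to-g}.

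The main obstacle is making sure the image-sum bound is uniform. The key observation is that reducing $(t,x)$ into the fundamental domain $M_T$ isolates a single dominant image, namely $(k,l)=(0,0)$; every other image is separated from $(t,x)$ by a distance at least $\min(T,1)$, which is what makes the $C_T$ bound above independent of $(t,x)$.
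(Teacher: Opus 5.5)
Your proof is correct and follows essentially the same route as the paper. Both arguments compare $g_{T,N}$ with $g^{\epsilon_N}$ via \eqref{eq:rnbound}, use the heat-kernel representation \eqref{eq:gepsilonform} with the image sum from Lemma~\ref{lem:heat-kernel-positive}, isolate the $(k,l)=(0,0)$ image to produce the $|\log r|$ term, bound the nonzero images uniformly for $s\in(0,1]$, and control the range $s\ge1$ with \eqref{eq:pttail}. The differences are cosmetic: you take the lower side directly from \eqref{eq:abel-lower}, and you evaluate the $(0,0)$ term through the substitution $v=r^2/(4s)$ rather than the paper's $q=s/r^2$ splitting. You should also state explicitly that the reduction to $M_T$ uses the $2T$- and $2$-periodicity of both $g_{T,N}$ and $r$.
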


\begin{proof}
In this proof, $C_0, C_1,\ldots$ will denote constants depending only on $T$, whose values may change from line to line. First, note that since $g_{T,N}(t,x)$ and $r$ are invariant under adding an integer multiple of $2T$ to $t$ and an even integer to $x$, we may assume without loss that $|t|\le T$ and $|x|\le 1$. 
As in the proof of Lemma \ref{lem:gN-lower}, write 
\[
g_{T,N}(t,x)=g^{\epsilon_N}(t,x)+(g_{T,N}(t,x)-g^{\epsilon_N}(t,x)),
\qquad \epsilon_N:=(N+1)^{-2}.
\]
By the inequality \eqref{eq:rnbound}, the comparison term $|g_{T,N}-g^{\epsilon_N}|$ is uniformly bounded in $N$. Thus, it suffices to bound $|g^{\epsilon_N}(t,x)|$ from above.
Using the representation \eqref{eq:gepsilonform} for $g^\epsilon(t,x)$ 
and the alternative expression for $p_s(t,x)$ from Lemma~\ref{lem:heat-kernel-positive}, we have
\[
g^\epsilon(t,x)=2\pi \int_{\epsilon}^\infty \biggl\{\frac{1}{4\pi s}\sum_{k,l\in\Z}
\exp\biggl(-\frac{(t-2Tk)^2+(x-2l)^2}{4s}\biggr) - \frac{1}{4T}\biggr\} \, \dd s.
\]
First, we get an upper bound on the absolute value of the integral from $\epsilon$ to $1$. 
The $(k,l)=(0,0)$ term yields
\begin{align*}
\int_\epsilon^1 \frac{1}{4\pi s}\exp\biggl(-\frac{r^2}{4s}\biggr)\, \dd s &=\int_{\epsilon/r^2}^{1/r^2}\frac{1}{4\pi q} e^{-1/(4q)}\, \dd q\\
&\le \int_{0}^{1/r^2}\frac{1}{4\pi q} e^{-1/(4q)}\, \dd q\\ 
&\le \int_0^1 \frac{1}{4\pi q} e^{-1/(4q)} \, \dd q + \int_1^{\max\{1,1/r^2\}} \frac{1}{4\pi q} \, \dd q\\ 
&\le C+ \frac{|\log r|}{2\pi}.
\end{align*}
Next, note that since $t\in [-T,T]$ and $x\in [-1,1]$, we have 
\[ 
  (t-2Tk)^2+(x-2l)^2\gtrsim k^2+l^2
\]
when $(k,l)\ne(0,0)$.  Therefore, for $s\in(0,1]$,
\begin{align*}
\sum_{(k,l)\ne(0,0)}\exp\biggl(-\frac{(t-2Tk)^2+(x-2l)^2}{4s}\biggr) &\le \sum_{(k,l)\ne(0,0)} e^{-C_0(k^2+l^2)/s}\\ 
&\lesssim \int_1^\infty e^{-C_1\rho^2/s}\rho \,\dd \rho = \frac{s}{2C_1} e^{-C_1/s} \lesssim s.
\end{align*}
This gives a uniform bound on the contribution of the nonzero lattice terms to the integral over
$s\in(0,1]$. Lastly, by the inequality \eqref{eq:pttail}, $\int_1^\infty|p_s(t,x)-1/(4T)|\, \dd s\lesssim 1$. This completes the proof.
\end{proof}

We are now ready to prove that the function $G_T$ defined in equation \eqref{eq:gtdefn} is indeed the mean-zero Green's function on the torus $M_T$. 
\begin{prop}\label{prop:green}
The function $G_T$ defined above satisfies the criterion \eqref{eq:gtdefinition} for all $f\in C^\infty(M_T)$, in the sense that $G_T f$ is a well-defined smooth function for any $f\in C^\infty(M_T)$, and the two identities in equation \eqref{eq:gtdefinition} hold.
\end{prop}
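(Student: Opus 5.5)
We need to prove Proposition~\ref{prop:green}: $G_T f(u)=\int_{M_T} G_T(u,v)f(v)\,\dd v$ is well defined and smooth, and $-\frac{1}{2\pi}G_T\Delta f=-\frac{1}{2\pi}\Delta G_T f=f-c(f)$.

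\emph{Well-definedness.} By Lemma~\ref{lem:gN-log-upper}, $|G_T(u,v)|\le C+|\log r(u-v)|$, which is integrable in $v$ over the two-dimensional torus, uniformly in $u$. So $G_Tf$ is well defined for bounded $f$.

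\emph{Passing from the truncations.} The same lemma bounds $|g_{T,N}|$ by this integrable majorant uniformly in $N$. Lemma~\ref{lem:gN-to-g} gives pointwise convergence off the diagonal. Dominated convergence then yields
\[
G_Tf(u)=\lim_{N\to\infty}\int_{M_T} g_{T,N}(u-v)f(v)\,\dd v .
\]

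\emph{Fourier computation.} Write $f=\sum_{m,n}\hat f_{m,n}\,e^{\pi\i(mt/T+nx)}$ with $\hat f_{m,n}=\frac{1}{4T}\int_{M_T} f\,e^{-\pi\i(mt/T+nx)}$. This series converges rapidly since $f$ is smooth. Integrating the finite exponential sum defining $g_{T,N}$ against $f$ gives
\[
\int g_{T,N}(u-v)f(v)\,\dd v=\frac{\pi}{2T}\cdot 4T\sum_{\substack{|m|,|n|\le N\\ (m,n)\ne 0}}\frac{\hat f_{m,n}}{\lambda_{m,n}}\,e_{m,n}(u)=2\pi\sum_{\substack{|m|,|n|\le N\\ (m,n)\ne 0}}\frac{\hat f_{m,n}}{\lambda_{m,n}}\,e^{\pi\i(mt/T+nx)} .
\]
Since $\hat f_{m,n}$ decays faster than any polynomial, the $N\to\infty$ limit is the absolutely and uniformly convergent series
\[
G_Tf=2\pi\sum_{(m,n)\ne 0}\lambda_{m,n}^{-1}\hat f_{m,n}\,e^{\pi\i(mt/T+nx)} .
\]
Its term-by-term derivatives of all orders also converge uniformly, so $G_Tf$ is smooth.

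\emph{The identities.} Applying $-\frac{1}{2\pi}\Delta$ term by term multiplies each coefficient by $\lambda_{m,n}/(2\pi)$. This gives $\sum_{(m,n)\ne0}\hat f_{m,n}e^{\pi\i(mt/T+nx)}=f-\hat f_{0,0}=f-c(f)$. For the other identity, note that $\widehat{(\Delta f)}_{m,n}=-\lambda_{m,n}\hat f_{m,n}$, so $G_T(-\frac{1}{2\pi}\Delta f)$ has the same series. This proves \eqref{eq:gtdefinition}.

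\emph{Main obstacle.} The only delicate point is justifying the interchange of the limit $N\to\infty$ with the integral. This is handled by the $N$-uniform logarithmic bound of Lemma~\ref{lem:gN-log-upper}, which gives the dominating function. Note that Lemma~\ref{lem:gN-to-g} supplies only pointwise convergence, on the complement of the diagonal, which is a null set. Everything after that is routine Fourier analysis on the torus.
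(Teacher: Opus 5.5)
Your proof is correct. Its first half matches the paper's argument: well-definedness and the $N\to\infty$ limit come from the $N$-uniform logarithmic bound of Lemma~\ref{lem:gN-log-upper} together with dominated convergence, and a square-truncated Fourier computation follows. The paper carries out that computation only for $G_T\Delta f$, moving $\Delta$ onto the exponentials by integration by parts; this is your identity $\widehat{(\Delta f)}_{m,n}=-\lambda_{m,n}\hat f_{m,n}$.

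You differ on the remaining two claims. You identify $G_Tf$ itself as the rapidly convergent series $2\pi\sum_{(m,n)\ne0}\lambda_{m,n}^{-1}\hat f_{m,n}e^{\pi\i(mt/T+nx)}$. Smoothness and $-\frac{1}{2\pi}\Delta G_Tf=f-c(f)$ then both follow by term-by-term differentiation. The paper argues differently on each point:
\begin{enumerate}
\item It proves smoothness by writing $G_Tf(u)=\int_{M_T} f(u-v)\,g_T(v)\,\dd v$ and differentiating under the integral, which needs only $g_T\in L^1$.
\item It obtains $\Delta G_Tf$ by duality. For a smooth test function $\varphi$, it uses evenness of $g_T$ and Fubini to get $\int\varphi\,\Delta G_Tf=\int f\,G_T\Delta\varphi$, and then applies the already-proved identity for $G_T\Delta\varphi$. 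It concludes by choosing $\varphi=\frac{1}{2\pi}\Delta G_Tf+f-c(f)$, so that the integral of its square vanishes.
\end{enumerate}
Your route is more economical, since a single multiplier formula yields well-definedness, smoothness, and both identities at once. The paper's route instead derives the second identity from the first using only the symmetry of the kernel, without ever needing the Fourier series of $G_Tf$.
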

\begin{proof}
Take any $f\in C^\infty(M_T)$. Define 
\[
G_{T,N} f(u):=\int_{M_T} G_{T,N}(u,v)f(v)\,\dd v.  
\] 
It is easy to see that $G_{T,N}f$ is well-defined and smooth. By the bound from Lemma \ref{lem:gN-log-upper} and dominated convergence,
\[
  \lim_{N\to \infty} G_{T,N} f(u)=\int_{M_T} G_T(u,v)f(v)\,\dd v =: G_T f(u).
\]
This shows, in particular, that $G_T$ is well-defined as an integral operator on smooth functions. (We have not yet shown that $G_T f$ is smooth; this will be shown later in the proof.) Now, let $h:=\Delta f$.  We will first show that 
\begin{align}\label{eq:gtshow1}
-\frac{1}{2\pi}G_T h = f-c(f).
\end{align}
To prove this, take $u=(t,x)\in M_T$. First note that 
\begin{align*}
G_{T,N} h(u) &= \frac{\pi}{2T}\sum_{\substack{|m|\le N,\ |n|\le N\\ (m,n)\neq(0,0)}}
\frac{1}
{\lambda_{m,n}}\int_{M_T}\exp\biggl(\pi \mathrm{i} \frac{m}{T}(t-s)+\pi \mathrm{i} n (x-y)\biggr) h (s,y)\, \dd s \, \dd y.
\end{align*}
Integration by parts gives 
\begin{align*}
&\int_{M_T}\exp\biggl(\pi \mathrm{i} \frac{m}{T}(t-s)+\pi \mathrm{i} n (x-y)\biggr) h (s,y)\, \dd s \, \dd y\\ 
&= \int_{M_T}\Delta\biggl\{\exp\biggl(\pi \mathrm{i} \frac{m}{T}(t-s)+\pi \mathrm{i} n (x-y)\biggr)\biggr\}f (s,y)\, \dd s \, \dd y\\ 
&= -\lambda_{m,n} \int_{M_T}\exp\biggl(\pi \mathrm{i} \frac{m}{T}(t-s)+\pi \mathrm{i} n (x-y)\biggr)f (s,y)\, \dd s \, \dd y.
\end{align*}
Combining the last two displays, we get
\begin{align}\label{eq:gtexpansion}
G_{T,N} h(u) &= -\frac{\pi}{2T}\sum_{\substack{|m|\le N,\ |n|\le N\\ (m,n)\neq(0,0)}} \int_{M_T}\exp\biggl(\pi \mathrm{i} \frac{m}{T}(t-s)+\pi \mathrm{i} n (x-y)\biggr) f (s,y)\, \dd s \, \dd y.
\end{align}
Let $\bt^2$ denote the unit torus $[0,1]^2$, and define $f_0:\bt^2\to \R$ as
\begin{align*}
f_0(t,x) := f(2Tt - T, 2x-1).
\end{align*}
For $m,n\in \Z$,
\[
  \hat{f}_0(m,n) := \int_{\bt^2} f_0(t,x) e^{-2\pi \i (mt + nx)} \, \dd t\, \dd x
\]
denote the Fourier coefficients of $f_0$. Since $f_0$ is smooth, standard results~\cite[Theorem~3.3.9(a)]{grafakos_classical_2014} imply that $\hat{f}_0(m,n)$ is rapidly decaying in $|m|+|n|$. From this, it follows~\cite[Theorem~3.2.5]{grafakos_classical_2014} that for almost every $(t,x)\in \bt^2$,
\[
f_0(t,x) = \sum_{m,n\in \Z} \hat{f}_0(m,n) e^{2\pi \i (mt+nx)}  
\] 
But the rapid decay of the Fourier coefficients implies that the right side above is a continuous function of $(t,x)$, and the left side is continuous anyway. Thus, the identity must hold for all $(t,x)$. 

Translating this identity to the torus $M_T$, we get, for $(t,x)\in M_T$,
\begin{align}\label{eq:ff0}
f(t,x) &= f_0\biggl(\frac{t+T}{2T}, \frac{x+1}{2}\biggr)\notag \\ 
&= \sum_{m,n\in \Z} \hat{f}_0(m,n) \exp\biggl\{2\pi \i \biggl(m \frac{t+T}{2T} +n \frac{x+1}{2}\biggr)\biggr\}\notag \\ 
&= \sum_{m,n\in \Z} \hat{f}_0(m,n)e^{\pi \i (m+n)} \exp\biggl(\pi \i m \frac{t}{T} +\pi \i n x\biggr).
\end{align}
This, together with another application of the rapid decay of the Fourier coefficients, gives 
\begin{align*}
&\int_{M_T}\exp\biggl(\pi \mathrm{i} \frac{m}{T}(t-s)+\pi \mathrm{i} n (x-y)\biggr) f (s,y)\, \dd s \, \dd y\\ 
&= \sum_{m',n'\in \Z}\hat{f}_0(m',n')e^{\pi\i(m'+n')}\exp\biggl(\pi \mathrm{i} \frac{m'}{T}t+\pi \mathrm{i} n'  x\biggr)\\ 
&\qquad \qquad \cdot \int_{M_T}\exp\biggl(\pi \i \frac{m-m'}{T} s + \pi \i (n-n')y\biggr) \, \dd s\, \dd y\\ 
&= 4T\hat{f}_0(m,n)e^{\pi\i(m+n)}\exp\biggl(\pi \mathrm{i} \frac{m}{T}t+\pi \mathrm{i} n  x\biggr).
\end{align*}
Substituting this in equation \eqref{eq:gtexpansion}, we get 
\begin{align*}
G_{T,N}h(u) &=-2\pi\sum_{\substack{|m|\le N,\ |n|\le N\\ (m,n)\neq(0,0)}}\hat{f}_0(m,n)e^{\pi\i(m+n)}\exp\biggl(\pi \mathrm{i} \frac{m}{T}t+\pi \mathrm{i} n  x\biggr).
\end{align*}
Again, the rapid decay of the Fourier coefficients allows us to take $N\to\infty$ on the right. Using equation \eqref{eq:ff0}, this gives 
\begin{align*}
G_{T} h(u) &=-2\pi\sum_{(m,n)\neq(0,0)}\hat{f}_0(m,n)e^{\pi\i(m+n)}\exp\biggl(\pi \mathrm{i} \frac{m}{T}t+\pi \mathrm{i} n  x\biggr)\\ 
&= -2\pi f(u) + 2\pi \hat{f}_0(0,0) = -2\pi(f(u)-c(f)).
\end{align*}
This proves the claim \eqref{eq:gtshow1}. 

Next, let $g := G_T f$. We claim that $g$ is smooth, and 
\begin{align}\label{eq:gtshow2}
-\frac{1}{2\pi}\Delta g = f - c(f).
\end{align}
To see that $g$ is smooth, note that we can write
\[
  g(u) = \int_{M_T} f(u-v) g_T(v) \, \dd v.
\]
By the bound from Lemma \ref{lem:gN-log-upper} and dominated convergence, we can differentiate the right side as many times as we want, by moving the derivatives inside the integral. This shows that $g$ is smooth. Next, take any $\varphi\in C^\infty(M_T)$. Again using Lemma \ref{lem:gN-log-upper} to apply Fubini, and using the observation that $g_T(-u)=g_T(u)$ (which follows from the same property for $g_{T,N}$), we get
\begin{align*}
\int_{M_T}\varphi(u)\Delta g(u)\,\dd u
&=\int_{M_T}g(u)\Delta \varphi(u)\,\dd u \\
&= \int_{M_T}\int_{M_T} f(v) g_T(u-v) \Delta \varphi(u) \,\dd v \, \dd u \\ 
&= \int_{M_T}\int_{M_T} f(v) g_T(v-u) \Delta \varphi(u) \,\dd u \, \dd v \\ 
&= \int_{M_T} f(v) G_T \Delta \varphi(v) \, \dd v.
\end{align*}
But by equation \eqref{eq:gtshow1}, $G_T\Delta \varphi(v) = -2\pi(\varphi(v)-c(\varphi))$.  Thus,
\[
  \int_{M_T}\varphi(u)\Delta g(u)\,\dd u = -2\pi \int_{M_T} f(v)(\varphi(v)-c(\varphi)) \, \dd v = -2\pi \int_{M_T} \varphi(v)(f(v)-c(f)) \, \dd v. 
\]
Rearranging this, we get
\[
\int_{M_T}\varphi(u)  \biggl(\frac{1}{2\pi}\Delta g(u)+f(u)-c(f) \biggr)\, \dd u =0.
\]
Since this holds for every smooth $\varphi$, we can take 
\[
  \varphi(u) = \frac{1}{2\pi}\Delta g(u)+f(u)-c(f).
\]
This choice gives 
\[
\int_{M_T}\biggl(\frac{1}{2\pi}\Delta g(u)+f(u)-c(f) \biggr)^2\, \dd u =0.
\]
Since $f$ and $\Delta g$ are smooth, this proves the claim \eqref{eq:gtshow2}.
\end{proof}

\subsubsection{Passing to infinite volume}\label{sec:infvol}
We now investigate what happens to $G_T$ as we send $T\to\infty$. We need the following basic lemma.
\begin{lmm}\label{lem:KT-explicit}
For $\theta\in [-2\pi,2\pi]$, we have
\begin{align*}
\frac{1}{\pi}\sum_{m\in\Z\setminus\{0\}}\frac{e^{\i m \theta }}{m^2}&=\frac{\pi}{3}-|\theta|+\frac{\theta^2}{2\pi}.
\end{align*}
\end{lmm}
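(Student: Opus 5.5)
We will prove Lemma~\ref{lem:KT-explicit} by identifying the right-hand side as the Fourier series of an explicit even periodic function. Define $h(\theta):=\frac{\pi}{3}-|\theta|+\frac{\theta^2}{2\pi}$ on $[-\pi,\pi]$. We will show that $h$, extended $2\pi$-periodically, has Fourier series $\frac{1}{\pi}\sum_{m\ne0}e^{\i m\theta}/m^2$. We will then check that the stated formula actually holds on the larger interval $[-2\pi,2\pi]$.

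\textbf{Step 1: Fourier coefficients.} First, $h$ is even and continuous. It also has $h(\pi)=h(-\pi)$, because both equal $\frac{\pi}{3}-\pi+\frac{\pi}{2}=-\frac{\pi}{6}$. Its zeroth coefficient vanishes:
\[
\frac{1}{2\pi}\int_{-\pi}^{\pi}h=\frac{1}{\pi}\Bigl(\frac{\pi^2}{3}-\frac{\pi^2}{2}+\frac{\pi^2}{6}\Bigr)=0.
\]
For $m\ne0$, we will integrate by parts twice on $[0,\pi]$, using evenness. Since $h'(\theta)=-1+\theta/\pi$ for $\theta>0$, we have $h''=1/\pi$ on $(0,\pi)$. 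Moreover $h'(\pi)=0$, and $h'$ jumps from $+1$ to $-1$ at $0$. The jump of $h'$ at $0$ then produces the coefficient
\[
\frac{1}{2\pi}\int_{-\pi}^{\pi}h(\theta)e^{-\i m\theta}\,\dd\theta=\frac{1}{\pi m^2}.
\]
Equivalently, one can compute directly $\int_0^\pi(\theta^2/(2\pi)-\theta)\cos(m\theta)\,\dd\theta$ using the standard integrals $\int_0^\pi\theta\cos m\theta\,\dd\theta=((-1)^m-1)/m^2$ and $\int_0^\pi\theta^2\cos m\theta\,\dd\theta=2\pi(-1)^m/m^2$. These combine to $\frac{1}{m^2}\bigl((-1)^m-((-1)^m-1)\bigr)=\frac{1}{m^2}$. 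Multiplying by $\frac{1}{\pi}$ gives the claimed coefficient.

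\textbf{Step 2: Pointwise identity on $[-\pi,\pi]$.} The coefficients are absolutely summable and $h$ is continuous. Hence the Fourier series converges uniformly to $h$, for instance by the same citation to \cite{grafakos_classical_2014} used in the proof of Proposition~\ref{prop:green}. This gives the identity for $|\theta|\le\pi$.

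\textbf{Step 3: Extension to $[-2\pi,2\pi]$.} This is the only point needing care. The left side is $2\pi$-periodic, so for $\theta\in[\pi,2\pi]$ it equals $h(\theta-2\pi)$. We must check that this agrees with the stated quadratic evaluated at $\theta$ itself. Setting $\phi=2\pi-\theta\in[0,\pi]$, we use $|\theta-2\pi|=\phi$ and $\theta=2\pi-\phi$:
\[
\frac{\pi}{3}-\theta+\frac{\theta^2}{2\pi}=\frac{\pi}{3}-2\pi+\phi+\frac{4\pi^2-4\pi\phi+\phi^2}{2\pi}=\frac{\pi}{3}-\phi+\frac{\phi^2}{2\pi}=h(\phi).
\]
So the formula holds on $[0,2\pi]$, and by evenness on $[-2\pi,2\pi]$. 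The quadratic expression therefore happens to coincide with the periodic extension on the whole doubled interval, which is what the statement asserts.
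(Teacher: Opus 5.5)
Your proof is correct and is essentially the paper's argument: both identify the sum as the Fourier series of a quadratic on $[-\pi,\pi]$ with absolutely summable coefficients, then use pointwise Fourier inversion together with evenness. The only difference is bookkeeping. The paper expands $x^2$ and substitutes $x=\theta-\pi$, which gives the formula on $[0,2\pi]$ directly; you compute the coefficients of the target function itself on $[-\pi,\pi]$, so you need your Step~3 periodicity check (carried out correctly) to reach $[\pi,2\pi]$.
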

\begin{proof}
Let $g(\theta)$ denote the left side. Pairing $m$ with $-m$, we get
\begin{align}\label{eq:ktuform}
g(\theta)=\frac{2}{\pi}\sum_{m=1}^\infty\frac{\cos(m\theta)}{m^2}.
\end{align}
Let $f(x):= x^2$ for $x\in [-\pi,\pi]$. The Fourier coefficients of $f$ are given by $\hat{f}(0)=\pi^2/3$ and for $m\ne 0$,
\begin{align*}
\hat{f}(m) &= \frac{1}{2\pi}\int_{-\pi}^\pi f(x) e^{\i m x} \dd x\\ 
&= \frac{1}{2\pi}\biggl[\frac{x^2e^{\i mx}}{\i m}\biggr]^\pi_{-\pi} - \frac{1}{\i\pi m}\int_{-\pi}^\pi x e^{\i m x} \dd x\\ 
&= -\frac{1}{\i \pi m}\biggl[\frac{xe^{\i m x}}{\i m}\biggr]^\pi_{-\pi} + \frac{1}{\i^2 \pi m^2}\int_{-\pi}^\pi e^{\i m x} \dd x\\ 
&= \frac{2\cos (m \pi)}{m^2} = \frac{2(-1)^m}{m^2}.
\end{align*}
Since $f\in L^1$ and $\sum_{m\in \Z} |\hat{f}(m)|<\infty$, the Fourier inversion formula~\cite[Section 9.4]{rudin87} shows that for almost all $x\in [-\pi,\pi]$,
\begin{align}\label{eq:xexpan}
x^2 = f(x) &= \sum_{m\in \Z} \hat{f}(m)e^{\i m x} = \frac{\pi^2}{3} + 4\sum_{m=1}^\infty \frac{(-1)^m \cos (mx)}{m^2}. 
\end{align}
The right side is absolutely convergent and is a continuous function of $x$. This shows that the above identity holds for all $x\in [-\pi,\pi]$. 

Replacing $x$ by $\theta-\pi$ in equation \eqref{eq:xexpan}, and using
$\cos(m(\theta-\pi))=(-1)^m\cos(m\theta)$, we get
\[
\sum_{m=1}^\infty\frac{\cos(m\theta)}{m^2}
=\frac{\pi^2}{6}-\frac{\pi\theta}{2}+\frac{\theta^2}{4},
\qquad \theta\in[0,2\pi].
\]
By evenness of the left side, this gives 
\[
\sum_{m=1}^\infty\frac{\cos(m\theta)}{m^2}
=\frac{\pi^2}{6}-\frac{\pi|\theta|}{2}+\frac{\theta^2}{4},
\qquad \theta\in[-2\pi,2\pi].
\]
Substituting this into equation \eqref{eq:ktuform} completes the proof.
\end{proof}

We are now ready to prove the pointwise convergence of $g_T(t,x)$ as $T\to\infty$, after taking off a linear multiple of $T$. We will divide the proof into two cases: $t\ne0$ and $t=0$.

\begin{lmm}\label{lem:gtlimit}
For each $(t,x)\in \R^2$ with $t\ne 0$, we have 
\[
  \lim_{T\to\infty}\biggl(g_T(t,x)-\frac{\pi T}{6}\biggr)=g(t,x) :=-\frac{\pi |t|}{2}-\log|1-e^{-\pi |t|+\pi \i x}|.
\]
\end{lmm}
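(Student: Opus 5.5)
Fix $(t,x)$ with $t\ne0$. By $2T$-periodicity and evenness we may take $T>|t|$ and reduce to $0<t<T$, $x\in[-1,1]$. The plan is to sum over $m$ first in the Fourier series of $g_T$. This is legitimate because $g_T=\lim_N g_{T,N}$, and the proof of Lemma~\ref{lem:gN-to-g} (with $\alpha=\pi t/T\ne0$) shows we may write
\[
g_T(t,x)=\frac{\pi}{2T}\Bigl(B_0^\ast+\sum_{n\ne0}e^{\i\pi n x}B_n\Bigr),\qquad B_n=\sum_{m\in\Z}\frac{e^{\i m\alpha}}{\lambda_{m,n}}.
\]
Both pieces are then evaluated in closed form.

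For the zero mode, $\lambda_{m,0}=\pi^2m^2/T^2$. Lemma~\ref{lem:KT-explicit} with $\theta=\alpha\in(0,\pi)$ then gives
\[
\frac{\pi}{2T}B_0^\ast=\frac{T}{2\pi}\sum_{m\ne0}\frac{e^{\i m\alpha}}{m^2}=\frac{T}{2}\Bigl(\frac{\pi}{3}-\alpha+\frac{\alpha^2}{2\pi}\Bigr)=\frac{\pi T}{6}-\frac{\pi t}{2}+\frac{\pi t^2}{4T}.
\]
This produces exactly the subtracted $\pi T/6$ and the linear term $-\pi|t|/2$, with an error $O(1/T)$.

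For $n\ne0$, write $\lambda_{m,n}=(\pi^2/T^2)(m^2+a^2)$ with $a=|n|T$. Lemma~\ref{lem:1d-lattice-identity} gives
\[
B_n=\frac{T^2}{\pi^2}\cdot\frac{\pi}{|n|T}\sum_{k\in\Z}e^{-|n|T|2\pi k+\alpha|}.
\]
The $k=0$ term is $e^{-|n|T\alpha}=e^{-\pi|n|t}$. The terms $k\ne0$ are bounded by $Ce^{-|n|T(2\pi-\alpha)}\le Ce^{-\pi|n|T}$, because $\alpha\le\pi$. Hence
\[
\frac{\pi}{2T}\sum_{n\ne0}e^{\i\pi nx}B_n=\sum_{n\ne0}\frac{e^{\i\pi nx}e^{-\pi|n|t}}{2|n|}+E_T,\qquad |E_T|\le C\sum_{n\ge1}\frac{e^{-\pi nT}}{n}\to0.
\]
The main sum equals $\Re\sum_{n\ge1}q^n/n$ with $q=e^{-\pi t+\i\pi x}$ and $|q|<1$, which is $-\Re\log(1-q)=-\log|1-e^{-\pi t+\pi\i x}|$. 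It does not depend on $T$.

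Adding the two pieces and letting $T\to\infty$ gives $g_T(t,x)-\pi T/6\to-\pi|t|/2-\log|1-e^{-\pi|t|+\pi\i x}|$. The only delicate step is justifying the rearrangement that sums over $m$ first. This is exactly the convergence already proved in \eqref{eq:blimit} and \eqref{eq:bnlim}; once it is in place the rest is direct calculation. The uniform control of the $k\ne0$ remainder is easy only because $t$ is fixed while $T$ grows, so that $\alpha=\pi t/T$ stays in $(0,\pi]$.
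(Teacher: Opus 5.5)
Your proof is correct and follows the paper's route. You use the same $m$-first decomposition from Lemma~\ref{lem:gN-to-g}, Lemma~\ref{lem:KT-explicit} for the zero mode, Lemma~\ref{lem:1d-lattice-identity} for $n\ne0$, and the logarithm series at the end. The only difference is cosmetic: you isolate the $k=0$ image and bound the $k\ne0$ images explicitly by $C\sum_{n\ge1}e^{-\pi nT}/n$, whereas the paper passes to the limit termwise by dominated convergence using $0\le B_n\le CT|n|^{-1}e^{-\pi|n||t|}$.
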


\begin{proof}
In this proof, $C,C_1,\ldots$ will denote positive universal constants, whose values may change from line to line. 
Since $g_T$ is $2$-periodic in the second argument, it suffices to prove the claim for $(t,x)\in M\setminus\{(0,0)\}$. Let us fix such a $(t,x)$. As in the proof of Lemma~\ref{lem:gN-to-g}, let $\alpha:=\pi t/T$ and  $\beta:=\pi x$, and define
\[
B_0^*:=\sum_{m\in\Z\setminus\{0\}}\frac{e^{\i\alpha m}}{\lambda_{m,0}}, \qquad B_n:=\sum_{m\in\Z}\frac{e^{\i\alpha m}}{\lambda_{m,n}},\quad n\neq 0.
\]
The proof of Lemma~\ref{lem:gN-to-g} shows that
\begin{equation}\label{eq:gtlimit-g-fourier}
g_T(t,x)=\frac{\pi}{2T}\biggl(B_0^*+\sum_{n\in\Z\setminus\{0\}}e^{\i\beta n}B_n\biggr),
\end{equation}
and also that
\begin{align}\label{eq:bnbound}
0\le B_n\le \frac{CT}{|n|}e^{-|n||\alpha|T}=\frac{CT}{|n|}e^{-\pi|n||t|},
\qquad n\neq 0.
\end{align}
Since $t\ne 0$, this shows that the series in the preceding display is absolutely convergent. We first compute $B_0^*$. Since $\lambda_{m,0}=\pi^2m^2/T^2$, we have
\begin{equation}\label{eq:gtlimit-b0-pre}
\frac{\pi}{2T}B_0^*
=
\frac{T}{2\pi}\sum_{m\in\Z\setminus\{0\}}\frac{e^{\i m\pi t/T}}{m^2}.
\end{equation}
Whenever $|t|\le T$, we have $\pi t/T\in[-\pi,\pi]$, and therefore Lemma~\ref{lem:KT-explicit} gives
\begin{equation}\label{eq:gtlimit-b0-eval}
\frac{\pi}{2T}B_0^*
=
\frac{\pi T}{6}-\frac{\pi |t|}{2}+\frac{\pi t^2}{4T}.
\end{equation}
Next, for $n\neq 0$, note that 
\[
\frac{\pi}{2T}B_n=\frac{T}{2\pi}\sum_{m\in\Z}\frac{e^{\i\alpha m}}{m^2+n^2T^2}.
\]
Applying Lemma \ref{lem:1d-lattice-identity} with $a=|n|T$ and $\theta=\alpha=\pi t/T$, we get
\begin{equation}\label{eq:gtlimit-bn-eval}
\frac{\pi}{2T}B_n
=
\frac{1}{2|n|}\sum_{k\in\Z}e^{-\pi |n||t+2kT|}.
\end{equation}
Since $t\ne 0$, the bound \eqref{eq:bnbound} allows us to apply the dominated convergence theorem and get
\begin{align*}
\lim_{T\to\infty}\frac{\pi}{2T}\sum_{n\in\Z\setminus\{0\}}e^{\i\beta n}B_n
&= \sum_{n\in\Z\setminus\{0\}}e^{\i\beta n}\biggl(\lim_{T\to\infty}\frac{\pi}{2T}B_n\biggr)\notag \\ 
&=  \frac12\sum_{n\in\Z\setminus\{0\}}\frac{e^{\pi \i nx}}{|n|}e^{-\pi |n||t|},
\end{align*}
and the series on the right is absolutely convergent. By the power series expansion of logarithm at $1$, 
\begin{align*}
\frac12\sum_{n\in\Z\setminus\{0\}}\frac{e^{\pi \i nx}}{|n|}e^{-\pi |n||t|} &= \frac12\sum_{n=1}^\infty\frac{e^{\pi \i nx}}{n}e^{-\pi n|t|} + \frac12\sum_{n=1}^\infty\frac{e^{-\pi \i nx}}{n}e^{-\pi n|t|}\\ 
&= -\frac12\log(1-e^{\pi \i x - \pi |t|}) -\frac12\log(1-e^{-\pi \i x - \pi |t|})\\ 
&= -\log|1-e^{-\pi |t|+\pi \i x}|. 
\end{align*} 
Combining the steps, this completes the proof.
\end{proof}

Next, we show the same result for $t=0$. We need the following technical lemma.
\begin{lmm}\label{lem:antlmm}
For $n\ge 1$ and $T>0$, define
\[
a_n(T)
:=
\frac{T}{\pi}\sum_{m\in\mathbb{Z}}\frac{1}{m^2+n^2T^2}.
\]
Then we have the alternative expression
\[
  a_n(T) = \frac{1}{n}+ \frac{2}{n(e^{2\pi Tn}-1)}.
\]
\end{lmm}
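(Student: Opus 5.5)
The plan is to evaluate the lattice sum with Lemma~\ref{lem:1d-lattice-identity} at $\theta=0$, and then sum the resulting geometric series. Take $a=nT>0$ and $\theta=0$ there. This gives
\[
\sum_{m\in\Z}\frac{1}{m^2+n^2T^2}=\frac{\pi}{nT}\sum_{k\in\Z}e^{-2\pi nT|k|}.
\]
Multiplying by $T/\pi$ leaves
\[
a_n(T)=\frac1n\sum_{k\in\Z}e^{-2\pi nT|k|}.
\]

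Next I split the sum over $k$ into the $k=0$ term and the terms with $\pm k$ for $k\ge1$. Since $q:=e^{-2\pi nT}\in(0,1)$, we have
\[
\sum_{k\in\Z}q^{|k|}=1+2\sum_{k\ge1}q^k=1+\frac{2q}{1-q}.
\]
Rewriting $q/(1-q)=1/(q^{-1}-1)=1/(e^{2\pi Tn}-1)$ then yields
\[
a_n(T)=\frac1n+\frac{2}{n(e^{2\pi Tn}-1)}.
\]

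There is no real obstacle here. The only point to check is that Lemma~\ref{lem:1d-lattice-identity} applies for every $a>0$, since its first identity was proved via Poisson summation without the restriction $a\ge1$. Absolute convergence of both sides is clear, because the terms decay like $m^{-2}$ on one side and geometrically on the other.
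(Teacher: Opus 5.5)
Your proposal is correct and matches the paper's proof exactly: both apply Lemma~\ref{lem:1d-lattice-identity} with $\theta=0$ and $a=nT$, then sum the geometric series $\sum_{k\in\Z}e^{-2\pi nT|k|}=1+2e^{-2\pi nT}/(1-e^{-2\pi nT})$. Your remark that the first identity of that lemma holds for every $a>0$ (the restriction $a\ge1$ applies only to the subsequent bound) is also accurate.
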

\begin{proof}
Applying Lemma \ref{lem:1d-lattice-identity} with $\theta=0$ and $a=nT$, we get
\[
\sum_{m\in\Z}\frac{1}{m^2+n^2T^2}
=
\frac{\pi}{nT}\sum_{k\in\Z}e^{-2\pi nT|k|}.
\]
Therefore,
\begin{align*}
a_n(T) &= \frac{T}{\pi}\cdot \frac{\pi}{nT}\sum_{k\in\Z} e^{-2\pi nT|k|}\\ 
&= \frac{1}{n} + \frac{2}{n}\sum_{k=1}^\infty e^{-2\pi Tnk} = \frac{1}{n}+ \frac{2e^{-2\pi Tn}}{n(1-e^{-2\pi Tn})}.
\end{align*}
Rearranging the last term completes the proof.
\end{proof}
We are now ready to calculate the limit of $g_T(0,x)$ as $T\to\infty$.
\begin{lmm}\label{lem:gtlimit2}
For each $x\in \R\setminus 2\Z$,
\[
  \lim_{T\to\infty}\biggl(g_T(0,x)-\frac{\pi T}{6}\biggr)=g(0,x) :=-\log|1-e^{\pi \i x}|.
\]
\end{lmm}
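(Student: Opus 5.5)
The plan is to redo the computation of Lemma~\ref{lem:gtlimit} at $t=0$, where the bound \eqref{eq:bnbound} no longer gives exponential decay in $n$, so dominated convergence must be replaced by an explicit computation using Lemma~\ref{lem:antlmm}. Fix $x\in(-1,1]\setminus\{0\}$ (by $2$-periodicity) and set $\beta=\pi x$. By Lemma~\ref{lem:gN-to-g} (the case $\beta\neq0$, i.e.\ summing first in $n$ with $m$ fixed, or equivalently using the decomposition \eqref{eq:gtlimit-g-fourier} interpreted as a conditionally convergent symmetric limit in $n$), we have
\[
g_T(0,x)=\frac{\pi}{2T}B_0^*+\lim_{N\to\infty}\frac{\pi}{2T}\sum_{0<|n|\le N}e^{\i\beta n}B_n,\qquad B_n=\sum_{m\in\Z}\frac{1}{\lambda_{m,n}}.
\]
I would verify that this representation is legitimate at $t=0$. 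The partial sums in \eqref{eq:gnexp} with the square cutoff converge to $g_T$, and the truncation errors $\sum_{|n|\le N}|B_n-B_{n,N}|$ are now controlled by the plain tail bound $\sum_{|m|>N}1/\lambda_{m,n}\lesssim T^2/N$ (no summation by parts is needed). Moreover, $\sum_{|n|>N}1/\lambda_{m,n}$ for $|m|\le N$ also sums to $O(1/N)$. So the square cutoff may be replaced by the ``full $m$, truncated $n$'' cutoff.

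Next I would evaluate each piece. By Lemma~\ref{lem:KT-explicit} with $\theta=0$, $\frac{\pi}{2T}B_0^*=\frac{\pi T}{6}$. For $n\neq0$, $\frac{\pi}{2T}B_n=\frac12 a_{|n|}(T)$, and by Lemma~\ref{lem:antlmm} this equals $\frac{1}{2|n|}+\frac{1}{|n|(e^{2\pi T|n|}-1)}$. Hence
\[
g_T(0,x)-\frac{\pi T}{6}=\lim_{N\to\infty}\sum_{0<|n|\le N}\frac{e^{\pi\i nx}}{2|n|}+\sum_{n\neq0}\frac{e^{\pi\i nx}}{|n|(e^{2\pi T|n|}-1)}.
\]
The second series converges absolutely and is bounded by $\sum_{n\ge1}\frac{2}{n(e^{2\pi Tn}-1)}$, which tends to $0$ as $T\to\infty$ by dominated convergence (or by the bound $\lesssim e^{-2\pi T}$ for $T\ge1$). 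The first is $\sum_{n\ge1}\cos(\pi nx)/n$, which converges for $x\notin2\Z$ by Lemma~\ref{lem:sumparts}. By Abel's theorem applied to $-\log(1-z)$ on the unit circle, it equals $-\Re\log(1-e^{\pi\i x})=-\log|1-e^{\pi\i x}|$. This gives the claim.

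The main obstacle is the first step: justifying the interchange of the square cutoff with the row-by-row summation when $t=0$, since $B_n$ decays only like $1/|n|$ and the $n$-sum is merely conditionally convergent. I would handle it as sketched above. The difference between the square partial sum and the $N$-truncated row sum is bounded by $\frac{\pi}{2T}\sum_{|n|\le N}\sum_{|m|>N}\lambda_{m,n}^{-1}=O(1/N)$, which requires no oscillation. The conditional convergence in $n$ then rests entirely on the oscillation of $e^{\pi\i nx}$ against the monotone weights $\frac12 a_n(T)$, which are decreasing in $n$, so Lemma~\ref{lem:sumparts} applies.
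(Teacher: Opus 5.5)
\paragraph{There is a genuine gap} in the step you flag as the main obstacle: replacing the square cutoff by the ``all $m$, truncated $n$'' cutoff at $t=0$. You claim
\[
\frac{\pi}{2T}\sum_{0<|n|\le N}\sum_{|m|>N}\lambda_{m,n}^{-1}
=\frac{T}{2\pi}\sum_{0<|n|\le N}\sum_{|m|>N}\frac{1}{m^2+n^2T^2}=O(1/N),
\]
but this is false. Your per-row bound $\sum_{|m|>N}\lambda_{m,n}^{-1}\lesssim T^2/N$ is correct. However, there are $2N$ rows, so it only gives $O(T^2)$.

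In fact the quantity does not tend to $0$ at all. Take $T\ge1$ and $N\ge 2T$, and restrict to $N<|m|\le 2N$ and $1\le |n|\le N/T$. This gives at least $2N^2/T$ terms, each at least $T/(10\pi N^2)$, so the double sum is at least $1/(5\pi)$ for every large $N$. Your auxiliary claim about $\sum_{|m|\le N}\sum_{|n|>N}\lambda_{m,n}^{-1}$ is likewise not $O(1/N)$, though you do not need it.

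So the difference between the square partial sum and the row-truncated sum is not small in absolute value. It tends to zero only because of the oscillating factor $e^{\pi\i nx}$, contrary to your remark that this step ``requires no oscillation.'' Relatedly, the $\beta\neq0$ case of Lemma~\ref{lem:gN-to-g} sums over $n$ first for each fixed $m$. That is the opposite iterated order to the representation you write down, so citing it does not bypass this interchange.

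\paragraph{The repair is the paper's argument.} Apart from the $n=0$ row, whose error $\frac{\pi}{2T}(B_0^*-B_{0,N}^*)$ trivially tends to $0$, the difference equals $\sum_{n=1}^N\cos(\pi nx)\,d_{n,N}(T)$, where
\[
d_{n,N}(T)=\frac{T}{\pi}\sum_{|m|>N}\frac{1}{m^2+n^2T^2}.
\]
This weight is nonnegative and nonincreasing in $n$. Since $x\notin 2\Z$, the partial sums of $\cos(\pi nx)$ are bounded by some $M(x)$. Summation by parts (Lemma~\ref{lem:sumparts}) then bounds the difference by
\[
2M(x)\,d_{1,N}(T)\le 2M(x)\,\frac{T}{\pi}\sum_{|m|>N}m^{-2}\to0 .
\]

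With this fix the rest of your proposal is correct and coincides with the paper's proof. That rest consists of $\frac{\pi}{2T}B_0^*=\frac{\pi T}{6}$, the identity $\frac{\pi}{2T}B_n=\frac12 a_{|n|}(T)$ via Lemma~\ref{lem:antlmm}, the exponentially small correction vanishing as $T\to\infty$, and Abel's theorem for $\sum_n\cos(\pi nx)/n$.
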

\begin{proof}
Since $g_T(0,\cdot)$ and $x\mapsto -\log|1-e^{\pi\i x}|$ are both $2$-periodic, it suffices to prove the claim for $x\in[-1,1)\setminus\{0\}$. For fixed $T$, write
\[
g_{T,N}(0,x)=A_{T,N}+B_{T,N}(x),
\]
where
\[
A_{T,N}
=
\frac{T}{2\pi}\sum_{0<|m|\le N}\frac{1}{m^2}, \qquad B_{T,N}(x)
=
\frac{T}{2\pi}
\sum_{\substack{|m|\le N\\0< |n|\le N}}
\frac{e^{\pi \i n x}}{m^2+n^2T^2}.
\]
Since the denominator is even in \(n\), pairing \(n\) and \(-n\) gives
\[
B_{T,N}(x)
=
\frac{T}{\pi}
\sum_{n=1}^N
\cos(\pi n x)
\biggl(\sum_{|m|\le N}\frac{1}{m^2+n^2T^2}\biggr).
\]
Clearly,
\[
\lim_{N\to \infty} A_{T,N}=
\frac{T}{\pi}\sum_{m=1}^\infty \frac{1}{m^2} = \frac{\pi T}{6}.
\]
Thus it remains to understand $B_{T,N}(x)$. For $k\ge 1$, set
\[
C_k(x):=\sum_{n=1}^k \cos(\pi n x). 
\]
Note that for any $k$,
\begin{align*}
C_k(x) &= \frac{1}{2}\sum_{n=1}^k e^{\pi\i n x} + \frac{1}{2}\sum_{n=1}^k e^{-\pi\i n x}\\ 
&= \frac{e^{\pi\i x} - e^{\pi\i (k+1)x}}{2(1-e^{\pi\i x})} + \frac{e^{-\pi\i x} - e^{-\pi\i (k+1)x}}{2(1-e^{-\pi\i x})}.
\end{align*}
Since $x\in [-1,1)\setminus\{0\}$, we have $e^{\pm \pi\i x}\ne 1$. Therefore the above identity shows that
\[
M(x) := \sup_{k\ge 1} |C_k(x)|<\infty.  
\]
Let $a_n(T)$ be as in Lemma \ref{lem:antlmm}. 
Since $a_n(T)$ is positive and decreases to zero as $n\to \infty$, and $|C_k(x)|\le M(x)$ for all $k$, Lemma \ref{lem:sumparts} shows that the following function is well-defined:
\[
F_T(x)=\sum_{n=1}^\infty \cos(\pi n x)\,a_n(T) := \lim_{N\to\infty}\sum_{n=1}^N \cos(\pi n x)\,a_n(T).
\]
Now set
\begin{align*}
&b_{n,N}(T)
:=
\frac{T}{\pi}\sum_{|m|\le N}\frac{1}{m^2+n^2T^2},\\ 
&d_{n,N}(T):=a_n(T)-b_{n,N}(T)
=
\frac{T}{\pi}\sum_{|m|>N}\frac{1}{m^2+n^2T^2}.
\end{align*}
Then $d_{n,N}(T)\ge 0$, and $d_{n,N}(T)$ decreases in $n$. Moreover,
\[
0\le d_{1,N}(T)
=
\frac{T}{\pi}\sum_{|m|>N}\frac{1}{m^2+T^2}
\le
\frac{T}{\pi}\sum_{|m|>N}\frac{1}{m^2}
\xlongrightarrow{N\to \infty} 0.
\]
Since
\[
B_{T,N}(x)=\sum_{n=1}^N \cos(\pi n x)\,b_{n,N}(T),
\]
we have
\[
F_T(x)-B_{T,N}(x)
=
\sum_{n=1}^N \cos(\pi n x)\,d_{n,N}(T)
+
\sum_{n=N+1}^\infty \cos(\pi n x)\,a_n(T).
\]
Applying Lemma \ref{lem:sumparts} to each sum yields
\[
|F_T(x)-B_{T,N}(x)|
\le
2M(x)\,d_{1,N}(T)+2M(x)\,a_{N+1}(T).
\]
Now $d_{1,N}(T)\to 0$ as $N\to\infty$, and Lemma \ref{lem:antlmm} shows that $a_{N+1}(T)\to 0$ as $N\to\infty$. Hence,
\[
\lim_{N\to\infty} B_{T,N}(x)= F_T(x).
\]
We conclude that
\[
g_T(0,x) = \lim_{N\to\infty} g_{T,N}(0,x)=\frac{\pi T}{6}+F_T(x).
\]
Now define
\begin{align}\label{eq:sxdef}
S(x):=\sum_{n=1}^\infty \frac{\cos(\pi n x)}{n},
\end{align}
where the series on the right is conditionally convergent by Lemma \ref{lem:sumparts} and the uniform boundedness of $|C_k(x)|$. 
For fixed $T$, set
\[
c_n(T)
:= a_n(T) - \frac{1}{n} = \frac{2}{n(e^{2\pi nT}-1)}.
\]
Then $c_n(T)\ge 0$ and decreases to zero as $n\to\infty$. Therefore, again by
Lemma \ref{lem:sumparts},
\[
\biggl|
\biggl(\lim_{N\to\infty} g_{T,N}(0,x)-\frac{\pi T}{6}\biggr)-S(x)
\biggr|
=
\biggl|\sum_{n=1}^\infty \cos(\pi n x)\,c_n(T)\biggr|
\le
2M(x)\,c_1(T).
\]
But
\[
c_1(T)=\frac{2}{e^{2\pi T}-1}\xlongrightarrow{T\to \infty} 0.
\]
Hence,
\begin{align}\label{eq:prevlim}
\lim_{T\to\infty}
\biggl(g_{T}(0,x)-\frac{\pi T}{6}
\biggr)
=
S(x).
\end{align}
It remains to evaluate $S(x)$. For $r\in (0,1)$, define
\[
  S_r(x) := \sum_{n=1}^\infty \frac{r^n\cos(\pi n x)}{n},
\]
which is an absolutely convergent series. Note that 
\begin{align*}
S_r(x) &= \Re\biggl(\sum_{n=1}^\infty \frac{r^n e^{\pi \i nx}}{n}\biggr)\\ 
&= \Re(-\log(1-r e^{\pi \i x})) = -\log |1-re^{\pi \i x}|.
\end{align*}
Now, define the partial sum
\begin{align*}
U_n(x) := \sum_{j=1}^n \frac{\cos (\pi j x)}{j}.
\end{align*}
By Lemma \ref{lem:sumparts},  $U_n(x)\to S(x)$ as $n\to\infty$. In particular, the sequence is bounded by some constant $C$ in absolute value. Thus, summation by parts gives 
\begin{align*}
S_r(x) &= \sum_{n=1}^\infty r^n(U_n(x)-U_{n-1}(x)) \\ 
&= \sum_{n=1}^\infty U_n(x)(r^n- r^{n+1}) = (1-r)\sum_{n=1}^\infty U_n(x) r^n.
\end{align*}
Take any $\ep>0$ and find $N$ so large that $|U_n(x)-S(x)|\le \ep$ for all $n\ge N$. Then 
\begin{align*}
|S_r(x)-S(x)| &\le (1-r)\sum_{n=1}^N 2C r^n + (1-r)\sum_{n=N+1}^\infty \ep r^n\\ 
&\le 2CN(1-r) + \ep. 
\end{align*}
This shows that 
\[
  \limsup_{r\to 1}|S_r(x)-S(x)| \le \ep.
\]
Since $\ep$ is arbitrary, this shows that $S_r(x)\to S(x)$ as $r\to 1$.
Since $x\in[-1,1)\setminus\{0\}$, we have $e^{\pi\i x}\neq 1$. Hence
\begin{align}\label{eq:sxform}
  S(x)=\lim_{r\uparrow 1}S_r(x)= -\lim_{r\uparrow 1}\log |1-re^{\pi \i x}| = -\log|1-e^{\pi \i x}|.
\end{align}
Combining with equation~\eqref{eq:prevlim} gives the claim.
\end{proof}
This concludes the appendix discussion of the torus Green's function. We next record several technical lemmas used elsewhere in the paper.

\subsection{Auxiliary lemmas}\label{sec:aux}
In this section we collect technical proofs omitted from the main text, together with a few supporting lemmas used in those arguments.

\subsubsection{Proof of Lemma~\ref{lem:denominator-evaluation-integer} (Passing to the zero-mode limit)}\label{sec:aux-zero-mode-limit}
Clearly, the claim follows once we justify moving the limit $\lim_{\ep \to 0}$ inside the infinite sum in the formula for $\inn{F}_{T,N,\ep}$ in Lemma~\ref{lem:denominator-evaluation}. To this end, we first record that $G_{T,N}$ is positive semidefinite.
Indeed, by the spectral representation \eqref{eq:gtnrep}, we see that for any finite family $(u_l)_{1\le l\le n}\subseteq  M_T$ and $(c_l)_{1\le l\le n}\subseteq \rr$,
\[
\sum_{l,l'=1}^n c_l c_{l'} G_{T,N}(u_l,u_{l'})
=\sum_{j=1}^{L_N} \frac{1}{\lambda_j}\biggl(\sum_{l=1}^n c_l e_j(u_l)\biggr)^2\ge 0.
\]
In particular, taking $c_l\equiv 1$ gives
\begin{align*}
\sum_{1\le l<l'\le n} G_{T,N}(u_l,u_{l'})
&=\frac12\sum_{l,l'=1}^n G_{T,N}(u_l,u_{l'})-\frac12\sum_{l=1}^n G_{T,N}(u_l,u_l) \\ 
&\ge -\frac12\sum_{l=1}^n G_{T,N}(u_l,u_l).
\end{align*}
Since $G_{T,N}$ is a bounded function, this shows that the real part of the
exponent satisfies
\[
\Re\biggl(- 4b \sum_{j=1}^k \sum_{l=1}^n\alpha_j G_{T,N}(q_j,u_l) - 4b^2 \sum_{1\le l<l'\le n} G_{T,N}(u_l, u_{l'})\biggr) \le Cn
\]
for some constant $C$ that does not depend on $n$ or $u_1,\ldots,u_n$. Thus the
$n$th integral is bounded by $C_1^n$ for a fixed constant $C_1$, and the factor
$1/n!$ gives a summable majorant. This allows us to use the dominated
convergence theorem to pass the limit $\ep\to 0$ inside the sum.

\subsubsection{Bounds required for the proof of Lemma \ref{lem:bnk}}\label{sec:bnkbounds}
For each $T>0$ and $r\ge 0$, define
\begin{align}\label{eq:ftdef}
  f_T(r):= \frac{\pi}{2} r - \frac{\pi}{4T}r^2.
\end{align}
We will now prove some key properties of $f_T$.
\begin{lmm}\label{lem:ftlemma}
The function $f_T$ is concave on $[0,\infty)$, is nonnegative and increasing on $[0,T]$, and has Lipschitz constant at most $\pi/2$ on $[0,T]$. Moreover, for $r\in [0,T]$, it satisfies the bounds
\[
  \frac{\pi}{4}r \le f_T(r)\le \frac{\pi}{2}r,
\]
and for $a,b\in [0,T]$, it satisfies the subadditive inequality
\[
  f_T(\min\{a+b,T\})\le f_T(a)+f_T(b).
\]
\end{lmm}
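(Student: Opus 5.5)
Everything here follows from elementary calculus on the explicit quadratic $f_T(r)=\frac{\pi}{2}r-\frac{\pi}{4T}r^2$. We verify each property in turn, using only the derivative
\[
  f_T'(r)=\frac{\pi}{2}-\frac{\pi r}{2T},
\]
and the second derivative $f_T''\equiv -\frac{\pi}{2T}<0$.

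\textbf{Concavity, monotonicity, Lipschitz bound.} Concavity on $[0,\infty)$ is immediate from $f_T''<0$. On $[0,T]$ we have $0\le f_T'(r)\le \frac{\pi}{2}$, since $f_T'(0)=\frac{\pi}{2}$, $f_T'(T)=0$, and $f_T'$ is linear. Hence $f_T$ is nondecreasing on $[0,T]$, with Lipschitz constant at most $\pi/2$ there. Since $f_T(0)=0$, monotonicity also gives nonnegativity on $[0,T]$.

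\textbf{Two-sided bounds.} The upper bound $f_T(r)\le\frac{\pi}{2}r$ holds because the subtracted term $\frac{\pi}{4T}r^2$ is nonnegative. For the lower bound, write $f_T(r)=\frac{\pi}{2}r\bigl(1-\frac{r}{2T}\bigr)$ and note that $1-\frac{r}{2T}\ge\frac12$ when $r\le T$. Hence $f_T(r)\ge\frac{\pi}{4}r$ on $[0,T]$.

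\textbf{Subadditivity.} The key observation is that a concave function vanishing at $0$ is subadditive on its domain of concavity. Concretely, for $a,b\ge0$ with $a+b>0$, concavity together with $f_T(0)=0$ gives
\[
  f_T(a)\ge \tfrac{a}{a+b}\,f_T(a+b),\qquad f_T(b)\ge \tfrac{b}{a+b}\,f_T(a+b).
\]
Adding these yields $f_T(a+b)\le f_T(a)+f_T(b)$ for all $a,b\ge0$. (If $a+b=0$ the inequality is trivial.) Now let $a,b\in[0,T]$.
\begin{itemize}
\item If $a+b\le T$, this is exactly the claimed inequality $f_T(\min\{a+b,T\})\le f_T(a)+f_T(b)$.
\item If $a+b>T$, then $\min\{a+b,T\}=T$. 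Since $f_T$ is nondecreasing on $[0,T]$ and $b\le T$, we have $f_T(T)\ge f_T(b)$. We also have $f_T(a)\ge0$. But these two facts only give $f_T(T)\ge f_T(b)$, which is the wrong direction, so instead we argue as follows. Choose $a'\le a$ with $a'+b=T$; this is possible because $a+b>T$ and $b\le T$. Then $a'\in[0,T]$, and subadditivity gives
\[
  f_T(T)\le f_T(a')+f_T(b)\le f_T(a)+f_T(b),
\]
where the last step uses monotonicity of $f_T$ on $[0,T]$.
\end{itemize}

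No step is a real obstacle. The only place needing care is the truncated subadditivity in the case $a+b>T$, which is handled by the reduction to $a'$ above.
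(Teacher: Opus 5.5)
Your proof is correct and takes essentially the same route as the paper. Concavity, monotonicity, the Lipschitz bound and the two-sided bounds come from elementary calculus. The truncated subadditivity comes from the concavity inequality with $f_T(0)=0$ (weights $\frac{a}{a+b}$, $\frac{b}{a+b}$), plus monotonicity after shrinking one argument: the paper replaces $b$ by $c=\min\{a+b,T\}-a$, while you replace $a$ by $a'=T-b$. You can simply delete the aborted attempt in your second bullet (the remark about the ``wrong direction'').
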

\begin{proof}
To see that $f_T$ is concave on $[0,\infty)$, simply observe that its second derivative is nonpositive. Nonnegativity on $[0,T]$ is simple to check (in fact, it is nonnegative on $[0,2T]$), and $f_T$ is increasing on $[0,T]$ because $f_T'$ is nonnegative on $[0,T]$. The bound on the Lipschitz constant follows from the maximum value of $|f_T'|$ in $[0,T]$. The upper and lower bounds on $f_T(r)$ are straightforward from the formula for $f_T$. Lastly, for the subadditive inequality, take any $a,b\in [0,T]$. Let $c:= \min\{a+b,T\}-a$. Note that $c\le b$. Thus, if we can show that 
\begin{align}\label{eq:fttoshow}
f_T(a+c)\le f_T(a)+f_T(c),
\end{align} 
then by the increasing nature of $f_T$ on $[0,T]$, it will follow that $f_T(a+c)\le f_T(a)+f_T(b)$, which is the required inequality. So, let us prove the inequality \eqref{eq:fttoshow}. If $a+c=0$, then the inequality is trivial since $f_T(0)=0$. So, let us assume that $a+c>0$. Let 
\[
  \lambda := \frac{a}{a+c},
\]
so that 
\[
  a = \lambda (a+c), \quad c = (1-\lambda)(a+c).
\]
By the concavity of $f_T$, we get
\begin{align*}
&f_T(a)\ge \lambda f_T(a+c) + (1-\lambda)f_T(0) = \frac{a}{a+c}f_T(a+c),\\ 
&f_T(c)\ge (1-\lambda)f_T(a+c)+\lambda f_T(0) = \frac{c}{a+c} f_T(a+c).
\end{align*}
Adding up the above inequalities gives \eqref{eq:fttoshow}.
\end{proof}
For the next lemma, we need the following technical tool.
\begin{lmm}\label{lem:one-minus-exp-lower}
There exists a universal constant $C\ge1$ such that for all $u\ge0$ and all
$v\in[-\pi,\pi]$,
\[
  C^{-1}\min\{1,\sqrt{u^2+v^2}\}
  \le |1-e^{-u+\i v}|
  \le C\min\{1,\sqrt{u^2+v^2}\}.
\]
\end{lmm}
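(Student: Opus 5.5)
The plan is to reduce to the explicit identity
\[
  |1-e^{-u+\i v}|^2=(1-e^{-u})^2+4e^{-u}\sin^2(v/2),
\]
which follows by expanding $|1-e^{-u}e^{\i v}|^2=1-2e^{-u}\cos v+e^{-2u}$ and writing $1-\cos v=2\sin^2(v/2)$. Both summands are nonnegative, and the whole expression is at most $(1+e^{-u})^2\le 4$. I would then split into two regimes, $u\le 1$ and $u>1$, and compare each summand with $u^2$ and $v^2$ respectively.

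In the regime $u\le1$, both factors are comparable to the natural quantities. Since $1-e^{-u}$ is concave with value $0$ at $0$, one gets $(1-e^{-1})u\le 1-e^{-u}\le u$. Also $e^{-u}\in[e^{-1},1]$. On $[-\pi,\pi]$, the concavity of $\sin$ on $[0,\pi/2]$ gives $\frac{|v|}{\pi}\le |\sin(v/2)|\le \frac{|v|}{2}$. Hence $|1-e^{-u+\i v}|^2$ is bounded above and below by universal multiples of $u^2+v^2$. Because $u^2+v^2\le 1+\pi^2$ here, the quantity $\sqrt{u^2+v^2}$ is comparable to $\min\{1,\sqrt{u^2+v^2}\}$ up to the factor $\sqrt{1+\pi^2}$.

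In the regime $u>1$, the minimum equals $1$. The lower bound comes from the first summand, since $(1-e^{-u})^2\ge(1-e^{-1})^2$. The upper bound is the trivial estimate $|1-e^{-u+\i v}|\le 2$.

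Taking $C$ to be the largest of the finitely many constants produced in the two regimes gives the lemma. No step is a real obstacle. The only care needed is to track the constants in the first regime consistently, so that a single universal $C$ works on both sides of both inequalities.
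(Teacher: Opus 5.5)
Your argument is correct and is essentially the paper's proof. Both expand $|1-e^{-u+\i v}|^2$ as $(1-e^{-u})^2$ plus a nonnegative $1-\cos v$ term and bound each piece by elementary inequalities in two regimes; the differences are cosmetic: you split on $u\le 1$ versus $u>1$ and use Jordan's inequality on all of $[-\pi,\pi]$, which removes the paper's subcase $u<\tfrac12$, $|v|>\tfrac{\sqrt3}{2}$, and you read the upper bound off the same identity rather than from $|1-e^{-u+\i v}|\le u+|v|$.
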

\begin{proof}
Throughout this proof, $c_1,c_2,\ldots$ will denote positive universal constants whose values may change from line to line. 
Using
\[
|1-e^{-u+\i v}|^2=(1-e^{-u})^2+2e^{-u}(1-\cos v),
\]
consider two cases. If $u^2+v^2\le 1$, then $u\le 1$ and $|v|\le 1$. Hence
\[
1-e^{-u}\ge c_1u,\qquad 1-\cos v\ge c_2v^2,\qquad e^{-u}\ge e^{-1},
\]
which implies
\[
|1-e^{-u+\i v}|^2\ge c_3(u^2+v^2).
\]
If $u^2+v^2>1$, then either $u\ge \frac12$ or $u<\frac12$.
If $u\ge \frac12$, then
\[
|1-e^{-u+\i v}|\ge 1-e^{-u}\ge 1-e^{-1/2}.
\]
If $u<\frac12$, then $|v|\ge \sqrt{1-u^2}>\frac{\sqrt3}{2}$ and $|v|\le \pi$, and so,
\[
|1-e^{-u+\i v}|^2\ge 2e^{-u}(1-\cos v)
\ge 2e^{-1/2}(1-\cos(\sqrt3/2)).
\]
Thus in either subcase, $|1-e^{-u+\i v}|\ge c_4$. This proves the lower bound.
For the upper bound, first use $|e^{-u}|\le1$ to get
\[
  |1-e^{-u+\i v}|\le 1+e^{-u}\le2.
\]
Also,
\[
  |1-e^{-u+\i v}|
  \le |1-e^{-u}|+e^{-u}|1-e^{\i v}|
  \le u+|v|,
\]
where we used $1-e^{-u}\le u$ and $|1-e^{\i v}|\le |v|$. Since
$u+|v|\le\sqrt2\sqrt{u^2+v^2}$, the upper bound follows after increasing the
constant $C$.
\end{proof}

Now define a function $h_T:\R^2 \to \R$ as
\begin{align}\label{eq:htdef}
  h_T(t,x)=\sum_{n=1}^\infty \frac{\cos(\pi n x)}{n}
  \biggl(\sum_{m\in\Z}e^{-\pi n|t+2mT|}\biggr).
\end{align}
Also, for $(t,x)\in \R^2$, let
\[
  r_T(t,x):=\sqrt{d_T(t)^2+d_1(x)^2},
\]
with $d_T,d_1$ as in Lemma~\ref{lem:gN-log-upper}. It is a simple fact that for $(t,x), (s,y)\in M_T$, $r_T(t-s, x-y)$ is the geodesic distance between $(t,x)$ and $(s,y)$ under the flat metric. The following lemma gives upper and lower bounds on $h_T(t,x)$ depending on the geodesic distance between the image of $(t,x)$ on $M_T$ and the origin.
\begin{lmm}\label{lem:htlemma}
Take any $T\ge 1$. There are positive universal constants $C_1$ and $C_2$ such that for all $(t,x)\in \R^2$, we have
\[
  -C_1 \le h_T(t,x)\le C_2 + |\log r_T(t,x)|.
\]
\end{lmm}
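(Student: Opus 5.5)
We need bounds on $h_T(t,x)=\sum_{n\ge1}\frac{\cos(\pi n x)}{n}\sum_{m\in\Z}e^{-\pi n|t+2mT|}$ for $T\ge1$, uniform in $T$. The plan is to reduce to a principal representative and then split $h_T$ into a dominant logarithmic term and a uniformly bounded remainder.

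By $2T$-periodicity in $t$ and $2$-periodicity in $x$, and by evenness in each variable, we may assume $0\le t\le T$ and $|x|\le1$. Then $d_T(t)=t$, $d_1(x)=|x|$ and $r_T=\sqrt{t^2+x^2}$. We split the $m$-sum into the $m=0$ term and the rest,
\[
  h_T(t,x)=\sum_{n\ge1}\frac{\cos(\pi nx)}{n}e^{-\pi nt}+E_T(t,x),
  \qquad
  E_T(t,x):=\sum_{n\ge1}\frac{\cos(\pi nx)}{n}\sum_{m\ne0}e^{-\pi n|t+2mT|}.
\]

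\emph{Main term.} For $t>0$ the first series equals $-\log|1-e^{-\pi t+\pi\i x}|$, by the same power-series computation used in the proof of Lemma~\ref{lem:gtlimit}. For $t=0$ it is the conditionally convergent $S(x)=-\log|1-e^{\pi\i x}|$ from Lemma~\ref{lem:gtlimit2}, with $x\ne0$. Applying Lemma~\ref{lem:one-minus-exp-lower} with $u=\pi t\ge0$ and $v=\pi x\in[-\pi,\pi]$ gives
\[
  |1-e^{-\pi t+\pi\i x}|\asymp\min\{1,\pi r_T\}.
\]
Hence the main term lies between $-\log 2$ and $C+|\log r_T|$.

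\emph{Remainder.} For $m\ne0$ and $0\le t\le T$ we have $|t+2mT|\ge(2|m|-1)T\ge T\ge1$. This bound makes the double series absolutely convergent, and summing over $n$ and $m$ gives
\[
  |E_T(t,x)|\le\sum_{n\ge1}\frac1n\sum_{m\ne0}e^{-\pi n(2|m|-1)T}\le\sum_{n\ge1}\frac{2e^{-\pi n}}{n(1-e^{-2\pi})}\le C .
\]
This constant is universal because $T\ge1$. The same absolute convergence justifies splitting the original series into these two pieces, including the case $t=0$, where the main term converges only conditionally.

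Adding the two estimates gives both bounds, with $C_1=\log2+C$ and $C_2=C+C'$. The only delicate points are the conditional convergence at $t=0$, which Lemma~\ref{lem:gtlimit2} handles, and the singular point $r_T=0$, where the upper bound is read as $+\infty$ and holds trivially. I expect the remainder estimate to be the part needing care; the hypothesis $T\ge1$ is exactly what makes it uniform.
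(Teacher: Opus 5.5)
Your proposal is correct and follows essentially the same route as the paper: you split off the $m=0$ term, identify it with $-\log|1-e^{-\pi|t|+\pi\i x}|$, bound it with Lemma~\ref{lem:one-minus-exp-lower}, and bound the absolutely convergent $m\neq0$ remainder using $|t+2mT|\ge(2|m|-1)T$ together with $T\ge1$. One small difference is to your credit: you treat the conditionally convergent case $t=0$ explicitly, via Lemma~\ref{lem:gtlimit2}, and you justify splitting the series there, whereas the paper passes over this point silently.
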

\begin{proof}
Throughout this proof, $C,C_1,\ldots$ will denote positive universal constants, whose values may change from line to line. Since both $h_T$ and $r_T$ are $2T$-periodic in the first coordinate and $2$-periodic in the second coordinate, we may assume without loss that $(t,x)\in M_T$. Write
\[
  h_T(t,x)=h_T^{(0)}(t,x)+R_T(t,x),
\]
where
\[
  h_T^{(0)}(t,x):=\sum_{n=1}^\infty \frac{\cos(\pi n x)}{n}e^{-\pi n|t|}
  =-\log|1-e^{-\pi |t|+\pi \i x}|,
\]
and
\[
  R_T(t,x):=\sum_{n=1}^\infty \frac{\cos(\pi n x)}{n}
  \biggl(\sum_{m\in\Z\setminus\{0\}}e^{-\pi n|t+2mT|}\biggr).
\]
Since $|t|\le T$, we have
\[
  |t+2mT|\ge (2|m|-1)T,\qquad m\neq 0.
\]
Since $T\ge 1$, this implies that 
\begin{align}\label{eq:rtbound}
  |R_T(t,x)|
  \le \sum_{n=1}^\infty \frac{1}{n}\biggl(\sum_{m\neq 0}e^{-\pi n|t+2mT|}\biggr)
  \le 2\sum_{n=1}^\infty \frac{1}{n}\biggl(\sum_{m=1}^\infty e^{-\pi n(2m-1)T}\biggr)
  \le C_1.
\end{align}
Since $|1-z|\le 2$ whenever $|z|\le 1$, we also have
\[
  h_T^{(0)}(t,x)\ge -\log 2.
\]
This gives the lower bound $h_T(t,x)\ge -C_1$. Next, let $a:=\pi|t|$ and $b:=\pi x$. Since $(t,x)\in M_T$, we have $d_T(t)=|t|$ and $d_1(x)=|x|$, and so,
\[
  r_T(t,x)=\sqrt{t^2+x^2}=\frac{1}{\pi}\sqrt{a^2+b^2}.
\]
Now note that 
\[
  |1-e^{-a+\i b}|\ge C^{-1}\min\{1,\sqrt{a^2+b^2}\},
\]
by Lemma~\ref{lem:one-minus-exp-lower} (applied with $u=a$ and $v=b\in[-\pi,\pi]$).
Therefore, for $(t,x)\neq (0,0)$,
\begin{align*}
h_T^{(0)}(t,x)
&=-\log|1-e^{-a+\i b}|\\
&\le \log C -\log(\min\{1,\sqrt{a^2+b^2}\})\\
&\le C_1 + |\log\sqrt{a^2+b^2}|\\
&\le C_2 + |\log r_T(t,x)|.
\end{align*}
At $(t,x)=(0,0)$, the right side is $+\infty$, so the same inequality is immediate.
Combining this with the bound \eqref{eq:rtbound} on $R_T$ completes the proof.
\end{proof}

Now recall the function $g_T$ defined in the previous section. For distinct $u,v\in M_T$, define $G_T(u,v) := g_T(u-v)$. Then, define
\[
  \widetilde G_T(u,v):=G_T(u,v)-\frac{\pi}{6}T.
\]
For $u_1,\ldots, u_w\in M_T$, define 
\[
  \Phi_T(u_1,\dots,u_w)
  :=  \exp\biggl(
    -4b\sum_{j=1}^k\sum_{l=1}^w \alpha_j \widetilde G_T(q_j,u_l)
    -4b^2\sum_{1\le l<l'\le w}\widetilde G_T(u_l,u_{l'})
  \biggr).
\]
The following lemma proves a key upper bound on $\Phi_T$.
\begin{lmm}\label{lem:renorm-int-tail}
There exist constants $C,T_0,\eta,\delta>0$ such that for all $T\ge T_0$ and all $(u_1,\dots,u_w)\in M_T^w$, with $u_l=(t_l,x_l)$ and $|u_l|$ denoting the Euclidean norm of $u_l$, we have 
\begin{equation*}
  |\Phi_T(u_1,\dots,u_w)|
  \le
  C \exp\biggl(-\eta\sum_{l=1}^w |u_l|\biggr)
  \prod_{j=1}^k\prod_{l=1}^w d(q_j,u_l)^{-(2-\delta)},
\end{equation*}
where $d$ denotes geodesic distance on $M$.
\end{lmm}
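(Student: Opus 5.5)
Throughout, $w$ is the fixed nonnegative integer from Theorem~\ref{thm:denominator-limit-T}. The plan is to express $\widetilde G_T$ through $f_T$ and $h_T$, bound the two pieces separately, and combine the linear parts into exponential decay.

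\textbf{Step 1: decomposition of $\widetilde G_T$.} First I will derive the exact formula
\[
\widetilde G_T(u,v)=-f_T(d_T(t-s))+h_T(t-s,x-y).
\]
This should follow from \eqref{eq:gtlimit-g-fourier}, \eqref{eq:gtlimit-b0-eval} and \eqref{eq:gtlimit-bn-eval}: pairing $\pm n$ turns the $n\neq0$ terms into exactly $h_T$, and the $B_0^*$ term equals $\frac{\pi T}{6}-f_T(|t|)$ for $|t|\le T$.

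\textbf{Step 2: bounding the $h_T$ contributions.} Lemma~\ref{lem:htlemma} handles these. Pair terms carry the coefficient $-4b^2<0$, so they are bounded above using $h_T\ge -C_1$. For an insertion term, I split on the sign of $\Re\alpha_j$:
\begin{itemize}
\item if $\Re\alpha_j\ge0$, the term is bounded by a constant;
\item if $\Re\alpha_j<0$, it is bounded by $C+4b|\Re\alpha_j|\,|\log r_T|$, and $4b|\Re\alpha_j|\le 2-\delta$ because $\Re\alpha_j>-\frac1{2b}$.
\end{itemize}
The imaginary parts of the $\alpha_j$ only contribute phases. For $T\ge T_0$ larger than all $|t_j|+1$, the torus distance $r_T(q_j-u_l)$ and $d(q_j,u_l)$ agree near the singularity. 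Away from it the factor $r_T^{-(2-\delta)}$ is bounded, while the large-distance factor $r_T^{+p}$ grows only polynomially and is absorbed into the exponential decay. This yields the $\prod d(q_j,u_l)^{-(2-\delta)}$ factor.

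\textbf{Step 3: the linear part (main obstacle).} This step must show
\[
\Re\Bigl\{4b\sum_{j,l}\alpha_j f_T(d_T(t_j-t_l))+4b^2\sum_{l<l'}f_T(d_T(t_l-t_{l'}))\Bigr\}\le C-\eta'\sum_l d_T(t_l),
\]
uniformly in $T$. On $M_T$ we have $d_T(t_l)=|t_l|\approx|u_l|$, and the linear decay will absorb the polynomial factors from Step 2 while leaving $e^{-\eta\sum|u_l|}$.

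Since $\sum\alpha_j=-bw$, the external charges act like a total charge $-bw$ located near the origin. The Lipschitz bound in Lemma~\ref{lem:ftlemma} shows that $f_T(d_T(t_j-t_l))=f_T(d_T(t_l))+O(1)$, so the insertion part equals
\[
-4b^2w\sum_l f_T(d_T(t_l))+O(1).
\]
For the pair part I will use the subadditivity of $f_T$ and the reverse triangle inequality to get $f_T(d_T(t_l-t_{l'}))\le f_T(d_T(t_l))+f_T(d_T(t_{l'}))$. Summing over pairs gives at most $(w-1)\sum_l f_T(d_T(t_l))$, so the total is at most
\[
-4b^2\sum_l f_T(d_T(t_l))+C\le -\pi b^2\sum_l|t_l|+C,
\]
by the lower bound $f_T(r)\ge \frac{\pi}{4}r$.

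The delicate point is that subadditivity needs $\min\{a+b,T\}$, because the torus distance wraps around. I will handle it by noting that $d_T(t_l-t_{l'})\le\min\{d_T(t_l)+d_T(t_{l'}),T\}$ and that $f_T$ is increasing on $[0,T]$. Finally, combining Steps 1–3 and exponentiating gives the stated bound, with constants independent of $T\ge T_0$.
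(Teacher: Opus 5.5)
Your proposal is correct and follows the paper's proof essentially step for step. Both arguments use the splitting $\widetilde G_T=-f_T+h_T$, reduce the insertion $f_T$-terms to $-4b^2w\sum_l f_T(|t_l|)+O(1)$ via the Lipschitz bound, and bound the pair terms by $f_T(|t_l|)+f_T(|t_{l'}|)$ using $\min\{\cdot,T\}$-subadditivity and monotonicity. Both also handle the $h_T$-terms with Lemma~\ref{lem:htlemma} and convert $|\log r_T|$ into $d(q_j,u_l)^{-(2-\delta)}$, with $T_0\ge 1+\max_j|\tau_j|$ guaranteeing $r_T\ge 1$ when $d>1$, at the cost of a polynomial factor that the decay $-\pi b^2\sum_l|t_l|$ absorbs.
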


\begin{proof}
If $w=0$, then all sums and products in the definitions are empty, so
$\Phi_T\equiv 1$ and the claim is immediate after taking any fixed
$\delta\in(0,2)$ and increasing $C$ if needed. So assume $w\ge 1$.

Fix $T$ large enough so that all the points $q_1,\dots,q_k$ belong to $M_T$. Fix $u_1,\ldots,u_w\in M_T$, with $u_l=(t_l,x_l)$. Throughout this proof, $C,C_1,\ldots$ will denote constants that do not depend on $T$ or $u_1,\ldots,u_w$, whose values may change from line to line. We will also use $O(1)$ to denote any quantity whose absolute value is bounded above by such a constant. 
Write
\[
  \widetilde g_T(t,x):=g_T(t,x)-\frac{\pi}{6}T,
\]
so that $\widetilde G_T((t,x),(s,y))=\widetilde g_T(t-s,x-y)$.
Take $(t,x)\in M_T\setminus\{(0,0)\}$. Since $|t|\le T$, we may combine equations 
\eqref{eq:gtlimit-g-fourier}, \eqref{eq:gtlimit-b0-eval}, and \eqref{eq:gtlimit-bn-eval} to get
\begin{align*}
\widetilde{g}_T(t,x) &= -\frac{\pi|t|}{2}+\frac{\pi t^2}{4T}+\frac{1}{2}\sum_{n\in \Z\setminus\{0\}} \frac{e^{\pi \i n x}}{|n|}\biggl(\sum_{m\in \Z}e^{-\pi|n||t+2mT|}\biggr).
\end{align*}
After pairing $n$ and $-n$, this gives
\begin{equation}\label{eq:gt-decomp-tail}
  \widetilde g_T(t,x)= -f_T(|t|)+h_T(t,x),
\end{equation}
where $f_T$ is the function defined in equation \eqref{eq:ftdef} and $h_T$ is the function defined in equation \eqref{eq:htdef}. Next, let
\[
  E_T^{(1)}:=-4b\sum_{j=1}^k\sum_{l=1}^w \alpha_j \widetilde G_T(q_j,u_l).
\]
Fix $l$. Write $q_j=(\tau_j,\xi_j)$, and let
\[
  d_{jl}:=d_T(t_l-\tau_j).
\]
Since $d_T$ is a $1$-Lipschitz map, we have 
\[
  |d_{jl}-|t_l||= |d_T(t_l - \tau_j) - d_T(t_l)| \le |\tau_j|.
\]
Using the Lipschitz bound $|f_T(a)-f_T(b)|\le \frac{\pi}{2}|a-b|$ from Lemma \ref{lem:ftlemma}, this implies
\[
  f_T(d_{jl})=f_T(|t_l|)+O(1).
\]
Therefore, using the decomposition~\eqref{eq:gt-decomp-tail},
\begin{align}
-4b\sum_{j=1}^k \alpha_j \widetilde G_T(q_j,u_l)
&=4b\sum_{j=1}^k \alpha_j f_T(d_{jl})
   -4b\sum_{j=1}^k \alpha_j h_T(q_j-u_l) \notag\\
&=4b f_T(|t_l|)\sum_{j=1}^k \alpha_j
   +O(1)-4b\sum_{j=1}^k \alpha_j h_T(q_j-u_l)\notag \\ 
&= -4b^2 w f_T(|t_l|)+O(1)-4b\sum_{j=1}^k \alpha_j h_T(q_j-u_l). \label{eq:insert-split}
\end{align}
Taking real parts in \eqref{eq:insert-split}, using the lower bound from Lemma \ref{lem:htlemma} for those $j$
with $\Re(\alpha_j)\ge 0$, and the upper bound from Lemma \ref{lem:htlemma} for those $j$ with $\Re(\alpha_j)<0$,
we get
\begin{align}
\Re\biggl(-4b\sum_{j=1}^k \alpha_j \widetilde G_T(q_j,u_l)\biggr)
&\le
-4b^2w f_T(|t_l|)
+ C
+(2-\delta)\sum_{j=1}^k |\log r_T(q_j,u_l)|, \label{eq:insert-final}
\end{align}
where
\[
  \delta:=\min_{1\le j\le k}(2+4b\Re(\alpha_j))>0.
\]
Summing equation \eqref{eq:insert-final} over $l=1,\dots,w$ gives
\begin{align}
\Re(E_T^{(1)})
\le
-4b^2w\sum_{l=1}^w f_T(|t_l|)
+ C
+(2-\delta)\sum_{j=1}^k\sum_{l=1}^w |\log r_T(q_j,u_l)|. \label{eq:E1-bound}
\end{align}
Next, let
\[
  E_T^{(2)}:=-4b^2\sum_{1\le l<l'\le w}\widetilde G_T(u_l,u_{l'}).
\]
Using the decomposition \eqref{eq:gt-decomp-tail}, we have
\[
  E_T^{(2)}
  =
  4b^2\sum_{1\le l<l'\le w} f_T(d_T(t_l-t_{l'}))
  -4b^2\sum_{1\le l<l'\le w} h_T(u_l-u_{l'}).
\]
By the lower bound from Lemma \ref{lem:htlemma}, the second term is bounded above by a constant:
\[
  -4b^2\sum_{1\le l<l'\le w} h_T(u_l-u_{l'})\le C.
\]
For the first term, note that
\[
  d_T(t_l-t_{l'})\le \min\{|t_l|+|t_{l'}|,T\}.
\]
Hence, by Lemma \ref{lem:ftlemma},
\[
  f_T(d_T(t_l-t_{l'}))\le f_T(\min\{|t_l|+|t_{l'}|,T\}) \le  f_T(|t_l|)+f_T(|t_{l'}|).
\]
Therefore,
\begin{align}
\Re(E_T^{(2)})
&\le 4b^2\sum_{1\le l<l'\le w}(f_T(|t_l|)+f_T(|t_{l'}|))+C\notag\\
&=4b^2(w-1)\sum_{l=1}^w f_T(|t_l|)+C. \label{eq:E2-bound}
\end{align}
From equations \eqref{eq:E1-bound} and \eqref{eq:E2-bound}, we get
\begin{align*}
\Re(E_T^{(1)}+E_T^{(2)})
&\le
-4b^2\sum_{l=1}^w f_T(|t_l|)
+ C
+(2-\delta)\sum_{j=1}^k\sum_{l=1}^w |\log r_T(q_j,u_l)|.
\end{align*}
But again by Lemma \ref{lem:ftlemma}, 
\[
  f_T(|t_l|)\ge \frac{\pi}{4}|t_l|.
\]
Using this in the previous display, we get 
\begin{align}\label{eq:re1e2}
\Re(E_T^{(1)}+E_T^{(2)})
\le
-\pi b^2\sum_{l=1}^w |t_l|
+ C 
+(2-\delta)\sum_{j=1}^k\sum_{l=1}^w |\log r_T(q_j,u_l)|.
\end{align}
Set
\[
  a:=2-\delta,\qquad \rho_{jl}:=r_T(q_j,u_l),\qquad D_{jl}:=d(q_j,u_l).
\]
Since $q_j=(\tau_j,\xi_j)$ is fixed, by increasing $T_0$ if needed we may assume
\[
  T\ge 1+\max_{1\le j\le k}|\tau_j|.
\]
Fix any $\ep>0$. We claim that there exists $C_\ep$ (independent of $T$ and $u_1,\dots,u_w$) such that
\begin{equation}\label{eq:r-to-d-bound}
  e^{a|\log \rho_{jl}|}\le C_\ep e^{\ep |t_l|} D_{jl}^{-a},
  \qquad 1\le j\le k,\ 1\le l\le w.
\end{equation}
To prove this, consider two cases. If $D_{jl}\le 1$, then $|t_l-\tau_j|\le D_{jl}\le 1<T$, and hence
$d_T(t_l-\tau_j)=|t_l-\tau_j|$. This means that $\rho_{jl}=D_{jl}$. Thus
\[
e^{a|\log \rho_{jl}|}=D_{jl}^{-a}\le e^{\ep|t_l|}D_{jl}^{-a}.
\]
If $D_{jl}>1$, then $\rho_{jl}\ge 1$ and therefore
$e^{a|\log \rho_{jl}|}=\rho_{jl}^a$. (Indeed, if $|t_l-\tau_j|\le T$, then
$\rho_{jl}=D_{jl}>1$; if $|t_l-\tau_j|>T$, then
$|t_l-\tau_j|\le T+|\tau_j|$, so
$d_T(t_l-\tau_j)=2T-|t_l-\tau_j|\ge T-|\tau_j|\ge 1$.)
Also, $d_T(\cdot)\le |\cdot|$ gives
$\rho_{jl}\le D_{jl}\le |t_l-\tau_j|+1\le C(1+|t_l|)$, and so,
\[
e^{a|\log \rho_{jl}|}=\rho_{jl}^a\le D_{jl}^a
=D_{jl}^{-a}D_{jl}^{2a}
\le C(1+|t_l|)^{2a}D_{jl}^{-a}
\le C_\ep e^{\ep|t_l|}D_{jl}^{-a}.
\]
This proves the claim \eqref{eq:r-to-d-bound}. Exponentiating the bound \eqref{eq:re1e2} on $\Re(E_T^{(1)}+E_T^{(2)})$ and applying the inequality 
\eqref{eq:r-to-d-bound} to each pair $(j,l)$ yields
\[
|\Phi_T(u_1,\dots,u_w)|
\le
C_\ep\exp\biggl(-(\pi b^2-k\ep)\sum_{l=1}^w |t_l|\biggr)
\prod_{j=1}^k\prod_{l=1}^w d(q_j,u_l)^{-(2-\delta)}.
\]
Choose $\ep>0$ so small that $\pi b^2-k\ep>0$, and define $\eta:=\pi b^2-k\ep$. Also,
\[
  \sum_{l=1}^w |t_l|
  \ge
  \sum_{l=1}^w (|u_l|-|x_l|)
  \ge
  \sum_{l=1}^w |u_l|-w,
\]
because $|x_l|\le 1$ on $M_T$. Therefore,
\[
  C_\ep\exp\biggl(-\eta\sum_{l=1}^w |t_l|\biggr)
  \le
  C_\ep e^{\eta w}\exp\biggl(-\eta\sum_{l=1}^w |u_l|\biggr).
\]
Absorbing $e^{\eta w}$ into the constant completes the proof.
\end{proof}

\subsubsection{Uniqueness of analytic continuation}\label{sec:unique}
The following lemma proves the uniqueness clause in Theorem \ref{thm:main2}.
\begin{lmm}\label{lem:uniq-analytic-cont}
For $m\ge 1$, define
\[
  \mathbb{H}_+:=\{z\in\C:\Re z>0\},\quad
  D_m:=\C\times \mathbb{H}_+^{m-1},\quad
  B_m:=\R\times(0,\infty)^{m-1}.
\]
If $F$ is analytic on $D_m$ and zero on $B_m$, then $F\equiv 0$ on $D_m$.
\end{lmm}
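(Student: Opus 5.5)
The plan is to reduce to the one-variable identity theorem, one coordinate at a time, by induction on $m$. For $m=1$, $D_1=\C$ and $B_1=\R$. An entire function vanishing on $\R$ vanishes identically, since $\R$ has accumulation points in the connected domain $\C$. For general $m$, the induction is best organized as a chain of slices: first in the $\C$-variable, then in each $\mathbb{H}_+$-variable in turn.

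Fix $(u_2,\ldots,u_m)\in(0,\infty)^{m-1}$. The map $u_1\mapsto F(u_1,u_2,\ldots,u_m)$ is entire and vanishes on $\R$, so it vanishes on all of $\C$. Hence $F=0$ on $\C\times(0,\infty)^{m-1}$. Next fix $u_1\in\C$ and $(u_3,\ldots,u_m)\in(0,\infty)^{m-2}$. The map $u_2\mapsto F(u_1,u_2,u_3,\ldots,u_m)$ is holomorphic on the connected domain $\mathbb{H}_+$ and vanishes on $(0,\infty)$, a set with accumulation points in $\mathbb{H}_+$. So it vanishes on $\mathbb{H}_+$, and $F=0$ on $\C\times\mathbb{H}_+\times(0,\infty)^{m-2}$.

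Repeating this for $u_3,\ldots,u_m$, each time freeing one more coordinate from $(0,\infty)$ to $\mathbb{H}_+$ while the coordinates already freed range over their full domains, yields $F\equiv0$ on $D_m$. Each step uses only that a holomorphic function on a connected open subset of $\C$ vanishing on a set with an interior accumulation point is zero; holomorphy of the slices follows from joint analyticity of $F$.

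An alternative is to observe that $B_m$ is a totally real submanifold of maximal dimension $m$ in the connected domain $D_m$. Then all derivatives of $F$ at a point of $B_m$ vanish: real-direction derivatives vanish, and the Cauchy--Riemann equations give the complex derivatives. So $F$ vanishes near that point, and by connectedness $F\equiv0$. I would present the iterated slicing version since it is fully elementary.

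There is no real obstacle. The only point needing care is the order of the slicing, so that at every step the line being varied lies entirely inside the set where vanishing is already known. The order above (free $u_1$ first, then $u_2,\ldots,u_m$) guarantees this.
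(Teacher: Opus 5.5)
Your proposal is correct and is essentially the paper's argument. The paper runs an induction on $m$: it fixes the last coordinate at $r>0$, applies the induction hypothesis to the remaining variables, and then frees the last coordinate using the one-variable identity theorem on $\mathbb{H}_+$, which is your coordinate-by-coordinate slicing written recursively (your alternative via the totally real set $B_m$, open in $\R^m$ inside the connected domain $D_m$, is also valid).
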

\begin{proof}
We prove the result by induction on $m$. For $m=1$, we have $D_1=\C$ and
$B_1=\R$. Since $F$ is continuous on $\C\cup\R=\C$ and vanishes on $\R$, the
one-variable identity theorem gives $F\equiv 0$ on $\C$.

Assume the claim holds for $m$, and let $F$ be analytic on $D_{m+1}$ and zero on $B_{m+1}$. Fix any $r>0$ and define
\[
  G_r(z_1,\ldots,z_m):=F(z_1,\ldots,z_m,r),
  \qquad (z_1,\ldots,z_m)\in D_m.
\]
Then $G_r$ is analytic on $D_m$, and $G_r=0$ on
$B_m$ because $F=0$ on $B_{m+1}$. By the induction hypothesis,
\[
  G_r\equiv 0\quad\text{on }D_m\qquad\text{for every }r>0.
\]
Now fix any $(z_1,\ldots,z_m)\in D_m$, and consider the one-variable function
\[
  h(\zeta):=F(z_1,\ldots,z_m,\zeta),
\]
which is analytic on $\mathbb{H}_+$. For each $r>0$,
since $(z_1,\ldots,z_m)\in D_m$ and $G_r\equiv 0$ on $D_m$, we have
\[
  h(r)=G_r(z_1,\ldots,z_m)=0.
\]
Thus $h$ vanishes on $(0,\infty)$. 
Hence the one-variable identity theorem implies that $h\equiv 0$ on $\mathbb{H}_+$. Since $(z_1,\ldots,z_m)$ was arbitrary, $F\equiv 0$ on $D_{m+1}$. This completes
the induction.
\end{proof}

\subsection{Power counting}\label{sec:aux-power-counting}
The following power counting lemma is used to control finite products of
affine distance singularities.

\begin{lmm}\label{lem:1d-power-counting}
Let $d\ge1$ and let $K\subseteq\R^d$ be compact. For $1\le \nu\le N$, let
$\ell_\nu:\R^d\to\R^{m_\nu}$ be an affine map whose zero set
$H_\nu:=\ell_\nu^{-1}(0)$ is either empty or a proper affine subspace of
$\R^d$, and let $a_\nu\in\R$. Suppose that, for every nonempty flat
\[
  F=\bigcap_{\nu\in I}H_\nu,
  \qquad \varnothing\ne I\subseteq\{1,\ldots,N\},
\]
one has
\[
  \sum_{\nu:\,F\subseteq H_\nu}a_\nu>-\operatorname{codim}F.
\]
(Note that here the sum is over all indices whose zero set contains $F$, not merely over
the particular set $I$ used to represent $F$ as an intersection.)
Then
\[
  \int_K\prod_{\nu=1}^N\|\ell_\nu(x)\|^{a_\nu}\,\dd x<\infty.
\]
\end{lmm}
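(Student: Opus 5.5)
We will prove Lemma~\ref{lem:1d-power-counting} by induction on the dimension $d$, using a local blow-up (dyadic or polar) decomposition around each flat. By compactness of $K$ it suffices to show local integrability near every point $x_0\in K$. Fix $x_0$. Factors with $\ell_\nu(x_0)\ne0$ are bounded above and below near $x_0$ and can be discarded. After translating, assume $x_0=0$ and that every remaining $H_\nu$ passes through $0$. Then each $\ell_\nu$ is linear, and $\|\ell_\nu(x)\|$ is comparable to $\operatorname{dist}(x,H_\nu)$. Indeed, $\ell_\nu$ restricted to a complement of $\ker\ell_\nu=H_\nu$ is injective, hence bi-Lipschitz onto its image. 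So we may replace each factor by $\operatorname{dist}(x,H_\nu)^{a_\nu}$, where the $H_\nu$ are linear subspaces. Only the $\nu$ with $a_\nu<0$ matter for upper bounds. Factors with $a_\nu>0$ are bounded near $0$, and dropping them only strengthens the hypothesis on the sums.

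The key step is to integrate over the ball $B(0,1)$ in polar coordinates $x=r\theta$, with $\theta\in S^{d-1}$. Homogeneity gives
\[
\prod_\nu\operatorname{dist}(r\theta,H_\nu)^{a_\nu}=r^{\sum_\nu a_\nu}\prod_\nu\operatorname{dist}(\theta,H_\nu)^{a_\nu}.
\]
The $r$-integral $\int_0^1 r^{\sum a_\nu+d-1}\,\dd r$ is finite because of the hypothesis applied to the flat $F_0=\bigcap_\nu H_\nu\ni0$. That flat satisfies $\sum_{F_0\subseteq H_\nu}a_\nu>-\operatorname{codim}F_0$. If $F_0\ne\{0\}$, we first quotient by $F_0$, since every factor is invariant under translation along $F_0$, and so reduce to the case $F_0=\{0\}$, where $\operatorname{codim}=d$. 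The angular integral over $S^{d-1}$ is again a finite sum of local integrals near points $\theta_0$ of the sphere. Near $\theta_0$, only the subspaces containing $\theta_0$ are singular. Locally, the sphere is a $(d-1)$-dimensional manifold, and these subspaces meet it transversally in submanifolds whose intersections correspond to flats $F\ni\theta_0$ with $F\ne\{0\}$. Flattening by a chart, for example the projection along $\theta_0$, turns them into an arrangement of affine subspaces in $\R^{d-1}$, with codimensions preserved. The hypothesis transfers verbatim, so the induction hypothesis in dimension $d-1$ gives local integrability. The base case $d=1$ is immediate: the only proper flats are points, with codimension $1$, and $|x|^{a}$ is integrable iff $a>-1$.

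The main obstacle I expect is the bookkeeping in the inductive step. First, the hypothesis of the lemma sums over \emph{all} $\nu$ with $F\subseteq H_\nu$, and this saturated form must be preserved under both the quotient by $F_0$ and the chart on the sphere. I will check this by noting that both operations induce a bijection between flats containing $\theta_0$ (respectively $F_0$) and the flats of the new arrangement, respecting inclusions and codimension. Second, the comparability $\|\ell_\nu\|\asymp\operatorname{dist}(\cdot,H_\nu)$ must be used uniformly in the chart, which holds on small enough neighbourhoods.
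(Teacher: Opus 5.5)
Your architecture is the same as the paper's: induction on $d$, localization at a point $x_0$, discarding the factors that do not vanish at $x_0$, and splitting off the common intersection of the remaining subspaces (your $F_0$, the paper's $W$). After that come polar coordinates when that intersection is $\{0\}$, and a tangent-chart reduction on $S^{d-1}$ via the correspondence $M\leftrightarrow M\oplus\R\theta_0$, which preserves codimensions and the saturated exponent sums. When $W\neq\{0\}$ the paper applies the induction hypothesis directly in $W^\perp$, instead of quotienting and then using polar coordinates; this difference is immaterial.

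There is, however, one step that fails as written. You discard the factors with $a_\nu>0$, saying this ``only strengthens the hypothesis on the sums.'' The direction is backwards. The pointwise bound $\|\ell_\nu\|^{a_\nu}\le C$ for $a_\nu>0$ is fine. But to integrate what remains you would need $\sum_{\nu:\,F\subseteq H_\nu,\,a_\nu<0}a_\nu>-\operatorname{codim}F$, which is stronger than what is assumed and need not hold.

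For example, take $d=1$, $K=[-1,1]$, $\ell_1(x)=\ell_2(x)=x$, $a_1=-\frac32$, $a_2=1$. The hypothesis holds ($-\frac12>-1$) and $\int_{-1}^1|x|^{-1/2}\,\dd x<\infty$. After dropping the second factor, however, you are left with $\int_{-1}^1|x|^{-3/2}\,\dd x=\infty$. In your argument this failure would surface in the radial integral $\int_0^1 r^{\sum a_\nu+d-1}\,\dd r$.

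This matters for the paper's application. In the proof of Theorem~\ref{thm:smear}, the cluster inequality $\Theta(A,n)>-(|A|+n-1)$ uses the full exponent sum, including the positive contributions $2\alpha_i\alpha_j$ and $b^2n(n-1)$. For instance, two external charges $-b$ with two screening points give $\Theta=-4b^2>-3$. The negative exponents alone give $-8b^2$, which is $\le -3$ once $b^2\ge\frac38$.

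The fix is simply to keep every factor. Nothing else in your proof uses the sign of the $a_\nu$: homogeneity of $\operatorname{dist}(\cdot,H_\nu)^{a_\nu}$, two-sided boundedness of the non-vanishing factors, and the exact transfer of the hypothesis to the quotient and tangent arrangements all hold for exponents of either sign. With that step deleted, your proof coincides with the paper's.
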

\begin{proof}
If $H_\nu=\varnothing$, then $\|\ell_\nu\|$ is bounded above and below by
positive constants on the compact set $K$, and the corresponding factor is
harmless. If $H_\nu\ne\varnothing$, then $\|\ell_\nu(x)\|$ is comparable to
$\operatorname{dist}(x,H_\nu)$. It therefore suffices to prove the assertion
with
\[\prod_\nu \operatorname{dist}(x,H_\nu)^{a_\nu}\]
in place of the displayed integrand.

We prove the claim by induction on the ambient dimension $d$, starting with
$d=1$. In one dimension, every nonempty proper affine subspace is a point. Fix
$x^0\in K$. The factors whose zero sets do not contain $x^0$ are bounded above
and below by positive constants near $x^0$. If no zero set contains $x^0$, there
is nothing to prove locally. Otherwise, after discarding these bounded factors,
the local integrand is comparable to
\[
  |x-x^0|^A,
  \qquad A:=\sum_{\nu:\,x^0\in H_\nu}a_\nu .
\]
The flat $\{x^0\}$ has codimension one, so the hypothesis gives $A>-1$. Hence
$|x-x^0|^A$ is locally integrable, proving the base case.

Now assume $d\ge2$ and that the result is known in dimensions
$1,\ldots,d-1$. Fix a point $x^0\in K$. By compactness of $K$, it is enough to
prove local integrability near $x^0$. Let
\[
  I(x^0):=\{\nu:x^0\in H_\nu\}.
\]
If $I(x^0)=\varnothing$, then no singular factor vanishes near $x^0$, so local
integrability is immediate. Factors with $\nu\notin I(x^0)$ are bounded above
and below by positive constants in a small neighborhood of $x^0$. Translating $x^0$ to the origin,
write $V_\nu:=H_\nu-x^0$ for $\nu\in I(x^0)$, and let
\[
  W:=\bigcap_{\nu\in I(x^0)}V_\nu .
\]
Note that $W$ is a proper subspace of  $\R^d$. Indeed, $I(x^0)$ is 
nonempty by assumption, and for each $\nu\in I(x^0)$ the set $H_\nu$ is a nonempty proper
affine subspace; hence each $V_\nu$ is a proper subspace of $\R^d$, and therefore so is $W$. 

If $W\ne\{0\}$, split $\R^d=W\oplus W^\perp$ and write a point near the origin
as $w+y$, with $w\in W$ and $y\in W^\perp$. Since $W\subseteq V_\nu$ for all
$\nu\in I(x^0)$, each singular subspace decomposes as
\[
  V_\nu=W\oplus (V_\nu\cap W^\perp).
\]
Indeed, if $v\in V_\nu$ and $v=w+y$ is its orthogonal decomposition, then
$w\in W\subseteq V_\nu$, and hence $y=v-w\in V_\nu\cap W^\perp$. Consequently
translation in the $W$ direction does not change any of the singular distances:
\[
  \operatorname{dist}(w+y,V_\nu)
  =\operatorname{dist}(y,V_\nu\cap W^\perp),
  \qquad \nu\in I(x^0).
\]
Thus the local integrand depends on the transverse variable $y$ through the
arrangement of subspaces
\[
  V_\nu^\perp:=V_\nu\cap W^\perp\subset W^\perp,
  \qquad \nu\in I(x^0),
\]
up to the bounded nonsingular factors already discarded.

The power-counting hypotheses are inherited by this transverse arrangement.
To see this, let $M\subset W^\perp$ be a flat obtained as an intersection of
some of the subspaces $V_\nu^\perp$. Then $W\oplus M$ is the corresponding flat
of the original arrangement. Moreover,
\[
  \operatorname{codim}_{W^\perp} M
  =\operatorname{codim}_{\R^d}(W\oplus M),
\]
and $M\subseteq V_\nu^\perp$ holds exactly when
$W\oplus M\subseteq V_\nu$. Therefore the exponent sum attached to $M$ in the
transverse arrangement is the same as the exponent sum attached to
$W\oplus M$ in the original arrangement, while the codimension is also the
same. The original hypothesis therefore gives precisely the hypothesis needed
for the arrangement in $W^\perp$.

Since $\dim W^\perp<d$, the induction hypothesis gives local integrability in
the transverse variable $y$. The remaining $w$ variable ranges over a bounded
set, so Fubini's theorem gives local integrability in the full neighborhood.

It remains to treat the case $W=\{0\}$. Then all singular subspaces pass
through the origin and have total intersection $\{0\}$. In polar coordinates
$x=r\theta$, $0<r<\epsilon$, $\theta\in\mathbb S^{d-1}$, the singular part is
\[
  r^{A}\prod_{\nu\in I(x^0)}\operatorname{dist}(\theta,V_\nu)^{a_\nu},
  \qquad A:=\sum_{\nu\in I(x^0)}a_\nu .
\]
The hypothesis applied to the flat $\{0\}$ gives $A>-d$, so the radial
integral $\int_0^\epsilon r^{A+d-1}\,\dd r$ is finite. It remains to prove
integrability of the angular factor on $\mathbb S^{d-1}$.

Fix $\theta^0\in\mathbb S^{d-1}$ and let
$J:=\{\nu\in I(x^0):\theta^0\in V_\nu\}$. The factors with $\nu\notin J$ are
bounded near $\theta^0$. In a smooth coordinate chart centered at $\theta^0$,
the factors with $\nu\in J$ are comparable to distances to the linear subspaces
$V_\nu\cap(\theta^0)^\perp$ of the tangent space $(\theta^0)^\perp$. Indeed,
after rotating so that $\theta^0$ is the last coordinate vector, each
$V_\nu$ with $\theta^0\in V_\nu$ is defined by linear equations involving only
the first $d-1$ coordinates in this chart. For any flat $M$ in this tangent
arrangement, $M\oplus\R\theta^0$ is a flat of the original arrangement,
\[
  \operatorname{codim}_{(\theta^0)^\perp}M
  =\operatorname{codim}_{\R^d}(M\oplus\R\theta^0),
\]
and the factors containing $M$ in the tangent arrangement are exactly the
factors whose original subspaces contain $M\oplus\R\theta^0$. Hence the
power-counting hypothesis is inherited by the tangent arrangement. By the
induction hypothesis in dimension $d-1$, the angular factor is locally
integrable near $\theta^0$. Compactness of $\mathbb S^{d-1}$ gives angular
integrability, and hence local integrability near $x^0$. This proves the lemma.
\end{proof}

\subsection{Topological vector spaces and completeness}\label{sec:app-tvs}
We briefly recall the notions used in \secref{sec:gnsquant}. A
topological vector space over $\C$ is a complex vector space $V$ equipped with
a topology such that vector addition
\[
  V\times V\to V,\qquad (u,v)\mapsto u+v,
\]
and scalar multiplication
\[
  \C\times V\to V,\qquad (\lambda,v)\mapsto \lambda v,
\]
are continuous.

A directed set is a preordered set $(D,\le)$ such that for every $i,j\in D$
there exists $k\in D$ with $i\le k$ and $j\le k$. A net in a set $X$ is simply
a family $(x_i)_{i\in D}$ indexed by a directed set. The phrase
``eventually'' means ``for all sufficiently large indices'': a property $P(i)$
holds eventually if there exists $i_0\in D$ such that $P(i)$ holds for all
$i\ge i_0$. A net $(v_i)_{i\in D}$ in $V$ converges to $v\in V$ if for every
open neighborhood $U$ of $v$, one has $v_i\in U$ eventually. Nets are the
right replacement for sequences in arbitrary topological spaces, because they
characterize continuity and closure without first-countability assumptions. In
particular, if $E\subseteq X$ and $x\in\bar E$, then there exists a net
$(x_i)$ in $E$ such that $x_i\to x$. This is the fact used in the proof of
Theorem~\ref{thm:innlmm}, where an arbitrary state
$z\in\overline{\Phi(\ma)}$ is approximated by a net from $\Phi(\ma)$.

Since a general topological vector space need not be first countable,
completeness is naturally formulated using nets rather than sequences. A net
$(v_i)_{i\in D}$ in $V$ is called Cauchy if for every open neighborhood $U$ of
$0$ there exists $i_0\in D$ such that
\[
  v_i-v_j\in U\qquad\text{for all }i,j\ge i_0.
\]
The space $V$ is called complete if every Cauchy net in $V$ converges to some
point of $V$.

A subset $B\subset V$ is called bounded if for every open neighborhood $U$ of
$0$ there exists $r>0$ such that $B\subset \,\lambda U$ whenever
$|\lambda|\ge r$. If $V$ is locally convex, one natural topology on the space
$\mathrm{End}_{\mathrm{cont}}(V)$ of continuous linear endomorphisms is the
topology of uniform convergence on bounded subsets: a net $(T_i)$ converges to
$T$ if for every bounded $B\subset V$ and every neighborhood $U$ of $0$, one
has
\[
  (T_i-T)(B)\subset U
\]
eventually. This is the operator topology used in
\secref{sec:gnsquant} to define the closed represented local
algebras $\mathfrak A_{\mathrm{loc}}(O)$.

Now let $I$ be an arbitrary index set, and consider the product space
\[
  \C^I:=\{z=(z_i)_{i\in I}: z_i\in \C\text{ for all }i\in I\}
\]
with the product topology. Equivalently, this is the smallest topology making
each coordinate projection
\[
  \pi_i:\C^I\to\C,\qquad \pi_i(z)=z_i,
\]
continuous. A basic neighborhood of a point $z\in\C^I$ is obtained by choosing
finitely many indices $i_1,\dots,i_n\in I$, open sets $U_1,\dots,U_n\subseteq
\C$ with $z_{i_m}\in U_m$, and then requiring only that the corresponding
coordinates lie in these prescribed open sets. In particular, all but finitely
many coordinates are left unrestricted.

With this topology, $\C^I$ is a topological vector space: addition and scalar
multiplication are defined coordinatewise, and continuity is checked coordinate
by coordinate. It is also Hausdorff, since two distinct points in $\C^I$ differ
in some coordinate, and the corresponding projection separates them.

We now explain why $\C^I$ is complete. Let $(z^{(\alpha)})_{\alpha\in D}$ be a
Cauchy net in $\C^I$, where
\[
  z^{(\alpha)}=(z_i^{(\alpha)})_{i\in I}.
\]
Fix any $i\in I$. Because the projection $\pi_i$ is continuous, the scalar net
$(z_i^{(\alpha)})_{\alpha\in D}$ is Cauchy in $\C$. Since $\C$ is complete,
there exists $z_i\in \C$ such that
\[
  z_i^{(\alpha)}\to z_i\qquad\text{in }\C.
\]
Doing this for each $i\in I$ defines a point $z=(z_i)_{i\in I}\in \C^I$.

To show that $z^{(\alpha)}\to z$ in the product topology, let $U$ be any basic
neighborhood of $z$. Then $U$ only constrains finitely many coordinates, say
$i_1,\dots,i_n$. For each $m$, the convergence
$z_{i_m}^{(\alpha)}\to z_{i_m}$ in $\C$ implies that eventually
$z_{i_m}^{(\alpha)}\in U_m$. Intersecting these finitely many eventuality
conditions, we conclude that $z^{(\alpha)}\in U$ eventually. Hence
$z^{(\alpha)}\to z$ in $\C^I$.

Therefore $\C^I$ is a complete Hausdorff topological vector space for every
index set $I$. In particular, the space $\C^{\ma}$ used in
\secref{sec:gnsquant} is complete.
With this topological background in place, we conclude the appendix by recalling the standard AQFT
framework used for comparison in the main text.

\subsection{Quick recap of the standard AQFT framework}\label{sec:app-aqft}
We briefly recall the Haag--Kastler framework of algebraic quantum field
theory~\cite{haagkastler64,haag96}. Fix a spacetime $M$ and let $\mathcal K$
be a collection of suitable spacetime regions (for example, bounded causally
convex open sets). A Haag--Kastler net assigns to each $O\in\mathcal K$ a
unital $C^*$-algebra $\mathfrak A(O)$ of local observables. The basic axioms
include:
\begin{enumerate}
\item \textit{Isotony.} If $O_1\subset O_2$, then
\[
  \mathfrak A(O_1)\subset\mathfrak A(O_2).
\]
\item \textit{Locality.} If $O_1$ and $O_2$ are spacelike
separated, then every element of $\mathfrak A(O_1)$ commutes with every element
of $\mathfrak A(O_2)$.
\item \textit{Covariance.} If a symmetry group $G$ acts on spacetime, then one
has a corresponding action by $*$-automorphisms
\[
  \alpha_g:\mathfrak A(O)\to\mathfrak A(gO),\qquad g\in G.
\]
\end{enumerate}
Because each $\mathfrak A(O)$ is a $C^*$-algebra, it is already complete in its
$C^*$-norm. One also forms the quasilocal algebra $\mathfrak A$ by taking the
norm-closed $C^*$-algebra generated by the local algebras. This is the
representation-independent level of AQFT.

A state is a positive normalized linear functional
$\omega:\mathfrak A\to\C$. By the GNS theorem, such a state yields a Hilbert
space representation $(\pi_\omega,\mathcal H_\omega,\Omega_\omega)$ with
$\Omega_\omega$ cyclic and
\[
  \omega(A)=\langle\Omega_\omega,\pi_\omega(A)\Omega_\omega\rangle,
  \qquad A\in\mathfrak A.
\]
At this stage one can look at the concrete represented images
\[
  \pi_\omega(\mathfrak A(O))\subset\mathcal B(\mathcal H_\omega).
\]
In many AQFT arguments one then passes to the associated local von Neumann
algebras
\[
  \mathcal M_\omega(O):=\pi_\omega(\mathfrak A(O))''\subset\mathcal B(\mathcal H_\omega).
\]
Here, for any subset $X\subset\mathcal B(\mathcal H_\omega)$, its commutant is
\[
  X':=\{T\in\mathcal B(\mathcal H_\omega): TA=AT\text{ for all }A\in X\},
\]
and its double commutant is $X'':=(X')'$. Since $\pi_\omega(\mathfrak A(O))$
is a unital $*$-subalgebra of $\mathcal B(\mathcal H_\omega)$, the von
Neumann bicommutant theorem says that $\pi_\omega(\mathfrak A(O))''$
coincides with the weak operator closure of $\pi_\omega(\mathfrak A(O))$
(equivalently, with its strong operator closure). Thus the double commutant is
indeed a closure, but in the weak operator topology rather than the norm topology. 

Both formulations are standard: the abstract Haag--Kastler net is
naturally $C^*$-algebraic, while a fixed vacuum representation often leads one
to work with the corresponding local von Neumann algebras.
This extra closure is not merely cosmetic. After the state space
$\mathcal H_\omega$ has been constructed, one wants local operator algebras on
that space that are stable under the natural operator limits and on which the
symmetry implementation acts concretely. The passage from
$\pi_\omega(\mathfrak A(O))$ to $\mathcal M_\omega(O)$ is the standard AQFT
way to do this. The main text follows the same logic in a locally convex setting without positivity: after constructing $\mh$, it passes from the algebraic
images $\pi(\ma(O))$ to the closed represented local algebras
$\mathfrak A_{\mathrm{loc}}(O)$.

If the state is invariant under the symmetry group and the corresponding
regularity hypotheses hold, the automorphisms are implemented by a unitary
representation $U_\omega(g)$ satisfying
\[
  U_\omega(g)\pi_\omega(A)U_\omega(g)^{-1}=
  \pi_\omega(\alpha_g(A)).
\]
and likewise preserve the von Neumann closures $\mathcal M_\omega(O)$.
For the translation subgroup, one usually also requires the spectrum condition:
the joint spectrum of the generators lies in the forward light cone.

This is the standard reference picture against which the AQFT-type
construction in~\secrefrange{sec:aqft}{sec:timeev} may be compared.
In the main text, we retain isotony, a vacuum functional, and a represented
local net obtained from the algebraic images $\pi(\ma(O))$ by locally convex
operator closure, together with translation covariance and Einstein locality.
But we do not obtain positivity, abstract local
$C^*$-algebras, represented local von Neumann algebras on a Hilbert space, or
the usual spectrum condition.

\end{document}